\DeclareMathOperator*{\argmin}{arg\,min}
\newtheorem{definition}{Definition}[section] 
\newtheorem{assumption}[definition]{Assumption}
\newtheorem{condition}[definition]{Condition}
\newtheorem{theorem}{Theorem}[section]
\newtheorem{proposition}[theorem]{Proposition}
\newtheorem{lemma}[theorem]{Lemma}
\newtheorem{corollary}[theorem]{Corollary}
\newcommand{\bbR}{\mathbb{R}}
\newcommand{\bbE}{\mathbb{E}}
\newcommand{\al}{\alpha}
\newcommand{\rmd}{\,{\rm d}\,}
\renewcommand{\argmin}{{\rm arg}\,\min}
\newcommand{\T}{\mathrm{\scriptscriptstyle T} }
\newcommand{\indp}{\!\perp\!\!\!\perp}
\newcommand{\nindp}{\not\!\perp\!\!\!\perp}
\newcommand\restr[2]{{
  \left.\kern-\nulldelimiterspace 
  #1
  \littletaller 
  \right|_{#2}
  }}
\newcommand{\littletaller}{\mathchoice{\vphantom{\big|}}{}{}{}}
\title{\Large A Generalized Framework for Approximate Co-Sufficient Sampling}
\author{Jie Xie}
\author{Dongming Huang}
\affil{Department of Statistics and Data Science, National University of Singapore}
\date{\today}
\begin{document}
\maketitle

\begin{abstract}
Approximate co-sufficient sampling (aCSS) offers a principled route to hypothesis testing when null distributions are unknown, yet current implementations are confined to maximum likelihood estimators with smooth or linear regularization and provide little theoretical insight into power. We present a generalized framework that widens the scope of the aCSS method to embrace nonlinear regularization, such as group lasso and nonconvex penalties, as well as robust and nonparametric estimators. Moreover, we introduce a weighted sampling scheme for enhanced flexibility and propose a generalized aCSS framework that unifies existing conditional sampling methods. Our theoretical analysis rigorously establishes validity and, for the first time, characterizes the power optimality of aCSS procedures in certain high-dimensional settings.
\end{abstract}

\noindent{\bf Keywords}: 
Goodness-of-fit test; Approximate sufficiency; Conditional randomization test; High-dimensional inference; Model-X.

\section{Introduction}\label{section 1}
In statistics and machine learning, many inference problems can be cast as the following hypothesis testing for data \(X\) drawn from some space \(\mathcal{X} \subseteq \mathbb{R}^n\):
\begin{equation}\label{eq: parametric model}
H_0: X \sim P_{\theta} \text{ for some } \theta \in \Theta \subseteq \mathbb{R}^d,
\end{equation}
where \(\{P_{\theta} : \theta \in \Theta\}\) represents a parametric family of distributions on $\mathcal{X}$. 
This setup is typical for Goodness-of-Fit (GoF) testing,  which aims to determine whether a hypothesized model accurately captures the underlying distribution of data. 
Beyond GoF, this setup also facilitates diverse applications such as constructing confidence intervals, model selection, and model-X conditional independence testing, which arise across diverse fields, from genomics to social sciences; see the discussion in \citet{barber2022testing}. 

To address such problems, a test statistic \(T = T(X)\) is often employed, where larger values of \(T\) indicate stronger evidence against the null hypothesis \(H_0\). 
If the (asymptotic) distribution of \(T\) under \(H_0\) is known, hypothesis tests can be formalized directly \citep{shao2003mathematical, van2000asymptotic}. 
However, deriving these null distributions is feasible only in limited cases. 

Resampling methods, such as bootstrap \citep{efron1992bootstrap} and permutation tests \citep{good2013permutation}, address this by generating copies \(\widetilde{X}^{(1)}, \ldots, \widetilde{X}^{(M)}\) and computing the p-value as 
\begin{equation}\label{pval}
    \text{pval} = \text{pval}_T(X, \widetilde{X}^{(1)}, \ldots, \widetilde{X}^{(M)}) = \frac{1}{M+1}\left(1 + \sum_{m=1}^M {1}\{T(\widetilde{X}^{(m)}) \geq T(X)\}\right).
\end{equation}
If the sample \(X\) and its copies \(\widetilde{X}^{(1)}, \ldots, \widetilde{X}^{(M)}\) are exchangeable under the null hypothesis \(H_0\), the p-value defined in \eqref{pval} is valid, meaning that \(\mathbb{P}_{H_0}(\text{pval} \leq \alpha) \leq \alpha\) for any significance level \(\alpha \in (0,1)\). 

These approaches reformulate the testing problem as a sampling problem for exchangeable copies. 
In the special case when $\Theta=\{\theta_0\}$ is a singleton, the copies \(\widetilde{X}^{(m)}\) could be sampled as independent and identically distributed (i.i.d.)  copies from \(P_{\theta_0}\).

When $\Theta$ is not a singleton, co-sufficient sampling (CSS) \citep{stephens2012goodness} removes 
the dependence on $\theta$ by leveraging sufficient statistics. 
Specifically, if \(S(X)\) is sufficient for \(\theta\in \Theta\), then the conditional distribution of \(X \mid S(X)\) does not depend on \(\theta\) and i.i.d. samples from this conditional distribution will be exchangeable with $X$. 
Although CSS is theoretically appealing, it is often impractical because the conditional distribution may collapse to a point mass at \(X\). 
To address these issues, \cite{barber2022testing} introduced approximate co-sufficient sampling (aCSS), which replaces exact sufficiency with approximate sufficiency by conditioning on perturbed maximum likelihood estimators (MLEs) instead of sufficient statistics. \cite{zhu2023approximate} extended the aCSS framework to incorporate linear regularization, such as the lasso estimator \citep{tibshirani1996regression}, thereby broadening its applicability to high-dimensional settings.

Though novel and useful, 
aCSS in its original and linearly-regularized forms remains limited in scope.
The existing aCSS framework is largely confined to regularized MLEs with simple penalties (either linear or continuously twice differentiable). 
This excludes the vast spectrum of regularization techniques such as group lasso \citep{yuan2006model} and nonconvex penalties \citep{fan2001variable}, which are often used to perform efficient high-dimensional regression. 
The existing aCSS framework also prohibits the use of other estimators, such as robust regression techniques against extreme observations and sieve-type estimators in nonparametric additive models \citep{hastie2017generalized}. 
Furthermore, it require the penalty parameter to be independent of \(X\), which deviates from common practice. 
Lastly, although the approximate validity of the aCSS methods has been established, the lack of theoretical investigation into their power remains a critical limitation, as validity alone does not ensure efficient testing. 

\textit{Our contributions.}
This paper addresses the aforementioned limitations by fundamentally enhancing the aCSS framework through a series of achievements: 
(1) We extend the aCSS framework to incorporate more complex penalties like group lasso, SCAD, and MCP, which are important regularization methods in high-dimensional statistics. 
Extensions to general smooth constrained and $\ell_p$-norm constrained MLEs are also developed. 
(2) We extend the framework to encompass estimators beyond MLEs, including maximum trimmed likelihood estimators, quantile estimators, and sieve-type estimators in nonparametric additive models. 
(3) We integrate aCSS with the Model-X Conditional Randomization Test (CRT) for testing conditional independence and establish a theory on power optimality, which provides theoretical guidance for balancing the Type-I error and power. 
Moreover, we proposes a generalization of the aCSS method, referred to as the \textit{generalized aCSS method}, which unifies CSS methods, aCSS methods, and other related methods. 

Our work is closely related to \cite{zhu2023approximate}, which incorporates linear regularization into the aCSS methods. 
We emphasize that the extension from linear to nonlinear regularization is highly nontrivial and requires the development of new techniques. 
Take the constrained case as an example: under linear constraints, the boundary of the feasible set is formed by hyperplanes (or the intersections of multiple hyperplanes), which are simple to analyze, whereas for general smooth constraints, the boundary forms a manifold, which introduces significant challenges. 
To overcome these challenges, we employ the rank constant theorem \citep{rudin1976principles} to characterize such a manifold and identify conditions for characterizing local optimas of the constrained optimization.  

Our power analysis is, to our knowledge, the first theoretical investigation on the efficiency of the aCSS method. 
It reveals several noteworthy insights: (1) We can leverage the unlabeled data to improve the Type-I error control of the aCSS methods. (2) In settings where the sample size and model dimension grow proportionally, the aCSS method becomes asymptotically equivalent to the CSS methods, thereby achieving parametric efficiency. (3) In the ultra-high dimensional regime, we introduce a novel test statistic under the aCSS CRT framework that attains the minimax optimal rate. 

We use the following notation throughout this paper. For an integer $p \geq 1$, let $[p] = \{1,\ldots,p\}$. 
For a vector $\nu \in \mathbb{R}^k$, let $\text{supp}(\nu) = \{i \in [k] : \nu_i \neq 0\}$ denote its support. 
The Euclidean norm is denoted by $\|\nu\|$, the  $\ell_p$-norm is $\|\nu\|_p$, and $\|\nu\|_0$ counts the nonzero entries.
For a subset $S \subseteq [k]$, $\nu_S$ is the subvector indexed by $S$. 
For a matrix $M$ and two index sets $I$ and $J$, $M_{I,J}$ is the submatrix formed by rows in $I$ and columns in $J$. Also define $M_J := M_{J,J}$. Let $\lambda_{\max}(M)$ denotes its largest eigenvalue in the positive direction. 
\({\rm d}_{TV}\) denotes the total variation distance. 
Let \(\Phi\) represent the cumulative distribution function of the standard normal distribution \({N}(0, 1)\) and $z_{\alpha}$ satisfies $\Phi(z_{\alpha}) = \alpha$. $\mathbb{E}_\theta$ denotes the expectation under the distribution $P_\theta$.
Whenever there is no confusion, we denote by \(\theta_0 \in \Theta\) the true parameter value under $H_0$.
Proofs of the theoretical results are provided in the Supplementary Material. 

\section{aCSS with general regularized MLE}\label{section 2}
This section first reviews the aCSS methods developed in \cite{barber2022testing} and \cite{zhu2023approximate}, and then extends the aCSS framework to incorporate nonlinear regularization. 
\subsection{Review: Original aCSS method}
Throughout the paper, we make the following assumption about the family \(\{P_\theta : \theta \in \Theta\}\): \(\Theta\) is a convex, open subset of \(\mathbb{R}^d\), and for each \(\theta \in \Theta\), \(P_\theta\) has a positive density \(f(x; \theta)\) with respect to a \(\sigma\)-finite measure \(\nu_{\mathcal{X}}\). Additionally, \(f(x; \theta)\) is twice continuously differentiable in \(\theta\) for any fixed \(x\). 

The aCSS method defines the perturbed MLE $\hat{\theta}$ as the minimizer of the objective function 
\begin{equation}\label{random MLE}
   {\cal L}(\theta; X, W) :=\mathcal{L}(\theta;X) + \sigma W^T \theta,
\end{equation} 
where $\mathcal{L}(\theta;X) := -\log f(X; \theta) + \mathcal{R}(\theta)$ is the negative log likelihood with an optional twice-differentiable penalty function \(\mathcal{R}(\theta)\), \(W\sim {N}(0,{I}_d/d)\) is a noise vector, and \(\sigma > 0\) controls the magnitude of perturbation. 
Instead of requiring an exact minimizer of the objective function, we assume that the solver used to obtain $\hat{\theta}$ is designed to locate \textit{strict second-order stationary points} (SSOSPs) of \(\mathcal{L}(\theta; x, w)\). Specifically, a parameter \(\theta\) is an SSOSP of \(\mathcal{L}(\theta; x, w)\) if it satisfies: (i) the first-order condition: \(\nabla_\theta \mathcal{L}(\theta; x, w) = \nabla_\theta \mathcal{L}(\theta; x) + \sigma w = 0\), and (ii) the strict second-order condition: \(\nabla^2_\theta \mathcal{L}(\theta; x, w)=\nabla^2_\theta \mathcal{L}(\theta; x, w) \succ 0\).
We write this perturbed MLE as \(\hat{\theta}(X,W): \mathcal{X} \times \mathbb{R}^d \to \Theta\), which is a random variable.

For \(\theta \in \Theta\), let \(P_{\theta}(\cdot \mid \hat{\theta})\) denote the conditional distribution of \(X \mid \hat{\theta}\) if \(X \sim P_{\theta}\). 
Under \(H_0\), we have \(X \mid \hat{\theta} \sim P_{\theta_0}(\cdot \mid \hat{\theta})\).
By replacing \(\theta_0\) with \(\hat{\theta}\), we generate samples \(\widetilde{X}^{(m)} \sim P_{\hat{\theta}}(\cdot \mid \hat{\theta})\) for \(m=1,\ldots,M\).
While \(X\) and \(\widetilde{X}^{(m)}\) may not be exchangeable, \cite{barber2022testing} introduce the following measurement to quantify approximate exchangeability. 

\begin{definition}[Distance to Exchangeability]\label{Distance to Exchangeability} For r.v.s $\{A_i\}_{i=1}^k$, define
\[
{\rm d}_{\rm exch}(A_1, \ldots, A_k) := \inf\{{\rm d}_{TV}((A_1, \ldots, A_k), (B_1, \ldots, B_k)) : B_1, \ldots, B_k \text{ are exchangeable}\}.
\]
\end{definition} 
The p-value defined in \eqref{pval} satisfies that
\begin{equation}\label{approximate valid}
\mathbb{P}\bigl(\text{pval}_{T}(X, \widetilde{X}^{(1)}, \ldots, \widetilde{X}^{(M)}) \le \alpha\bigr) \le \alpha + \mathrm{d}_{\mathrm{exch}}(X, \widetilde{X}^{(1)}, \ldots, \widetilde{X}^{(M)}),
\end{equation}
and thus is an approximately valid p-value if $\mathrm{d}_{\mathrm{exch}}(X, \widetilde{X}^{(1)}, \ldots, \widetilde{X}^{(M)})$ is small. 
\cite{barber2022testing} establish an upper bound on the quantity \(\mathrm{d}_{\mathrm{exch}}\) under the assumption that the penalty function \(\mathcal{R}(\theta)\) is twice differentiable. 
However, this assumption is not met by many common penalty functions, such as the \(\ell_1\)-penalty in the (perturbed) lasso estimator (\cite{tibshirani1996regression}) given by 
\begin{equation}\label{lasso}
 \hat{\theta} = \hat{\theta}_\lambda(X,W) = \argmin_{\theta \in \Theta} \mathcal{L}(\theta; X, W)  + \lambda \|\theta\|_1, 
\end{equation}
since $\|\theta\|_1$ is not everywhere differentiable. 
For the estimator defined in \eqref{lasso}, \cite{zhu2023approximate} propose to not only condition on the estimator \(\hat{\theta} = \hat{\theta}(X,W)\), but also condition on the gradient 
\begin{equation}\label{gradient lasso}
    \hat{g} = \hat{g}(X,W) := \nabla_\theta \mathcal{L}(\hat{\theta}; X, W) = \nabla_\theta \mathcal{L}(\hat{\theta};X) + \sigma W. 
\end{equation}
The analysis for conditioning on $(\hat{\theta}, \hat{g})$ becomes possible if we partition the parameter space \( \Theta = \mathbb{R}^d \) according to all possible supports of \( \theta \). More concretely, within each partition, the lasso penalty is locally linear, and therefore one can derive the conditional distribution \( P_{\theta_0}(\cdot \mid \hat{\theta}, \hat{g}) \). 
Replacing \( \theta_0 \) with \( \hat{\theta} \), the copies \( \widetilde{X}^{(m)}\) are then drawn from $P_{\hat{\theta}}(\cdot \mid \hat{\theta}, \hat{g})$. 
Based on these insights, they establish the aCSS framework with linear regularization.

In summary, the general recipe for the aCSS method is outlined below:
\begin{algorithm}
   \caption{General Recipe for aCSS Method.}
   
   \indent (i) Adaptively define SSOSP suited to the specific setting.
   
   \indent (ii) Generate a noise vector \(W\), compute the statistic \(\hat{\theta} = \hat{\theta}(X, W)\) and $\hat{g}=\hat{g}(X,W)$, and determine the true conditional density \(p_{\theta_0}(\cdot \mid \hat{\theta},\hat{g})\).
   
    \indent (iii) If $\hat{\theta}$ is an SSOSP, then replace \(\theta_0\) with \(\hat{\theta}\), and sample copies \(\widetilde{X}^{(m)}\) from \(p_{\hat{\theta}}(\cdot \mid \hat{\theta},\hat{g})\); otherwise, set \(\widetilde{X}^{(m)}=X\). 
\end{algorithm}

The original aCSS method \citep{barber2022testing} can be regarded as a special case of this general receipe with \(\hat{g}(X,W)\equiv 0\), where \(p_{\theta_0}(\cdot \mid \hat{\theta},\hat{g})\) simplifies to \(p_{\theta_0}(\cdot \mid \hat{\theta})\). 
\subsection{aCSS with Nonlinear Regularization}\label{penalty section}
We extend the aCSS framework to incorporate nonlinear regularization, including both the penalized and constrained forms. 
The main challenges are redefining SSOSPs and deriving conditional densities \(p_{\theta_0}(\cdot \mid \hat{\theta}, \hat{g})\). 

For nonlinear penalties, we consider the following flexible formulation:  
Let \(G = \{G_1, \ldots, G_J\}\) be a partition of \([d]\), where each \(G_j\) represents a group of variables. 
\(G\) can represent either predefined groups (e.g., for group lasso) or individual variables (by setting \(G_j = \{j\}\) for all $j\in [d]$).  
We formulate the perturbed MLE with the group-wise penalty as 
\begin{equation}\label{grouppenalty}
\hat{\theta} = \argmin_{\theta \in \Theta} \mathcal{L}(\theta; X, W) + \sum_{j=1}^J \rho_j\bigl(\|\theta_{G_j}\|\bigr),
\end{equation}
where \(\mathcal{L}(\theta; X, W)\) is defined as in \eqref{random MLE} and $\rho_j$ are the penalty functions imposed on $j$th group.
This group-wise penalty formulation encompasses a wide range of penalties, such as lasso and group-lasso. It reduces to a standard coordinate-wise penalty when each \(G_j\) is a singleton. We impose the following mild standard conditions on each penalty function \(\rho_j\). 
\begin{assumption}[Penalty Functions]\label{assump:4}
(i) \(\rho_j\) is non-decreasing on \([0,\infty)\);
(ii) \(\rho_j\) is continuously differentiable with derivative \(\rho'_j\), and the limit \(\rho'_j(0) = \lim_{t \to 0^+} {\rho_j(t)}/{t}\) exists; 
(iii) \(\rho'_j\) is continuously differentiable on \((0,\infty)\) almost everywhere with derivative \(\rho''_j\) where it exists.
\end{assumption}

Unlike \cite{zhu2023approximate}, which only considers aCSS with linear regularization, Assumption \ref{assump:4} accommodates a much broader class of penalty functions, including group lasso \citep{yuan2006model}, SCAD \citep{fan2001variable}, and MCP \citep{zhang2010nearly}. 
Though the optimization is often more complex, 
many empirical studies have demonstrated that the resulting estimators have smaller estimation errors compared to those obtained using the lasso penalty \citep{10.1214/009053607000000802,10.1214/10-AOAS388, simon2013sparse}.

To deal with the non-differentiability of $\rho_j$ at 0, we introduce the following notations:
For \(\theta \in \mathbb{R}^d\), denote by \(\mathcal{A}(\theta) = \{ j \in [J] : \|\theta_{G_j}\| \neq 0 \}\) the corresponding active groups, and by \(\mathcal{S}(\theta) = \cup_{j \in \mathcal{A}(\theta)} G_j\) the corresponding active coordinates. For convenience, let \({I}_{d,j}\) be the \(d \times d_{G_j}\) submatrix of the \(d \times d\) identity matrix obtained by selecting those columns associated with the indices in \(G_j\).
Now we define the SSOSP for the penalized problem 
 \eqref{grouppenalty}. 
\begin{definition}[SSOSP for penalized estimation]\label{definition for penalty}
    A parameter $\theta\in\Theta$ is a strict second-order stationary point (SSOSP) of the penalized estimation \eqref{grouppenalty} if it satisfies the followings:
    
    \indent (i) {Regularity:} For all $j\in [J]$ either $\|\theta_{G_j}\| = 0$ or $\rho''(\|\theta_{G_j}\|)$ exists.
    
    \indent (ii) {First-Order Condition:}
        $$\nabla_\theta {\cal L}(\theta;X,W)+\sum_{j\in [J]}{I}_{d,j}s_j(\theta_{G_j}) =0,$$
       \indent where $s_j(\theta)\in \bbR^{d_{G_j}}$ satisfying that
       $$\left\{
        \begin{array}{cc}
     s_j(\theta_{G_j})=\rho_j^\prime(\theta_{G_j})\frac{\theta_{G_j}}{\left\|\theta_{G_j}\right\|},   ~~& \text{if}\,\left\|\theta_{G_j}\right\|\neq {0}\\
          \left\|s_j(\theta_{G_j})\right\|\leq \rho_j^\prime(0),    ~~&  \text{if}\,\left\|\theta_{G_j}\right\|={0}
        \end{array}\right.
        $$
      
      \indent (iii) {Second-Order Condition:}
        $$\left(\nabla_\theta^2 \mathcal{L}(\theta ;X)+\sum_{j\in {\cal A}(\theta)}{I}_{d,j}s_j^\prime (\theta_{G_j}){I}_{d,j}^\T\right)_{{\cal S}(\theta)}\succ 0.
      $$
\end{definition}

When all the penalty functions $\rho_j$ are twice differentiable, conditions (ii) and (iii) in Definition \ref{definition for penalty} simplify to the first- and second-order conditions for local optimality. 
Therefore, these conditions can be interpreted as an adaptation that accounts for the non-differentiability of the penalty functions $\rho_j$ at the point $0$. For $\theta$ being an SSOSP of \eqref{grouppenalty}, define a positive function on $\bbR^d\times {\cal X}$ as follows:
\begin{equation}\label{F_pen}
         {F}_{\rm pen}(\theta;X):=  \det \left(\nabla_\theta^2 \mathcal{L}(\theta ;X)+\sum_{j\in {\cal A}(\theta)}{I}_{d,j}s_j^\prime (\theta_{G_j}){I}_{d,j}^\T\right)_{{\cal S}(\theta)},
   \end{equation}
which will be used in expressing the conditional density $p_{\theta_0}(\cdot\mid \hat{\theta},\hat{g})$ in the following lemma.

\begin{lemma}[Conditional density for penalized estimation]\label{lemma penalized}\label{lemma4}
Suppose Assumption \ref{assump:4} holds. 
Fix any $\theta_0 \in \Theta$ and let $(X, W, \hat{\theta}, \hat{g})$ be drawn from the following joint model
    $$
    \left\{\begin{array}{l}
    (X,W) \sim P_{\theta_0}\times N(0,I_d/d), \\
    \hat{\theta}=\hat{\theta}(X, W), \hat{g}=\hat{g}(X, W)=\nabla_\theta {\cal L}(\hat{\theta};X,W). 
    \end{array}\right.
    $$
    Suppose the event that $\hat{\theta}(X, W)$ is an SSOSP of \eqref{grouppenalty} has positive probability. 
Then, when this event happens,
    the density of the conditional distribution of $X \mid \hat{\theta}, \hat{g}$ w.r.t. $\nu_{\mathcal{X}}$ is 
     \begin{equation}\label{density3}
      \begin{aligned}
         p_{\theta_0}(\cdot\mid \hat{\theta},\hat{g})\propto f\left(x ; \theta_0\right) \cdot\exp\left(-\frac{\left\|\hat{g}-\nabla_\theta \mathcal{L}(\hat{\theta};x)\right\|^2}{2\sigma^2/d}\right)
         \cdot F_{{\rm pen}}(\hat{\theta};x) \cdot {1}_{x \in {\mathcal{X}}_{\hat{\theta}, \hat{g}}},
      \end{aligned}
     \end{equation}
 where ${F}_{\rm pen}(\theta;X)$ is defined in \eqref{F_pen}
   and $${\cal X}_{{\theta},{g}}=\left\{x\in{\cal X}: \textrm{ for some } w\in\bbR^d,\theta=\hat{\theta}(x,w)\textrm { is an SSOSP of \eqref{grouppenalty}, and } g=\hat{g}(x,w)\right\}.$$
   \end{lemma}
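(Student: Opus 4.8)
\emph{Proof strategy.} The plan is to derive the conditional density by a change-of-variables argument in the spirit of \citet{barber2022testing} and its $\ell_1$-regularized refinement in \citet{zhu2023approximate}, adapted to the group-wise nonlinear penalty in \eqref{grouppenalty}. Condition throughout on $X=x$. The joint law of $(X,W)$ has density $f(x;\theta_0)\,\varphi_d(w)$ with respect to $\nu_{\mathcal{X}}\times\mathrm{Leb}$, where $\varphi_d(w)=(d/2\pi)^{d/2}\exp(-d\|w\|^2/2)$ is the $N(0,I_d/d)$ density. On the event $\mathcal{E}=\{\hat{\theta}(X,W)\text{ is an SSOSP of }\eqref{grouppenalty}\}$ I would express $w$ as a smooth, invertible function of the genuinely free coordinates of $(\hat{\theta},\hat{g})$, compute the Jacobian, and then apply Bayes' rule to obtain the law of $X\mid(\hat{\theta},\hat{g})$.

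First I would partition $\mathcal{E}$ by the active set. Since $\mathcal{A}(\hat{\theta})$ and $\mathcal{S}(\hat{\theta})$ are determined by $\hat{\theta}$, it suffices to work on $\mathcal{E}\cap\{\mathcal{A}(\hat{\theta})=\mathcal{A}\}$ for each fixed $\mathcal{A}\subseteq[J]$, writing $\mathcal{S}=\cup_{j\in\mathcal{A}}G_j$. There the first-order condition in Definition \ref{definition for penalty} reads $\nabla_\theta\mathcal{L}(\hat{\theta};x)+\sigma w+\sum_{j\in[J]}I_{d,j}s_j(\hat{\theta}_{G_j})=0$, hence $\hat{g}=\nabla_\theta\mathcal{L}(\hat{\theta};x)+\sigma w=-\sum_{j\in[J]}I_{d,j}s_j(\hat{\theta}_{G_j})$. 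For $j\in\mathcal{A}$, the term $s_j(\hat{\theta}_{G_j})=\rho'_j(\|\hat{\theta}_{G_j}\|)\hat{\theta}_{G_j}/\|\hat{\theta}_{G_j}\|$ is a function of $\hat{\theta}$ alone and is continuously differentiable wherever the regularity part of Definition \ref{definition for penalty} holds; for $j\notin\mathcal{A}$ we have $\hat{\theta}_{G_j}=0$ while $\hat{g}_{G_j}=-s_j$ ranges freely over $\{\|\cdot\|\le\rho'_j(0)\}$. Thus the genuinely free coordinates are $(\hat{\theta}_{\mathcal{S}},\hat{g}_{\mathcal{S}^c})\in\mathbb{R}^d$, and solving the first-order condition for $w$ gives the explicit map
\[
w=\frac{1}{\sigma}\Bigl(-\nabla_\theta\mathcal{L}\bigl((\hat{\theta}_{\mathcal{S}},0);x\bigr)-\sum_{j\in\mathcal{A}}I_{d,j}s_j(\hat{\theta}_{G_j})+\sum_{j\notin\mathcal{A}}I_{d,j}\hat{g}_{G_j}\Bigr).
\]
To make this a legitimate change of variables I would invoke the implicit/rank function theorem: the strict second-order condition renders $\bigl(\nabla^2_\theta\mathcal{L}(\hat{\theta};x)+\sum_{j\in\mathcal{A}}I_{d,j}s'_j(\hat{\theta}_{G_j})I_{d,j}^{\T}\bigr)_{\mathcal{S}}$ nonsingular, so locally $\hat{\theta}$ is a single-valued $C^1$ function of $w$, and the SSOSP together with the condition $\mathcal{A}(\hat\theta)=\mathcal{A}$ cut out an open set of $w$'s on which $(\theta_{\mathcal{S}},g_{\mathcal{S}^c})\mapsto w$ is a $C^1$ diffeomorphism onto its image; global single-valuedness of $\hat{\theta}(x,\cdot)$ is handled by fixing a measurable tie-breaking rule exactly as in \citet{barber2022testing}.

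Next I would compute the Jacobian determinant. Writing $(\theta_{\mathcal{S}},g_{\mathcal{S}^c})\mapsto(w_{\mathcal{S}},w_{\mathcal{S}^c})$ in block form, it is block lower-triangular: $w_{\mathcal{S}^c}=\sigma^{-1}\bigl(-\nabla_{\theta_{\mathcal{S}^c}}\mathcal{L}((\theta_{\mathcal{S}},0);x)+g_{\mathcal{S}^c}\bigr)$ gives $\partial w_{\mathcal{S}^c}/\partial g_{\mathcal{S}^c}=\sigma^{-1}I_{d-|\mathcal{S}|}$ while $w_{\mathcal{S}}$ has no dependence on $g_{\mathcal{S}^c}$, and $\partial w_{\mathcal{S}}/\partial\theta_{\mathcal{S}}=-\sigma^{-1}\bigl(\nabla^2_\theta\mathcal{L}(\theta;x)+\sum_{j\in\mathcal{A}}I_{d,j}s'_j(\theta_{G_j})I_{d,j}^{\T}\bigr)_{\mathcal{S}}$, whose determinant is $(-\sigma)^{-|\mathcal{S}|}F_{\rm pen}(\theta;x)$ with $F_{\rm pen}(\theta;x)>0$ by the second-order condition. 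Hence the absolute Jacobian equals $\sigma^{-d}F_{\rm pen}(\theta;x)$. Since $\|w\|^2=\sigma^{-2}\|\hat{g}-\nabla_\theta\mathcal{L}(\hat{\theta};x)\|^2$, the change of variables shows that on $\mathcal{E}$ the joint density of $(X,\hat{\theta}_{\mathcal{S}},\hat{g}_{\mathcal{S}^c})$ with respect to $\nu_{\mathcal{X}}\times\mathrm{Leb}$ is proportional to $f(x;\theta_0)\exp\bigl(-\|\hat{g}-\nabla_\theta\mathcal{L}(\hat{\theta};x)\|^2/(2\sigma^2/d)\bigr)F_{\rm pen}(\hat{\theta};x)$ restricted to the relevant region. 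Disintegrating over $(\hat{\theta},\hat{g})$ — observing that, given the conditioned value, the active set, the vanishing coordinates $\hat{\theta}_{\mathcal{S}^c}=0$, and the active block $\hat{g}_{\mathcal{S}}=-\sum_{j\in\mathcal{A}}I_{d,j}s_j(\hat{\theta}_{G_j})$ are all fixed — yields formula \eqref{density3}, with normalizing constant the integral of the right-hand side over $x\in\mathcal{X}_{\hat{\theta},\hat{g}}$.

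I expect the change-of-variables step to be the main obstacle. Unlike the $\ell_1$ case of \citet{zhu2023approximate}, where on a fixed support each subgradient entry is locally constant at $\pm\lambda$, here $s_j(\theta_{G_j})$ on the active groups is a genuinely nonlinear map whose differentiability must be extracted from Assumption \ref{assump:4}, and one must argue that the Lebesgue-null set of radii at which $\rho''_j$ fails to exist pulls back to a probability-zero set of $(X,W)$ (using that $w\mapsto\hat{\theta}(x,w)$ is locally Lipschitz on the SSOSP region). Establishing that $w\leftrightarrow(\hat{\theta}_{\mathcal{S}},\hat{g}_{\mathcal{S}^c})$ is a bijection onto an \emph{open} set — rather than merely a local diffeomorphism — together with the measurable-selection bookkeeping for multivalued SSOSPs, is where the argument is most delicate; once these are in place the remaining computations are routine.
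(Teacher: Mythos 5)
Your proposal is correct and follows essentially the same route as the paper's proof: partition by the active set, reparameterize $(\hat{\theta},\hat{g})$ through the free coordinates $(\hat{\theta}_{\mathcal{S}},\hat{g}_{\mathcal{S}^c})$, invert the first-order condition to write $w$ explicitly, and exploit the block-triangular Jacobian whose active block yields $\sigma^{-d}F_{\rm pen}(\hat{\theta};x)$, exactly as in the paper's computation of $\det(\nabla\phi_x)=(-\sigma)^{-d}\det(G(\theta';x))$. The only remark worth making is that your concern about radii where $\rho''_j$ fails to exist is already disposed of by the regularity clause (i) in Definition \ref{definition for penalty}, since conditioning on the SSOSP event excludes such points by construction.
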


Replacing $\theta_0$ in the expression \eqref{density3} with $\hat{\theta}$, we obtain the density for sampling copies \(\widetilde{X}^{(m)}\):
\begin{equation}\label{density4}
      \begin{aligned}
         p_{\hat{\theta}}(\cdot\mid \hat{\theta},\hat{g})\propto f(x ; \hat{\theta}) \cdot\exp\left(-\frac{\left\|\hat{g}-\nabla_\theta \mathcal{L}(\hat{\theta};x)\right\|^2}{2\sigma^2/d}\right)\cdot
         F_{\rm pen}(\hat{\theta};x)\cdot {1}_{x \in {\mathcal{X}}_{\hat{\theta}, \hat{g}}}
      \end{aligned}.
     \end{equation}

To confirm that \eqref{density4} represents a valid density, we verify that the integral of the expression on the right-hand side is both positive and finite, with details provided in Appendix~\ref{verify} 
of the Supplementary Material.

For space considerations, we defer the extensions to nonlinear constraints, including smooth constraints and \(\ell_p\)-norm constraints (\(\forall p \geq 0\)), to the Supplementary Material 
(Appendices~\ref{sec:smooth-constrained} and \ref{sec:lp-constrained}).

Compared with the penalized case described here, the extension to nonlinear constraints is more challenging, because 
the influence of the constraints on the optimization is difficult to characterize. 
The key challenge is to identify conditions that ensure local optimality for points situated on the boundary of the feasible set.

\section{Weighted aCSS method and approximate validity}\label{section 3}
This section presents the implementation of the aCSS method using \(p_{\hat{\theta}}(\cdot \mid \hat{\theta}, \hat{g})\) as derived in the last section, and then establishes its approximate validity.  

\subsection{Weighted Sampling for copies}
For ease of notation, let $P(\cdot; \,\hat{\theta}, \hat{g})$ denote the target distribution \(p_{\hat{\theta}}(\cdot \mid \hat{\theta},\hat{g})\) where we generate  copies \(\widetilde{X}^{(m)}\). 
For any given $P(\cdot; \,\theta, g)$,  a direct sampling may be challenging in general. 
Instead, \cite{barber2022testing} proposed to draw the copies from
\(\widetilde{X}^{(m)}\sim \widetilde{P}_M(\cdot;X,\hat{\theta},\hat{g})\),
where the sampler $\widetilde{P}_M(\cdot;X,\theta, g)$ is required to satisfy: 
    \begin{equation}\label{sampling}
        \centering
         \begin{aligned}
             \text{If } X \sim P(\cdot; \,\theta,g) \text{ and } (\widetilde{X}^{(1)},\ldots,\widetilde{X}^{(M)})\mid X \sim \widetilde{P}_M(\cdot;X,\theta,g), \text{ then}\\
             \text{the random vector } (X, \widetilde{X}^{(1)},\ldots,\widetilde{X}^{(M)}) \text{ is exchangeable.}
         \end{aligned} 
         \end{equation}
Apart from i.i.d. samplers for $P(\cdot \mid \theta, g)$, some MCMC approaches also satisfy \eqref{sampling} (see \citet[Section 2.2.3]{barber2022testing}). 
To generalize this sampling strategy, we follow \cite{harrison2012conservative} and introduce the following weighted version of aCSS sampling. 

 Suppose there exists another distribution \(Q(\cdot; \theta, g)\) such that \(P(\cdot ; \theta, g) \ll Q(\cdot; \theta, g)\), 
 and let \(\mu(\cdot; \theta, g)\) be the Radon-Nikodym derivative \(\rmd P(\cdot; \theta, g)/\rmd Q(\cdot; \theta, g)\).
Further suppose there is a sampler \(\widetilde{Q}_M(\cdot; X, \theta, g)\) such that
\begin{gather} \label{weighted sampling}
    \begin{aligned}
    \text{If } X \sim {Q}(\cdot\,;\theta,g) \text{ and } (\widetilde{X}^{(1)},\cdots,\widetilde{X}^{(M)})|X \sim \widetilde{Q}_M(\cdot\,;X,\theta,g), \text{ then}\\
    \text{the random vector } (X, \widetilde{X}^{(1)},\cdots,\widetilde{X}^{(M)}) \text{ is exchangeable.}
    \end{aligned}
\end{gather}

Instead of sampling copies \(\widetilde{X}^{(m)}\) from \(p_{\hat{\theta}}(\cdot \mid \hat{\theta},\hat{g})\), we sample $\widetilde{X}^{(m)}\sim  \widetilde{Q}_M(\cdot; X, \hat{\theta}, \hat{g})$ and define
the weighted p-value as 
\begin{equation}\label{weighted p-values}
        {\rm pval}_w(X,\tilde{X}^{(1)},\dots,\tilde{X}^{(M)}=\frac{\mu(X;\hat{\theta}, \hat{g})+\sum_{i=1}^M \mu(\widetilde{X}^{(i)};\hat{\theta}, \hat{g}){1}\left\{ T(\widetilde{X}^{(i)}) \geq T(X) \right\}}{\mu(X;\hat{\theta}, \hat{g})+\sum_{i=1}^M \mu(\widetilde{X}^{(i)};\hat{\theta}, \hat{g})}.
    \end{equation} 
Since this weighted p-value is invariant to scaling of \(\mu\), we can replace \(\mu(\cdot; \hat{\theta}, \hat{g})\) with its scaled version in \eqref{weighted p-values} when exact computation is infeasible.
The weighted aCSS method is summarized as follows. 
\begin{algorithm}
\caption{Implementation of Weighted aCSS Method}
\label{alg: aCSS implementation}
     \indent (i) Given \(X\), draw a randomization \(W \sim {N}(0,{I}_d/d)\). Compute the (randomized) statistic \((\hat{\theta}, \hat{g})\) and derive the sampling distribution \(p_{\hat{\theta}}(\cdot \mid \hat{\theta}, \hat{g})\).\\
     \indent (ii) If \(\hat{\theta}\) is not as an SSOSP, set \(\widetilde{X}^{(1)} = \cdots = \widetilde{X}^{(M)} = X\). Otherwise, sample copies from \((\widetilde{X}^{(1)}, \ldots, \widetilde{X}^{(M)}) | (X,\hat{\theta},\hat{g}) \sim \widetilde{Q}_M(\cdot\,; X, \hat{\theta}, \hat{g})\), where \(\widetilde{Q}_M\) is selected to satisfy \eqref{weighted sampling} with the conditional density \(p_{\hat{\theta}}(\cdot \mid \hat{\theta}, \hat{g})\) as computed.\\
     \indent (iii) Compute the weighted p-value defined in \eqref{weighted p-values} with some test statistic \(T\).
    \end{algorithm} 

    The weighted aCSS method is more general and allows flexible choices of \(Q(\cdot; \theta, g)\). In particular, setting \(Q(\cdot; \theta, g) = P(\cdot; \theta, g)\) yields \(\mu \equiv 1\), thereby recovering the unweighted approach of \cite{barber2022testing}. Consequently, their sampling procedures extend naturally to the weighted aCSS. 
Moreover, when each population \(P_\theta\) of the model permits direct i.i.d. sampling, choosing \(Q(\cdot; \theta, g) = P_\theta\) amounts to i.i.d. sampling from \(P_{\hat{\theta}}\). 
In this case, the additional term in the conditional density \(p_{\hat{\theta}}(\cdot \mid \hat{\theta}, \hat{g})\) apart from $f(x;\hat{\theta})$ acts as a weighting function. 
From this perspective, the aCSS method is analogous to the parametric bootstrap \citep{efron1992bootstrap}, but with weights to achieve a better approximation.

The performance of the weighted method is sensitive to the choice of \(Q(\cdot; \theta, g)\) and \(\widetilde{Q}_M(\cdot; X, \theta, g)\). 
At one extreme, if the sampler \(\widetilde{Q}_M(\cdot; X, \theta, g)\) is chosen such that the generated copies \(\widetilde{X}^{(m)}\) are identical to \(X\), the weighted p-value returned by Algorithm \ref{alg: aCSS implementation} will always be 1. 
At the other extreme, if \(\widetilde{X}^{(m)}\) are too dissimilar to \(X\), which often leads to very small \(\mu(\widetilde{X}^{(i)}; \theta, g)\), then the resulting weighted p-value may also be close to 1. 
In either case, the weighted aCSS method is powerless. 
In Appendix~\ref{app: BF}, we illustrate how to avoid such extremes in a numerical experiment, but we leave a thorough study for future work. 

\subsection{Approximate validity}\label{section: approximate valid}

To establish the approximate validity of the weighted p-value in \eqref{weighted p-values}, we impose the following assumptions on the estimator $\hat{\theta} : \mathcal{X} \times \mathbb{R}^d \to \Theta$ and the family $\{P_\theta: \theta\in \Theta\}$. 

\begin{assumption}\label{assumption:2}
For $(X, W) \sim P_{\theta_0} \times {\cal N}(0,{I}_d/d)$ with $\theta_0 \in \Theta$, with probability at least $1-\delta(\theta_0)$, 
$\hat{\theta}(X, W)$ is an SSOSP and satisfies that $\|\hat{\theta}(X, W) - \theta_0\|\leq {e(\theta_0)}$.  
\end{assumption}

Assumption~\ref{assumption:2} is compatible with well-studied convergence rates of MLEs and penalized MLEs. 
By controlling the magnitude of perturbation, the estimator \( \hat{\theta} \) often retains the same rate as the unperturbed one; see \cite{barber2022testing} for a discussion on this assumption in low-dimensional setting and \cite{zhu2023approximate} for the high-dimensional case.

\begin{assumption}\label{assumption:3}Define $H(\theta; x) = -\nabla^2_\theta \log f(\theta; x)$.
    For any $\theta_0 \in \Theta$, the expectation $H(\theta) := \mathbb{E}_{\theta_0} [H(\theta; x)]$ is finite for all $\theta \in B(\theta_0, e(\theta_0)) \cap \Theta$, where \( e(\theta_0) \) is the same constant as in Assumption \ref{assumption:2}. 
    Furthermore, the following two inequalities hold: 
\[
\begin{aligned}
    \mathbb{E}_{\theta_0}\left\{\sup_{\theta \in B(\theta_0, e(\theta_0)) \cap \Theta} e(\theta_0)^2 \left(\lambda_{\max} (H(\theta_0) - H(\theta; X))\right)\right\} &\leq \varepsilon(\theta_0),\\
    \log \mathbb{E}_{\theta_0}\left[\exp\left\{\sup_{\theta \in B(\theta_0, e(\theta_0)) \cap \Theta} e(\theta_0)^2 \cdot \left(\lambda_{\max}(H(\theta; X) - H(\theta_0))\right)\right\}\right] &\leq \varepsilon(\theta_0).
\end{aligned}
\]
\end{assumption}
Assumption~\ref{assumption:3} is a regularity condition regarding the Hessian of the log likelihood. 
As shown in \cite{barber2022testing}, for canonical generalized linear models, we can set \( \varepsilon(\theta_0) = 0 \).

The following theorem guarantees the approximate validity of the weighted aCSS method. 
\begin{theorem}\label{Theorem 1}
    Suppose Assumptions \ref{assumption:2} and \ref{assumption:3} hold. If $X\sim P_{\theta_0}$ for some $\theta_0\in \Theta$, then for any test statistic \( T
    \) and rejection threshold \( \alpha \in [0,1] \), the weighted p-value returned by Algorithm \ref{alg: aCSS implementation} satisfies
    \[
        pr\left( \mathrm{pval} \leq \alpha \right) \leq \alpha + 3\sigma e(\theta_0) + \varepsilon(\theta_0) + \delta(\theta_0),
    \]
    where \( e(\theta_0) \), \( \varepsilon(\theta_0) \), and \( \delta(\theta_0) \) are defined in Assumptions \ref{assumption:2} and \ref{assumption:3}.
\end{theorem}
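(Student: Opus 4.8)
The plan is to adapt the conditioning-and-coupling argument of \citet{barber2022testing} to the weighted p-value and to conditioning on the pair \((\hat\theta,\hat g)\). Write \(E=\{\hat\theta(X,W)\text{ is an SSOSP and }\|\hat\theta-\theta_0\|\le e(\theta_0)\}\), so \(\Pr(E^c)\le\delta(\theta_0)\) by Assumption~\ref{assumption:2}. I would split \(\Pr(\mathrm{pval}\le\alpha)=\Pr(\mathrm{pval}\le\alpha,E)+\Pr(\mathrm{pval}\le\alpha,E^c)\) and bound the second term crudely by \(\Pr(E^c)\le\delta(\theta_0)\); on the sub-event where \(\hat\theta\) fails to be an SSOSP, Algorithm~\ref{alg: aCSS implementation} sets all copies equal to \(X\), every indicator in \eqref{weighted p-values} equals \(1\), and \(\mathrm{pval}=1\), so that sub-event contributes nothing unless \(\alpha=1\). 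It then remains to show \(\Pr(\mathrm{pval}\le\alpha,E)\le\alpha+3\sigma e(\theta_0)+\varepsilon(\theta_0)\).

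On \(E\) I would condition on \((\hat\theta,\hat g)=(\vartheta,\gamma)\) with \(\|\vartheta-\theta_0\|\le e(\theta_0)\). By Lemma~\ref{lemma penalized}, the conditional law of \(X\) given that \(\hat\theta\) is an SSOSP and \((\hat\theta,\hat g)=(\vartheta,\gamma)\) is \(p_{\theta_0}(\cdot\mid\vartheta,\gamma)\), i.e. \eqref{density3}. Compare this with the \emph{hypothetical} in which, for the same \((\vartheta,\gamma)\), \(X\) is drawn from \(p_{\vartheta}(\cdot\mid\vartheta,\gamma)=P(\cdot;\vartheta,\gamma)\) of \eqref{density4}, while the copies are still drawn from \(\widetilde Q_M(\cdot;X,\vartheta,\gamma)\) and the weights are still \(\mu(\cdot;\vartheta,\gamma)=\mathrm dP(\cdot;\vartheta,\gamma)/\mathrm dQ(\cdot;\vartheta,\gamma)\). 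Since \(\widetilde Q_M\) satisfies \eqref{weighted sampling} relative to \(Q(\cdot;\vartheta,\gamma)\), the reweighting identity of \citet{harrison2012conservative} makes the weighted p-value \emph{exactly} valid in the hypothetical, \(\Pr(\mathrm{pval}\le\alpha\mid\vartheta,\gamma,\text{hyp})\le\alpha\). Crucially, the copies, the weights, and hence the value of \eqref{weighted p-values} are the \emph{same} randomized function of \(X\) (given \((\vartheta,\gamma)\)) in the actual and the hypothetical worlds, so the actual conditional rejection probability exceeds the hypothetical one by at most the total variation distance between the two laws of \(X\): \(\Pr(\mathrm{pval}\le\alpha\mid\vartheta,\gamma,\text{actual})\le\alpha+\mathrm{d}_{TV}\bigl(p_{\theta_0}(\cdot\mid\vartheta,\gamma),p_{\vartheta}(\cdot\mid\vartheta,\gamma)\bigr)\). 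Averaging over \((\hat\theta,\hat g)\) on \(E\) gives \(\Pr(\mathrm{pval}\le\alpha,E)\le\alpha+\mathbb{E}_{\theta_0}\bigl[\mathbf{1}_E\,\mathrm{d}_{TV}(p_{\theta_0}(\cdot\mid\hat\theta,\hat g),p_{\hat\theta}(\cdot\mid\hat\theta,\hat g))\bigr]\).

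The heart of the proof is to bound \(\mathbb{E}_{\theta_0}[\mathbf{1}_E\,\mathrm{d}_{TV}(p_{\theta_0}(\cdot\mid\hat\theta,\hat g),p_{\hat\theta}(\cdot\mid\hat\theta,\hat g))]\) by \(3\sigma e(\theta_0)+\varepsilon(\theta_0)\). The key structural fact is that \eqref{density3} and \eqref{density4} differ \emph{only} through the factor \(f(x;\theta_0)\) versus \(f(x;\hat\theta)\): the Gaussian factor, the \(F_{\mathrm{pen}}(\hat\theta;x)\) factor, and the indicator \(\mathbf{1}_{x\in\mathcal{X}_{\hat\theta,\hat g}}\) are identical. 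Hence, given \((\hat\theta,\hat g)=(\vartheta,\gamma)\), the likelihood ratio of the two conditional laws is \(e^{L(x)}/\mathbb{E}[e^{L}]\) with \(L(x)=\log f(x;\theta_0)-\log f(x;\vartheta)\), and any part of \(L\) constant in \(x\) cancels. I would Taylor-expand \(L\) around \(\vartheta\): using \(\hat g=\nabla_\theta\mathcal{L}(\hat\theta;X)+\sigma W\) from \eqref{gradient lasso} and \(\nabla_\theta\log f(x;\theta)=\nabla\mathcal{R}(\theta)-\nabla_\theta\mathcal{L}(\theta;x)\), the gradient term equals \([\nabla\mathcal{R}(\vartheta)-\hat g]^{\T}(\theta_0-\vartheta)+\sigma W^{\T}(\theta_0-\vartheta)\), whose first summand is \((\hat\theta,\hat g)\)-measurable (hence cancels) and whose second summand is at most \(\sigma\|W\|\,\|\theta_0-\vartheta\|\le\sigma\|W\|e(\theta_0)\) on \(E\); the second-order remainder equals, up to an \(x\)-free constant, \(\tfrac12(\theta_0-\vartheta)^{\T}[H(\theta_0)-H(\xi;x)](\theta_0-\vartheta)\) for some \(\xi\in B(\theta_0,e(\theta_0))\), which Assumption~\ref{assumption:3} controls from above in expectation and from below via the exponential-moment bound (the two one-sided bounds are precisely what keeps both \(\mathbb{E}[e^{L}]\) and the reverse ratio under control). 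Plugging these into, say, \(\mathrm{d}_{TV}(P,Q)\le\sqrt{\tfrac12 D_{\mathrm{KL}}(Q\,\|\,P)}\) with \(D_{\mathrm{KL}}(Q\,\|\,P)=\log\mathbb{E}_Q[e^{L}]-\mathbb{E}_Q[L]\), and transferring the conditional expectation \(\mathbb{E}_{X\sim p_{\theta_0}(\cdot\mid\vartheta,\gamma)}\) back to \(\mathbb{E}_{\theta_0}\) by the tower property (so that \(\mathbb{E}\|W\|\le1\) for \(W\sim N(0,I_d/d)\) applies), the score term should contribute \(3\sigma e(\theta_0)\) — the constant \(3\) absorbing the factor \(\tfrac12\) in \(\mathrm{d}_{TV}\), the passage from \(L\) to \(e^{L}\), and the two-sided split — and the Hessian term \(\varepsilon(\theta_0)\). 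Adding the \(\delta(\theta_0)\) from the first paragraph finishes the proof.

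The hardest step will be this last one: turning the qualitative statement ``\(L(\cdot)\) is nearly constant on the support of the conditional law'' into the precise bound \(3\sigma e(\theta_0)+\varepsilon(\theta_0)\). This requires careful bookkeeping of which Taylor terms are \(x\)-constant and cancel in the likelihood ratio, a Gaussian moment argument for the score term \(\sigma W^{\T}(\theta_0-\hat\theta)\) that survives the normalization \(\mathbb{E}[e^{L}]\), and a matching of the two asymmetric moment conditions in Assumption~\ref{assumption:3} to the two directions of Hessian deviation that enter the total-variation estimate; the change of measure between the conditional law \(p_{\vartheta}(\cdot\mid\vartheta,\gamma)\) used in Harrison's identity and the global law \(P_{\theta_0}\) underlying the assumptions also has to be threaded through the conditioning cleanly.
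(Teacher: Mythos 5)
Your reduction step is essentially the paper's Lemma~\ref{lemma_weighted}: conditioning on $(\hat\theta,\hat g)$, using Harrison-type weighted exchangeability to get exact validity in the hypothetical world where $X\sim p_{\hat\theta}(\cdot\mid\hat\theta,\hat g)$, and paying a total-variation price to return to the actual world (you pull the $\delta(\theta_0)$ out front via the event $E$ rather than absorbing it inside the TV bound, but that is only bookkeeping). Your identification of the structure of the likelihood ratio is also the same as the paper's: \eqref{density3} and \eqref{density4} differ only through $f(x;\theta_0)$ versus $f(x;\hat\theta)$, the Taylor expansion is taken around $\hat\theta$, the $(\hat\theta,\hat g)$-measurable terms cancel, and what remains is a score-type term $\sigma w(x)^{\T}(\theta_0-\hat\theta)$ plus a Hessian-deviation term to be handled by Assumption~\ref{assumption:3}.

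The gap is in how you close the TV bound. Pinsker's inequality $\mathrm{d}_{TV}\le\sqrt{D_{\mathrm{KL}}(Q\,\|\,P)/2}$ cannot deliver the stated result, for two reasons. First, it yields square-root quantities: the Hessian deviation would enter as roughly $\sqrt{\varepsilon(\theta_0)/2}$, which does not imply the additive term $\varepsilon(\theta_0)$ when $\varepsilon(\theta_0)$ is small, so you would prove a strictly weaker bound than the theorem claims. Second, with $Q=p_{\hat\theta}(\cdot\mid\hat\theta,\hat g)$ the divergence $\log\mathbb{E}_Q[e^{L}]-\mathbb{E}_Q[L]$ requires exponential and first moments of $L$ under the \emph{plug-in} conditional law; Assumption~\ref{assumption:3} only controls such moments under $\mathbb{E}_{\theta_0}$, and there is no tower-property identity converting $\mathbb{E}_{Q^*_{\theta_0}}\mathbb{E}_Q[\cdot]$ into an expectation under $P_{\theta_0}\times N(0,I_d/d)$ --- that identity holds only when the inner law is the \emph{true} conditional $p_{\theta_0}(\cdot\mid\hat\theta,\hat g)$. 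The paper (following Barber--Janson and Zhu--Barber, to whose Section~A.1.2 it defers) avoids both problems by expressing the TV distance with the true conditional as reference, e.g. $\mathrm{d}_{TV}(P,Q)=\mathbb{E}_P\bigl[\bigl(1-e^{-L}/\mathbb{E}_P[e^{-L}]\bigr)_+\bigr]$, applying $1-e^{-u}\le u$ so the estimate is additive rather than square-root, and then using the tower property so all expectations are unconditional under $P_{\theta_0}\times N(0,I_d/d)$; there the two one-sided conditions of Assumption~\ref{assumption:3} match exactly the terms $\log\mathbb{E}_P[e^{-L}]$ and $\mathbb{E}_P[(L)_+]$, and the score term contributes the several $\sigma e(\theta_0)$ pieces (via $\mathbb{E}\|W\|\le1$ and the Gaussian moment-generating function) whose count produces the constant $3$. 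In short, the change-of-measure issue you flag at the end as needing to be ``threaded through cleanly'' is exactly the step your chosen inequality cannot perform, so the hardest part of the proof remains open in your proposal.
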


As discussed in \cite{barber2022testing}, in classical settings with fixed $\theta_0$, the terms \(e(\theta_0)\), \(\delta(\theta_0)\), and \(\varepsilon(\theta_0)\) all vanish as  the sample size $n$ grows. 
Moreover, if \(e(\theta_0) = \mathcal{O}(n^{-\kappa})\) for some $\kappa>0$, we can obtain asymptotic Type-I error control, i.e,
\(
pr\left(\mathrm{pval} \leq \alpha\right) = \alpha + o(1), 
\) by setting \(\sigma \asymp n^{\tilde{\kappa}}\) for some $\tilde{\kappa} < \kappa$. 
 
Theorem~\ref{Theorem 1} may appear similar to Theorem 1 in \citet{barber2022testing}, but it introduces several important improvements. 
First, it adopts the weighted aCSS framework, which accommodates a wider variety of sampling schemes. 
Second, the guarantee for Algorithm~\ref{alg: aCSS implementation} permits the use of MLEs with general forms of regularization, which can yield a much smaller $e(\theta_0)$. 

In high-dimensional settings, Theorem 2 in \citet{zhu2023approximate} improves upon Theorem 1 of \citet{barber2022testing} by leveraging sparsity assumptions, and we provide an analogous result in the Supplementary Material 
(Appendix~\ref{sec:improved-validity}).
Furthermore, we demonstrate in Appendix~\ref{sec:misspecification} of the Supplementary Material that aCSS methods are robust to model misspecification.

\section{aCSS with other estimators beyond Regularized MLE}\label{section 4}

Thus far, the aCSS method has exclusively employed the regularized MLE. 
However, the regularized MLE may not always be the ideal estimator, especially when it is sensitive to extreme observations or is hard to optimize. 
This section illustrates how the aCSS framework can be extended to incorporate alternative estimators, with necessary modifications to ensure compatibility.

\subsection{Maximum Trimmed Likelihood Estimator}\label{trimmed MLE section}
Consider the linear model
$
X_i = Z_i^\top \theta + \varepsilon_i$ for $i = 1, \ldots, n$,
where \(X_i \in \mathbb{R}\) is the response variable, \(Z_i \in \mathbb{R}^d\) is a vector of fixed covariates, \(\varepsilon_i \in \mathbb{R}\) is the i.i.d. noise, and \(\theta \in \mathbb{R}^d\) is the parameter of interest with its true value denoted by \(\theta_0\). 
We assume that the noise distribution has a twice continuously differentiable density \(f_0\) with respect to the Lebesgue measure \(\nu_0\) on \(\mathbb{R}\). Consequently, the density of \(X = (X_1, \ldots, X_n)\) with respect to the Lebesgue measure
\(\nu_{\cal X} = \nu_0^{\otimes n}\) on \({\cal X} = \mathbb{R}^n\) is $
f(X; \theta) = \prod_{i=1}^n f_0(X_i-Z_i^\T\theta)$ .

To increase the robustness of the estimator used in the aCSS method, we replace the MLE with the Maximum Trimmed Likelihood Estimator (MTLE) \citep{hadi1997maximum}. 
For simplicity, we focus on the unregularized case, but it is possible to incorporate the regularization that has been considered before. 

Suppose \(h < n\) is a pre-specified integer. 
Define the negative log trimmed likelihood as
\[
\mathcal{L}_{\rm trim}(\theta; X) = \sum_{i=1}^h -\log f_0(X_{(i)} - Z_{(i)}^\T \theta),
\]
where the observations \(X_i\)'s are ordered such that:
\[
f_0(X_{(1)} - Z_{(1)}^\T \theta) \geq \cdots \geq f_0(X_{(n)} - Z_{(n)}^\T \theta).
\]
This ordering depends on \(\theta\), but we omit the notation for clarity. 
With \(\sigma > 0\) and \(W \in \mathbb{R}^d\) as previously defined, the perturbed MTLE is defined as 
\begin{equation}\label{MTLE}
    \hat{\theta}(X, W) = \argmin_{\theta \in \Theta} \mathcal{L}_{\rm trim}(\theta; X) + \sigma W^\T \theta,
\end{equation}

The MTLE operates by selecting the top \(h\) ``good'' subsamples \((X_{(1)}, \ldots, X_{(h)})\) and then computing the MLE based on them. 
Thus, the MTLE gains robustness at the cost of efficiency compared to the MLE. 
We denote the selected subsamples as 
\begin{equation}\label{MTLE-Selection}
{\cal J}(X; \theta) = \left\{ i \in [n] : f_0(X_i - Z_i^\top \theta) \geq f_0(X_{(h)} - Z_{(h)}^\top \theta) \right\}
\end{equation}
If the selection is strictly ordered, i.e., \(f_0(X_{(h)} - Z_{(h)}^\top \theta) > f_0(X_{(h+1)} - Z_{(h+1)}^\top \theta)\) and \(J = {\cal J}(X; \theta)\), then the gradient and Hessian of \({\cal L}_{\rm trim}\) at \(\theta\) are
\begin{align}
G_{{\rm trim}, J}(\theta; X_J) &= \sum_{i=1}^h \nabla \log f_{0}(X_{J_i}-Z_{J_i}^\T\theta) Z_{J_i}, \\
H_{{\rm trim}, J}(\theta; X_J) &= -\sum_{i=1}^h \nabla^2 \log f_{0}(X_{J_i}-Z_{J_i}^\T\theta) Z_{J_i}Z^\T_{J_i}.
\end{align}

For shorthand, we write \(x = [a, b]_J\) for the vector $x$ such that $x_J = a $ and $ x_{-J} = b$ where \(a \in \mathbb{R}^h\) and \(b \in \mathbb{R}^{n-h}\). We can now define a new version of SSOSP tailored for \eqref{MTLE}. 

\begin{definition}[SSOSP for MTLE]\label{SSOSP MTLE}
    A parameter \(\theta \in \bbR^d\) is a strict second-order stationary point (SSOSP) of the optimization problem \eqref{MTLE} if it satisfies all the following conditions:
    
     \indent (i) \(f_{0}(X_{(h)}-Z_{(h)}^\T\theta) > f_{0}(X_{(h+1)}-Z_{(h+1)}^\T\theta)\). 
     
     \indent (ii) {First-Order Condition (KKT):}  $G_{{\rm trim}, J}(\theta; X_J) + \sigma W =0$ with $J={\cal J}(X;\theta)$.
     
     \indent (iii) {Second-Order Condition:} $
        H_{{\rm trim}, J}(\theta; X_J)\succ 0$ with $J={\cal J}(X;\theta)$.
\end{definition}

The next lemma provides 
the conditional density for MTLE given both $\hat{\theta}$ and ${\cal J}(X;\theta)$. 
Note that $\hat{g}$ is not needed since we do not include any regularization in \eqref{MTLE}. 

\begin{lemma}[Conditional density for the MTLE]\label{lemma conditional density trimmed}  
Fix any $\theta_0\in\bbR^d$ and let $(X,W,\hat{\theta})$ be drawn from the joint model: 
    $$\left\{
        \begin{array}{l}
            (X,W)\sim P_{\theta_0}\times N(0,{I}_d/d)\\
             \hat{\theta} =\hat{\theta}(X,W)
        \end{array}
    \right.$$
    Fix any ${J}\subseteq [n]$. Suppose the event that $\hat{\theta}=\hat{\theta}(X, W)$ is an SSOSP of \eqref{MTLE} and ${\cal J}(X;\hat{\theta})=J$ has positive probability. Then, when this event happens, the conditional distribution of the subvector $X_{J}\mid \hat{\theta},X_{-J}$ has density  
    \begin{equation}\label{true density trimmed}
        \begin{aligned}
            p_{\theta_0}\left(x \mid \hat{\theta},X_{-J}\right)\propto \prod_{i\in J} f_{0}(x_i;\theta_0)\cdot 
            \exp\left(-\frac{\left\|G_{{\rm trim}, J}(x; \hat{\theta})\right\|^2}{2\sigma^2/d}\right)\cdot \det \left(H_{{\rm trim}, J}(x;\hat{\theta})\right)\cdot {1}_{x\in {\cal X}^{\rm trim}_{\hat{\theta},X_{-J},{J}}}
        \end{aligned}
    \end{equation}
    with respect to the measure $\nu_{\cal X}^\prime=\nu_0^{\otimes h}$ defined on ${\bbR}^h$, where
    \begin{align*}
        {\cal X}^{\rm trim}_{\theta,y,J}=\left\{x\in \bbR^h: \text{for some }w\in\bbR^d,\theta=\hat{\theta}\left(\left[x,y\right]_{J},w\right)\text{ is an SSOSP of }\eqref{MTLE} \text{ and } J={\cal J}([x,y]_J;\theta)\right\}.
    \end{align*}
\end{lemma}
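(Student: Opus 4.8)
The plan is to reduce the computation of the conditional density of $X_J \mid \hat\theta, X_{-J}$ to a change-of-variables argument on the map $w \mapsto \hat\theta(x,w)$ for fixed $x$, mirroring the strategy used in Lemma~\ref{lemma penalized} (and ultimately in \citet{barber2022testing}) but now respecting the subsample-selection structure of the MTLE. First I would fix the index set $J \subseteq [n]$ and work on the event $\{\hat\theta(X,W) \text{ is an SSOSP of } \eqref{MTLE},\ \mathcal{J}(X;\hat\theta) = J\}$. On this event, condition (i) of Definition~\ref{SSOSP MTLE} holds with strict inequality, so by continuity the selection $\mathcal{J}(\cdot;\cdot) = J$ is locally constant in a neighborhood of $(X,\hat\theta)$; hence in that neighborhood $\mathcal{L}_{\rm trim}(\theta;x) = \sum_{i \in J} -\log f_0(x_i - Z_i^\T\theta)$ is a genuinely smooth function of both $\theta$ and $x_J$, depending on $x$ only through $x_J$. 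This is the key localization step: it lets us treat the MTLE locally as an ordinary (unregularized, perturbed) MLE based on the $h$ selected observations, for which the first-order condition is $G_{{\rm trim},J}(\theta;x_J) + \sigma w = 0$ and the strict second-order condition is $H_{{\rm trim},J}(\theta;x_J) \succ 0$.

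Next I would carry out the change of variables. For fixed $x$ with $x_{-J}$ held at its observed value, the first-order condition defines $w = w(\theta,x) = -G_{{\rm trim},J}(\theta;x_J)/\sigma$, a smooth bijection (on the relevant region) between $\theta$ near $\hat\theta$ and $w$ near $\hat W$, with Jacobian $\partial w/\partial \theta = -H_{{\rm trim},J}(\theta;x_J)/\sigma$, whose absolute determinant is $\det H_{{\rm trim},J}(\theta;x_J)/\sigma^d$ (positive by the second-order condition). Pushing the Gaussian density of $W \sim N(0,I_d/d)$ through this map, the joint density of $(X_J, \hat\theta)$ at $(x_J,\theta)$ — against $\nu_0^{\otimes h} \times \text{Leb}_d$ — is proportional to $f(x;\theta_0)$ evaluated at the selected coordinates, i.e.\ $\prod_{i\in J} f_0(x_i - Z_i^\T\theta_0)$, times the Gaussian factor $\exp\!\big(-\tfrac{d}{2\sigma^2}\|G_{{\rm trim},J}(x_J;\theta)\|^2\big)$ coming from $\|w(\theta,x)\|^2$, times the Jacobian factor $\det H_{{\rm trim},J}(x_J;\theta)$, restricted to the set where this all makes sense. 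I would also have to carry along the $x_{-J}$-dependence honestly: the marginal-of-$x_{-J}$ contributes $\prod_{i\notin J} f_0(x_i - Z_i^\T\theta_0)$, but that is a constant in $x_J$ and $\theta$, so it drops out of the conditional density. Dividing the joint density of $(X_J,\hat\theta)$ by the marginal of $\hat\theta$ (absorbed into the normalizing constant), and evaluating at $\theta = \hat\theta$, yields exactly \eqref{true density trimmed}, with the indicator $1_{x \in \mathcal{X}^{\rm trim}_{\hat\theta,X_{-J},J}}$ encoding precisely the region on which the preimage under the map exists and is an SSOSP with selection $J$.

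The last point to nail down is that $\mathcal{X}^{\rm trim}_{\hat\theta,X_{-J},J}$ is the correct support set and that the density is well-defined (positive and finite normalizing constant) there; I would argue that $x \in \mathcal{X}^{\rm trim}_{\hat\theta,X_{-J},J}$ iff there exists $w$ with $\hat\theta = \hat\theta([x,X_{-J}]_J,w)$ an SSOSP and $\mathcal{J}([x,X_{-J}]_J;\hat\theta) = J$, which is exactly the event we conditioned on restricted to this slice, so the indicator is forced. Finiteness of the normalizing constant follows as in the verification for the penalized case (the argument referenced in Appendix~\ref{verify}): the Gaussian factor and the fact that $f_0$ integrates to one control the tails, while $\det H_{{\rm trim},J}$ is continuous and the region of integration is contained in $\{x : \hat\theta \text{ SSOSP}\}$.

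The main obstacle I expect is the bookkeeping around the selection set $\mathcal{J}(X;\theta)$: unlike the penalized case where the active set $\mathcal{A}(\theta)$ is a function of $\theta$ alone, here $\mathcal{J}$ depends jointly on $x_J$ and $\theta$, and one must verify carefully that conditioning on the \emph{observed} value of $\mathcal{J}$ (rather than letting it vary) does not break the change-of-variables — i.e.\ that on the conditioning event the map $(\theta \mapsto w)$ and the objective are smooth and the selection is genuinely frozen, so that no extra Jacobian or boundary term from the ``ordering constraints'' $f_0(x_{(h)} - \cdots) > f_0(x_{(h+1)} - \cdots)$ appears. Strict inequality in condition (i) is what makes this clean; I would emphasize that it is exactly the analogue of the ``strict'' in SSOSP that lets us ignore measure-zero boundary cases and keep the selection locally constant.
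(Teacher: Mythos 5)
Your proposal is correct and follows essentially the same route as the paper, which only sketches this proof by noting it mirrors the MLE change-of-variables argument of \citet{barber2022testing} with the selection set frozen on the conditioning event: your localization via the strict ordering inequality, the substitution $w=-G_{{\rm trim},J}(\hat\theta;x_J)/\sigma$ with Jacobian $\det H_{{\rm trim},J}/\sigma^d$, the cancellation of the $\prod_{i\notin J}f_0$ factor after conditioning on $X_{-J}$, and the indicator ${1}_{x\in{\cal X}^{\rm trim}_{\hat\theta,X_{-J},J}}$ are exactly the ingredients the paper intends. No gaps beyond the routine measurability/integrability checks you already flag, which are handled as in the penalized case.
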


Given $J={\cal J}([x,y]_J;\theta)$, 
we will sample sub-copies \(\widetilde{X}_{\rm sub}\) independently from the density that replaces $\theta_0$ in \eqref{true density trimmed} by $\hat{\theta}$: 
    \begin{equation}\label{sample density trimmed}
    p_{\hat{\theta}}\left(x \mid \hat{\theta},X_{-J}\right)\propto \prod_{i\in J} f_{0}(x_i;\hat{\theta})\cdot 
            \exp\left(-\frac{\left\|G_{{\rm trim}, J}(x; \hat{\theta})\right\|^2}{2\sigma^2/d}\right)\cdot \det \left(H_{{\rm trim}, J}(x; \hat{\theta}\right)\cdot {1}_{x\in {\cal X}^{\rm trim}_{\hat{\theta},X_{-J},{J}}},
\end{equation}
and the full copies $\widetilde{X}$ will be constructed as $\left[\widetilde{X}_{\rm sub},X_{-{J}}\right]_{J}$. 

The formal aCSS method with MTLE is summarized as follows:
\medskip
\begin{algorithm}
    \caption{Formal aCSS Method with MTLE}
    
    \indent (i) Given \(X\), draw \(W \sim N(0,{I}_d/d)\). Compute \(\hat{\theta}=\hat{\theta}(X,W)\) in \eqref{MTLE} and $J={\cal J}(X;\hat{\theta})$ in \eqref{MTLE-Selection}.
        
    \indent (ii) If $\hat{\theta}$ is not an SSOSP, then set $\widetilde{X}^{(1)}=\ldots=\widetilde{X}^{(M)}=X$ and return p-value as 1.
        
    \indent (iii) Otherwise, sample sub-copies $\widetilde{X}^{(1)}_{\rm sub},\ldots, \widetilde{X}^{(M)}_{\rm sub}$ from the sampler with respect to the density $p_{\hat{\theta}}(\cdot\mid \hat{\theta},X_{-J})$ given in \eqref{sample density trimmed}, and construct the copies $\widetilde{X}^{(i)}_{\rm full}=\left[\widetilde{X}^{(i)}_{\rm sub},X_{-J}\right]_{J}$.
        
    \indent (iv) Compute the weighted p-value defined in \eqref{weighted p-values} with some test statistic $T$.
    \end{algorithm} 
   
    \medskip

To establish the theoretical guarantee of Type-I error control, Assumption \ref{assumption:3} and Theorem~\ref{Theorem 1} are adapted to MTLE as follows. 
\begin{assumption}\label{assumption:MTLE}
    For any $\theta_0 \in \Theta$ and $J\subseteq [n]$ with $|J|=h$, the expectation $H_{{\rm trim},J}(\theta)=\bbE_{\theta_0} (H_{{\rm trim},J}(\theta; X_J) )$ exists for all $\theta \in B(\theta_0, e(\theta_0)) \cap \Theta$, and furthermore the following inequalities hold: 
\[
\begin{aligned}
    \mathbb{E}_{\theta_0}\left\{\sup_{\theta \in B(\theta_0, e(\theta_0)) \cap \Theta} e(\theta_0)^2 \left(\lambda_{\max} (H_{{\rm trim},J}(\theta_0) - H_{{\rm trim},J}(\theta; X))\right)\right\} &\leq \varepsilon(\theta_0),\\
    \log \mathbb{E}_{\theta_0}\left[\exp\left\{\sup_{\theta \in B(\theta_0, e(\theta_0)) \cap \Theta} e(\theta_0)^2 \cdot \left(\lambda_{\max}(H_{{\rm trim},J}(\theta; X) - H_{{\rm trim},J}(\theta_0))\right)\right\}\right] &\leq \varepsilon(\theta_0).
\end{aligned}
\]
Here $e(\theta_0)$ is the same constant as that appears in Assumption \ref{assumption:2}.
\end{assumption}

\begin{theorem}\label{theorem: MTLE}
Under Assumptions \ref{assumption:2}, the result of Theorem~\ref{Theorem 1} continues to hold for the aCSS method with MTLE, provided that Assumption \ref{assumption:3} is replaced with Assumption \ref{assumption:MTLE}.
\end{theorem}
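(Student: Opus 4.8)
The plan is to mirror the proof of Theorem~\ref{Theorem 1}, but replace the role of the pair $(\hat\theta,\hat g)$ and the conditional density \eqref{density4} with the pair $(\hat\theta, X_{-J})$ together with the fixed selection set $J$, and the conditional density \eqref{sample density trimmed}. The essential point is that Theorem~\ref{Theorem 1} is really a statement about approximate exchangeability that follows from two ingredients: (a) the copies $\widetilde X^{(m)}$ are drawn, after conditioning, from the ``plug-in'' density $p_{\hat\theta}(\cdot\mid\text{conditioning})$, which differs from the ``true'' density $p_{\theta_0}(\cdot\mid\text{conditioning})$ only through the parameter; and (b) a quantitative bound on $\mathrm d_{TV}\bigl(p_{\theta_0}(\cdot\mid\cdot),\,p_{\hat\theta}(\cdot\mid\cdot)\bigr)$ in terms of $\|\hat\theta-\theta_0\|$, $\sigma$, and a Hessian-fluctuation term. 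So the argument proceeds in three steps.

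First, I would condition on the event $E$ that $\hat\theta(X,W)$ is an SSOSP of \eqref{MTLE} and that $\|\hat\theta-\theta_0\|\le e(\theta_0)$; by Assumption~\ref{assumption:2} the complement has probability at most $\delta(\theta_0)$, contributing the $\delta(\theta_0)$ term. On $E$, further condition on the realized value of $(\hat\theta, X_{-J}, J)$. By Lemma~\ref{lemma conditional density trimmed}, the true conditional law of the subvector $X_J$ given this information is $p_{\theta_0}(\cdot\mid\hat\theta,X_{-J})$ of \eqref{true density trimmed}; Algorithm~4 instead samples sub-copies from $p_{\hat\theta}(\cdot\mid\hat\theta,X_{-J})$ of \eqref{sample density trimmed} (the full copies, which agree with $X$ on $-J$, then inherit exchangeability from the sub-copies via the sampler $\widetilde Q_M$ satisfying \eqref{weighted sampling}). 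Hence, as in \eqref{approximate valid}, the inflation of the p-value beyond $\alpha$ on this event is controlled by $\mathrm d_{\rm exch}$, which in turn is bounded by $\mathbb E\bigl[\mathrm d_{TV}\bigl(p_{\theta_0}(\cdot\mid\hat\theta,X_{-J}),\,p_{\hat\theta}(\cdot\mid\hat\theta,X_{-J})\bigr)\bigr]$.

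Second, I would bound this TV distance. Writing out \eqref{true density trimmed} and \eqref{sample density trimmed}, the ratio of the two integrands on $\mathcal X^{\rm trim}_{\hat\theta,X_{-J},J}$ is
$$
\frac{\prod_{i\in J} f_0(x_i;\theta_0)}{\prod_{i\in J} f_0(x_i;\hat\theta)}
= \exp\!\Bigl(\sum_{i\in J}\log f_0(x_i-Z_i^\T\theta_0)-\log f_0(x_i-Z_i^\T\hat\theta)\Bigr),
$$
and the remaining factors (the Gaussian term in $G_{{\rm trim},J}$, the determinant of $H_{{\rm trim},J}$, and the indicator) are \emph{identical} in the two densities. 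A second-order Taylor expansion of the log-likelihood increment around $\hat\theta$, using $G_{{\rm trim},J}(\hat\theta;X_J)=-\sigma W$ from the first-order KKT condition, reduces the exponent to a linear term $\sigma W^\T(\theta_0-\hat\theta)$ (bounded by $3\sigma e(\theta_0)$ after the Gaussian-tail / truncation bookkeeping in \citet{barber2022testing}) plus a quadratic remainder controlled by $\tfrac12\|\hat\theta-\theta_0\|^2\,\lambda_{\max}$ of a Hessian fluctuation of $H_{{\rm trim},J}$. This is exactly where Assumption~\ref{assumption:MTLE} enters, in place of Assumption~\ref{assumption:3}: its two inequalities (one in expectation, one in log-moment-generating-function form) convert the quadratic remainder into the $\varepsilon(\theta_0)$ term. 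Combining and using a Pinsker-type or direct bound on TV in terms of the log-ratio gives $\mathrm d_{TV}\lesssim 3\sigma e(\theta_0)+\varepsilon(\theta_0)$ in expectation.

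Third, assembling the three contributions yields $\mathrm{pr}(\mathrm{pval}\le\alpha)\le\alpha+3\sigma e(\theta_0)+\varepsilon(\theta_0)+\delta(\theta_0)$, which is the claim. The main obstacle is the second step: one must verify that, after conditioning on $J$, the \emph{only} discrepancy between \eqref{true density trimmed} and \eqref{sample density trimmed} is the $\prod_{i\in J}$ likelihood ratio over the retained coordinates — in particular that the support set $\mathcal X^{\rm trim}_{\hat\theta,X_{-J},J}$, the Gaussian weight built from $G_{{\rm trim},J}$, and the determinant weight $\det H_{{\rm trim},J}$ depend on $\theta_0$ only through $\hat\theta$ and not through $\theta_0$ directly — so that the whole analysis collapses onto the $h$-dimensional subvector and the Hessian-fluctuation argument of the original proof applies verbatim with $H$ replaced by $H_{{\rm trim},J}$. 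The strict-ordering condition (i) in Definition~\ref{SSOSP MTLE}, which ensures $J$ is locally constant and the trimmed objective is locally smooth, is what makes this reduction legitimate.
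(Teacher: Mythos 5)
Your plan matches the paper's proof: the paper likewise reduces the Type-I error inflation to the expected total variation between $p_{\theta_0}(\cdot\mid\hat{\theta},X_{-J})$ and $p_{\hat{\theta}}(\cdot\mid\hat{\theta},X_{-J})$ (via the weighted-sampling reduction of Lemma~\ref{lemma_weighted}, which is the right citation for the weighted p-value rather than \eqref{approximate valid}), and then bounds that distance by a Taylor expansion of the likelihood ratio over the selected subsample, with Assumption~\ref{assumption:MTLE} replacing Assumption~\ref{assumption:3} to control the fluctuation of $H_{{\rm trim},J}$. Your observation that only the $\prod_{i\in J}$ likelihood factor differs between \eqref{true density trimmed} and \eqref{sample density trimmed}, so the analysis collapses onto the $h$-dimensional subvector, is exactly the point the paper emphasizes before deferring the remaining bookkeeping to \citet{zhu2023approximate}.
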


Since the sampled copy \(\widetilde{X}\) differs from the original \(X\) only in the subvector \(\widetilde{X}_J\), the aCSS method using the MTLE conditions on more information than using the MLE. 
This additional conditioning reduces randomization and often lowers statistical efficiency, which is a trade-off for improved robustness. 
We use the trimmed estimator to mitigate the influence of extreme or contaminated samples, which can otherwise degrade the MLE.
By discarding outliers and focusing on the reliable samples, the trimmed estimator achieves greater robustness. 
We have also extended the aCSS method to accommodate quantile estimators, as detailed in Appendix~\ref{sec:quantile} of the Supplementary Material.

\subsection{Nonparametric Additive Models with Gaussian Noises}\label{gaussian model section}
Let the covariate matrix \(Z = (Z_1, \ldots, Z_n)^\T \in \mathbb{R}^{n \times d}\) be fixed, where \(z_{ij}\) denotes the \(j\)th element of \(Z_i\). 
The response
$X=(X_1,\ldots,X_n)\in \bbR^n$ has independent normal entries modelled as
\begin{equation}\label{gaussian model}
    X_i\sim N(\mu_i,\nu^2),\quad  \mu_i = \sum_{j=1}^d h_j(z_{ij}), \quad i\in [n], 
\end{equation}
where $h_j$'s are unknown nonparametric functions for each coordinate, and \(\nu^2\) is the known variance. 
This model is parametrized by the mean vector \(\mu = (\mu_1, \ldots, \mu_n)^\T \in \mathbb{R}^n\), whose true value is denoted by \(\mu_0 = (\mu_1^0, \ldots, \mu_n^0)^\T \in \mathbb{R}^n\). 
The additive structure in \eqref{gaussian model} provides desirable interpretability by isolating the effect of each predictor, but estimating the component functions $h_j$'s remains challenging due to their infinite-dimensional nature.

In the literature on nonparametric additive models, sieve-type estimators are commonly used. 
Specifically, assume each \(h_j\) belongs to a function class \(\mathcal{F}_j = \mathrm{span}\{\psi_{j1},\psi_{j2}, \ldots\}\), where $\{\psi_{jl}\}$ are basis functions such as polynomials or splines. 
With a sample of size $n$, this function class is truncated to a finite-dimensional subset $
\mathcal{F}_j^n = \mathrm{span}\{\psi_{j1}, \ldots, \psi_{jp_j}\}$, where \(p_j\) is a truncation parameter chosen based on $n$ and $\mathcal{F}_j$. 

Let $D=\sum_{j=1}^d p_j$. 
Using the selected basis functions, we construct an \(n \times D\) design matrix as 
\[
B = \left[ B^1 \mid \ldots \mid B^d \right], \text{ where } B^j_{il} = \psi_{jl}(z_{ij}),  j\in [d],  l\in [p_j], i\in [n],
\]

and estimate $\mu$ by \(B\theta\) for some \(\theta \in \mathbb{R}^D\), thereby approximating the nonparametric additive model with a normal linear model. 
This framework encompasses much of the existing literature on nonparametric additive models, e.g., \cite{10.1214/09-AOS692, ravikumar2009sparse}.

Our aCSS method for this model will make use of the following perturbed estimator:
\begin{equation}\label{finite gaussian}
    \hat{\theta}(X, W) = \argmin_{\theta \in \mathbb{R}^D} \mathcal{L}(\theta; X, W) + \sum_{j=1}^d \rho_j(\theta_{G_j}),
\end{equation}
where $G_j$ denote the indices of $\theta$ associated with $B^j$ so that \(\theta_{G_j} \in \mathbb{R}^{p_j}\) is the coefficients for \(h_j\), \(\rho_j(\cdot)\) are functions to impose structure-induced penalties on \(h_j\), and 
\begin{equation}\label{loss gaussian}
        \mathcal{L}(\theta; X, W) = \frac{1}{2\nu^2} \|X - B\theta\|^2 + {\cal R}(\theta) + \sigma W^\T (B\theta).
\end{equation}
Again, \(\mathcal{R}(\theta)\) is an optional twice-differentiable penalty function
and $W\sim N(0, {I}_n/n)$.
Note that the noise vector \(W\) is of dimension $n$ rather than $D$, so it may seem different from the previous formulation in \eqref{random MLE}. 
However, since $X$ is modelled by \(\mu \approx B\theta\), treating \(B \theta\) (rather than $\theta$ alone) as the parameter aligns this formulation with the previous one in \eqref{random MLE}. 
Consequently, Definition \ref{definition for penalty} of SSOSP applies to \eqref{finite gaussian},  provided that the penalty functions \(\rho_j(\cdot)\) satisfy Assumption \ref{assump:4}. 

Our aCSS method for the additive models does not condition on $\nabla_\theta \mathcal{L}(\hat{\theta} ; X, W)$; instead, we define \begin{equation}\label{gradient gaussian}
    \hat{g} = \hat{g}(X,W) = \frac{1}{\nu^2}(B\hat{\theta}(X,W)-X)+\sigma W.
\end{equation} 
Then we can express $\nabla_\theta \mathcal{L}(\hat{\theta}; X, W)$ as $B^\T \hat{g}+\nabla_\theta {\cal R}(\hat{\theta})$, which is a function of $(\hat{\theta}, \hat{g})$.

Assume each \( \rho_j \) is convex (which implies Assumption \ref{assump:4}) and that the loss function \( \mathcal{L}(\theta; X, W) \) has a positive definite Hessian (for example, by taking \( \mathcal{R}(\theta) = \tau \|\theta\|^2 \) for some small \( \tau > 0 \)). This ensures that the optimization problem \eqref{finite gaussian} has a unique minimizer, which is also an SSOSP of \eqref{finite gaussian}. 
Therefore, the conditional distribution of \(X \mid \hat{\theta}, \hat{g}\) can be simplified as
\begin{equation}\label{eq:additive-Gaussian-true}
     p_{\mu_0}(\cdot \mid \hat{\theta}, \hat{g})\propto{N}\left(B\hat{\theta}+\left(1+\frac{n}{\sigma^2\nu^2}\right)^{-1}\left\{\left(\mu_0-B\hat{\theta}\right)-\frac{n}{\sigma^2}\hat{g}\right\},\nu^2\left(1+\frac{n}{\sigma^2\nu^2}\right)^{-1}{I}_n\right).
\end{equation}
Replacing $\mu_0$ with $\hat{\mu}=B\hat{\theta}$, we directly draw i.i.d. copies $\widetilde{X}^{(m)}$ from 
\begin{equation}\label{sampling gaussian}
 {N}\left(B\hat{\theta}-\left(1+\frac{n}{\sigma^2\nu^2}\right)^{-1}\frac{n}{\sigma^2}\hat{g},\nu^2\left(1+\frac{n}{\sigma^2\nu^2}\right)^{-1}{I}_n\right). 
\end{equation}
The following theorem provides theoretical guarantees on the resulting aCSS testing. 

\begin{theorem}\label{theorem: linear additive}
    Consider the model \eqref{gaussian model}, and assume each $\rho_j$ is convex and \(\mathcal{L}\) has positive definite Hessian. Suppose that the estimator $\hat{\theta}(X,W):\bbR^n\times \bbR^n\to\bbR^D$ satisfies 
    $\|B\hat{\theta}(X,W)-\mu_0\|\leq e(\mu_0)$
    with probability at least $1-\delta(\mu_0)$, where the probability is taken with respect to the distribution $(X,W)\sim {N}\left(\mu_0,\nu^2\right)\times {N}(0,{I}_n/n)$. Then the i.i.d. copies $\widetilde{X}^{(1)},\ldots, \widetilde{X}^{(M)}$ generated from \eqref{sampling gaussian} are approximately exchangeable with $X$, which means that
    $${\rm d}_{\rm exch}\left(X,\widetilde{X}^{(1)},\ldots,\widetilde{X}^{(M)}\right)\leq 
    \frac{\sigma}{2\sqrt{n}} e(\mu_0)+\delta(\mu_0).$$
    In particular, for any predefined test statistic $T : \mathcal{X} \to \mathbb{R}$ and rejection threshold $\alpha \in [0,1]$, the p-value defined in \eqref{pval} satisfies
    \[
    \mathbb{P}\left( {\rm pval}_T(X,\widetilde{X}^{(1)},\ldots,\widetilde{X}^{(M)})\leq \al\right)\leq \alpha +\frac{\sigma}{2\sqrt{n}} e(\mu_0)+\delta(\mu_0).
    \]
\end{theorem}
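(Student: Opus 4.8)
The plan is to follow the same architecture as the proof of Theorem~\ref{Theorem 1}, but exploiting the fact that everything here is exactly Gaussian so the randomization and the conditional distribution can be written in closed form. First I would treat $B\theta$ (not $\theta$) as the effective parameter, so that the model is $X \sim N(\mu, \nu^2 I_n)$ with $\mu = B\theta$, and verify the closed-form expression \eqref{eq:additive-Gaussian-true} for $p_{\mu_0}(\cdot \mid \hat\theta,\hat g)$: since $\mathcal{L}$ has positive-definite Hessian and each $\rho_j$ is convex, the minimizer of \eqref{finite gaussian} is unique and is automatically an SSOSP, so Lemma~\ref{lemma penalized} applies with $F_{\rm pen}$ reducing to a constant (the determinant does not depend on $x$), and the factor $\exp(-\|\hat g - \nabla_\theta\mathcal{L}(\hat\theta;x)\|^2/(2\sigma^2/d))$ combines with $f(x;\mu_0)$ to give a Gaussian in $x$; completing the square over $x$ yields \eqref{eq:additive-Gaussian-true}.

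The second step is to bound the total-variation distance between the two ``plug-in'' samplers, namely the idealized one $\widetilde X^{(m)} \sim p_{\mu_0}(\cdot\mid\hat\theta,\hat g)$ of \eqref{eq:additive-Gaussian-true} and the actual one $\widetilde X^{(m)} \sim$ \eqref{sampling gaussian} obtained by replacing $\mu_0$ with $\hat\mu = B\hat\theta$. Both are Gaussians with the \emph{same} covariance $\nu^2(1+n/(\sigma^2\nu^2))^{-1} I_n$; their means differ by exactly $(1+n/(\sigma^2\nu^2))^{-1}(\mu_0 - B\hat\theta)$. Using the standard bound ${\rm d}_{TV}(N(a,\Sigma),N(b,\Sigma)) \le \tfrac12\|\Sigma^{-1/2}(a-b)\|$ and the event $\|B\hat\theta - \mu_0\|\le e(\mu_0)$, the mean gap in the $\Sigma^{-1/2}$-metric is
\[
\tfrac12\left(\nu^2\bigl(1+\tfrac{n}{\sigma^2\nu^2}\bigr)^{-1}\right)^{-1/2}\bigl(1+\tfrac{n}{\sigma^2\nu^2}\bigr)^{-1}\|\mu_0-B\hat\theta\|
= \tfrac12\,\nu^{-1}\bigl(1+\tfrac{n}{\sigma^2\nu^2}\bigr)^{-1/2}\|\mu_0-B\hat\theta\|.
\]
Since $(1+n/(\sigma^2\nu^2))^{-1/2} \le (\sigma^2\nu^2/n)^{1/2} = \sigma\nu/\sqrt n$, this is at most $\tfrac{\sigma}{2\sqrt n}\|\mu_0 - B\hat\theta\| \le \tfrac{\sigma}{2\sqrt n}e(\mu_0)$ on that event.

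The third step assembles the distance-to-exchangeability bound. Condition on $(\hat\theta,\hat g)$. On the high-probability event (probability $\ge 1-\delta(\mu_0)$) that $\|B\hat\theta - \mu_0\|\le e(\mu_0)$, couple the tuple $(X,\widetilde X^{(1)},\dots,\widetilde X^{(M)})$ drawn from the true model-then-idealized-sampler against the one drawn from the actual Algorithm: $X$ is shared, and each $\widetilde X^{(m)}$ is coupled so as to disagree with probability at most the TV bound above; the idealized tuple is genuinely exchangeable (i.i.d.\ draws from $X\mid \hat\theta,\hat g$ together with $X$ itself), so ${\rm d}_{\rm exch}$ of the actual tuple is at most the probability that the coupling fails somewhere, which by a union bound over the $M$ copies \emph{and} the sub-probability structure collapses to $\tfrac{\sigma}{2\sqrt n}e(\mu_0)$ (the $M$ does not multiply in, exactly as in \citet{barber2022testing}, because the idealized copies are i.i.d.\ given the conditioning and the single Radon--Nikodym/TV discrepancy controls the whole bundle) plus $\delta(\mu_0)$ for the bad event. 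The final p-value statement then follows from \eqref{approximate valid}. The main obstacle I anticipate is not any single estimate but getting the bookkeeping of step three right — specifically arguing cleanly that the $M$ copies contribute only a single factor of the TV gap rather than $M$ of them, which requires re-deriving (or invoking verbatim) the exchangeability-coupling lemma behind \eqref{approximate valid} in the present exactly-Gaussian, conditioned-on-$(\hat\theta,\hat g)$ setting, and making sure the substitution $\mu_0 \mapsto B\hat\theta$ interacts correctly with the fact that $\hat g$ itself depends on $X$.
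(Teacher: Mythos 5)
Your proposal is correct and follows essentially the same route as the paper: derive the two Gaussian conditional laws with common covariance, bound their total variation by $\tfrac{1}{2\nu}\bigl(1+\tfrac{n}{\sigma^2\nu^2}\bigr)^{-1/2}\|\mu_0-B\hat\theta\|\le \tfrac{\sigma}{2\sqrt n}e(\mu_0)$ on the good event, absorb the bad event into $\delta(\mu_0)$, and invoke the exchangeability-coupling machinery. The bookkeeping you worry about in step three (a single TV factor rather than $M$) is exactly what the paper's Lemma~\ref{lemma_weighted} supplies, so no re-derivation is needed.
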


Note that the optimization problem \eqref{finite gaussian} can be rewritten equivalently as
\[
\hat{\theta}(X, W) = \argmin_{\theta \in \mathbb{R}^D} \left( -\frac{1}{2\nu^2} \left\| (X - \sigma \nu^2 \cdot W) - B\theta \right\|^2 + R(\theta) + \sum_{j=1}^p \sum_{j=1}^d \rho_j(\theta_{G_j})\right),
\]
and that \(X - \sigma \nu^2 \cdot W \sim N(\mu_0,(\nu^2+\sigma^2/n)I_n)\). Therefore, if the perturbation magnitude is set as $\sigma=O(\sqrt{n})$, the perturbed estimator can achieve the same convergence rate as in the case without perturbation. 
For example, in $\al$-smooth Sobolev spaces, we can establish $e(\mu_0)$ with a rate of $O(\sqrt{s(\log d/n )^{2\alpha/(2\alpha+1)}})$ using the results in \cite{10.1214/09-AOS692}.

In Appendix~\ref{sec:gam} of the Supplementary Material, we extend this aCSS method to non-Gaussian generalized additive models \citep{hastie2017generalized}, but the theoretical upper bound on the error inflation converges at a slower rate in $n$ than that in Theorem \ref{theorem: linear additive}. 

While the aCSS framework was originally developed for parametric models, our successful extension to nonparametric additive models indicates its potential for use in other nonparametric or semiparametric settings. 

\section{Gaussian aCSS and Generalized aCSS}\label{section 5}

Previous sections extended the aCSS method to accommodate more regularized estimators, but some key limitations remain: the penalty parameter must be chosen independently of the data, and the choices of estimators are still restrictive. 
To overcome these limitations, this section introduces the generalized aCSS method.

To start with, we revisit the Gaussian model (\ref{gaussian model}) in Section~\ref{gaussian model section}. Instead of conditioning on an estimator as in the ordinary aCSS framework, we condition on a perturbed observation defined as $
X_{\rm noise} = X + \sigma U$,
where \(U \sim N(0, I_n)\) is independent of \(X\). 
We then have \(X_{\rm noise} \sim N(\mu_0, (\nu^2 + \sigma^2) I_n)\) and 
\begin{equation}\label{eq: conditional gaussian aCSS}
X \mid X_{\rm noise} \sim N\left(\frac{\sigma^2}{\sigma^2 + \nu^2} \mu_0 + \frac{\nu^2}{\sigma^2 + \nu^2} X_{\rm noise}, \frac{\sigma^2 \nu^2}{\sigma^2 + \nu^2} I_n\right).
\end{equation}

Let \(\hat{\mu} = \hat{\mu}(X_{\rm noise})\) be any estimator of $\mu_0$ based on \(X_{\rm noise}\). 
Replacing \(\mu_0\) in the above conditional distribution with \(\hat{\mu}\), we propose to draw i.i.d. copies $\widetilde{X}^{(m)}$ from:
\begin{equation}\label{sampling gaussian aCSS}
    N\left(\frac{\sigma^2}{\sigma^2 + \nu^2} \hat{\mu} + \frac{\nu^2}{\sigma^2 + \nu^2} X_{\rm noise}, \frac{\sigma^2 \nu^2}{\sigma^2 + \nu^2} I_n\right).
\end{equation}
Since this method is different from the ordinary aCSS method and is specifically designed for Gaussian distributions, we name it \textit{Gaussian aCSS}. 
Its approximate validity is guaranteed by the following theorem. 

\begin{theorem}\label{theorem: gacss}
    Consider the Gaussian model (\ref{gaussian model}). Assume that $\text{pr}\left\{\|\hat{\mu}(X_{\rm noise})-\mu_0\|\leq e(\mu_0)\right\}\geq 1-\delta(\mu_0)$,
    where the probability is taken with respect to $X_{\rm noise}\sim N(\mu_0,(\nu^2+\sigma^2){I}_n)$. 
    Then the i.i.d. copies $\widetilde{X}^{(1)},\cdots, \widetilde{X}^{(M)}$ generated from 
    \eqref{sampling gaussian aCSS}
    are approximately exchangeable with $X$, satisfying 
    $${\rm d}_{\rm exch}\left(X,\widetilde{X}^{(1)},\cdots,\widetilde{X}^{(M)}\right)\leq 
    \frac{\sigma}{2\nu^2} e(\mu_0)+\delta(\mu_0).$$
    In particular, for any $T : \mathcal{X} \to \mathbb{R}$ and $\alpha \in [0,1]$, the p-value defined in (\ref{pval}) satisfies
    \[
    \mathbb{P}\left( {\rm pval}_T(X,\widetilde{X}^{(1)},\cdots,\widetilde{X}^{(M)})\leq \al\right)\leq \alpha +\frac{\sigma}{2\nu^2} e(\mu_0)+\delta(\mu_0).
    \]
\end{theorem}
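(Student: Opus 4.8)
The plan is to bound the distance to exchangeability by comparing the ``idealized'' sampler that uses the true mean $\mu_0$ with the actual sampler that uses the estimator $\hat\mu(X_{\rm noise})$, and then to invoke a coupling/total-variation argument together with the inequality \eqref{approximate valid} relating ${\rm d}_{\rm exch}$ to Type-I error inflation. First I would set up the reference construction: conditionally on $X_{\rm noise}$, draw $X^{(0)}$ together with $\widetilde X^{(1)},\dots,\widetilde X^{(M)}$ as i.i.d.\ draws from the true conditional law \eqref{eq: conditional gaussian aCSS}, i.e.\ from $N\!\bigl(\tfrac{\sigma^2}{\sigma^2+\nu^2}\mu_0+\tfrac{\nu^2}{\sigma^2+\nu^2}X_{\rm noise},\,\tfrac{\sigma^2\nu^2}{\sigma^2+\nu^2}I_n\bigr)$. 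By construction $(X^{(0)},\widetilde X^{(1)},\dots,\widetilde X^{(M)})$ is exchangeable, since it is an i.i.d.\ sequence conditionally on $X_{\rm noise}$, hence exchangeable marginally. The actual tuple is $(X,\widetilde X^{(1)},\dots,\widetilde X^{(M)})$ where the copies are drawn from \eqref{sampling gaussian aCSS}; note $X$ itself is an exact draw from \eqref{eq: conditional gaussian aCSS} given $X_{\rm noise}$ (that is exactly the content of \eqref{eq: conditional gaussian aCSS}), so the only discrepancy between the two tuples lies in the $M$ copies, which use $\hat\mu$ in place of $\mu_0$ as the mean shift.

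The key estimate is therefore a bound on the total variation distance between the two $N+1$-tuples, which by the data-processing/coupling inequality is at most the expected total variation distance between the two conditional laws of the copies given $X_{\rm noise}$. Since both conditional laws are Gaussians with the same covariance $\tfrac{\sigma^2\nu^2}{\sigma^2+\nu^2}I_n$ and means differing by $\tfrac{\sigma^2}{\sigma^2+\nu^2}(\hat\mu-\mu_0)$, I would use the standard bound ${\rm d}_{TV}\bigl(N(a,\Sigma),N(b,\Sigma)\bigr)\le \tfrac12\|\Sigma^{-1/2}(a-b)\|$ (valid since $\tfrac12\|\cdot\|$ upper-bounds the TV distance between equal-covariance Gaussians via Pinsker applied to the KL), giving a per-copy bound of
\[
\frac{1}{2}\sqrt{\frac{\sigma^2+\nu^2}{\sigma^2\nu^2}}\cdot\frac{\sigma^2}{\sigma^2+\nu^2}\,\|\hat\mu-\mu_0\|
=\frac{\sigma}{2\nu}\cdot\frac{1}{\sqrt{\sigma^2+\nu^2}}\,\|\hat\mu-\mu_0\|\le \frac{\sigma}{2\nu^2}\|\hat\mu-\mu_0\|,
\]
using $\sqrt{\sigma^2+\nu^2}\ge\nu$. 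On the event $\{\|\hat\mu-\mu_0\|\le e(\mu_0)\}$, which has probability at least $1-\delta(\mu_0)$, this is at most $\tfrac{\sigma}{2\nu^2}e(\mu_0)$; off that event I bound the TV contribution by $1$, and absorb it into the $\delta(\mu_0)$ term. Crucially, because $\hat\mu$ is a deterministic function of $X_{\rm noise}$, conditioning on $X_{\rm noise}$ makes the copies genuinely i.i.d.\ with a common (data-dependent) mean, so the joint TV bound does not accumulate a factor of $M$: coupling all $M$ copies simultaneously against the reference copies costs the same single per-copy amount in expectation. Combining, ${\rm d}_{TV}$ between the actual tuple and the exchangeable reference tuple is at most $\tfrac{\sigma}{2\nu^2}e(\mu_0)+\delta(\mu_0)$, which by Definition \ref{Distance to Exchangeability} bounds ${\rm d}_{\rm exch}(X,\widetilde X^{(1)},\dots,\widetilde X^{(M)})$; the Type-I error statement then follows immediately from \eqref{approximate valid}.

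The main obstacle I anticipate is making the non-accumulation of the factor $M$ fully rigorous: one must argue that a single good coupling of $\hat\mu$-based copies with $\mu_0$-based copies, conditionally on $X_{\rm noise}$, can be done jointly rather than copy-by-copy. The clean way is to exploit that given $X_{\rm noise}$ the vector of $M$ copies under either law is a product measure with the same product covariance and means differing only by a common shift $c\cdot(\hat\mu-\mu_0)$ in each coordinate block; then the TV distance between the two product laws equals the TV distance between $N\bigl(0,\Sigma_M\bigr)$-type Gaussians whose means differ by the block-replicated shift, and a direct computation (or the same Pinsker bound applied to the $Mn$-dimensional Gaussian, whose squared Mahalanobis norm is $M$ times the single-copy one) shows the bound scales as $\sqrt{M}$ before — but not after — one also accounts for the fact that $X$ is drawn from the \emph{true} law, letting the final symmetrized/exchangeable target be chosen to match; I would instead sidestep this by the cleaner route above, namely replacing all of $(X,\widetilde X^{(1)},\dots,\widetilde X^{(M)})$'s copies at once by their $\mu_0$-counterparts via a single conditional coupling of the shift, for which the cost is exactly the per-copy TV bound since the copies share one and the same shift direction. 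A secondary check is to verify the equal-covariance Gaussian TV bound constant and the elementary inequality $\tfrac{\sigma}{\nu\sqrt{\sigma^2+\nu^2}}\le\tfrac{\sigma}{\nu^2}$, both of which are routine.
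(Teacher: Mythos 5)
Your single-copy computation is exactly the paper's: both conditional laws are Gaussian with covariance $\tfrac{\sigma^2\nu^2}{\sigma^2+\nu^2}I_n$ and means differing by $\tfrac{\sigma^2}{\sigma^2+\nu^2}(\hat\mu-\mu_0)$, the Pinsker-type bound gives $\tfrac12\|\Sigma^{-1/2}(a-b)\|\le\tfrac{\sigma}{2\nu^2}\|\hat\mu-\mu_0\|$, and the good/bad event split yields $\tfrac{\sigma}{2\nu^2}e(\mu_0)+\delta(\mu_0)$. The gap is in how you lift this to the $(M+1)$-tuple. You choose the exchangeable comparator by keeping $X$ (drawn from the true conditional given $X_{\rm noise}$) and replacing the $M$ copies by $\mu_0$-based copies. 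That forces you to control the total variation between two product measures with $M$ shifted components, and your claimed resolution --- that ``a single conditional coupling of the shift'' costs only the per-copy TV because all copies share the same shift direction --- is false. The maximal coupling of the two conditional laws of the copy vector succeeds with probability $1$ minus the \emph{joint} TV, and for equal-covariance Gaussians that joint TV is $2\Phi\bigl(\sqrt{M}\,d/2\bigr)-1$ with $d$ the per-copy Mahalanobis distance, i.e.\ it scales like $\sqrt{M}$ times the per-copy TV for small shifts (you note this yourself and then dismiss it without justification). As written, your route proves only a bound of order $\sqrt{M}\,\tfrac{\sigma}{2\nu^2}e(\mu_0)+\delta(\mu_0)$, which is not the statement of the theorem.

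The fix is to change which coordinate you perturb: take the exchangeable comparator in which \emph{all} $M+1$ coordinates, including the first, are i.i.d.\ from the plug-in conditional $p_{\hat\mu}(\cdot\mid X_{\rm noise})$ of \eqref{sampling gaussian aCSS}, and keep the actual sampling mechanism for the copies. Then the actual tuple and the comparator differ only in the conditional law of the single coordinate $X$ given $X_{\rm noise}$ (true conditional \eqref{eq: conditional gaussian aCSS} versus plug-in), so the tuple-level TV is bounded by the expected single-sample TV with no dependence on $M$. This is precisely what the paper does: it bounds $d_{\rm TV}\bigl(p_{\mu_0}(\cdot\mid X_{\rm noise}),p_{\hat\mu}(\cdot\mid X_{\rm noise})\bigr)$ as above and then invokes Lemma \ref{lemma_weighted} (with $\mu\equiv 1$, $Z=X_{\rm noise}$), whose proof carries out exactly this one-coordinate swap. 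With that substitution your argument becomes the paper's proof; without it, the $M$-accumulation is a genuine obstruction, not a technicality.
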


The result of Theorem \ref{theorem: gacss} is equivalent to that of Theorem \ref{theorem: linear additive}, after accounting for scaling differences. However, the interpretations are fundamentally different, as detailed below.

In the ordinary aCSS framework, the sampling density involves an indicator function of the event that the estimator is an SSOSP. 
To eliminate the indicator of this event, Theorem \ref{theorem: linear additive} requires the loss function to be strictly convex so that the event holds with probability 1. 
For an estimator under non-convex regularization, an ordinary aCSS method must verify it is an SSOSP for each generated copy, which incurs significantly computational costs. 
In contrast, Gaussian aCSS does not require the definition of SSOSP. 
Furthermore, the sampling distribution of copies is simply Gaussian.  
Consequently, if the test statistic is a linear functional of the data, 
then Gaussian aCSS becomes resampling-free, as the distribution of \( T(\widetilde{X}) \) can be determined directly.

Moreover, the ordinary aCSS method is limited to non-overlapping group-type regularized MLEs (as we have developed in Section~\ref{penalty section}) and requires the penalty parameter to be set independently of \(X\).  
For instance, these limitations exclude the common practice of using cross-validation to tune the penalty parameter, as well as the use of overlapping regularization.
In contrast, the Gaussian aCSS method is compatible with any estimator that depends on \(X_{\rm noise}\), including the MLE with sparse group lasso \citep{simon2013sparse}, which involves overlapping penalties, and the Dantzig selector \citep{10.1214/009053606000001523}, which does not maximize the likelihood. 

Note that if \( \nu \) is bounded, then the estimator \( \mu(X_{\text{noise}}) \) can achieve the same convergence rate as the estimator based on the original data \( \mu(X) \). Consequently, we can generally attain good estimation accuracy and asymptotic validity. 

The choice of \( \sigma \) represents a trade-off between test validity and power, because if \( \sigma = 0 \), the generated copies \( \widetilde{X} \) are identical to the original data \( X \), rendering the Gaussian aCSS powerless.

Extending the Gaussian aCSS method to other parametric models $\{P_\theta\}$ leads to the \textit{generalized aCSS method} outlined below: 

\begin{algorithm}
    \caption{General Recipe for generalized aCSS Method}\label{alg: gaCSS}
        \indent (i) Compute the noisy statistics \(Z = H(X, W)\), where \(W\) is a random perturbation, and obtain an estimator \(\hat{\theta}\) based on \(Z\).\\
        \indent (ii) Derive the conditional density \(p_{\theta_0}(\cdot \mid Z)\) of \(X \mid Z\), where \(\theta_0\) is the true parameter. \\
        \indent (iii) Replace \(\theta_0\) with \(\hat{\theta}\) and generate copies \(\widetilde{X}\) from \(p_{\hat{\theta}}(\cdot \mid Z)\).
\end{algorithm}

Algorithm~\ref{alg: gaCSS} is general and it subsumes several existing methods as special cases. 
In particular, setting \(Z=\left(\hat{\theta}(X, W),\hat{g}(X, W)\right)\) recovers the ordinary aCSS method, while setting $Z=X_{\text{noise}}$ yields the Gaussian aCSS method. 
 Moreover, if \(Z\) is chosen as an unperturbed sufficient statistic, Algorithm~\ref{alg: gaCSS} reduces to the classical CSS method \citep{stephens2012goodness}. 
 
In Algorithm~\ref{alg: gaCSS}, \(W\) need not be a random vector; it can be any random element whose specific form depends on the application. For example, \(W\) could represent a random data split, dividing \(X\) into \(X_{(1)}\) and \(X_{(2)}\), where \(X_{(1)}\) is used for estimation and \(X_{(2)}\) for inference. 
\section{Application to CRT with power analysis}\label{section 6}

We apply the aCSS method to test conditional independence and study the power optimality in a high-dimensional setting, where the parameter dimension \(d=d_n\) can grow with the sample size \(n\).

\subsection{Preliminaries: model-X CRT and conditional CRT}
\cite{candes2018panning} introduced the conditional randomization test (CRT) to test conditional independence. 
For random variables \(({\cal X}, {\cal Y}, {\cal Z})\), the goal is to test the null hypothesis 
$H_0: {\cal Y}\indp {\cal X}\mid {\cal Z}$ against the alternative $H_1: {\cal Y}\nindp {\cal X}\mid {\cal Z}$.
Let \({X} \in \mathbb{R}^n\), \({Y} \in \mathbb{R}^n\), and \({Z} \in \mathbb{R}^{n \times d}\) denote the observed $n$ realizations of ${\cal X}, {\cal Y}$, and $ {\cal Z}$, respectively. 
If the conditional distribution \(\mathcal{L}(X \mid Z)\) is known, the null hypothesis implies that  
\(
\mathcal{L}({X} \mid  {Y}, {Z})=\mathcal{L}({X} \mid {Z})
\).
This property is the foundation of CRT: we sample copies \(\tilde{{X}}^{(1)}, \cdots, \tilde{{X}}^{(M)}\) from \(\mathcal{L}({X} \mid {Z})\) and compute the p-value as
\begin{equation}\label{pval CRT}
    {\rm pval} = \frac{1}{M+1}\left(1 + \sum_{m=1}^M \mathbbm{1}\left\{T(\tilde{{X}}^{(m)}, {Y}, {Z}) \geq T({X}, {Y}, {Z})\right\}\right),
\end{equation}
where the test statistic \(T({X}, {Y}, {Z})\) is chosen such that larger values provide evidence against $H_0$. 

The requirement of a known distribution \(\mathcal{L}(X \mid Z)\) limits the original CRT. 
To relax this, we can use the CSS method to generate copies for CRT, which we refer to as the conditional CRT.
If \(\mathcal{L}(X \mid Z)\) lies in a model \(\{P_{\theta}(\cdot \mid Z) : \theta \in \Theta\}\), then testing $H_0: X \indp Y \mid Z$ amounts to testing 
\begin{equation}\label{null hypothesis aCSS CRT}
H_0: X \mid (Z, Y) \sim P_{\theta}(\cdot \mid Z) \text{  for some } \theta \in \Theta. 
\end{equation} 
The original CRT assumes that the true parameter value \(\theta_0\) is known and generates copies from \(P_{\theta_0}(\cdot \mid Z)\). In contrast, the conditional CRT generates copies from \(P_\theta(\cdot \mid Z, U)\), where \(U = U(X, Z)\) is a sufficient statistic so \(P_\theta(\cdot \mid Z, U)\) does not depend on the parameter \(\theta\).

As discussed in Section~\ref{section 1}, the CSS-based conditional CRT is ineffective when \(P_\theta(\cdot \mid Z, U)\) collapses to a point mass at $X$, which is often the case if \(d \gg n\). 
To leverage structural assumptions (e.g. sparsity) commonly imposed in high-dimensional settings, one can test \eqref{null hypothesis aCSS CRT} using the aCSS method with regularized MLEs. 
The Gaussian aCSS method in Section \ref{section 5} can be applied similarly.  
We collectively refer to the CRT using copies generated by these methods as the aCSS CRT. 
Since the aCSS CRT relies on the aCSS framework to generate copies, it inherits the robustness of aCSS methods with respect to the model specification of \( X \mid Z \).

\subsection{Power analysis of aCSS CRT}
Unlike the original CRT and conditional CRT, the aCSS CRT does not inherently control Type-I error. 
In fact, it requires careful tuning of the perturbation magnitude \(\sigma\) to balance between controlling Type-I error and achieving high power. 
Therefore, our analysis should evaluate validity and power simultaneously. 
Previous validity guarantees based on bounding the exchangeability distance apply to all test statistics but are often too loose for a specific statistic. 
Instead, we focus on particular test statistics to derive sharper characterizations of the validity-power trade-off. Consequently, the resulting Type-I error guarantees are specific to the chosen statistics and may not hold universally. 

For tractable analysis, we follow \cite{wang2022high} in considering the linear model 
\begin{equation}\label{Setting}
     {Y}={X}\beta+{Z}\xi+\varepsilon,\quad \varepsilon\sim {N}_n(0,\nu_Y^2I),
\end{equation}
    where $\nu_Y^2$ is a constant, ${X} = (X_1,\ldots, X_n)^\T\in\bbR^n$, and ${Z}\in\bbR^{n\times d}$ with rows $Z_i$  such that the following holds: 
    $$Z_i\stackrel{\rm i.i.d.}{\sim} {N}_d(0,\Sigma_Z),\quad X_i\mid Z_i\sim {N}(Z_i^\T\theta_0,1),\quad ({X},{Z})\perp \varepsilon.$$
Moreover, we assume that the eigenvalues of \(\Sigma_Z\) are bounded away from zero and infinity, and \(\|\xi\|\), \(\|\theta_0\|\), and \(\beta\) are bounded. 
For \eqref{Setting}, testing \(H_0: Y \perp X \mid Z\) is equivalent to testing $H_0:\beta = 0$.
The assumption of known variance for the regression of \( X \) on \( Z \) is made for analytical simplicity in studying aCSS methods.  

\textbf{\textit{Analysis with unlabeled data.}}
\cite{wang2022high} considered using potentially available unlabeled data \((X_{n+i}, Z_{n+i})_{i=1}^m\), which are samples of \((\mathcal{X},\mathcal{Z})\) without corresponding responses \(Y\), to enhance the power of the conditional CRT. 
Define the total number of sample as \(n_* = n + m\). Construct the augmented vector \(X_*\in \mathbb{R}^{n_*}\) and augmented matrix \(Z_*\in\mathbb{R}^{n_*\times d}\) by stacking the labeled data on top of the unlabeled ones. 
In conditional CRT, copies \(\widetilde{X}_{*}^{(\rm css)}\) are generated based on the sufficient statistic \(Z_*^\T X_*\), and the first \(n\) rows of the copies are used as \(\widetilde{X}^{(\rm css)}\) to compute the p-value. 
Since this construction is consistent with the case where there is no unlabeled data, our analysis accommodates $m=0$. 

We assume \(n_* > d\) and $Z_*$ has full column rank, so that the conditional CRT is applicable. 
We consider the aCSS CRT under the same setting: 
with $\sigma>0$ and ${W}\sim {N}(0,{I}_d/d)$ independent of the other random variables, we can derive the perturbed MLE as  
$$\hat{\theta}_{\rm OLS}({X}_*,{W})=\argmin_\theta \frac{1}{2}\left\|{X}_*-{Z}_*\theta\right\|^2+\sigma {W}^T\theta=\left({Z}_*^T{Z}_*\right)^{-1}\left({Z}_*^T{X}_*+\sigma {W}\right).$$
Since the loss function is strictly convex, $\hat{\theta}_{\rm OLS}({X}_*,{W})$ is unique, which simplifies the sampling distribution of copies as 
\begin{equation}\label{aCSS unre CRT}
  \tilde{{X}}_*^{(\rm acss)}\mid Z_*,\hat{\theta}_{\rm OLS}\sim {N}\left({Z}_*\hat{\theta}_{\rm OLS},({I}_{n_*}+\frac{d}{\sigma^2}{Z}_*{Z}_*^T)^{-1}\right). 
\end{equation}
As in conditional CRT, we use the first \(n\) rows of $\tilde{{X}}_*^{(\rm acss)}$ as \(\widetilde{X}^{(\rm acss)}\) to perform aCSS CRT. 

The following theorem shows that 
for an appropriate choice of \(\sigma\) and test statistic $T(X; Y, Z)$, the above aCSS CRT is asymptotically equivalent to the conditional CRT. 

\begin{theorem}\label{connection}
   Consider the linear model \eqref{Setting} with  the following asymptotics:
   $$n_*=n+m,\quad  \lim_{n\to\infty}\frac{n}{n_*}=\kappa^*,\quad \Sigma_Z=I_d,\quad \lim_{n\to\infty}\frac{d}{n}=\kappa,\quad \kappa\kappa^*\in (0,1).$$
   
   If we set \(\sigma = o\left(\sqrt{{d n_*}/{n}}\right)\) and use the test statistic \(T(X; Y, Z) = n^{-1} Y^\T X\), then aCSS CRT is asymptotically equivalent to conditional CRT.
\end{theorem}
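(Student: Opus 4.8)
The plan is to compare the two randomization distributions---the conditional CRT copies $\widetilde X^{(\mathrm{css})}$ and the aCSS CRT copies $\widetilde X^{(\mathrm{acss})}$---as distributions of the \emph{test statistic} $n^{-1}Y^\T X$, and to show that, under the stated proportional asymptotics, these two randomization laws converge to the same limit (conditionally on the data), so that the resulting p-values agree asymptotically. I would begin by writing down both conditional laws explicitly. For the conditional CRT, $\widetilde X_*^{(\mathrm{css})}$ is the conditional law of a Gaussian vector $X_*\sim N(Z_*\theta_0, I_{n_*})$ given the sufficient statistic $Z_*^\T X_*$; this is a degenerate Gaussian supported on the affine subspace $\{x_*: Z_*^\T x_* = Z_*^\T X_*\}$, with mean $Z_*(Z_*^\T Z_*)^{-1}Z_*^\T X_* = P_{Z_*} X_*$ and covariance $I_{n_*}-P_{Z_*}$ (the projection onto the orthogonal complement of the column space of $Z_*$). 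For the aCSS CRT, by \eqref{aCSS unre CRT} the copy $\widetilde X_*^{(\mathrm{acss})}$ is a genuine (nondegenerate) Gaussian with mean $Z_*\hat\theta_{\mathrm{OLS}} = P_{Z_*}X_* + (\sigma/\cdot)(\text{noise term})$ and covariance $(I_{n_*} + (d/\sigma^2)Z_*Z_*^\T)^{-1}$. The key structural observation is that as $\sigma\to\infty$ relative to the relevant scale, the aCSS covariance $(I_{n_*}+(d/\sigma^2)Z_*Z_*^\T)^{-1}$ collapses toward $I_{n_*}-P_{Z_*}$ on the column space of $Z_*$ while staying near $I$ on the orthogonal complement, and the mean perturbation $Z_*\hat\theta_{\mathrm{OLS}}-P_{Z_*}X_*$ becomes negligible.

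Next I would push both laws through the linear map $x \mapsto n^{-1}Y^\T x_{[n]}$, where $x_{[n]}$ denotes the first $n$ coordinates. Since both $\widetilde X_*^{(\mathrm{css})}$ and $\widetilde X_*^{(\mathrm{acss})}$ are Gaussian (the former degenerate), their images under this linear functional are one-dimensional Gaussians, fully described by a mean and a variance. It then suffices to show that (i) the means agree up to $o_p(1/\sqrt n)$ and (ii) the variances agree up to $o_p(1/n)$ (equivalently, $n\,\mathrm{Var}$ of the two differ by $o_p(1)$), conditionally on $(X,Z,W,Y)$, with the relevant scaling so that the standardized statistic has a nondegenerate limit. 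For the variance comparison the computation reduces to controlling $v^\T\big[(I_{n_*}+(d/\sigma^2)Z_*Z_*^\T)^{-1} - (I_{n_*}-P_{Z_*})\big] v$ where $v$ is the (restricted-to-first-$n$-coordinates) vector $n^{-1}Y$; decomposing $v$ into its projection onto $\mathrm{col}(Z_*)$ and the complement, and using the eigenvalue control on $Z_*^\T Z_*$ (which, under $\Sigma_Z=I_d$ and $d/n_*\to\kappa\kappa^*\in(0,1)$, has all eigenvalues bounded away from $0$ and $\infty$ by standard random-matrix / concentration arguments), the difference is $O(\sigma^2/d)$ on the column-space part and $0$ on the complement. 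Hence $\sigma^2 = o(d n_*/n)$ is exactly the condition that makes $n$ times this difference vanish, matching the theorem's hypothesis. For the mean, $Z_*\hat\theta_{\mathrm{OLS}} - P_{Z_*}X_* = \sigma Z_*(Z_*^\T Z_*)^{-1}W$, and $n^{-1}Y^\T$ applied to its first $n$ coordinates is $O_p(\sigma/(\sqrt n \cdot \sqrt d)) = o_p(1/\sqrt n)$ under the same $\sigma$ budget (using $\|Y\| = O_p(\sqrt n)$, $\|W\| = O_p(1)$, and the eigenvalue bound).

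Having matched means and variances, I would conclude that, conditionally on the observed data, $d_{TV}$ between the law of $T(\widetilde X^{(\mathrm{css})};Y,Z)$ and the law of $T(\widetilde X^{(\mathrm{acss})};Y,Z)$ tends to $0$ (two Gaussians with asymptotically equal means and variances, after appropriate standardization, have vanishing total variation distance; I would state this as a lemma, taking care that the common limiting variance is bounded away from $0$ so the comparison is nondegenerate). Since the p-value in \eqref{pval CRT} is a fixed measurable functional of $(T(X;Y,Z), T(\widetilde X^{(1)};Y,Z),\ldots,T(\widetilde X^{(M)};Y,Z))$ with the copies drawn i.i.d.\ from the respective randomization law given the data, the two p-values are coupled to within $o_p(1)$, which is the precise sense in which ``aCSS CRT is asymptotically equivalent to conditional CRT.'' I would make the notion of ``asymptotically equivalent'' explicit at the start (e.g.\ the two p-values can be defined on a common space with difference $\to 0$ in probability, or their laws converge to a common limit), so the target of the argument is unambiguous.

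\textbf{Main obstacle.} The delicate point is handling the \emph{degeneracy} of the conditional CRT law: $\widetilde X_*^{(\mathrm{css})}$ lives on a lower-dimensional affine subspace, so one cannot naively compare densities or total-variation distances of the ambient vectors---the two vector laws are mutually singular for every finite $\sigma$. The comparison only becomes clean \emph{after} projecting to the scalar test statistic $n^{-1}Y^\T X$, and one must verify that this projection does not ``see'' the singular direction in a way that survives the limit; concretely, one needs the component of $Y$ lying in $\mathrm{col}(Z_*)^\perp$ to carry a non-vanishing share of $\|Y\|^2$ (so the limiting variance is positive) while the component in $\mathrm{col}(Z_*)$ contributes a variance difference that is killed by $\sigma^2 = o(dn_*/n)$. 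Establishing these two facts simultaneously---positivity of the limiting variance and smallness of the discrepancy---under the proportional regime $d/n\to\kappa$, $n/n_*\to\kappa^*$ with $\kappa\kappa^*<1$, is where the random-matrix control of $Z_*$ and the structure of $Y = X\beta + Z\xi + \varepsilon$ (in particular the independent noise $\varepsilon$ guaranteeing a genuine component off $\mathrm{col}(Z_*)$) must be combined carefully. A secondary technical nuisance is that $\hat\theta_{\mathrm{OLS}}$, and hence the aCSS mean and covariance, are themselves random (functions of $W$ and $X_*$), so all the ``$o_p$'' statements are conditional-on-data statements that must be shown to hold with probability tending to one over the draw of $(X,Z,W,\varepsilon)$; I would organize this by first fixing a high-probability event on which the eigenvalues of $Z_*^\T Z_*/n_*$ and the norms $\|Y\|^2/n$, $\|W\|$ are controlled, and carrying out the deterministic comparison there.
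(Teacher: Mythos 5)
Your proposal is sound, and the underlying computations are the same two that drive the paper's argument: the mean shift $\sigma Z_*(Z_*^\T Z_*)^{-1}W$ and the gap between $(I_{n_*}+\tfrac{d}{\sigma^2}Z_*Z_*^\T)^{-1}$ and $I_{n_*}-P_{Z_*}$, both controlled through the spectrum of $Z_*^\T Z_*$, with $\sigma^2=o(dn_*/n)$ emerging as exactly the right budget. Where you differ from the paper is the reduction mechanism. The paper does not compare the two randomization laws of the scalar statistic directly; it couples the vector copies themselves through a shared Gaussian noise $\varepsilon_0$, writing $\widetilde X^*_{\rm css}$ and $\widetilde X^*_{\rm acss}$ as explicit functions of $(Z_*,X_*,W,\varepsilon_0)$, bounds $\mathbb{E}\|\widetilde X^*_{\rm css}-\widetilde X^*_{\rm acss}\|^2$ (using the Marchenko--Pastur law and the Bai--Yin smallest-eigenvalue bound to control $\tfrac1d\sum_i\lambda_i^{-2}$), and then invokes its general coupling lemma (Theorem~\ref{coupling}, Conditions~\ref{coupling accuracy}--\ref{distribution of W}) together with the fact, cited from \citet{wang2022high}, that the conditional law of $\widehat T$ is Gaussian with variance bounded away from zero. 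Your route instead pushes both laws through the linear functional first and compares the resulting one-dimensional Gaussians in total variation by matching means and variances. Each buys something: your argument is more self-contained for this particular statistic (no anti-concentration/coupling machinery needed, and only a crude eigenvalue bound on $Z_*^\T Z_*/n_*$ rather than the full Marchenko--Pastur limit), but it leans on linearity of $T$ in $X$ and on your having to establish yourself the non-degeneracy of the CSS randomization variance -- the point you correctly flag as the main obstacle, and which the paper dispatches by citation; the paper's vector-coupling route is statistic-agnostic up to the final step and is what lets the authors reuse the same template elsewhere. One bookkeeping remark: your statement that the covariance discrepancy is $O(\sigma^2/d)$ per direction conflates the normalized and unnormalized spectra -- with eigenvalues of $Z_*^\T Z_*$ of order $n_*$ the per-direction gap is $O(\sigma^2/(dn_*))$ -- but since $\kappa^*>0$ forces $n_*\asymp n\asymp d$, all these rates coincide up to constants and your conclusion that the stated $\sigma$ budget suffices (with the mean term being the binding one) is unaffected.
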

The proof of Theorem~\ref{connection} builds on the coupling technique introduced by \cite{fan2023ark}, and the technical tools developed in the Supplementary Material may be of independent interest for power analysis. 
We discuss several implications of Theorem~\ref{connection} in the following.

(1) \textit{Asymptotic efficiency}.  \cite{wang2022high} prove that the conditional CRT has non-trivial asymptotic power under local alternatives $H_1 : \beta = h/\sqrt{n}$ for a fixed $h > 0$. Combined with their result, Theorem~\ref{connection} (with $m=0$) implies the parametric efficiency of aCSS CRT. 

(2) \textit{Benefit of unlabeled data}.
In the regime where $d\asymp n$, the estimation error of OLS cannot vanish without additional assumptions such as sparsity \citep{raskutti2011minimax}, so \(\sigma = o(\sqrt{d})\) is needed in order to ensure asymptotic validity using Theorem 3 of \citet{zhu2023approximate}. 
Theorem \ref{connection} relaxes this requirement to \(\sigma=o(\sqrt{d} \sqrt{n_*/n})\) by leveraging the unlabeled data.

(3) \textit{Connection between CSS and aCSS}.
Theorem \ref{connection} reveals an interesting phenomenon: in our setting, the aCSS method with an appropriate \(\sigma\) is asymptotically equivalent to the CSS method. 
In particular, letting \(\sigma \to 0\) causes the distribution of \(\tilde{{X}}^*_{\rm acss}\), as in \eqref{aCSS unre CRT}, to become a degenerate Gaussian distribution identical to that of \(\tilde{{X}}^*_{\rm css}\). 
This result deepens our understanding of the connection between aCSS and CSS methods, and it may extend to broader settings.

\textbf{\textit{Analysis without unlabeled data.}}
The conditional CRT fails to apply without unlabeled data when \(n \ll d\). 
In this regime, we apply the Gaussian aCSS method introduced in Section~\ref{section 5}: compute \(X_{\rm noise} = X + \sigma W\), where \(W \sim N(0, I_n)\), and obtain an estimator \(\hat{\theta}\) based on \(X_{\rm noise}\). 
Based on \eqref{sampling gaussian aCSS} (with \( \nu = 1 \)), the copies \( \widetilde{X}_{\mathrm{acss}} \) are generated from
\[
N\left(\frac{\sigma^2}{1 + \sigma^2} Z\hat{\theta} + \frac{1}{1 + \sigma^2} X_{\rm noise}, \frac{\sigma^2}{1 + \sigma^2} I_n\right).
\]

In this high dimensional setting, we require the sparsity of $\theta_0$ to ensure consistent estimation. 
\begin{assumption}[Sparsity of $\theta_0$]\label{sparsity}
   \(\theta_0\) is sparse with \(s\) nonzero entries and for some nonnegative sequence \(\{\gamma_{n}\}\),  we have 
   ${s\log d}/{n^{1/2 + \gamma_{n}}} \to 0$ and $ n^{\gamma_{n} - 1/2} \to 0$.
\end{assumption}
As long as \(n\gg s\log d\), the existence of the sequence \(\{\gamma_{n}\}\) in Assumption~\ref{sparsity} is guaranteed; for example, by taking any $\gamma_{n}\in (1/2-\log\{n/(s\log(d))\}/\log n, 1/2$). 
The requirement that $n\gg s \log d$ is mild since it corresponds to the optimal sample size for consistent estimation \citep{raskutti2011minimax}. 
Furthermore, we assume the $\ell_1$-error of the perturbed estimator $\hat{\theta}$ can be bounded as follows. 

\begin{assumption}[Estimator error]\label{estimator}
  $\|\hat{\theta} - \theta_0\|_1 = O_p\left(s\sqrt{{\log d}/{n}}\right)$,   where $s = \|\theta_0\|_0$. 
\end{assumption}
Since \(X_{\text{noise}} \sim N(Z\theta_0, (\sigma^2 + 1){I}_n)\), if $\sigma$ is bounded and Assumption~\ref{sparsity} holds, various estimators can satisfy Assumption~\ref{estimator}, including the Dantzig selector, MLE with lasso or some nonconvex penalties \citep{10.1214/08-AOS620, loh2015regularized}. Moreover, if \(\log(d/s) \asymp \log d\), the rate of the $\ell_1$-error in Assumption~\ref{estimator} is known to be optimal \citep{ye2010rate}.

Assumptions~\ref{sparsity} and \ref{estimator} regard the estimation of the model for $X\mid Z$. To study the power of aCSS CRT, we need to specify the test statistic. The naive choice of $n^{-1} X^\T Y$ fails to remove the effect of $Z$. Instead, we consider \(T(X; Y, Z) = n^{-1} X^\T \widetilde{Y}\), where $\widetilde{Y}$ satisfies the following assumption. 

\begin{assumption}[Test Statistic]\label{Test statistics}
     \(\widetilde{Y}\) is a function of \((Y, Z)\) and there are two positive constants $C_1$ and $C_2$ such that with probability converging to 1, it holds that 
    \begin{equation}\label{power condition on statistic}
            \frac{\|Z^\T \widetilde{Y}\|_\infty}{\|\widetilde{Y}\|} \leq C_1 \sqrt{\log d}, \quad \frac{(Y - Z\xi - Z\beta \theta_0)^\T \widetilde{Y}}{\|\widetilde{Y}\|} \geq C_2 \sqrt{n}.
        \end{equation}
\end{assumption}

The two inequalities in \eqref{power condition on statistic} aim to ensure the validity and power of the aCSS CRT, respectively. A discussion on the existence and construction of $\widetilde{Y}$ is provided in Supplementary Material 
(Appendix~\ref{sec:test-stat}).  
With the preparatory work, we can state the following theorem. 
\begin{theorem}\label{power acss}
     Suppose Assumptions \ref{sparsity}, \ref{estimator}, and \ref{Test statistics} hold. 
     Consider the aCSS CRT with \(\sigma = C_3 n^{-\gamma_n}\) for any constant \(C_3 > 0\). 
     For any significance level $\alpha\in (0,1)$, this aCSS CRT is asymptotically valid. Moreover, under local alternatives \(\beta = h / n^{1/2 - \gamma_n}\), this aCSS CRT has asymptotic power no less than by \(\Phi(C_2 C_3 h/\nu_Y^2 - z_{1-\alpha})\).
\end{theorem}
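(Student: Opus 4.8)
The plan is to analyze the weighted/unweighted p-value
$\mathrm{pval} = \frac{1}{M+1}(1+\sum_m \mathbbm{1}\{T(\widetilde X^{(m)};Y,Z)\ge T(X;Y,Z)\})$
by studying the joint law of $(T(X;Y,Z), T(\widetilde X^{(1)};Y,Z),\dots)$ conditionally on $(X_{\rm noise},Z,Y)$, and to show that after centering and rescaling this vector is, up to an $o_p(1)$ perturbation, a vector of i.i.d.\ Gaussians plus a deterministic mean shift that is $0$ under $H_0$ and grows like $C_2C_3h/\nu_Y^2$ under the stated local alternative. Since the copies $\widetilde X^{(m)}$ are drawn i.i.d.\ from the explicit Gaussian \eqref{sampling gaussian aCSS} with $\nu=1$, the statistic $T(\widetilde X^{(m)};Y,Z)=n^{-1}(\widetilde X^{(m)})^\T\widetilde Y$ is exactly Gaussian given $(X_{\rm noise},Z,Y,\hat\theta)$: its mean is $n^{-1}\widetilde Y^\T(\frac{\sigma^2}{1+\sigma^2}Z\hat\theta+\frac{1}{1+\sigma^2}X_{\rm noise})$ and its variance is $\frac{\sigma^2}{1+\sigma^2}n^{-2}\|\widetilde Y\|^2$. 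So the only randomness to control is the gap between $T(X;Y,Z)$ and this conditional mean of the copies, and the fluctuation of $T(X;Y,Z)$ itself.

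First I would write $X=Z\theta_0+\zeta$ with $\zeta\sim N(0,I_n)$, so $X_{\rm noise}=Z\theta_0+\zeta+\sigma W$, and expand the conditional mean of a copy minus $T(X;Y,Z)$: the leading term is $n^{-1}\frac{\sigma^2}{1+\sigma^2}\widetilde Y^\T Z(\hat\theta-\theta_0)$ plus the ``honest-noise'' term coming from $X$ vs.\ $X_{\rm noise}$. Using Assumption \ref{Test statistics} (first inequality), $|n^{-1}\widetilde Y^\T Z(\hat\theta-\theta_0)|\le n^{-1}\|Z^\T\widetilde Y\|_\infty\|\hat\theta-\theta_0\|_1 \le C_1 n^{-1}\sqrt{\log d}\,\|\widetilde Y\|\cdot O_p(s\sqrt{\log d/n})$ by Assumption \ref{estimator}; multiplying by $\sigma^2/(1+\sigma^2)\le\sigma^2=C_3^2 n^{-2\gamma_n}$ and dividing by the copy standard deviation $\approx \sigma n^{-1}\|\widetilde Y\|$ shows this contributes $O_p(\sigma s\log d/n)=O_p(C_3 n^{-\gamma_n}s\log d/n)=o_p(1)$ to the standardized statistic precisely because $s\log d/n^{1/2+\gamma_n}\to0$ (Assumption \ref{sparsity}). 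The analogous bias term from $\zeta+\sigma W$ along $\widetilde Y$ is handled similarly — its standardized size is $O_p(n^{-1/2}/\sigma)\cdot(\text{something})$, and Assumption \ref{sparsity}'s condition $n^{\gamma_n-1/2}\to0$ is exactly what kills the residual after dividing by $\sigma=C_3n^{-\gamma_n}$. This is the step I expect to be the main obstacle: one must show that \emph{every} error term, once standardized by the (shrinking) copy standard deviation $\sim\sigma n^{-1}\|\widetilde Y\|$, is $o_p(1)$, and the two parts of Assumption \ref{sparsity} are tuned to cover the two competing sources (estimation error, amplified by $\sigma^2$; and sampling noise, suppressed by $1/\sigma$). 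Care is needed because $\widetilde Y$ is random and correlated with $Z$, so the bounds must be invoked on the high-probability event of Assumption \ref{Test statistics} and combined via a union bound.

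With the bias terms controlled, I would conclude as follows. Conditionally on $(X_{\rm noise},Z,Y,\hat\theta)$, the rescaled statistics $\sigma^{-1}n\|\widetilde Y\|^{-1}\big(T(\widetilde X^{(m)};Y,Z)-\overline m\big)$ are i.i.d.\ $N(0,\tfrac{1}{1+\sigma^2})\to N(0,1)$ copies, while the rescaled observed statistic $\sigma^{-1}n\|\widetilde Y\|^{-1}(T(X;Y,Z)-\overline m)$ equals a standard normal (from $X\mid X_{\rm noise}$, which by \eqref{eq: conditional gaussian aCSS} is Gaussian with the same conditional mean up to the controlled bias, and comparable variance) plus the deterministic shift $\Delta_n := \sigma^{-1}n\|\widetilde Y\|^{-1}\cdot n^{-1}(Y-Z\xi-Z\beta\theta_0)^\T\widetilde Y\cdot\beta$ coming from the ``signal part'' of $Y$. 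Under $H_0$ ($\beta=0$) this shift vanishes, the observed and copy statistics are asymptotically exchangeable, and the generalized p-value is asymptotically $\mathrm{Uniform}[0,1]$ up to an $o_p(1)$ total-variation error — giving asymptotic validity $\mathrm{pr}(\mathrm{pval}\le\alpha)\to\alpha$. Under $\beta=h/n^{1/2-\gamma_n}$, the second inequality of Assumption \ref{Test statistics} gives $n^{-1}(Y-Z\xi-Z\beta\theta_0)^\T\widetilde Y\ge C_2 n^{-1/2}\|\widetilde Y\|$ with probability $\to1$, and $\varepsilon$ has variance $\nu_Y^2$ so the effective noise scale in $T(X)-\overline m$ picks up a factor $\nu_Y$ (equivalently $\widetilde Y$-direction noise); multiplying out, $\Delta_n \ge \sigma^{-1}\cdot C_2 n^{1/2}\cdot \beta / \nu_Y^2 \cdot(1+o_p(1)) = (C_3 n^{-\gamma_n})^{-1} C_2 n^{1/2}\cdot h n^{-1/2+\gamma_n}/\nu_Y^2 = C_2 C_3^{-1}\cdots$ — I would carry out this bookkeeping carefully so the powers of $n$ cancel to leave the constant $C_2C_3 h/\nu_Y^2$ claimed (the $\sigma$ in the denominator of the standardization and the $\sigma$ hidden in the copy variance combine to give $C_3$ in the numerator). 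Then letting $M\to\infty$ (or using the finite-$M$ rank p-value and the law of large numbers), $\mathrm{pval}\Rightarrow\Phi(-(Z^*+\Delta))$ with $Z^*\sim N(0,1)$ and $\Delta = C_2C_3 h/\nu_Y^2$, so the asymptotic power is $\mathrm{pr}(\Phi(-(Z^*+\Delta))\le\alpha)=\mathrm{pr}(Z^*\ge z_{1-\alpha}-\Delta)=\Phi(\Delta-z_{1-\alpha})=\Phi(C_2C_3 h/\nu_Y^2-z_{1-\alpha})$, as claimed.
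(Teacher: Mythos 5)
Your overall architecture matches the paper's proof: condition on $(X_{\rm noise},Y,Z)$, use the exact Gaussian law of the copies, and split the (standardized) difference of conditional means into an estimation-error term killed by Assumptions \ref{sparsity}--\ref{Test statistics} with $\sigma=C_3n^{-\gamma_n}$, a signal term giving the noncentrality, and a remainder killed by $n^{\gamma_n-1/2}\to0$. Your treatment of the first term is essentially the paper's (your rate $O_p(\sigma s\log d/n)$ should read $O_p(\sigma s\log d/\sqrt n)=O_p(s\log d/n^{1/2+\gamma_n})$, but you invoke the correct condition), and the validity argument at $\beta=0$ is fine because there $Y\indp X\mid Z$, so conditioning on $Y$ is harmless.

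The genuine gap is in the step that produces the power constant. Under the alternative you cannot use \eqref{eq: conditional gaussian aCSS} for the law of $X$ given the conditioning variables: since $Y=X\beta+Z\xi+\varepsilon$, conditioning on $Y$ in addition to $X_{\rm noise}$ changes the distribution of $X$. The paper's proof computes this law exactly by combining the three Gaussian factors, obtaining
\[
X\mid X_{\rm noise},Y,Z\sim N\Bigl(\sigma_\beta^2\bigl[\tfrac{X_{\rm noise}}{\sigma^2}+\tfrac{\beta}{\nu_Y^2}(Y-Z\xi)+Z\theta_0\bigr],\,\sigma_\beta^2 I_n\Bigr),\qquad \sigma_\beta^2=\bigl(\tfrac{1}{\sigma^2}+\tfrac{\beta^2}{\nu_Y^2}+1\bigr)^{-1},
\]
so the signal enters the conditional mean only through the precision-weighted term $\sigma_\beta^2\beta(Y-Z\xi)/\nu_Y^2\approx\sigma^2\beta(Y-Z\xi)/\nu_Y^2$; after standardizing by $\sigma_\beta\approx\sigma_0$, the shift scales as $\sigma\beta\,\widetilde Y^\T(Y-Z\xi-Z\beta\theta_0)/(\nu_Y^2\|\widetilde Y\|)\geq \sigma\beta C_2\sqrt n/\nu_Y^2\to C_2C_3h/\nu_Y^2$. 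Your heuristic instead grafts the raw signal onto the unconditional-on-$Y$ law and standardizes by the copy standard deviation, producing $\Delta_n\asymp(\beta/\sigma)C_2\sqrt n$, i.e.\ $C_3$ in the denominator and an uncancelled factor $n^{2\gamma_n}$ — which you notice (the trailing ``$\cdots$'') but resolve only with a promise that careful bookkeeping will flip it. That flip is not bookkeeping: it requires exactly the conditional-law computation above, which also delivers $\sigma_0/\sigma_\beta\to1$ so that the copy-based threshold is asymptotically calibrated, and the correct weight $\sigma_\beta^2\beta^2/\nu_Y^2$ on the $X_{\rm noise}-Z\theta_0$ term that the second half of Assumption \ref{sparsity} is designed to kill. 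Without it, the claimed constant $\Phi(C_2C_3h/\nu_Y^2-z_{1-\alpha})$ (and even the qualitative fact that larger $\sigma$ increases power) does not follow from your argument.
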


By Corollary 5 in \cite{bradic2022testability}, we conclude that our method achieves the optimal detection rate. Compared to the method in \cite{bradic2022testability}, our approach does not require data splitting. 
In the moderately sparse regime, where \(\sqrt{n}/\log d < s \ll n/(\log d)\), the choice of $\sigma$ in Theorem~\ref{power acss} requires knowledge of \(s\) since the value of \(\gamma_{n}\) in Assumption~\ref{sparsity} depends on \(s\). 
By contrast, in the ultra-sparse regime, where \(s \log d / \sqrt{n} \to 0\), we can set \(\gamma_{n} = 0\) in Assumption \ref{sparsity} and perform the aCSS CRT with a constant $\sigma$ that does not rely on \(s\).  
This transition phenomenon in the required knowledge of sparsity aligns with the adaptivity of optimal inference discussed in \cite{bradic2022testability}.

\section{Simulations}\label{section 7}
In this section, we summarize the numerical performance of our extended aCSS methods in two simulated examples. 
Details are deferred to Appendix~\ref{app: simulate} of the Supplementary Material to save space. 

\subsection{Behrens--Fisher problem with contamination}
\begin{figure}[ht]
\centering
\includegraphics[width=0.8\textwidth]{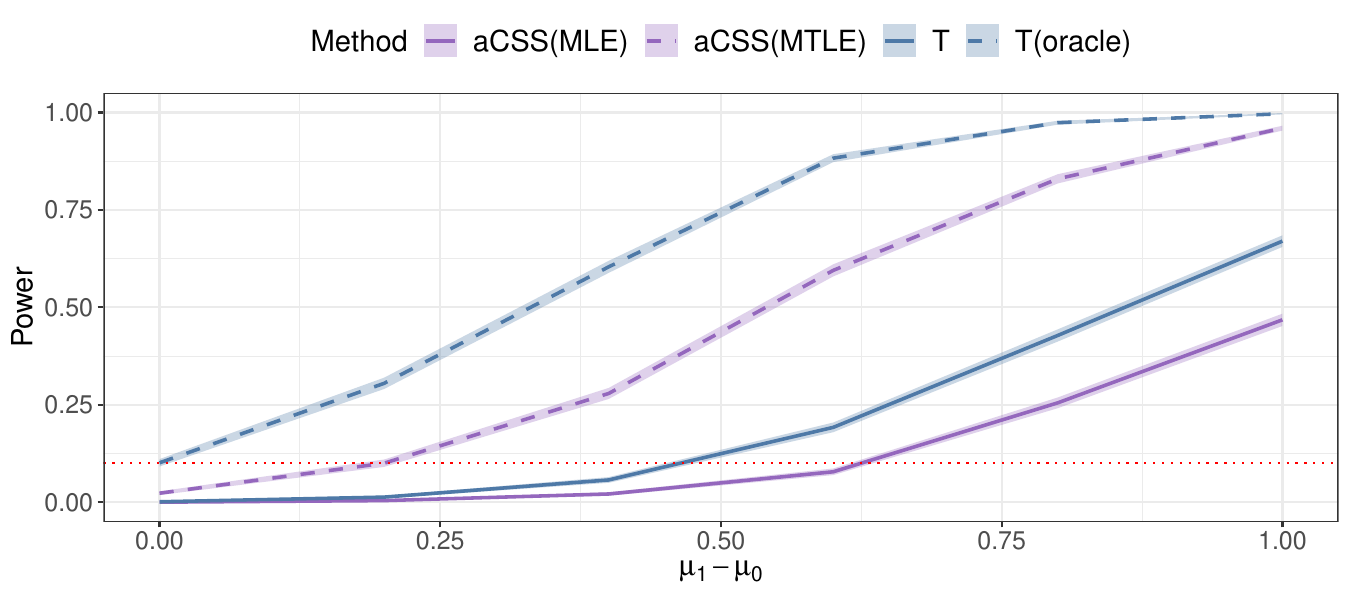}
\caption{Comparison of aCSS Test Power with Alternative Methods in the Behrens-Fisher Problem with Contamination (Nominal Level 0.1, Averaged Over 1000 Replications)}
\label{fig1}
\end{figure}

We begin with the classical Behrens--Fisher problem. Two independent samples are draw from normal distributions with potentially different variances: 
\[
X_1^{(0)}, \ldots, X_{n^{(0)}}^{(0)} \stackrel{\text{i.i.d.}}{\sim}N(\mu^{(0)}, \gamma^{(0)}), \quad 
X_1^{(1)}, \ldots, X_{n^{(1)}}^{(1)} \stackrel{\text{i.i.d.}}{\sim} {N}(\mu^{(1)}, \gamma^{(1)}).  
\]
The goal is to test \(H_0: \mu^{(0)} = \mu^{(1)}\) versus \(H_1: \mu^{(0)} < \mu^{(1)}\). Under the null hypothesis, the model can be parameterized as
\[
\theta = (\mu, \gamma^{(0)}, \gamma^{(1)}) \in \Theta = \mathbb{R} \times \mathbb{R}_+ \times \mathbb{R}_+ \subseteq \mathbb{R}^3.
\]
This problem has been studied by \cite{barber2022testing} in the context of aCSS. 
Here, we consider a more challenging scenario where the first dataset \(X^{(0)}\) is partially contaminated: $m=5$ out of $n^{(0)}=50$ observations are perturbed to have a higher mean. 
This contamination will impact the power of standard inference. 
In this example, we compare the following methods:

\indent (i) aCSS with MLE and MTLE:  The test statistic is the difference in sample means, and the copies are generated using the weighted sampling method. For MTLE, we set \(h = 45\), i.e., the exact number of uncontaminated observations. 
Additional results for other choices of \(m\) and \(h\) are provided in the Supplementary Material.

\indent (ii) t-test and its oracle version: t-test
is known as the most powerful test for the two sample test and the oracle version requires the information of the contaminated part of the dataset.

As shown in Figure \ref{fig1}, the oracle version of the t-test is the most powerful among the four tests, while the standard t-test suffers due to the contamination. 
Comparing the two aCSS methods, the one using MTLE significantly outperforms the one using MLE.

\subsection{Conditional independence test with aCSS CRT}
\begin{figure}[ht]
\centering
\includegraphics[width=0.8\textwidth]{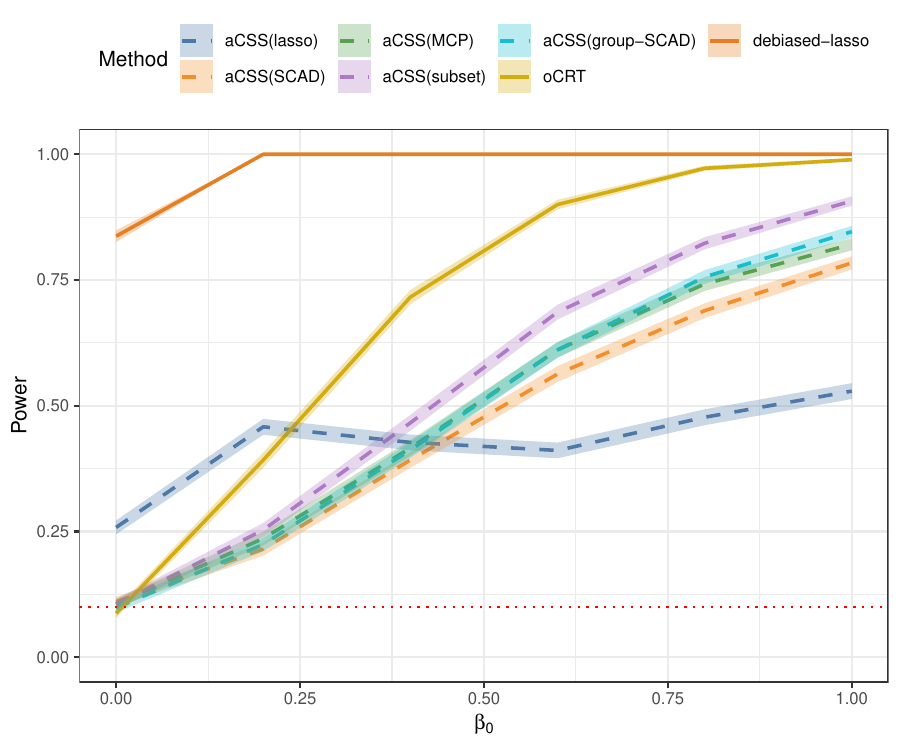}
\caption{Power of aCSS-CRT Compared to Alternative Methods for Conditional Independence Testing (Nominal Level 0.1, Averaged Over 1000 Replications)}
\label{fig2}
\end{figure}
Consider the setting in \eqref{Setting}, where we aim to test \(H_0: \beta = 0\) versus \(H_1: \beta > 0\). This setting was studied by \cite{zhu2023approximate}, where their aCSS method was limited to a lasso estimator of $\theta$. 
Here we use the following methods to perform the inference:

\indent (i) aCSS method with five estimators of $\theta$: lasso, SCAD, MCP, Group-SCAD, and best subset selection estimators. 
Following \cite{liu2022fast}, the CRT with an estimator $\hat{\theta}$ is implemented using the distilled statistic $T(X;Y,Z)= (Y-Z\hat{\xi})^T(X-Z\hat{\theta})$,
where $\hat{\xi}$ is an estimator of $\xi$. 
    Since the sampling distribution of copies \(\widetilde{X}\) is Gaussian, \(\widetilde{T} = T(\widetilde{X})\) is also normally distributed, which makes the CRT resampling-free.
For comparison, we also perform the oracle CRT using the true value of \( \theta_0 \), denoted by the legend ``oCRT'' in Figure \ref{fig2}.

\indent (ii) debiased lasso: This is a well-known method for high dimensional inference; see \cite{javanmard2014confidence, zhang2014confidence, 10.1214/14-AOS1221} for details. 

Figure \ref{fig1} illustrates the validity and performance of each method.
Specifically, the debiased-lasso and the aCSS method using the lasso estimator fail to control the Type-I error. 
Among the valid methods, aCSS with best subset selection estimator achieves the highest power, closely approaching the oracle CRT. 
The aCSS methods using group-SCAD and MCP estimators perform similarly, while the one using SCAD is slightly worse. 
This example demonstrates the necessity and benefit of extending the aCSS method to incorporate nonlinear and group penalty for high-dimensional problems.

\section{Discussion}
This paper extends the aCSS methodology by incorporating nonlinear regularization and exploring its applicability to a broader class of estimators beyond regularized maximum likelihood estimation. 
Furthermore, we propose a generalized aCSS framework that unifies several existing approaches in the literature. In addition, we conduct a detailed power analysis of aCSS procedures and demonstrate their optimality across a range of settings.

Our work opens several directions for future research.
A promising avenue is to extend aCSS methods beyond the Model-X CRT. For instance, selective inference with randomization \citep{10.1214/17-AOS1564} shares similarities with Gaussian aCSS in that both perform inference after adding Gaussian noises. Furthermore, aCSS techniques may be adapted to generate approximately exchangeable copies for use in Model-X knockoff procedures. Recent developments, such as those in \cite{fan2023ark}, may offer useful tools for achieving false discovery rate (FDR) control.

\section*{Acknowledgement}
D. Huang was partially supported by NUS Start-up Grant A-0004824-00-0. 
\section*{Supplementary material}
\label{SM}
The Supplementary Material includes auxiliary results, proofs, and simulation details.

\bibliographystyle{plainnat}
\bibliography{ref.bib}

\newpage
\appendix
\noindent\textbf{\LARGE Appendix}
\smallskip

Appendix~\ref{app: additional results} provides additional results beyond the main text, including the constrained estimator and the quantile estimator. 
Appendix~\ref{app: power analysis} discusses several technical tools that are essential for the power analysis in Section~\ref{section 6}.
Appendix~\ref{app: proof} contains the proofs of the main theorems. Appendix~\ref{app: conditional density proof} presents the proofs of the lemmas related to the derivation of the conditional density \( p_{\theta_0}(\cdot\mid\hat{\theta},\hat{g}) \). Appendix~\ref{app: add proof} provides additional proofs of some technical lemmas. Appendix~\ref{app: simulate} offers more details and results for the simulation studies. Additional notations used in the appendix are introduced as follows:
For any vector $\nu\in \mathbb{R}^d$ and $\epsilon > 0$, define $\mathcal{B}(\nu; \epsilon) = \{x \in \mathbb{R}^d : \|x - \nu\| < \epsilon\}$ and $\mathring{\mathcal{B}}(\nu; \epsilon) = \{x \in \mathbb{R}^d : 0 < \|x - \nu\| < \epsilon\}$. Let \( a, b \in \mathbb{R}^d \) be two vectors. We denote their Hadamard (element-wise) product by
\[
a \odot b = (a_1 b_1, a_2 b_2, \ldots, a_d b_d)^\intercal.
\]

\section{Some additional results}\label{app: additional results}

\subsection{aCSS with General Smooth Constrained MLE}\label{sec:smooth-constrained}
In this section, we consider aCSS method with the MLE under general constraints of the following form: 
\begin{equation}\label{MLE}
    \hat{\theta}=\hat{\theta}(X,W)=\argmin_{\theta\in \Theta}\left\{{\cal L}(\theta;X,W)\mid G_i(\theta)\leq 0,\forall i=1,\cdots,r\right\},
\end{equation}
where the loss function is defined as
\begin{equation*}\label{loss function}
    {\cal L}(\theta;X,W)=-\log f(X;\theta)+R(\theta)+\sigma W^\T\theta,
\end{equation*} and $G_i, i = 1, \dots, r$ are the constraint functions. 

Before proceeding, let us introduce some regularity conditions and definitions:
\begin{assumption}[Constrained Space]\label{assump:constraint}
    Define the constrained parameter space \(\Theta_0\) as
    \[
    \Theta_0 = \left\{ \theta \in \Theta \subseteq \mathbb{R}^d \mid G_i(\theta) \leq 0, \; i = 1, \ldots, r \right\},
    \]
    where each \(G_i\) is twice continuously differentiable, and \(\Theta_0\) contains no isolated points.
\end{assumption}
\begin{definition}[Active constraints]\label{active}
    Consider a point $\theta\in\Theta_0$. The constraint $G_i(\theta)\leq 0$ is called \textit{active} at $\theta$ if, for every $\varepsilon>0$, there exists another point $\theta^\prime\in \Theta_0\cap \mathring{\cal B}(\theta,\varepsilon)$ such that $G_i(\theta^\prime)=0$. We denote the set of active constraint indices at $\theta$ by $${\cal I}(\theta)=\{i\in [r] \mid G_i(\theta)\text{ is active at }\theta\}.$$
\end{definition}

\begin{definition}[Regular point]\label{regular}
    A point \(\theta \in \Theta_0\) is called regular if, for some \(\varepsilon > 0\), the matrix composed of the gradients \(\nabla_\theta G_i(\theta^\prime), i \in \mathcal{I}(\theta)\) maintains a constant rank for \(\theta^\prime\in \mathcal{B}(\theta, \varepsilon)\). The constant rank is denoted by \(r(\theta)\).
    \end{definition}

    This regularity condition comes from the rank constant theorem \citep{rudin1976principles}. Now we can propose the definition of SSOSP for the general constrained problem.

\begin{definition}[SSOSP for the general constrained problem]\label{SSOSP}
    A parameter $\theta$ is a strict second-order stationary point (SSOSP) of the optimization problem (\ref{MLE}) if it satisfies all of the following:

    \indent (i) Regularity: $\theta\in \Theta_0$ is a regular point.
    
    \indent (ii) First-order necessary conditions, i.e, Karush-Kuhn-Tucker (KKT) conditions:
    \begin{equation}\label{KKT}
        \nabla_\theta {\cal L}(\theta;X,W)+\sum_{i=1}^r\lambda_i \nabla_\theta G_i(\theta)=0,
    \end{equation}
    where $\lambda_i\geq 0$ for all $i$, and $\lambda_i=0$ for all $i\in [r]\backslash {\cal I}(\theta)$. 

    \indent (iii) Second-order sufficient condition: 
    $$U_{\theta}^\T\left[\nabla_\theta^2\mathcal{L}(\theta ;X)+\sum_{i=1}^r\lambda_i \nabla_\theta^2 G_i(\theta)\right]U_{\theta}\succ 0.$$
    Here, $\lambda=(\lambda_1,\cdots,\lambda_r)^\T$ is the solution that satisfies the KKT conditions (\ref{KKT}), and $U_{\theta}$ denotes a matrix which forms an orthonormal basis for subspace orthogonal to span $\left\{\nabla_\theta G_i(\theta),i\in {\cal I}(\theta)\right\}$, that is,
    \begin{equation}\label{proj}
        U_\theta\in \bbR^{d\times (d-r(\theta))},U_\theta U_\theta^\T={\cal P}^{\perp}_{\text{\rm span}\{\nabla_\theta G_i(\theta)\}_{i\in{\cal I}(\theta)}}. 
    \end{equation}
\end{definition}

The matrix $U_{\theta}^\T\bigl[\nabla_\theta^2\mathcal{L}(\theta ;X)+\sum_{i=1}^r\lambda_i \nabla_\theta^2 G_i(\theta)\bigr]U_{\theta}$ is orthogonally similar for different $U_\theta$ under (\ref{proj}), which means that the definition of SSOSP is invariant for $U_{\theta}$. Despite the Lagrange multipliers $\lambda_i$ not being uniquely determined by the KKT conditions (\ref{KKT}), we will show in Appendix \ref{invariance-to-multipliers} that the matrix $$U_{\theta}^\T\left[\nabla_\theta^2\mathcal{L}(\theta ;X)+\sum_{i=1}^r\lambda_i \nabla_\theta^2 G_i(\theta)\right]U_{\theta}$$
    remains invariant for all sets of $\lambda_{1:r}$ that satisfy the KKT conditions (\ref{KKT}). Consequently, for $\theta$ being  SSOSP of (\ref{MLE}), we can define a function from$\bbR^d\times {\cal X}\times \bbR^d\to \bbR$:
    \begin{equation}\label{matrix}
        {F}_{\rm con}(\theta;X,W)=
                \det\left(U_{\theta}^\T\bigl[\nabla_\theta^2\mathcal{L}(\theta ;X)+\sum_{i=1}^r\lambda_i \nabla_\theta^2 G_i(\theta)\bigr]U_{\theta}\right)
    \end{equation}

\begin{lemma}\label{lemma1}(Conditional density for the constrained problem) Suppose Assumption \ref{assump:constraint} hold. Fix any $\theta_0\in\Theta$ and let $(X,W,\hat{\theta},\hat{g})$ be drawn from the joint model: 
    $$\left\{
        \begin{array}{l}
            (X,W)\sim P_{\theta_0}\times N(0,I_d/d)\\
             \hat{\theta}=\hat{\theta}(X,W),
             \hat{g}=\hat{g}(X,W)=\nabla_\theta {\cal L}(\hat{\theta};X,W)
        \end{array}
    \right.$$
    
    Fix any ${\cal I}\subseteq [r]$. 
    Let $\mathcal{E}$ be the event that $\hat{\theta}(X,W)$ is an SSOSP of (\ref{MLE}) with ${\cal I}(\hat{\theta}(X,W))={\cal I}$. Note that $\mathcal{E}\in \sigma(\hat{\theta})$. 
    Assume that $\mathcal{E}$ happens with positive probability. 
    Then,when $\mathcal{E}$ happens, the conditional distribution of $X\mid \hat{\theta},\hat{g}$ has density  
    \begin{equation}\label{density1}
        \begin{aligned}
            p_{\theta_0}(\cdot \mid \hat{\theta},\hat{g})\propto f(x;\theta_0)\cdot \exp\left\{-\frac{\left\|\hat{g}-\nabla_\theta \mathcal{L}(\hat{\theta};x)\right\|^2}{2\sigma^2/d}\right\}\cdot {F}_{\rm con}\left(\hat{\theta};x,\frac{1}{\sigma}\left(\hat{g}-\nabla_\theta \mathcal{L}(\hat{\theta};x)\right)\right)\cdot {1}_{x\in \mathcal{X}_{\hat{\theta},\hat{g}}}
        \end{aligned}
    \end{equation}
    with respect to the base measure $\nu_{\mathcal{X}}$, where $U_\theta$ is defined in (\ref{proj}), ${F}_{\rm con}$ is defined in (\ref{matrix}) and 
    $$\mathcal{X}_{\theta,g}=\left\{x\in\mathcal{X}:\textrm{ for some } w\in \mathbb{R}^d, \theta=\hat{\theta}(x,w) \textrm{is an SSOSP of (\ref*{MLE}), and } g=\hat{g}(\theta;x,w)\right\}$$
\end{lemma}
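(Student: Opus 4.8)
\textbf{Proof plan for Lemma~\ref{lemma1}.}
The overall strategy mirrors the change-of-variables argument used in \citet{barber2022testing} and \citet{zhu2023approximate}, but adapted to handle the manifold structure of the active constraint set. The plan is as follows. First, I would fix the active set $\mathcal{I}$ and restrict attention to the event $\mathcal{E}$. On $\mathcal{E}$, the estimator $\hat{\theta}$ is a regular point, so by the rank constant theorem \citep{rudin1976principles} the active constraint surface $\{\theta' : G_i(\theta')=0,\ i\in\mathcal{I}\}$ is, locally around $\hat{\theta}$, a smooth embedded submanifold of dimension $d - r(\hat{\theta})$. I would introduce a local parametrization of this manifold whose tangent space at $\hat{\theta}$ is spanned by the columns of $U_{\hat{\theta}}$ defined in \eqref{proj}, so that the KKT stationarity condition \eqref{KKT}, projected onto the tangent space, becomes a system of $d - r(\hat{\theta})$ smooth equations in the reduced coordinates, while the strict second-order condition in Definition~\ref{SSOSP}(iii) guarantees that the Jacobian of this reduced system is nonsingular (indeed positive definite after the $U_{\hat{\theta}}$ reduction). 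This is precisely the role of ${F}_{\rm con}$.

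Second, with this setup I would apply the implicit function theorem to the map $(x,w)\mapsto \hat{\theta}(x,w)$ on the event $\mathcal{E}$: the first-order condition $\nabla_\theta\mathcal{L}(\hat\theta;x)+\sigma w + \sum_i \lambda_i\nabla_\theta G_i(\hat\theta)=0$ together with the membership of $\hat\theta$ in the active manifold expresses $(\hat\theta,\lambda_{\mathcal{I}})$ locally as a smooth function of $(x,w)$. Equivalently, and more conveniently, I would change variables from $w$ to $(\hat\theta,\hat g)$ for fixed $x$: given $x$, the relation $\hat g = \nabla_\theta\mathcal{L}(\hat\theta;x)+\sigma w$ determines the part of $w$ lying in the range of the constraint gradients, while the tangential part of $w$ is pinned down by the reduced stationarity equations as a function of $\hat\theta$ and $x$. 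Computing the Jacobian of the transformation $w \mapsto (\hat\theta,\hat g)$ (at fixed $x$) yields exactly the factor ${F}_{\rm con}\bigl(\hat\theta;x,\tfrac1\sigma(\hat g-\nabla_\theta\mathcal{L}(\hat\theta;x))\bigr)^{-1}$, up to constants depending only on $(\hat\theta,\hat g)$ through $U_{\hat\theta}$ and the constraint geometry. Pushing the joint density of $(X,W)$, namely $f(x;\theta_0)\cdot (d/2\pi\sigma^2)^{d/2}\exp(-\|w\|^2 d/2\sigma^2)$, through this change of variables and substituting $w = \tfrac1\sigma(\hat g - \nabla_\theta\mathcal{L}(\hat\theta;x))$ on the relevant component produces the Gaussian exponential term in \eqref{density1}, the ${F}_{\rm con}$ factor, and the base measure $\nu_{\mathcal X}$ on the slice $x\in\mathcal{X}_{\hat\theta,\hat g}$; normalizing over $x$ absorbs all factors that depend on $(\hat\theta,\hat g)$ only.

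Third, I would verify the invariance claims needed for the statement to be well-posed: that ${F}_{\rm con}$ does not depend on the choice of $U_{\hat\theta}$ satisfying \eqref{proj} (orthogonal similarity, as noted after \eqref{matrix}) and that it does not depend on the particular KKT multipliers $\lambda_{1:r}$, which I would defer to the argument promised in Appendix~\ref{invariance-to-multipliers}: the point is that two multiplier vectors differ by an element of the kernel of $[\nabla_\theta G_i(\hat\theta)]_{i\in\mathcal{I}}^\T$, and $U_{\hat\theta}^\T \nabla_\theta^2 G_i(\hat\theta) U_{\hat\theta}$ paired against such a difference integrates against the tangent directions in a way that vanishes on the active manifold. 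Finally, I would confirm that the resulting expression integrates to a positive finite quantity so that it is a genuine density, deferring the detailed integrability check to the verification in Appendix~\ref{verify}.

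\textbf{Main obstacle.} The hard part will be the precise handling of the constraint manifold in the change of variables: unlike the linear (lasso) case where the active set pins $\theta$ to an affine subspace, here the reduced stationarity equations live on a curved submanifold whose local coordinates must be chosen carefully so that (a) the Jacobian factor is exactly $\det\bigl(U_{\hat\theta}^\T[\nabla_\theta^2\mathcal{L}+\sum_i\lambda_i\nabla_\theta^2 G_i]U_{\hat\theta}\bigr)$ and not some coordinate-dependent variant, and (b) the decomposition of $w$ into a ``normal'' part (recovered from $\hat g$) and a ``tangential'' part (recovered from the reduced stationarity condition) is genuinely a diffeomorphism on $\mathcal{E}$. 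Ensuring that the rank constant theorem delivers a parametrization valid on a neighborhood large enough to cover the event, and that the strict second-order condition is exactly what makes the relevant Jacobian invertible, is the technical crux that distinguishes this proof from its linear-regularization predecessor.
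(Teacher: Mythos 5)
Your plan is essentially the paper's own argument: on the event $\mathcal{E}$ it uses the bijection $(x,w)\leftrightarrow(x,\hat\theta,\hat g)$, a local parametrization of the active-constraint manifold (via the constant-rank/implicit function theorems) with tangential coordinates tied to $U_{\hat\theta}$ and Lagrange multipliers for the normal component, a Jacobian computation whose $x$-dependent part is exactly $F_{\rm con}$ while coordinate-dependent factors are absorbed into the normalization, and deferral of the multiplier/$U_{\hat\theta}$ invariance and the integrability check to the corresponding appendices. The only presentational difference is that the paper makes the reduction to a minimal active set (Proposition~\ref{prop:minimal-set}) explicit before reparametrizing by $(\theta',\lambda)$, which is precisely the device that resolves the "technical crux" you flag.
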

The proof of Lemma \ref{lemma1} is given in Section~\ref{proof:Lemma1}. 

By direct calculation and correspondence with some symbols, it is clear that the presented conditional density here is consistent with the formula in \cite[Lemma 1]{zhu2023approximate} when only linear constraints are involved. Replacing $\theta_0$ with $\hat{\theta}$, now the density of the sampling distribution is given by
\begin{equation}\label{density2}
    \begin{aligned}
        p_{\hat{\theta}}(\cdot \mid \hat{\theta},\hat{g})\propto f(x;\hat{\theta})\cdot \exp\left\{-\frac{\left\|\hat{g}-\nabla_\theta \mathcal{L}(\hat{\theta};x)\right\|^2}{2\sigma^2/d}\right\}\cdot{F}_{\rm con}\left(\hat{\theta};x,\frac{1}{\sigma}\left(\hat{g}-\nabla_\theta \mathcal{L}(\hat{\theta};x)\right)\right)\cdot {1}_{x\in \mathcal{X}_{\hat{\theta},\hat{g}}}
    \end{aligned}
\end{equation}

\subsection{aCSS with $\ell_{p}$ constrained MLE}\label{sec:lp-constrained}
In Section~\ref{sec:smooth-constrained}, the constraint functions were required to be twice continuously differentiable. 
Here we consider a special case where the MLE is subject to an \(\ell_p\)-norm constraint. By permitting \(0 \leq p < \infty\), we significantly extend the applicability of aCSS method. 

For any $p\geq 0$, denote $\|\theta\|_p^p := \sum_{j=1}^d |\theta_j|^p$, where we adopt the convention that $0^0=0$ and $x^0=1$ for any $x>0$. 
We consider the following constrained estimator:  
\begin{equation}\label{l_0 problem}
    \hat{\theta}(X,W) = \argmin_{\|\theta\|_p^p \leq R_p} \mathcal{L}(\theta; X, W) = \argmin_{\|\theta\|_p^p \leq R_p} -\log f(X; \theta) + \sigma W^\T \theta,
\end{equation}
where \(R_p > 0\) is a prespecified number. Here, the estimator is constrained within an \(\ell_p\)-ball defined as
\[
\mathcal{B}_p(R_p) = \left\{\theta \in \mathbb{R}^d \mid 
\|\theta\|_p^p  \leq R_p\right\}, \quad 0 \leq p \leq \infty. 
\]
When \( p = 0 \), the \( \ell_0 \)-ball refers to the set of vectors in \( \mathbb{R}^p \) with at most \( R_0 \) nonzero elements, where \( R_0 \) is a given positive integer.
When \(p=1\), the \(\ell_1\)-ball constraint simplifies to a set of linear constraints. 
For $0<p<1$, the $\ell_p$-ball is also used to induce sparsity in high-dimensional estimation \citep{raskutti2011minimax,wainwright2019high}.

For the $\ell_p$-norm constrained estimator in \eqref{l_0 problem}, SSOSP is defined as follows. 
\begin{definition}[SSOSP for the $\ell_p$ constrained problem]\label{SSOSP l_0}
    A parameter $\theta\in \bbR^d$ is a strict second-order stationary point (SSOSP) of the optimization problem \eqref{l_0 problem} if it satisfies all of the following:
 
   \indent (i) Feasibility: $\|\theta\|_p^p\leq R_p$.

   \indent (ii) First-order necessary conditions (i.e, KKT conditions):
    \begin{equation}\label{l_p KKT}
        \nabla_\theta {\cal L}(\theta;X,W)+\lambda s=0,
    \end{equation}
    where $\lambda\geq 0$ and  $\lambda=0$ if $\|\theta\|_p^p< R_p$ and 
    $s\in \bbR^d$ satisfies that $s_j=p\cdot {\rm sign}(\theta_j)\cdot|\theta_j|^{p-1}$ if $j\in \text{\rm supp}(\theta)$ and for $ j\notin \text{\rm supp}(\theta)$, it holds that
    $$\left\{
        \begin{array}{cc}
            s_j\in (-\infty,\infty),&0\leq p<1\\
            s_j\in \left[-1,1\right], &p=1\\
            s_j=0&p>1.
        \end{array}
    \right.$$

   \indent (iii) Second-order sufficient condition: 
    $$Z_\theta^\T\left(\nabla_\theta^2\mathcal{L}(\theta ;X)+\lambda \Lambda(\theta)\right)_{{\cal T}(\theta)}Z_\theta\succ 0,$$
    where $\lambda$ is the solution to (\ref{l_p KKT}), $\Lambda(\theta)={\rm diag}\left\{p(p-1)|\theta|^{p-2}\right\}$
    and ${{{\cal T}(\theta)}}$ is defined via 
    $${{\cal T}(\theta)}=\left\{
        \begin{array}{lc}
            \text{\rm supp}(\theta)&\|\theta\|_p^p= R_p\\
            \,[d]&\|\theta\|_p^p< R_p
        \end{array}
    \right.$$
    For the matrix \( Z_\theta \), we have \( Z_\theta = I_d \) if \( \|\theta\|_p^p < R_p \).
Otherwise, let \( n_s(\theta) = |{\rm supp}(\theta)| \) denote the cardinality of the support of \( \theta \). Then:
\begin{itemize}
    \item For \( p = 0 \) or \(n_s(\theta)=1\), \( Z_\theta = I_{n_s(\theta)} \in \mathbb{R}^{n_s(\theta) \times n_s(\theta)} \).
    \item For \( p > 0 \), \( Z_\theta \in \mathbb{R}^{n_s(\theta) \times (n_s(\theta)-1)} \) satisfies
        \begin{eqnarray}\label{proj matrix}
        Z_\theta^\T Z_\theta={I}_{n_s(\theta)-1},    \qquad \text{ and }Z_\theta^\T\left({\rm sign}(\theta)\odot \theta^{p-1}\right)_{{\rm supp}(\theta)} ={0}.
    \end{eqnarray}
\end{itemize} 

\end{definition}

This scenario closely resembles the previous case involving smooth constraints, with the primary distinction arising from the \(\ell_p\) norm's non-differentiability at the point \(0\). 
We consider the cases with $p>0$ and $p=0$ separately. 

For $p>0$, our strategy involves dividing the \(\ell_p\) surface into finite regions according to the support of \(\theta\). Moreover, points inside the \(\ell_p\) ball can be handled using the ordinary aCSS method proposed by \cite{barber2022testing}, leveraging the fact that the ball excluding its boundary is an open set since $p>0$. 
To see why the second order condition is well defined, we point out that the parameter $\lambda$ is uniquely decided by the KKT conditions in \eqref{l_p KKT}: if all $\theta_j=0$, then $\lambda=0$; otherwise, there is at least a nonzero $s_j$, and the corresponding equation in \eqref{l_p KKT} determines the value of $\lambda$.

The case \(p=0\) is uniquely distinct. For \(p>0\), after partitioning the boundary of the \(\ell_p\) ball into finite regions as previously described, the location of a point is at least locally determined by \(n_s(\theta) - 1\) parameters, owing to the necessity of adhering to the boundary constraint function. 
In contrast, this condition does not apply when \(p=0\). For \(p=0\), the boundary constraint function is automatically met if the support of \(\theta\) is chosen appropriately (i.e., \(n_s(\theta) \leq R_p\)), thereby allowing the degrees of freedom for \(\theta\) to be \(n_s(\theta)\).

The following lemma summarizes the conditional density given $\hat{g}$ and $\hat{\theta}$ for \eqref{l_0 problem}. 
\begin{lemma}[Conditional density for the $\ell_p$ constrained problem]\label{lem:constrain} Fix any $\theta_0 \in \Theta$ and let $(X, W, \hat{\theta}, \hat{g})$ be drawn from the joint model
    $$
    \left\{\begin{array}{l}
    (X,W) \sim P_{\theta_0}\times N(0,{I}_d/d), \\
    \hat{\theta}=\hat{\theta}(X, W),
    \hat{g}=\hat{g}(X, W)=\nabla_\theta {\cal L}(\hat{\theta};X,W)
    \end{array}\right.
    $$
    
    Let ${\cal T} \subseteq[d]$. 
Define an event 
\[
\mathcal{E} = \left\{ \hat{\theta}(X, W) \text{ is an SSOSP of \eqref{l_0 problem} with } {\cal T}(\hat{\theta}(X, W)) = {\cal T}, \text{ and } \|\hat{\theta}(X,W)\|_p^p = R_p \text{ if } p > 0 \text{ holds} \right\}.
\] 
Note that \( \mathcal{E} \) is in the sigma-field generated by \( (\hat{\theta}(X, W), \hat{g}(X,W)) \).
Assume $\mathcal{E}$ happens with positive probability. 
    Then, when $\mathcal{E}$ happens, 
    the conditional distribution of $X \mid \hat{\theta}, \hat{g}$ has density
     \begin{equation}\label{true density l_0}
      \begin{aligned}
         p_{\theta_0}(\cdot\mid \hat{\theta},\hat{g})\propto f\left(x ; \theta_0\right) \cdot\exp\left\{-\frac{\left\|\hat{g}-\nabla_\theta \mathcal{L}(\hat{\theta};x)\right\|^2}{2\sigma^2/d}\right\}
         \cdot \det
            \left(Z_\theta^\T\left[\nabla_\theta^2\mathcal{L}(\theta ;X)+\lambda \Lambda(\theta)\right]_{{\cal T}(\theta)}Z_\theta\right)\cdot {1}_{x \in {\mathcal{X}}_{\hat{\theta}, \hat{g}}},
      \end{aligned}
     \end{equation}
   where $${\cal X}_{{\theta},{g}}=\left\{x\in{\cal X}: \textrm{ for some } w\in\bbR^d,\theta=\hat{\theta}(x,w)\textrm { is an SSOSP of \eqref{l_0 problem}, and } g=\hat{g}(x,w)\right\}.$$
 \end{lemma}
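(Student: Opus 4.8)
```latex
\textbf{Proof proposal for Lemma~\ref{lem:constrain}.}
The plan is to adapt the change-of-variables argument used for the smooth-constrained case (Lemma~\ref{lemma1}) to the $\ell_p$-norm constraint, handling the interior of the ball and its boundary separately, and partitioning the boundary according to the support $\mathcal{T}$ of $\theta$. First I would fix the support set $\mathcal{T}$ and restrict attention to the event $\mathcal{E}$. On the interior of the ball ($\|\theta\|_p^p < R_p$), the constraint is inactive, the problem is locally an unconstrained perturbed MLE, and the density is exactly the one derived in \citet{barber2022testing}; here $Z_\theta = I_d$, $\lambda = 0$, and the determinant factor reduces to $\det \nabla_\theta^2 \mathcal{L}(\theta;x)$. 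On the boundary, within the region where $\mathrm{supp}(\theta) = \mathcal{T}$ is fixed, the coordinates $\theta_{\mathcal{T}^c}$ are identically zero, and $\{\theta : \|\theta\|_p^p = R_p, \ \mathrm{supp}(\theta) = \mathcal{T}\}$ is a smooth $(|\mathcal{T}|-1)$-dimensional manifold when $p > 0$ (parametrized so that its tangent space is spanned by the columns of $Z_\theta$), whereas for $p = 0$ it is the full $|\mathcal{T}|$-dimensional coordinate subspace, which explains the different definitions of $Z_\theta$ and $\mathcal{T}(\theta)$ in Definition~\ref{SSOSP l_0}.

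The core of the argument is a local diffeomorphism between $(X, W)$ on the one hand and $(\hat\theta, \hat g, X)$ on the other, restricted to the slice where $\hat\theta$ is an SSOSP with the prescribed support and active/inactive status. Given $x$ and a candidate SSOSP value $\theta$, the KKT condition \eqref{l_p KKT} together with $\hat g = \nabla_\theta \mathcal{L}(\hat\theta;x,w) = \nabla_\theta\mathcal{L}(\hat\theta;x)+\sigma w$ determines $w$ explicitly as $w = \sigma^{-1}(\hat g - \nabla_\theta \mathcal{L}(\hat\theta;x))$, and $\lambda$ is pinned down uniquely (as the text notes: if $\theta = 0$ then $\lambda = 0$; otherwise any nonzero coordinate of $s$ determines $\lambda$ from the corresponding KKT equation). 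The map $(x,w) \mapsto (\hat\theta, \hat g)$ thus has, on the relevant stratum, a Jacobian whose determinant I would compute by differentiating the stationarity system. After conditioning on $x$, the Jacobian of $w \mapsto (\hat\theta_{\mathcal{T}}, \hat g)$ restricted appropriately factors: the part corresponding to $\hat g$ contributes the Gaussian-type factor $\exp(-\|\hat g - \nabla_\theta\mathcal{L}(\hat\theta;x)\|^2/(2\sigma^2/d))$ from the density of $W \sim N(0, I_d/d)$, and the part corresponding to $\hat\theta$ contributes the determinant $\det\big(Z_\theta^\T[\nabla_\theta^2\mathcal{L}(\theta;x) + \lambda\Lambda(\theta)]_{\mathcal{T}(\theta)} Z_\theta\big)$ — this is exactly the second-order matrix from Definition~\ref{SSOSP l_0}(iii), which is positive definite precisely on the SSOSP event, guaranteeing the change of variables is valid and the factor is positive. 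Multiplying by the base density $f(x;\theta_0)$ of $X$ and collecting terms over all $w$ consistent with $(\theta, \hat g, x)$ yields \eqref{true density l_0}; the indicator $\mathbf{1}_{x \in \mathcal{X}_{\hat\theta,\hat g}}$ records the feasibility of the preimage.

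The main obstacle I expect is making the boundary change-of-variables rigorous: on $\{\|\theta\|_p^p = R_p\}$ the natural parametrization of $\hat\theta$ uses only $|\mathcal{T}|-1$ free coordinates (for $p>0$), so the Jacobian is that of a map from a $d$-dimensional $w$ to a $(|\mathcal{T}|-1)$-dimensional boundary slice plus the $d$-dimensional $\hat g$, and one must argue that the remaining directions of $w$ are absorbed correctly — this is where the orthogonality condition $Z_\theta^\T(\mathrm{sign}(\theta)\odot\theta^{p-1})_{\mathrm{supp}(\theta)} = 0$ and the fact that $\lambda$ is uniquely determined are both used, so that the effective unknowns match in number. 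For $p \in (0,1)$ one must additionally check that the constraint surface is still a $C^1$ (indeed $C^2$) manifold away from the coordinate hyperplanes, which holds since $|\theta_j|^p$ is smooth for $\theta_j \neq 0$; the non-smoothness only occurs at coordinates entering or leaving the support, and these are excluded by fixing $\mathcal{T}$. The remaining steps — verifying measurability of $\mathcal{E}$ with respect to $\sigma(\hat\theta, \hat g)$, checking that the resulting expression integrates to something positive and finite (deferred, as in the penalized case, to the verification appendix), and assembling the strata into the stated density — are routine adaptations of the arguments for Lemmas~\ref{lemma penalized} and~\ref{lemma1}.
```
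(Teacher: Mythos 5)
Your proposal is correct and follows essentially the same route as the paper's proof: stratify by the support of $\hat\theta$ (interior of the ball handled by the unconstrained aCSS density; boundary stratum reparametrized by the free support coordinates of $\theta$, the uniquely determined multiplier $\lambda$, and the off-support coordinates of the gradient), with the Gaussian factor coming from the density of $W$ at $w=\sigma^{-1}(\hat g-\nabla_\theta\mathcal{L}(\hat\theta;x))$ and the change-of-variables Jacobian contributing $\det\bigl(Z_\theta^\T[\nabla_\theta^2\mathcal{L}(\theta;x)+\lambda\Lambda(\theta)]_{\mathcal{T}(\theta)}Z_\theta\bigr)$ up to $x$-independent factors, exactly as in the paper. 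The only cosmetic difference is that the paper treats the degenerate case $p>0$ with $|\mathcal{T}|=1$ (stratum locally a point) separately, which is a trivial specialization of your argument.
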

The proof of Lemma \ref{lem:constrain} is given in Section~\ref{pf:lem:constrain}.
 
Based on Lemma \ref{lem:constrain}, we can derive the sampling density $p_{\hat{\theta}}(\cdot \mid \hat{\theta},\hat{g})$ for aCSS as follows:
 \begin{equation}
       \begin{aligned}
          p_{\hat{\theta}}(\cdot\mid \hat{\theta},\hat{g})\propto f\left(x ; \hat{\theta}\right) \cdot\exp\left\{-\frac{\left\|\hat{g}-\nabla_\theta \mathcal{L}(\hat{\theta};x)\right\|^2}{2\sigma^2/d}\right\}
          \cdot \left(Z_\theta^\T\left[\nabla_\theta^2\mathcal{L}(\theta ;X)+\lambda \Lambda(\theta)\right]_{{\cal T}(\theta)}Z_\theta\right)\cdot {1}_{x \in {\mathcal{X}}_{\hat{\theta}, \hat{g}}}
       \end{aligned}.
      \end{equation}

\subsection{Improved version of Theorem \ref{Theorem 1}}\label{sec:improved-validity}
Following \cite[Theorem 2]{zhu2023approximate}, we assume that the error \( (\hat{\theta} - \theta_0) \) is approximately sparse with respect to a certain basis, which enables us to present an improved version of Theorem~\ref{Theorem 1}. For illustrative examples of such settings, we refer the reader to \cite{zhu2023approximate}.

For a given choice of $d^\prime$ vectors $\{v_1,\cdots,v_{d^\prime} \}\subset \mathbb{R}^d$, we define the following norm for any $w \in \mathbb{R}^d$: 
\[
\|w\|_{v,0} = \begin{cases}
\min \{|\mathcal{S}| : \mathcal{S} \subseteq [d^\prime], w \in \text{span}(\{v_i\}_{i\in\mathcal{S}})\}, & w \in \text{span}(\{v_i\}_{i\in[d^\prime]}), \\
+\infty, & \text{otherwise}.
\end{cases}
\]
In other words, \( \|w\|_{v,0} \) represents the minimal number of vectors \( v_i \) required such that \( w \) lies in their span. Specifically, if we take \( d' = d \) and \( v_i = e_i \) for \( i \in [d] \), then \( \|w\|_{v,0} = \|w\|_0 \). Next, for any subset $S\subseteq [d^\prime]$, denote by ${\cal P}_{\nu_S}$ the projection to ${\rm span}\left(\{\nu_i\}_{i\in S}\right)$. 
For each $k=0,\cdots,d$, we define 
$$h_\nu(k)= \log \left[\bbE_{Z\sim {\cal N}(0,{I}_d)}\max_{S\subseteq [d^\prime],|S|\leq k}\exp(\left\|{\cal P}_{\nu_S}Z\right\|^2)\right].$$

\begin{theorem}\label{Theorem 2}
Under the notations and assumptions of Theorem \ref{Theorem 1}, suppose it also holds that
\[
\mathbb{P}\left[\|\hat{\theta}(X,W) - \theta_0\|_{v,0} \leq k(\theta_0)\right] \geq 1 - \tilde{\delta}(\theta_0),
\]
for a fixed set of vectors $\nu_1, \ldots, \nu_p \in \mathbb{R}^d$ and some $k(\theta_0)\leq d$. Then we have
\[
    \bbE_{Q_{\theta_0}^*}\left[d_{\rm TV}(p_{\theta_0}(\cdot\mid\hat{\theta},\hat{g}),p_{\hat{\theta}}(\cdot\mid\hat{\theta},\hat{g}))\right] \leq 3\sigma e(\theta_0)\sqrt{\frac{h_\nu(k(\theta_0))}{d}} + {\varepsilon}(\theta_0) + {\delta}(\theta_0)+\tilde{\delta}(\theta_0).
\]
In particular, this implies that for any predefined test statistic $T : \mathcal{X} \to \mathbb{R}$ and rejection threshold $\alpha \in [0,1]$, the p-value output by the aCSS method satisfies
    \[
    \mathbb{P}\left( {\rm pval}\leq \al\right)\leq \alpha +3\sigma e(\theta_0)\sqrt{\frac{h_\nu(k(\theta_0))}{d}} + {\varepsilon}(\theta_0) + {\delta}(\theta_0)+\tilde{\delta}(\theta_0).
    \]
\end{theorem}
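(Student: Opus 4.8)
The plan is to follow the same two-stage route that proves Theorem~\ref{Theorem 1}, but to replace the crude Gaussian bound on the total variation distance between $p_{\theta_0}(\cdot\mid\hat\theta,\hat g)$ and $p_{\hat\theta}(\cdot\mid\hat\theta,\hat g)$ with a sparsity-aware bound. Recall that the only place $\theta_0$ enters the sampling density (beyond the common factors $\exp(-\|\hat g-\nabla_\theta\mathcal L(\hat\theta;x)\|^2/(2\sigma^2/d))$, $F_{\rm pen}$, and the indicator) is through $f(x;\theta_0)$. On the event that $\hat\theta$ is an SSOSP within distance $e(\theta_0)$ of $\theta_0$, the two densities differ only by the ratio $f(x;\theta_0)/f(x;\hat\theta)$, which after a Taylor expansion in $\theta$ around $\hat\theta$ is controlled by the score and Hessian of $\log f$. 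Under Assumption~\ref{assumption:3} (or its replacement) the Hessian part contributes $\varepsilon(\theta_0)$ and the SSOSP-failure event contributes $\delta(\theta_0)$, exactly as in Theorem~\ref{Theorem 1}; the remaining score term is where $\|\hat\theta-\theta_0\|$ enters linearly and produces the $\sigma e(\theta_0)$-type contribution.

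The key new step is to sharpen the score-term bound using the additional approximate-sparsity hypothesis $\|\hat\theta-\theta_0\|_{\nu,0}\le k(\theta_0)$. In the proof of Theorem~\ref{Theorem 1} one writes the relevant difference as an inner product of $(\hat\theta-\theta_0)$ with a quantity that, conditionally, behaves like $\sigma W$ (since $\hat g - \nabla_\theta\mathcal L(\hat\theta;x)=\sigma W$ up to the non-$X$ pieces), and then one bounds $\langle \hat\theta-\theta_0,\,W\rangle$ in expectation by $\|\hat\theta-\theta_0\|\cdot\mathbb E\|W\|\approx e(\theta_0)/\sqrt d$ after rescaling. When $\hat\theta-\theta_0$ lies in a $k$-dimensional coordinate subspace $\mathrm{span}(\{\nu_i\}_{i\in S})$ with $|S|\le k$, we instead have $\langle\hat\theta-\theta_0,W\rangle=\langle\hat\theta-\theta_0,\mathcal P_{\nu_S}W\rangle\le\|\hat\theta-\theta_0\|\cdot\|\mathcal P_{\nu_S}W\|$, and the relevant moment generating function of $\|\mathcal P_{\nu_S}W\|^2$, maximized over all admissible $S$, is precisely $h_\nu(k)$ by its definition. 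So I would: (i) condition on the SSOSP-and-accuracy-and-sparsity event, whose complement costs $\delta(\theta_0)+\tilde\delta(\theta_0)$; (ii) bound $d_{\rm TV}$ on that event by the same $\log$-MGF / change-of-measure argument as in the original proof, but carrying the projected noise $\mathcal P_{\nu_S}W$ rather than all of $W$; (iii) take the supremum over $S$ with $|S|\le k(\theta_0)$ inside the expectation, recognizing the result as $h_\nu(k(\theta_0))$, which yields the factor $\sqrt{h_\nu(k(\theta_0))/d}$ in place of the factor $1$ coming from $\mathbb E\|W\|^2=1$; (iv) add back $\varepsilon(\theta_0)$ from the Hessian term and $\delta(\theta_0)+\tilde\delta(\theta_0)$ from the bad events. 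Finally, the p-value statement follows from \eqref{approximate valid} (the distance-to-exchangeability bound) together with the fact that the expected $d_{\rm TV}$ between the true and plug-in conditional densities controls $\mathrm d_{\rm exch}$, exactly as in the proof of Theorem~\ref{Theorem 1}.

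The main obstacle is bookkeeping the exact form in which the noise $W$ appears in the log-density-ratio so that the projection argument actually applies. In the unweighted case $\hat g\equiv 0$ this is transparent, but with the general $\hat g$ and the extra factors $F_{\rm pen}$ (or $F_{\rm con}$) one must check that these factors are identical for $p_{\theta_0}$ and $p_{\hat\theta}$ at the same $x$, hence cancel in the ratio, and that after cancellation the $\theta$-dependence is entirely through $f(x;\theta)$; then the relevant exponent is $\langle\hat\theta-\theta_0,\nabla_\theta\log f(x;\hat\theta)\rangle$ plus a Hessian remainder, and on the SSOSP event the first-order condition lets us substitute $\nabla_\theta\log f(x;\hat\theta)$ by a term proportional to $\sigma W$ modulo the penalty subgradient (which, being $x$-free on the conditioning event, does not affect the $x$-marginal $d_{\rm TV}$). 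A secondary subtlety is that $h_\nu$ is defined through a single Gaussian $Z\sim\mathcal N(0,I_d)$ whereas $W\sim\mathcal N(0,I_d/d)$, so one must track the $1/\sqrt d$ scaling carefully to land on $\sigma e(\theta_0)\sqrt{h_\nu(k(\theta_0))/d}$ rather than an off-by-$\sqrt d$ version; matching this against the plain bound $3\sigma e(\theta_0)$ of Theorem~\ref{Theorem 1} (which corresponds to $h_\nu(d)=\log\mathbb E\exp(\|Z\|^2)$ being replaced by its trivial surrogate) is a useful consistency check.
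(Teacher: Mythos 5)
Your proposal matches the paper's route: the paper proves this result by the same two-step argument as Theorem~\ref{Theorem 1} (reduction to $\bbE_{Q_{\theta_0}^*}[d_{\rm TV}]$ via Lemma~\ref{lemma_weighted}, then a Taylor expansion of $\log f(x;\theta)$ in which the common factors cancel and the only $x$-dependent linear term is $\sigma(\hat\theta-\theta_0)^\T W$), with the sparsity event $\|\hat\theta-\theta_0\|_{\nu,0}\le k(\theta_0)$ costing $\tilde\delta(\theta_0)$ and the projected-noise quantity $h_\nu(k(\theta_0))$ replacing the full-dimensional bound, exactly as in Theorem 2 of Zhu and Barber (2023) to which the paper defers the detailed computation. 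Your flagged subtleties (cancellation of $F_{\rm pen}$-type factors and of $x$-free terms given $(\hat\theta,\hat g)$, and the $N(0,I_d/d)$ versus $N(0,I_d)$ scaling yielding $\sqrt{h_\nu(k(\theta_0))/d}$) are precisely the points the paper's argument relies on, so the plan is essentially the same as the paper's proof.
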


Theorem \ref{Theorem 2} is an extension of Theorem \ref{Theorem 1}: if we consider the special case where $k(\theta_0) = d$ and $\tilde{\delta}(\theta_0) = 0$, we can take $h_{\nu}(d)=d$, and consequently the bound in Theorem \ref{Theorem 2} is simplified to match that of Theorem \ref{Theorem 1}. 

\cite{zhu2023approximate} introduces the notion of $\|w\|_{\nu,0}$ in the context of high-dimensional estimation since such estimation is often sparse. 
Furthermore, \cite[Lemma 2]{zhu2023approximate} shows that if \(k(\theta_0) \ll d\), then we have $h_{\nu}(\theta_0)=o(d)$; in this case, Theorem \ref{Theorem 2} gives a sharper bound compared with Theorem \ref{Theorem 1}.

\subsection{Robustness of aCSS method to Model misspecification}\label{sec:misspecification}
When implementing the aCSS methods for testing \eqref{eq:g-in-lambda-theta},  we assume that, under the null hypothesis, \( X \) is distributed according to \( P_{\theta} \) with density \( f(\cdot; \theta) \) relative to a base measure \( \nu_{\mathcal{X}} \), for some \( \theta \in \Theta \subseteq \mathbb{R}^d \). However, in practice, model misspecification may occur—namely, the true distribution of \( X \) is \( P_0 \), with density \( f_0(\cdot) \) with respect to the same base measure \( \nu_{\mathcal{X}} \), but \( f_0(\cdot) \) does not belong to the alternative hypothesis \( H_1 \) we seek to detect.

For instance, consider the problem of testing whether the mean of \( X \) is zero. A common approach is to model \( X \) as Gaussian and perform hypothesis testing on the mean. Yet, in practice, \( X \) may not be perfectly Gaussian, while still having zero mean. This discrepancy motivates an investigation into the potential inflation of Type-I error in the aCSS method resulting from model misspecification.

To start with, we establish a weighted version of  \eqref{approximate valid} to bound the Type-I error inflation. 
\begin{lemma}\label{lemma_weighted}
    Let $P_{\theta_0}^*$ be the distribution of $(X,W)\sim P_{\theta_0}\times {N}(0,{I}_d/d)$ conditional on the event that $\hat{\theta}(X,W)$ is an SSOSP, and $Q_{\theta_0}^*$ be the distribution of $(\hat{\theta}(X,W),\hat{g}(X,W))$ under $(X,W)\sim P_{\theta_0}^*$. For sample \(X\), copies \((\widetilde{X}^{(1)}, \ldots, \widetilde{X}^{(M)})\) generated by the weighted aCSS algorithm, any test statistic \(T: \mathcal{X} \to \mathbb{R}\), and any rejection threshold \(\alpha \in [0,1]\),
 the p-value defined in \eqref{weighted p-values} satisfies
    \begin{equation}\label{weighted approximate valid}
         {P}\left( {\rm pval}(X,\widetilde{X}^{(1)},\cdots,\widetilde{X}^{(M)}) \leq \alpha \right) \leq \alpha + \bbE_{Q_{\theta_0}^*}\left[d_{\rm TV}(p_{\theta_0}(\cdot\mid\hat{\theta}, \hat{g}),p_{\hat{\theta}}(\cdot\mid\hat{\theta}, \hat{g}))\right],
    \end{equation}
    where the probability is taken over the original sample $X$ and copies $\widetilde{X}^{(1)},\cdots,\widetilde{X}^{(M)}$.
\end{lemma}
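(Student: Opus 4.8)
\textbf{Proof plan for Lemma~\ref{lemma_weighted}.} The plan is to reduce the weighted p-value bound to a comparison between the true conditional law $p_{\theta_0}(\cdot\mid\hat\theta,\hat g)$ and the sampling law $p_{\hat\theta}(\cdot\mid\hat\theta,\hat g)$, conditionally on $(\hat\theta,\hat g)$, and then invoke the coupling characterization of total variation distance. First I would condition on the event that $\hat\theta(X,W)$ is an SSOSP; on its complement the algorithm returns $\mathrm{pval}=1$, so rejection never occurs and that part contributes nothing. Working under $P_{\theta_0}^*$, I would further condition on the value of $(\hat\theta,\hat g)$: given this value, the original sample $X$ has (true) law $p_{\theta_0}(\cdot\mid\hat\theta,\hat g)$, while the algorithm's recipe would \emph{ideally} have $X$ drawn from $p_{\hat\theta}(\cdot\mid\hat\theta,\hat g)$.

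The core observation is the following ``oracle exchangeability'' fact. Suppose, counterfactually, that $X$ itself were drawn from $P(\cdot;\hat\theta,\hat g) = p_{\hat\theta}(\cdot\mid\hat\theta,\hat g)$ rather than from $p_{\theta_0}(\cdot\mid\hat\theta,\hat g)$. Then by the defining property \eqref{weighted sampling} of $\widetilde Q_M$ together with the importance-weight identity relating $Q(\cdot;\hat\theta,\hat g)$ and $P(\cdot;\hat\theta,\hat g)$ via $\mu$, the weighted empirical measure built from $(X,\widetilde X^{(1)},\dots,\widetilde X^{(M)})$ with weights $\mu$ is exchangeable; consequently the weighted p-value $\mathrm{pval}_w$ in \eqref{weighted p-values} is (super-)uniform, i.e.\ $\mathbb P(\mathrm{pval}_w\le\alpha\mid\hat\theta,\hat g)\le\alpha$ in this oracle world. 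This is precisely the argument of \citet{harrison2012conservative} / \citet[Section 2.2]{barber2022testing}, adapted to the weighted setting: reweighting by $\mu$ converts a $Q$-exchangeable tuple into a $P$-weighted-exchangeable one, and a weighted-exchangeable rank statistic is super-uniform. I would spell this step out as an auxiliary sub-lemma (or cite it if it already appears in the $Q$-sampling literature).

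Next I would transfer this from the oracle world ($X\sim p_{\hat\theta}$) to the real world ($X\sim p_{\theta_0}$). Conditionally on $(\hat\theta,\hat g)$, the entire downstream object --- the copies $\widetilde X^{(m)}$, the weights, the test statistic values, and hence the indicator $\mathbf 1\{\mathrm{pval}_w\le\alpha\}$ --- is a fixed measurable function of $X$ (and the internal randomness of the sampler, which can be marginalized out). Therefore the difference between $\mathbb P_{\mathrm{real}}(\mathrm{pval}_w\le\alpha\mid\hat\theta,\hat g)$ and $\mathbb P_{\mathrm{oracle}}(\mathrm{pval}_w\le\alpha\mid\hat\theta,\hat g)$ is at most $d_{\mathrm{TV}}\bigl(p_{\theta_0}(\cdot\mid\hat\theta,\hat g),\,p_{\hat\theta}(\cdot\mid\hat\theta,\hat g)\bigr)$, by the elementary bound $|\mathbb E_P h - \mathbb E_Q h|\le d_{\mathrm{TV}}(P,Q)$ for $h\in[0,1]$. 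Combining, $\mathbb P_{\mathrm{real}}(\mathrm{pval}_w\le\alpha\mid\hat\theta,\hat g)\le\alpha + d_{\mathrm{TV}}(p_{\theta_0}(\cdot\mid\hat\theta,\hat g),p_{\hat\theta}(\cdot\mid\hat\theta,\hat g))$. Taking expectation over $(\hat\theta,\hat g)\sim Q_{\theta_0}^*$ and accounting for the (sub-probability) normalization from conditioning on the SSOSP event --- which only helps, since it multiplies the bound by a factor $\le 1$ --- yields \eqref{weighted approximate valid}.

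The main obstacle I anticipate is the careful handling of the weighted-exchangeability step: one must verify that the weighting function $\mu(\cdot;\hat\theta,\hat g)$ appearing in \eqref{weighted p-values} is exactly the Radon--Nikodym derivative that makes the $Q$-exchangeable tuple into a valid weighted sample for $P$, and that $\mathrm{pval}_w$ is genuinely super-uniform (not merely valid on average) --- this requires the argument that for a weighted-exchangeable tuple with nonnegative weights, the self-normalized weighted rank of one coordinate stochastically dominates (or equals) the uniform, which is a short but slightly delicate combinatorial computation. A secondary technical point is measurability: confirming that $(\hat\theta,\hat g)$ generates a sub-$\sigma$-field with respect to which the SSOSP event is measurable (already noted in Lemma~\ref{lemma4} and its analogues) so that the conditioning is legitimate, and that the regular conditional distributions $p_{\theta_0}(\cdot\mid\hat\theta,\hat g)$ and $p_{\hat\theta}(\cdot\mid\hat\theta,\hat g)$ are the densities identified in Lemma~\ref{lemma4} (and its constrained counterparts). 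Everything else is bookkeeping.
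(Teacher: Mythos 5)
Your proposal follows essentially the same route as the paper's proof: reduce to the SSOSP event (where rejection is impossible otherwise since $\mathrm{pval}=1$), swap the true conditional law $p_{\theta_0}(\cdot\mid\hat\theta,\hat g)$ for the plug-in law $p_{\hat\theta}(\cdot\mid\hat\theta,\hat g)$ at a cost of the conditional total variation distance averaged over $Q_{\theta_0}^*$, and establish super-uniformity of the weighted p-value under the plug-in law by the weighted-exchangeability/importance-weighting argument of \citet{harrison2012conservative}, which the paper carries out via an auxiliary exchangeable tuple seeded from $Q(\cdot;\hat\theta,\hat g)$ and a permutation computation. The proposal is correct; the only piece you defer to a sub-lemma (super-uniformity of the self-normalized weighted rank) is exactly the step the paper handles by the explicit permutation identity plus a citation to Lemma~1 of \citet{harrison2012conservative}.
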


The proof of Lemma \ref{lemma_weighted} is given in Section~\ref{pf:lemma_weighted}.

Lemma~\ref{lemma_weighted} provides an upper bound on the Type-I error inflation for the weighted aCSS method, which is central to establishing the approximate validity of the weighted aCSS method. 
For example, it will be used in our proof of Theorem \ref{theorem: linear additive} and Theorem \ref{theorem: gacss} in Section~\ref{app:pf-gacss}. 

Now, we consider the robustness of aCSS method under model misspecification. The following analysis applies to any estimator \( \hat{\theta} \) compatible with the aCSS method.

Under the null hypothesis, we assume that \(X\) is distributed according to \(P_{\theta}\) with density \(f(\cdot; \theta)\) relative to a base measure \(\nu_{\cal X}\), for some \(\theta \in \Theta \subseteq \mathbb{R}^d\). 
In cases of model misspecification, where \( X \) follows \( P_0 \) with the true density \( f_0(\cdot) \) with respect to the same base measure \( \nu_{\cal X} \), we continue to apply the aCSS method within the parametric family \( P_\theta \), despite the unknown misspecification. It is essential to quantify the potential increase in Type-I error, as this assessment is crucial for evaluating the robustness of the aCSS method under model misspecification.

\textbf{Implications of the model.}
Suppose that if \( X \sim P_{\theta_0} \) for some $\theta_0\in \Theta$ (i.e.,  without model misspecification), the conditional density of $X$ given $(\hat{\theta},\hat{g})$ is expressed as 
\[
p_{\theta_0}(x \mid \hat{\theta},\hat{g}) \propto f(x; \theta_0) \cdot H(x; \hat{\theta},\hat{g}),
\]
where \( H(x; {\theta},{g}) \) represents the function that accounts for the influence on the distribution of \( X \) conditioning on \( {\theta} = \hat{\theta}(X,W), {g} = \hat{g}(X,W) \).  
This expression comes from the Bayes' theorem: 
given $X=x$, the only randomness comes from $W$, and 
the factor \( H(x; \hat{\theta},\hat{g}) \) is proportional to the conditional density of the random variable $(\hat{\theta}(X,W),\hat{g}(X,W))$. The normalizing constant is hidden in the proportional notation. 

\textbf{Practice of aCSS}. 
We will perform aCSS method using the sampling distribution with the following density
\begin{equation}\label{model misspecification}
    p_{\hat{\theta}}(\cdot\mid \hat{\theta},\hat{g})\propto f(x;\hat{\theta})\cdot H(x;\hat{\theta},\hat{g}). 
\end{equation}

\textbf{True conditional density under model misspecification.}
If \( X \sim P_0 \), the actual conditional density of $X$ given $(\hat{\theta},\hat{g})$ is 
\begin{equation}\label{true density}
    p_0(\cdot \mid \hat{\theta},\hat{g}) \propto f_0(x) \cdot H(x; \hat{\theta},\hat{g}).
\end{equation}

Lemma \ref{lemma_weighted} can be modified to provide a new bound on the Type-I error inflation under model misspecification as follows. 

\begin{lemma}\label{lem:size-inflation-misspecified}
    Let \(P_0^*\) denote the distribution of \((X,W)\) under \(P_{0} \times {\cal N}(0,\frac{1}{d}{I}_d)\) conditional on the event that \(\hat{\theta}(X,W)\) is an SSOSP. Let \(Q_0^*\) be the distribution of \(\hat{\theta}(X,W),\hat{g}(X,W)\) under \((X,W)\sim P_0^*\). Given any predefined test statistic \(T : \mathcal{X} \to \mathbb{R}\) and a rejection threshold \(\alpha \in [0,1]\), the p-value generated by the aCSS method is bounded by:
\[\text{pr}\left(\text{\rm pval} \leq \alpha\right) \leq \alpha + \mathbb{E}_{Q_0^*}\left[d_{\text{TV}}(p_{0}(\cdot \mid \hat{\theta}, \hat{g}), p_{\hat{\theta}}(\cdot \mid \hat{\theta}, \hat{g}))\right].
\]
\end{lemma}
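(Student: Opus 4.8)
\textit{Proof proposal (Lemma~\ref{lem:size-inflation-misspecified}).}
The plan is to replay the proof of Lemma~\ref{lemma_weighted} almost verbatim, after observing that the argument there never uses the parametric form of the data-generating law beyond the single fact that, on the SSOSP event, the conditional law of $X$ given $(\hat{\theta},\hat{g})$ equals $p_{\theta_0}(\cdot\mid\hat{\theta},\hat{g})$. Under misspecification this role is played by $p_0(\cdot\mid\hat{\theta},\hat{g})$: by the Bayes factorization recalled before the statement, given $X=x$ the only randomness in $(\hat{\theta},\hat{g})=(\hat{\theta}(x,W),\hat{g}(x,W))$ comes from $W\sim N(0,I_d/d)$, so the conditional density of $(\hat{\theta},\hat{g})$ given $X=x$, restricted to the SSOSP event, is $H(x;\cdot,\cdot)$ regardless of the marginal law of $X$; multiplying by the marginal $f_0(x)$ and renormalizing yields $p_0(\cdot\mid\hat{\theta},\hat{g})\propto f_0(x)\,H(x;\hat{\theta},\hat{g})$, exactly \eqref{true density}.

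First I would dispose of the non-SSOSP event: on that event Algorithm~\ref{alg: aCSS implementation} sets $\widetilde{X}^{(1)}=\cdots=\widetilde{X}^{(M)}=X$, so the weighted p-value \eqref{weighted p-values} equals $1$ and contributes nothing to $\{\mathrm{pval}\le\alpha\}$ when $\alpha<1$ (the case $\alpha=1$ being trivial). Hence it suffices to bound $\mathrm{pr}(\mathrm{pval}\le\alpha)=\mathrm{pr}(\mathrm{SSOSP})\cdot\mathbb{E}_{Q_0^*}\bigl[\mathrm{pr}(\mathrm{pval}\le\alpha\mid\hat{\theta},\hat{g})\bigr]\le\mathbb{E}_{Q_0^*}\bigl[\mathrm{pr}(\mathrm{pval}\le\alpha\mid\hat{\theta},\hat{g})\bigr]$, where conditioning on the SSOSP event puts $(\hat{\theta},\hat{g})\sim Q_0^*$ and $X\mid(\hat{\theta},\hat{g})\sim p_0(\cdot\mid\hat{\theta},\hat{g})$ by the previous paragraph. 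Now fix $(\hat{\theta},\hat{g})$ and introduce a maximal coupling between the true $X\sim p_0(\cdot\mid\hat{\theta},\hat{g})$ and a hypothetical $X^{\dagger}\sim p_{\hat{\theta}}(\cdot\mid\hat{\theta},\hat{g})$, so that $\mathrm{pr}(X\ne X^{\dagger}\mid\hat{\theta},\hat{g})=d_{\rm TV}\bigl(p_0(\cdot\mid\hat{\theta},\hat{g}),p_{\hat{\theta}}(\cdot\mid\hat{\theta},\hat{g})\bigr)$; feeding $X^{\dagger}$ through the same sampler $\widetilde{Q}_M(\cdot;\cdot,\hat{\theta},\hat{g})$ yields copies $\widetilde{X}^{\dagger,(m)}$ that agree with $\widetilde{X}^{(m)}$ on $\{X=X^{\dagger}\}$. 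Since $X^{\dagger}$ is drawn from the nominal target $P(\cdot;\hat{\theta},\hat{g})=p_{\hat{\theta}}(\cdot\mid\hat{\theta},\hat{g})$ and $\widetilde{Q}_M$ satisfies \eqref{weighted sampling} with weights $\mu=\rmd P/\rmd Q$, the weighted-exchangeability argument of \cite{harrison2012conservative} (as already used for Lemma~\ref{lemma_weighted}) gives $\mathrm{pr}\bigl(\mathrm{pval}(X^{\dagger},\widetilde{X}^{\dagger,(1)},\ldots,\widetilde{X}^{\dagger,(M)})\le\alpha\mid\hat{\theta},\hat{g}\bigr)\le\alpha$. Combining with the coupling bound and integrating over $(\hat{\theta},\hat{g})\sim Q_0^*$ gives the claimed inequality.

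The step I expect to require the most care is the first one — verifying that, under the genuinely different marginal $P_0$, the conditional law of $X$ given $(\hat{\theta},\hat{g})$ on the SSOSP event is still $\propto f_0(x)\,H(x;\hat{\theta},\hat{g})$ with the \emph{same} $H$ (the $F_{\rm pen}$/$F_{\rm con}$-type Jacobian factor and Gaussian kernel in $W$ being unchanged, since they depend only on the deterministic map $(x,w)\mapsto(\hat{\theta},\hat{g})$ and the law of $W$, not on the law of $X$). Once this is in hand, the rest is a direct transcription of the coupling-plus-weighting argument of Lemma~\ref{lemma_weighted}, with $P_{\theta_0}$ and $p_{\theta_0}(\cdot\mid\hat{\theta},\hat{g})$ systematically replaced by $P_0$ and $p_0(\cdot\mid\hat{\theta},\hat{g})$.
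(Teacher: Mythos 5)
Your proposal is correct and follows essentially the paper's own route: the paper omits the proof as ``analogous to that of Lemma~\ref{lemma_weighted},'' and your argument is exactly that analogue --- reduce to the SSOSP event, note that the conditional law of $X$ given $(\hat{\theta},\hat{g})$ is $\propto f_0(x)H(x;\hat{\theta},\hat{g})$ because $H$ depends only on the deterministic map $(x,w)\mapsto(\hat{\theta},\hat{g})$ and the law of $W$, swap the true conditional for the plug-in one at a cost of $\mathbb{E}_{Q_0^*}[d_{\rm TV}]$, and invoke the weighted-exchangeability bound of \cite{harrison2012conservative} under the plug-in law. Your maximal-coupling step is just a probabilistic realization of the total-variation swap that the paper's proof of Lemma~\ref{lemma_weighted} performs via its chain of distributions (a)--(d), so the two arguments coincide in substance.
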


The proof is analogous to that of Lemma \ref{lemma_weighted} and is omitted here.

In Section \ref{section: approximate valid}, we have previously established a bound for $\bbE_{Q_{\theta_0}^*}\left[d_{\rm TV}(p_{\theta_0}(\cdot\mid\hat{\theta},\hat{g}),p_{\hat{\theta}}(\cdot\mid\hat{\theta},\hat{g}))\right]$ under some mild additional assumptions. Now we introduce the following theorem to bound $\bbE_{Q_{0}^*}\left[d_{\rm TV}(p_{0}(\cdot\mid\hat{\theta},\hat{g}),p_{\hat{\theta}}(\cdot\mid\hat{\theta},\hat{g}))\right]$ using $\bbE_{Q_{\theta_0}^*}\left[d_{\rm TV}(p_{\theta_0}(\cdot\mid\hat{\theta},\hat{g}),p_{\hat{\theta}}(\cdot\mid\hat{\theta},\hat{g}))\right]$ for general $\theta_0$.
\begin{theorem}\label{theorem:robust} Fixed $\theta_0\in \Theta$, we have 
    $$\bbE_{Q_{0}^*}\left[d_{\rm TV}(p_{0}(\cdot\mid\hat{\theta},\hat{g}),p_{{\hat{\theta}}}(\cdot\mid\hat{\theta},\hat{g}))\right]\leq 2d_{\rm TV}(P_0^*,P_{\theta_0}^*)+\bbE_{Q_{\theta_0}^*}\left[d_{\rm TV}(p_{\theta_0}(\cdot\mid\hat{\theta},\hat{g}),p_{\hat{\theta}}(\cdot\mid\hat{\theta},\hat{g}))\right],$$
    where $Q_{\theta_0^*}$ is defined in Lemma \ref{lemma_weighted} and $Q_{0}^*$ is defined in Lemma~\ref{lem:size-inflation-misspecified}.
\end{theorem}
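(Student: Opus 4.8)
The plan is to lift both sides of the claimed inequality from the level of conditional total–variation distances to the level of joint laws on $\mathcal{X}\times\mathbb{R}^{d}\times\mathbb{R}^{d}$ carrying $(X,\hat{\theta},\hat{g})$, and then to bound the left-hand side by a short chain of triangle inequalities whose links are each either a coarsening (data-processing) step or one of the two elementary identities
\[
d_{\rm TV}\bigl(P\otimes K,\,P\otimes K'\bigr)=\bbE_{Z\sim P}\bigl[d_{\rm TV}\bigl(K(\cdot\mid Z),K'(\cdot\mid Z)\bigr)\bigr],\qquad d_{\rm TV}\bigl(P\otimes K,\,P'\otimes K\bigr)=d_{\rm TV}(P,P'),
\]
which hold for probability kernels $K,K'$ and marginals $P,P'$ on a product space. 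I would record these, together with monotonicity of $d_{\rm TV}$ under measurable maps, as a brief preliminary, mirroring steps already used in the proof of Lemma~\ref{lemma_weighted}.

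Concretely, put $\varphi_{1}(x,w)=(x,\hat{\theta}(x,w),\hat{g}(x,w))$ and $\varphi_{2}(x,w)=(\hat{\theta}(x,w),\hat{g}(x,w))$, and let $\mu:=(\varphi_{1})_{\#}P_{0}^{*}$ and $\nu:=(\varphi_{1})_{\#}P_{\theta_{0}}^{*}$ be the joint laws of $(X,\hat{\theta},\hat{g})$ under $P_{0}^{*}$ and $P_{\theta_{0}}^{*}$. By disintegration along the projection to $(\hat{\theta},\hat{g})$ these factor as $\mu=Q_{0}^{*}\otimes p_{0}(\cdot\mid\hat{\theta},\hat{g})$ and $\nu=Q_{\theta_{0}}^{*}\otimes p_{\theta_{0}}(\cdot\mid\hat{\theta},\hat{g})$, with $Q_{0}^{*}=(\varphi_{2})_{\#}P_{0}^{*}$ and $Q_{\theta_{0}}^{*}=(\varphi_{2})_{\#}P_{\theta_{0}}^{*}$ as in Lemmas~\ref{lem:size-inflation-misspecified} and~\ref{lemma_weighted}. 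Using the sampling kernel $p_{\hat{\theta}}(\cdot\mid\hat{\theta},\hat{g})$ of~\eqref{model misspecification}, I then introduce two auxiliary measures $\mu':=Q_{0}^{*}\otimes p_{\hat{\theta}}(\cdot\mid\hat{\theta},\hat{g})$ and $\nu':=Q_{\theta_{0}}^{*}\otimes p_{\hat{\theta}}(\cdot\mid\hat{\theta},\hat{g})$. The first identity above then gives $d_{\rm TV}(\mu,\mu')=\bbE_{Q_{0}^{*}}[d_{\rm TV}(p_{0}(\cdot\mid\hat{\theta},\hat{g}),p_{\hat{\theta}}(\cdot\mid\hat{\theta},\hat{g}))]$, which is exactly the quantity on the left of the claim, and $d_{\rm TV}(\nu,\nu')=\bbE_{Q_{\theta_{0}}^{*}}[d_{\rm TV}(p_{\theta_{0}}(\cdot\mid\hat{\theta},\hat{g}),p_{\hat{\theta}}(\cdot\mid\hat{\theta},\hat{g}))]$, which is the last term on the right.

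It then remains to apply $d_{\rm TV}(\mu,\mu')\le d_{\rm TV}(\mu,\nu)+d_{\rm TV}(\nu,\nu')+d_{\rm TV}(\nu',\mu')$ and bound the two leftover links. Since $\mu$ and $\nu$ are pushforwards of $P_{0}^{*}$ and $P_{\theta_{0}}^{*}$ under the \emph{same} map $\varphi_{1}$, monotonicity of $d_{\rm TV}$ gives $d_{\rm TV}(\mu,\nu)\le d_{\rm TV}(P_{0}^{*},P_{\theta_{0}}^{*})$; and since $\mu'$ and $\nu'$ share the kernel $p_{\hat{\theta}}(\cdot\mid\hat{\theta},\hat{g})$ and differ only through their $(\hat{\theta},\hat{g})$-marginals, the second identity together with $Q_{\bullet}^{*}=(\varphi_{2})_{\#}P_{\bullet}^{*}$ gives $d_{\rm TV}(\nu',\mu')=d_{\rm TV}(Q_{\theta_{0}}^{*},Q_{0}^{*})\le d_{\rm TV}(P_{0}^{*},P_{\theta_{0}}^{*})$. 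Summing the three links delivers the bound with constant $2$. The one point that needs care — and the reason a coefficient $2$ rather than $3$ comes out — is the order in which $\nu$ and $\nu'$ are inserted into the chain: inserting $\nu$ first transfers the conditional-TV term onto the correct reference marginal $Q_{\theta_{0}}^{*}$, whereas a direct change of measure from $\bbE_{Q_{0}^{*}}[\cdot]$ to $\bbE_{Q_{\theta_{0}}^{*}}[\cdot]$ (bounding the difference crudely by $d_{\rm TV}(Q_{0}^{*},Q_{\theta_{0}}^{*})$) would cost an extra $d_{\rm TV}(P_{0}^{*},P_{\theta_{0}}^{*})$.
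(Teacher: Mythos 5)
Your proposal is correct and matches the paper's argument essentially step for step: the paper also works with the joint law of $(X,\hat{\theta},\hat{g})$, inserts the two intermediate measures $Q_{\theta_0}^*\otimes p_{\theta_0}(\cdot\mid\hat{\theta},\hat{g})$ and $Q_{\theta_0}^*\otimes p_{\hat{\theta}}(\cdot\mid\hat{\theta},\hat{g})$ in the same order, and bounds the two outer links by $d_{\rm TV}(P_0^*,P_{\theta_0}^*)$ (via the determinism of $(\hat\theta,\hat g)$ in $(X,W)$, i.e.\ data processing), yielding the constant $2$. The only difference is presentational — you phrase the links via kernel identities and pushforwards where the paper writes out the density integrals explicitly.
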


Theorem~\ref{theorem:robust} demonstrates that the difference between $$
\mathbb{E}_{Q_{\theta_0}^*}\left[d_{\mathrm{TV}}(p_{\theta_0}(\cdot \mid \hat{\theta}, \hat{g}), p_{\hat{\theta}}(\cdot \mid \hat{\theta}, \hat{g}))\right]
$$and 
$$\mathbb{E}_{Q_{0}^*}\left[d_{\mathrm{TV}}(p_{0}(\cdot \mid \hat{\theta}, \hat{g}), p_{\hat{\theta}}(\cdot \mid \hat{\theta}, \hat{g}))\right]$$ 
can be bounded by twice the total variance between \(P_0^*\) and \(P_{\theta_0}^*\).

Using Theorem~\ref{theorem:robust} and Lemma~\ref{lem:size-inflation-misspecified} together can bound the error inflation if we can bound $d_{\mathrm{TV}}(p_{0}(\cdot \mid \hat{\theta}, \hat{g}), p_{\hat{\theta}}(\cdot \mid \hat{\theta}, \hat{g}))$. 
This is very different from the well-specified case, where direct applications of Theorem \ref{Theorem 1} and Theorem \ref{Theorem 2} involves using Assumption \ref{assumption:2} on the estimation error. 

For a misspecified model, since the true distribution $P_0$ does not lie in the model, we can no longer use Assumption \ref{assumption:2}. 
Instead, we follow \cite{white1982maximum} and select \(\theta_0\) that minimizes the Kullback-Leibler Information Criterion (KLIC): 
\[
\theta_0 = \min_{\theta} D_{KL}(P_0 \| P_{\theta}). 
\]
Classical statistical theory guarantees that under some mild conditions, the MLE will converge to \({\theta_0}\). For a perturbed MLE \(\hat{\theta}(X,W)\), similar convergence is expected since we can control $\sigma$ such that the perturbation is sufficiently small and can be omitted.

By assuming that the estimator always returns a strict second-order stationary point (SSOSP), the following corollary provides a simplification of Theorem~\ref{theorem:robust}. 

\begin{corollary}
    If $\hat{\theta}(X,W)$ is an SSOSP of the optimization problem with probability $1$ (e.g. ${\cal L}(\theta;X,W)$ is strictly convex), then we have 
    $$P_0^*\stackrel{{\rm d}}{=}P_0\times {\cal N}(0,\frac{1}{d}{I}_d),P_{\theta_0}^*\stackrel{{\rm d}}{=}P_{\theta_0}\times {\cal N}(0,\frac{1}{d}{I}_d)$$
    Combing with Theorem~\ref{theorem:robust}, we have
     $$\bbE_{Q_{0}^*}\left[d_{\rm TV}\left(p_{0}(\cdot\mid\hat{\theta},\hat{g}),p_{{\hat{\theta}}}(\cdot\mid\hat{\theta},\hat{g})\right)\right]\leq 2d_{\rm TV}(P_0,P_{\theta_0})+\bbE_{Q_{\theta_0}^*}\left[d_{\rm TV}\left(p_{\theta_0}(\cdot\mid\hat{\theta},\hat{g}),p_{\hat{\theta}}(\cdot\mid\hat{\theta},\hat{g})\right)\right]$$
\end{corollary}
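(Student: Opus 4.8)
The plan is to first settle the two distributional identities and then substitute them into Theorem~\ref{theorem:robust}. Recall that $P_0^*$ is by definition the law of $(X,W)\sim P_0\times {\cal N}(0,\tfrac1d{I}_d)$ conditioned on the event $\mathcal{E}=\{\hat{\theta}(X,W)\text{ is an SSOSP}\}$. Under the hypothesis of the corollary, $\hat{\theta}(X,W)$ is an SSOSP with probability one under this law, so $\mathbb{P}(\mathcal{E})=1$ and conditioning on $\mathcal{E}$ does not alter the distribution; hence $P_0^*\stackrel{{\rm d}}{=}P_0\times {\cal N}(0,\tfrac1d{I}_d)$. The identical argument with $P_{\theta_0}$ in place of $P_0$ gives $P_{\theta_0}^*\stackrel{{\rm d}}{=}P_{\theta_0}\times {\cal N}(0,\tfrac1d{I}_d)$. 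The parenthetical sufficient condition is justified as follows: if ${\cal L}(\cdot;x,w)$ is strictly convex and twice differentiable for (a.e.) $x$ and every $w$, and its unique minimizer over the open convex set $\Theta$ lies in $\Theta$, then that minimizer automatically satisfies the first-order condition $\nabla_\theta{\cal L}=0$ and the strict second-order condition $\nabla^2_\theta{\cal L}\succ 0$, hence is an SSOSP; and this holds simultaneously under $P_0$ and $P_{\theta_0}$ as both are absolutely continuous with respect to $\nu_{\cal X}$.

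Next I would compute $d_{\rm TV}(P_0^*,P_{\theta_0}^*)$ using the identities just obtained. Both measures are product measures sharing the common second factor ${\cal N}(0,\tfrac1d{I}_d)$. By the elementary fact that the total variation distance between two product measures with a common marginal equals the total variation distance between the differing marginals---seen by writing $d_{\rm TV}$ as one half the $L^1$ distance of densities with respect to $\nu_{\cal X}\otimes {\cal N}(0,\tfrac1d{I}_d)$ and integrating out the Gaussian factor, whose density integrates to one---we obtain $d_{\rm TV}(P_0^*,P_{\theta_0}^*)=d_{\rm TV}(P_0,P_{\theta_0})$. Plugging this into the conclusion of Theorem~\ref{theorem:robust} yields
\begin{align*}
\bbE_{Q_{0}^*}\left[d_{\rm TV}\left(p_{0}(\cdot\mid\hat{\theta},\hat{g}),p_{\hat{\theta}}(\cdot\mid\hat{\theta},\hat{g})\right)\right]
&\leq 2\,d_{\rm TV}(P_0^*,P_{\theta_0}^*)+\bbE_{Q_{\theta_0}^*}\left[d_{\rm TV}\left(p_{\theta_0}(\cdot\mid\hat{\theta},\hat{g}),p_{\hat{\theta}}(\cdot\mid\hat{\theta},\hat{g})\right)\right]\\
&= 2\,d_{\rm TV}(P_0,P_{\theta_0})+\bbE_{Q_{\theta_0}^*}\left[d_{\rm TV}\left(p_{\theta_0}(\cdot\mid\hat{\theta},\hat{g}),p_{\hat{\theta}}(\cdot\mid\hat{\theta},\hat{g})\right)\right],
\end{align*}
which is exactly the claimed inequality.

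There is no genuinely hard step here: the corollary is a direct specialization of Theorem~\ref{theorem:robust}. The only two points requiring a line of care are (i) confirming that under strict convexity the global minimizer is indeed an SSOSP, so that $\mathbb{P}(\mathcal{E})=1$ and the conditioning defining $P_0^*$ and $P_{\theta_0}^*$ is vacuous, and (ii) the product-measure identity $d_{\rm TV}(\mu\times\nu,\mu'\times\nu)=d_{\rm TV}(\mu,\mu')$ for a shared factor $\nu$. Both are standard and neither invokes the aCSS machinery itself, so I expect the write-up to be short.
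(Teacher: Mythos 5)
Your proposal is correct and follows exactly the route the paper intends (and leaves implicit): since the SSOSP event has probability one, the conditioning defining $P_0^*$ and $P_{\theta_0}^*$ is vacuous, the shared Gaussian factor gives $d_{\rm TV}(P_0^*,P_{\theta_0}^*)=d_{\rm TV}(P_0,P_{\theta_0})$, and substituting into Theorem~\ref{theorem:robust} yields the claim. No gaps; the write-up can be as short as you suggest.
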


\subsection{aCSS with quantile estimator}\label{sec:quantile}

This section explores the application of the aCSS method combined with quantile regression. 
We again consider the linear model defined in Section \ref{trimmed MLE section}, but assume the noise terms $\varepsilon_i$ satisfy $\text{pr}(\varepsilon_i\leq 0)=\tau$ for some prespecified $0<\tau<1$. 
We consider the following penalized quantile regression:
\begin{equation}\label{quantile estimator}
    \hat{\theta}=\hat{\theta}_{\rm quantile}(X)=\argmin_{\theta\in\Theta} F_{\rm quantile}(\theta;X)
\end{equation}
 where $$F_{\rm quantile}(\theta;X)=\frac{1}{n}\sum_{i=1}^n\ell_\tau(X_i-Z_i^\T\theta)+p(\theta),$$ 
 with $\ell_\tau (t) = t \left\{\tau- {1} (t < 0)\right\}$ being the quantile loss function, and $p(\cdot)$ an optional penalty function. 
 Note that, unlike in other aCSS methods that rely on perturbed estimators, no perturbation is introduced here. 
 This is because the conditional distribution of $X$ given the quantile estimator can be explicitly analyzed as in the following lemma, whereas such analysis in the case of a general MLE is typically intractable.
 \begin{lemma}\label{quantile lemma}
    For $X\in \bbR^n,\theta\in\bbR^d$ and $\varepsilon\geq 0$, we define
\begin{equation}
        s_{\varepsilon}(X;\theta)=T_\varepsilon(X-Z\theta),\qquad J_{\varepsilon}(X;\theta)=\left\{i\in [n]:\left|X_i-Z_i^\T\theta\right|> \varepsilon\right\},
\end{equation}
where $T_{\varepsilon}(t)={\rm sign}(t)\cdot 1\left(|t|>\varepsilon\right)$. If $\hat{\theta}$ is a local minimizer of $F_{\rm quantile}(\theta;X)$, then for all $\widetilde{X}\in \bbR^n$ such that 
$ s_{\varepsilon}(X;\hat{\theta})=s_{\varepsilon}(\widetilde{X};\hat{\theta})$ and $ X_{-J_{\varepsilon}(X;\hat{\theta})}=\widetilde{X}_{-J_{\varepsilon}(\widetilde{X};\hat{\theta})}$,
 $\hat{\theta}$ is also a local minimizer of $F_{\rm quantile}(\theta;\widetilde{X})$.
 \end{lemma}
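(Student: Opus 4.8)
The plan is to exploit the piecewise-linear structure of the quantile loss $\ell_\tau$ and to show that, in a sufficiently small neighborhood of $\hat{\theta}$, the two objectives $F_{\rm quantile}(\cdot;X)$ and $F_{\rm quantile}(\cdot;\widetilde{X})$ differ only by an additive constant; local minimality is then immediately transferred. First I would unpack the hypotheses. Since $s_\varepsilon(X;\theta) = T_\varepsilon(X-Z\theta)$ is nonzero in coordinate $i$ exactly when $|X_i - Z_i^\T\theta| > \varepsilon$, the support of $s_\varepsilon(X;\hat{\theta})$ equals $J_\varepsilon(X;\hat{\theta})$. Hence $s_\varepsilon(X;\hat{\theta}) = s_\varepsilon(\widetilde{X};\hat{\theta})$ forces $J := J_\varepsilon(X;\hat{\theta}) = J_\varepsilon(\widetilde{X};\hat{\theta})$ and, for every $i \in J$, it gives ${\rm sign}(X_i - Z_i^\T\hat{\theta}) = {\rm sign}(\widetilde{X}_i - Z_i^\T\hat{\theta})$ with both residuals of absolute value exceeding $\varepsilon \ge 0$, hence strictly nonzero. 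The second assumption $X_{-J} = \widetilde{X}_{-J}$ then says $X_i = \widetilde{X}_i$ for every $i \notin J$.

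Next I would fix $\delta > 0$ small enough that on $\mathcal{B}(\hat{\theta};\delta)$ the sign of each ``far'' residual is frozen: for every $i \in J$ and every $\theta$ with $\|\theta - \hat{\theta}\| < \delta$, both $X_i - Z_i^\T\theta$ and $\widetilde{X}_i - Z_i^\T\theta$ retain the sign they have at $\hat{\theta}$. Such a $\delta$ exists because this amounts to finitely many strict inequalities, namely $|Z_i^\T(\theta-\hat{\theta})| < |X_i - Z_i^\T\hat{\theta}|$ and its analogue for $\widetilde{X}_i$, all holding strictly at $\theta = \hat{\theta}$. On this neighborhood I would split $\sum_{i=1}^n \ell_\tau(X_i - Z_i^\T\theta)$ into three groups: for $i \in J$ with positive residual, $\ell_\tau(X_i - Z_i^\T\theta) = \tau(X_i - Z_i^\T\theta)$ and likewise $\ell_\tau(\widetilde{X}_i - Z_i^\T\theta) = \tau(\widetilde{X}_i - Z_i^\T\theta)$, so the two agree up to the $\theta$-free term $\tau(\widetilde{X}_i - X_i)$; for $i \in J$ with negative residual, the same holds with $\tau$ replaced by $\tau-1$; and for $i \notin J$ the two summands are literally equal since $X_i = \widetilde{X}_i$. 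As the penalty $p(\theta)$ is common to both objectives, it cancels, and I conclude $F_{\rm quantile}(\theta;\widetilde{X}) = F_{\rm quantile}(\theta;X) + c$ for all $\theta \in \mathcal{B}(\hat{\theta};\delta)$, where $c$ does not depend on $\theta$.

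Finally, intersecting with a neighborhood $\mathcal{B}(\hat{\theta};\delta')$ on which $\hat{\theta}$ minimizes $F_{\rm quantile}(\cdot;X)$, on $\mathcal{B}(\hat{\theta};\min(\delta,\delta'))$ we obtain $F_{\rm quantile}(\theta;\widetilde{X}) = F_{\rm quantile}(\theta;X) + c \ge F_{\rm quantile}(\hat{\theta};X) + c = F_{\rm quantile}(\hat{\theta};\widetilde{X})$, so $\hat{\theta}$ is a local minimizer of $F_{\rm quantile}(\cdot;\widetilde{X})$. I do not anticipate a genuine obstacle; the only points needing care are (i) confirming that $i \in J$ guarantees a strictly nonzero residual even when $\varepsilon = 0$, so that its sign is well defined and locally stable, and (ii) noting that no convexity of $p$ is required because $p$ cancels in the difference of the two objectives. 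A version of the statement for global minimizers would instead require convexity of the loss-plus-penalty, but for the local claim the above argument suffices.
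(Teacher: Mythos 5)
Your proof is correct and follows essentially the same route as the paper's: fix a small ball around $\hat{\theta}$ on which the signs of the residuals indexed by $J_{\varepsilon}(X;\hat{\theta})$ are frozen, compare the two objectives coordinatewise (the coordinates outside $J$ being identical since $X_{-J}=\widetilde{X}_{-J}$), and transfer local minimality. In fact you are slightly more careful than the paper, whose proof asserts $\ell_\tau(\widetilde{X}_i-Z_i^\T\theta)=\ell_\tau(X_i-Z_i^\T\theta)$ for $i\in J$ and hence exact equality $F_{\rm quantile}(\theta;X)=F_{\rm quantile}(\theta;\widetilde{X})$ on the ball; your observation that these terms agree only up to the $\theta$-free constants $\tau(\widetilde{X}_i-X_i)$ or $(\tau-1)(\widetilde{X}_i-X_i)$, so the objectives differ by an additive constant, is the precise statement and still yields the conclusion.
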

The proof of Lemma \ref{quantile lemma} in Section~\ref{pf:quantile lemma}.

The essence of Lemma \ref{quantile lemma} lies in the fact that the local optimality of \( F_{\text{quantile}}(\theta; X) \) at \( \hat{\theta} \) can be determined by the sign of \( X - Z\hat{\theta} \). 

Lemma \ref{quantile lemma} characterizes a set of samples that retains the estimator $\hat\theta$. 
The parameter \(\varepsilon\) in
Lemma \ref{quantile lemma} is introduced to facilitate practical computation, which can be treated as a fixed parameter and denoted by \(\varepsilon_0\) henceforth. 

Next, we provide the new definition of SSOSP. 
\begin{definition}[SSOSP for quantile regression]\label{SSOSP quantile}
Given $\varepsilon_0\geq 0$, 
    a parameter $\theta$ is a strict second-order stationary point (SSOSP) of the optimization problem (\ref{quantile estimator}) if $\theta$ is a local minimizer of $F_{\rm quantile}(\theta;X)$ with $J_{\varepsilon_0}(X;\theta)\neq \emptyset$ and $$-\sum_{i\in J_{\varepsilon_0}(X;\theta)}\nabla_\theta^2\log f_i(X_i;\theta)\succ 0.$$
\end{definition}
Similar to the aCSS method with MLE, we compute 
\begin{equation}\label{quantile gradient}
    \hat{g}=\hat{g}_{\rm quantile}(X,W)=\sum_{i\in J^c_{\varepsilon_0}(X;\hat{\theta})}\nabla_\theta \log f_i(X;\hat{\theta})+\sigma W,
\end{equation}
where $W\sim {N}(0,{I}_d/d)$ and $\sigma>0$ are defined as before.
\begin{lemma}\label{lemma conditional density quantile}(Conditional density for quantile regression) Fix any $\theta_0\in\bbR^d$ and let $(X,W,\hat{\theta},\hat{g})$ be drawn from the joint model: 
    $$\left\{
        \begin{array}{l}
            (X,W)\sim P_{\theta_0}\times N(0,I_d/d)\\
             \hat{\theta}=\hat{\theta}_{\rm quantile},(X),\hat{g}=\hat{g}_{\rm quantile}(X,W).
        \end{array}
    \right.$$
    Let ${\cal J}\subseteq [n]$ and let $h=|\mathcal{J}|$. 
    Suppose the event that $\hat{\theta}$ is an SSOSP of (\ref{quantile estimator}) with $J_{\varepsilon_0}(X;\hat{\theta})={\cal J}$ has positive probability.
    When this event happens, 
    the conditional distribution of $X_{{\cal J}}\mid \hat{\theta},\hat{g},X_{-{\cal J}}$ has density  
    \begin{equation}\label{true density quantile}
        \begin{aligned}
            p_{\theta_0}\left(x \mid \hat{\theta},\hat{g}, X_{-\cal J}\right)\propto &\prod_{i=1}^h f_0(x_i-Z_{{\cal J}_i}^\T\theta_0)\cdot 
            \exp\left\{-\frac{\left\|\sum_{i=1}^h f_0^\prime(x_i-Z_{{\cal J}_i}^\T\hat{\theta})Z_{{\cal J}_i}\right\|^2}{2\sigma^2/d}\right\}\\
            &\qquad \qquad \quad \cdot \det\left(\sum_{i=1}^h f_0''(x_i-Z_{{\cal J}_i}^\T\hat{\theta})Z_{{\cal J}_i}Z_{{\cal J}_i}^\T\right)\cdot {1}_{x\in {\cal X}^{\rm quantile}_{\hat{\theta},\hat{g},X_{-\cal J},{\cal J}}}
        \end{aligned}
    \end{equation}
    with respect to the measure $\nu_{\cal X}^\prime=\nu_0^{\otimes h}$ defined on ${\bbR}^h$, where
    \begin{align*}
        {\cal X}^{\rm quantile}_{\theta,g, y,\cal J}& :=\left\{x\in \bbR^h: \theta=\hat{\theta}_{\rm quantile}\left(\left[x,y\right]_{\cal J}\right)\text{ is an SSOSP of (\ref{quantile estimator})},\right.\\
        &\qquad \qquad\qquad \qquad \qquad \quad \left.\text{for some }w\in\bbR^d,\text{ and }{g}=\hat{g}(\left[x,y\right]_{\cal J},w)\right\}.
    \end{align*}
\end{lemma}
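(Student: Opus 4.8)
The plan is to mirror the disintegration argument used for the maximum trimmed likelihood estimator in Lemma~\ref{lemma conditional density trimmed}, with the genuinely new ingredient being Lemma~\ref{quantile lemma}; it plays here the role that local linearity of the penalty played in Lemma~\ref{lemma penalized} and that the rank-constant theorem played in the constrained case, namely it is the device that tames the non-differentiability of the quantile loss $\ell_\tau$. First I would fix $\theta_0$, the index set $\mathcal{J}$ with $h=|\mathcal{J}|$, and condition on $X_{-\mathcal{J}}=y$; under $P_{\theta_0}$ the block $X_{\mathcal{J}}$ then has density $\propto\prod_{i\in\mathcal{J}}f_0(X_i-Z_i^\T\theta_0)$ with respect to $\nu_0^{\otimes h}$, independently of $W\sim N(0,I_d/d)$. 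It therefore suffices to compute, on the event that $\hat\theta(X)$ is an SSOSP of \eqref{quantile estimator} with $J_{\varepsilon_0}(X;\hat\theta)=\mathcal{J}$, the conditional law of $X_{\mathcal{J}}$ given $(\hat\theta,\hat g)$, and then renormalize and substitute $\hat\theta$ for $\theta_0$.

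The next step is to describe the conditioning set geometrically and then run a change of variables. By Lemma~\ref{quantile lemma}, if $\hat\theta$ is a local minimizer of $F_{\rm quantile}(\cdot;[x,y]_{\mathcal{J}})$ for one $x\in\mathbb{R}^h$, then it remains a local minimizer for every $x'\in\mathbb{R}^h$ producing the same sign vector $s_{\varepsilon_0}([x',y]_{\mathcal{J}};\hat\theta)=s_{\varepsilon_0}([x,y]_{\mathcal{J}};\hat\theta)$ (which in particular forces the active set to stay $\mathcal{J}$); combined with the fact that the strict second-order condition in Definition~\ref{SSOSP quantile} is an open condition in $x$, this shows that $\mathcal{X}^{\rm quantile}_{\hat\theta,\hat g,y,\mathcal{J}}$ is cut out of $\mathbb{R}^h$ by finitely many sign and magnitude constraints on the residuals $x_i-Z_i^\T\hat\theta$, $i\in\mathcal{J}$, together with one positive-definiteness constraint, hence is a full-dimensional open set, so the target conditional law is absolutely continuous with respect to $\nu_0^{\otimes h}$. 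On this set I would then transform $(X_{\mathcal{J}},W)$, recording $(\hat\theta,\hat g)$ together with the complementary coordinates needed to make the map a local diffeomorphism: the Gaussian density of $W$, evaluated at the value consistent with the conditioned-on $\hat g$, contributes the factor $\exp\bigl(-\|\sum_{i\in\mathcal{J}}f_0'(x_i-Z_i^\T\hat\theta)Z_i\|^2/(2\sigma^2/d)\bigr)$, while differentiating the relevant stationarity relation with respect to $\hat\theta$ contributes the Jacobian $\det\bigl(\sum_{i\in\mathcal{J}}f_0''(x_i-Z_i^\T\hat\theta)Z_iZ_i^\T\bigr)$, which is positive on this event by the second-order condition so the absolute value is harmless. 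Collecting these with the likelihood factor from the first paragraph and the indicator $\mathbf{1}_{x\in\mathcal{X}^{\rm quantile}}$ gives \eqref{true density quantile}; the substitution of $\hat\theta$ for $\theta_0$ is legitimate because every factor other than the likelihood factor is invariant under it. Finally I would check, exactly as in Appendix~\ref{verify}, that the right-hand side of \eqref{true density quantile} integrates to a finite positive number, so it is a genuine density, with positivity following from the assumption that the conditioning event has positive probability.

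The hard part will be the change of variables in the second paragraph. Unlike the MLE and MTLE cases, $\hat\theta$ is not defined through a smooth stationarity equation in $(X,W)$, so the implicit-function-theorem computation of the Jacobian is not available verbatim; the fix is to use Lemma~\ref{quantile lemma} to restrict attention to the sign-cell on which $\hat\theta$ and $\mathcal{J}$ are locally frozen, to reparametrize the conditioning event by the free block $X_{\mathcal{J}}$ there, and then to argue that conditioning on $\hat g$ removes exactly the correct number of degrees of freedom to produce the Gaussian factor and that the remaining Jacobian is the stated determinant. Getting this bookkeeping of free coordinates right---and verifying that the degenerate configurations (ties in $s_{\varepsilon_0}$, residuals exactly equal to $\varepsilon_0$, non-unique minimizers) form a $\nu_0^{\otimes h}$-null set that may be discarded---is where the real effort goes; the remaining steps are routine adaptations of the earlier proofs.
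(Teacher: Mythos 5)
Your plan is modeled on the MTLE argument, but the step you defer to the end --- ``the hard part will be the change of variables'' --- is precisely the part that does not go through, and the mechanism you propose for it is not available here. In this setting $\hat{\theta}=\hat{\theta}_{\rm quantile}(X)$ is a function of $X$ alone: it is characterized by local minimality of the nonsmooth loss $F_{\rm quantile}$ (a subgradient condition), not by a smooth stationarity equation tying $\hat\theta$ to $W$ as in the MLE/MTLE cases. Hence there is no ``relevant stationarity relation'' whose differentiation in $\hat\theta$ could produce the Jacobian $\det\bigl(\sum_{i}f_0''(x_i-Z_{{\cal J}_i}^\T\hat\theta)Z_{{\cal J}_i}Z_{{\cal J}_i}^\T\bigr)$: conditioning on $\hat\theta$ only contributes the indicator of the cell described by Lemma~\ref{quantile lemma}, and no variable is eliminated through a $\theta$-dependent map. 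The same problem afflicts your claim that ``the Gaussian density of $W$, evaluated at the value consistent with the conditioned-on $\hat g$,'' yields the stated exponential factor: by \eqref{quantile gradient}, $\hat g-\sigma W$ is a function of $\hat\theta$ and of the coordinates in ${\cal J}^c$ only, and those coordinates are exactly the ones you condition on as $X_{-\cal J}$; so given $(\hat\theta,\hat g,X_{-\cal J})$ the Gaussian factor is constant in $x_{\cal J}$ and cannot produce the $x$-dependent term $\exp\{-d\|\sum_i f_0'(x_i-Z_{{\cal J}_i}^\T\hat\theta)Z_{{\cal J}_i}\|^2/(2\sigma^2)\}$. In the MTLE proof those two factors arise because the first-order condition $G_{{\rm trim},J}(\hat\theta;X_J)+\sigma W=0$ pins $W$ to a function of $X_J$ and makes $w\mapsto\hat\theta$ a genuine change of variables; no analogue of that identity exists for \eqref{quantile estimator}, so importing the MTLE bookkeeping is the wrong route, and your proposal never actually derives either nontrivial factor in \eqref{true density quantile} --- it asserts them and postpones the justification.

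For comparison, the paper's own treatment of this lemma is deliberately minimal and takes a different (much lighter) route: it notes that, unlike the earlier cases, the only link between the noise $W$ and the data is through $\hat g$, and that this single link is handled by a direct change of variables $w\mapsto\hat g$, with the conditioning on $\hat\theta$ controlled through the sign-cell invariance of Lemma~\ref{quantile lemma} and the openness of the second-order condition in Definition~\ref{SSOSP quantile}; no implicit-function-theorem Jacobian in $\theta$ is computed at all. So the parts of your write-up that are sound (fixing ${\cal J}$, conditioning on $X_{-\cal J}$, using Lemma~\ref{quantile lemma} to freeze $\hat\theta$ on a cell, discarding the null set of ties, and the final integrability check \`a la Appendix~\ref{verify}) are fine, but the core of the proof --- where the exponential and determinant factors in \eqref{true density quantile} come from under the actual definitions \eqref{quantile estimator} and \eqref{quantile gradient} --- is missing, and the route you sketch for obtaining them would not succeed as stated.
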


Given ${\cal J}=J_{\varepsilon_0}(X;\hat{\theta})$, 
we will condition on 
$(\hat{\theta}, \hat{g})$ as well as  $X_{-{\cal J}}$
and generate a sub copy $\widetilde{X}^{\rm sub}$ for the units in ${\cal J}$ so that the full copy $\widetilde{X}$ is constructed as $\left[\widetilde{X}_{\rm sub},X_{-{\cal J}}\right]_{\cal J}$.
Here we have used the shorthand notation $[a,b]_{J}$ introduced in Section~\ref{trimmed MLE section}. 
Based on the conditional density (\ref{true density quantile}), the sampling distribution of $\widetilde{X}^{\rm sub}$ is given by 
\begin{equation}\label{sampling density quantile}
    \begin{aligned}
 \forall x\in \mathbb{R}^{|{\cal J}|}, ~~       p_{\hat{\theta}}\left(x \mid \hat{\theta},\hat{g}, X_{-\cal J}\right)\propto &\prod_{i=1}^h f_0(x_i-Z_{{\cal J}_i}^\T\hat{\theta})\cdot 
        \exp\left\{-\frac{\left\|\sum_{i=1}^h f_0^\prime(x_i-Z_{{\cal J}_i}^\T\hat{\theta})Z_{{\cal J}_i}\right\|^2}{2\sigma^2/d}\right\}\\
        &\qquad \qquad \quad \cdot \det\left(\sum_{i=1}^h f_0''(x_i-Z_{{\cal J}_i}^\T\hat{\theta})Z_{{\cal J}_i}Z_{{\cal J}_i}^\T\right)\cdot {1}_{x\in {\cal X}^{\rm quantile}_{\hat{\theta},\hat{g},X_{-\cal J},{\cal J}}}. 
    \end{aligned}
\end{equation}
The formal aCSS method with quantile estimator is summarized in Algorithm~\ref{alg:acss-quantile}. 
\begin{algorithm}
    \caption{Formal aCSS Method with quantile estimator}
    \label{alg:acss-quantile}
    \indent (i) Given \(X\), draw a randomization \(W \sim {N}(0,I_d/d)\). Compute the (randomized) statistic \(\hat{\theta}=\hat{\theta}_{\rm quantile}(X),\hat{J}={J}(X;\hat{\theta})\) and $\hat{g}=g(X,W)$.
    
    \indent (ii) If $\hat{\theta}$ is not an SSOSP, then set $\widetilde{X}^{(1)}=\cdots=\widetilde{X}^{(M)}=X$ and return p-value as 1.
    
    \indent (iii) Otherwise, sample sub copies $\widetilde{X}^{(1)}_{\rm sub},\cdots, \widetilde{X}^{(M)}_{\rm sub}$ from the sampler with respect to the density $ p_{\hat{\theta}}\left(x \mid \hat{\theta},\hat{g}, X_{-\cal J}\right)$, and construct the full copies $\widetilde{X}^{(i)}=\left[\widetilde{X}^{(i)}_{\rm sub},X_{-\hat{ J}}\right]_{\hat{J}}$.
    
    \indent (iv) Compute the p-value in \eqref{pval} with some test statistic $T$.
   \end{algorithm}
This resampling method modifies the original \(X\) only for samples whose indices fall within \(J_{\varepsilon_0}(X; \hat{\theta})\), while treating the remaining samples as fixed. Although subsampling may reduce power, we can set a small threshold \(\varepsilon_0\) so that few samples being treated as fixed.

Now we need to modify Assumption \ref{assumption:3} for the quantile estimator. 
    For a set $J\subseteq [n]$, we define 
    \begin{equation}\label{likelihood quantile}
        H_{{\rm quantile},J}(\theta; x) = -\sum_{i\in J} \nabla_\theta^2 \log f_i(x_i;\theta), \text{ and } \overline{H}_{{\rm quantile},J}(\theta) = \mathbb{E}_{\theta_0} [H_{{\rm quantile},J}(\theta; x)].
    \end{equation}
    \begin{assumption}\label{assumption:quantile}
        For any $\theta_0 \in \Theta$ and $J \subseteq [n]$ with $|E|=h$, the expectation $H_{{\rm quantile},J}(\theta)$ exists for all $\theta \in B(\theta_0, e(\theta_0)) \cap \Theta$, and
    \[
    \begin{aligned}
        \mathbb{E}_{\theta_0}\left[\sup_{\theta \in B(\theta_0, e(\theta_0)) \cap \Theta} e(\theta_0)^2 \left(\lambda_{\max} (H_{{\rm quantile},E}(\theta_0) - H_{{\rm quantile},E}(\theta; X))\right)\right] &\leq \varepsilon(\theta_0)\\
        \log \mathbb{E}_{\theta_0}\left[\exp\left(\sup_{\theta \in B(\theta_0, e(\theta_0)) \cap \Theta} e(\theta_0)^2 \cdot \left(\lambda_{\max}(H_{{\rm quantile},E}(\theta; X) - H_{{\rm quantile},E}(\theta_0))\right)\right)\right] &\leq \varepsilon(\theta_0).
    \end{aligned}
    \]
Here $e(\theta_0)$ is the same constant as that appears in Assumption \ref{assumption:2}.
\end{assumption}

The theoretical guarantee of Type-I error control is provided by the following theorem:
\begin{theorem}\label{Theorem: quantile}
If we replace Assumption~\ref{assumption:3} by Assumption \ref{assumption:quantile}, then Theorems \ref{Theorem 1} and \ref{Theorem 2} continue to hold for the aCSS method with quantile estimators. 
\end{theorem}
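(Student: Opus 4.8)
The plan is to reduce the Type-I error inflation of Algorithm~\ref{alg:acss-quantile} to an expected total-variation distance between two conditional laws, and then to re-run the argument behind Theorem~\ref{Theorem 1} (and its sparse refinement, Theorem~\ref{Theorem 2}) with the full log-likelihood replaced by the partial log-likelihood over the retained block $\mathcal J=J_{\varepsilon_0}(X;\hat\theta)$. First, the copies produced by Algorithm~\ref{alg:acss-quantile} have the form $\widetilde X^{(m)}=[\widetilde X^{(m)}_{\rm sub},X_{-\mathcal J}]_{\mathcal J}$ with a common $X_{-\mathcal J}$ and a common index set $\mathcal J$ (the latter enforced by the support $\mathcal X^{\rm quantile}_{\hat\theta,\hat g,X_{-\mathcal J},\mathcal J}$ appearing in \eqref{sampling density quantile}), so approximate exchangeability of $(X,\widetilde X^{(1)},\ldots,\widetilde X^{(M)})$ follows from approximate exchangeability of the sub-blocks $(X_{\mathcal J},\widetilde X^{(1)}_{\rm sub},\ldots,\widetilde X^{(M)}_{\rm sub})$ conditional on $(\hat\theta,\hat g,X_{-\mathcal J})$; coupling $X_{\mathcal J}$ with an independent draw from $p_{\hat\theta}(\cdot\mid\hat\theta,\hat g,X_{-\mathcal J})$ renders the latter i.i.d.\ (hence exchangeable) at a cost of $d_{\rm TV}(p_{\theta_0}(\cdot\mid\hat\theta,\hat g,X_{-\mathcal J}),p_{\hat\theta}(\cdot\mid\hat\theta,\hat g,X_{-\mathcal J}))$. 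Invoking the $\mu\equiv1$ specialization of Lemma~\ref{lemma_weighted} (adapted to this block conditioning) and using Assumption~\ref{assumption:2} to absorb the event that $\hat\theta$ is not an SSOSP or fails $\|\hat\theta-\theta_0\|\le e(\theta_0)$ into the additive term $\delta(\theta_0)$, the remaining task is to bound $\mathbb E_{Q_{\theta_0}^*}\!\bigl[d_{\rm TV}(p_{\theta_0}(\cdot\mid\hat\theta,\hat g,X_{-\mathcal J}),p_{\hat\theta}(\cdot\mid\hat\theta,\hat g,X_{-\mathcal J}))\bigr]$.

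By Lemma~\ref{lemma conditional density quantile}, both conditional densities are a normalizing constant times $q_\vartheta(x)\,w(x)$, where $q_\vartheta(x)=\prod_{i\in\mathcal J}f_0(x_i-Z_{\mathcal J_i}^\T\vartheta)$ with $\vartheta=\theta_0$ for the true law and $\vartheta=\hat\theta$ for the sampling law, and the weight $w(x)$ --- the Gaussian factor, the determinant factor, and the indicator of $\mathcal X^{\rm quantile}_{\hat\theta,\hat g,X_{-\mathcal J},\mathcal J}$ --- depends only on $(\hat\theta,\hat g,X_{-\mathcal J})$ and is therefore common to both. Consequently the log-ratio of the two normalized densities is $\log q_{\theta_0}(x)-\log q_{\hat\theta}(x)$ plus a constant, which is exactly the quantity analyzed in the proof of Theorem~\ref{Theorem 1}, once the full-data score and Hessian there are replaced by their block-$\mathcal J$ analogues: the partial score $\nabla_\vartheta\log q_\vartheta(\cdot)$ and $H_{{\rm quantile},\mathcal J}(\vartheta;x)=-\sum_{i\in\mathcal J}\nabla^2_\vartheta\log f_0(x_i-Z_{\mathcal J_i}^\T\vartheta)$ from \eqref{likelihood quantile}.

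With this dictionary, the second-order Taylor expansion of $\vartheta\mapsto\log q_\vartheta(X_{\mathcal J})$ about $\hat\theta$ drives everything. Its first-order term $(\theta_0-\hat\theta)^\T\nabla_\vartheta\log q_{\hat\theta}(X_{\mathcal J})$ is bounded by $e(\theta_0)$ times the block-$\mathcal J$ score at $\hat\theta$, and the Gaussian factor in \eqref{true density quantile} (together with the definition \eqref{quantile gradient} of $\hat g$) forces this score to have, under the true conditional law, an $N(0,\sigma^2 I_d/d)$-type law up to a deterministic shift --- hence a contribution of order $\sigma e(\theta_0)$, and of order $\sigma e(\theta_0)\sqrt{h_\nu(k(\theta_0))/d}$ after projecting onto the span of the active $v_i$'s, exactly as in the proof of Theorem~\ref{Theorem 2}. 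Its second-order term $-\tfrac12(\theta_0-\hat\theta)^\T H_{{\rm quantile},\mathcal J}(\vartheta^*;X)(\theta_0-\hat\theta)$, with $\vartheta^*\in[\theta_0,\hat\theta]\subseteq B(\theta_0,e(\theta_0))$, is bounded by $e(\theta_0)^2\lambda_{\max}(H_{{\rm quantile},\mathcal J}(\vartheta^*;X))$, and Assumption~\ref{assumption:quantile} provides precisely the expectation bound (for centering and the lower tail) and the moment-generating-function bound (for the upper tail) that Assumption~\ref{assumption:3} provided on $H$ in Theorem~\ref{Theorem 1}. Pushing these estimates through the same normalizing-constant and total-variation bookkeeping as in Theorem~\ref{Theorem 1} (resp.\ Theorem~\ref{Theorem 2}) reproduces the stated conclusions, with $e(\theta_0),\varepsilon(\theta_0),\delta(\theta_0)$ (resp.\ also $k(\theta_0),\tilde\delta(\theta_0),h_\nu$) now read from the quantile-adapted assumptions.

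The step I expect to be the main obstacle is the first-order term: because $\hat\theta$ here minimizes a perturbation-free objective (unlike the regularized MLE), one must verify carefully that conditioning on $(\hat\theta,\hat g,X_{-\mathcal J})$ still makes the block-$\mathcal J$ score behave like a perturbation of size $\sigma$ under $p_{\theta_0}(\cdot\mid\hat\theta,\hat g,X_{-\mathcal J})$ --- i.e.\ that the Gaussian factor of \eqref{true density quantile} plays the role that $\exp(-\|\hat g-\nabla_\theta\mathcal L(\hat\theta;x)\|^2/(2\sigma^2/d))$ plays in the regularized-MLE conditional density of Section~\ref{section: approximate valid}. Once this is established, the restriction to $\mathcal X^{\rm quantile}_{\hat\theta,\hat g,X_{-\mathcal J},\mathcal J}$ is harmless (the bounds above hold pointwise on that set, which by Lemma~\ref{quantile lemma} is nonempty whenever the SSOSP event has positive probability), and the randomness of $\mathcal J$ is absorbed by the conditioning, so the rest follows the established proofs of Theorems~\ref{Theorem 1} and~\ref{Theorem 2}.
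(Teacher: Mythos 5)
Your proposal is correct and follows essentially the same route as the paper: reduce the Type-I error inflation to $\mathbb{E}_{Q_{\theta_0}^*}\bigl[d_{\rm TV}\bigl(p_{\theta_0}(\cdot\mid\hat\theta,\hat g,X_{-\mathcal J}),\,p_{\hat\theta}(\cdot\mid\hat\theta,\hat g,X_{-\mathcal J})\bigr)\bigr]$ via Lemma~\ref{lemma_weighted}, note that the Gaussian/determinant/indicator factors in Lemma~\ref{lemma conditional density quantile} are common to both conditional densities so only the likelihood ratio over the retained block matters, and then rerun the Taylor-expansion argument of Theorems~\ref{Theorem 1}--\ref{Theorem 2} with the block-$\mathcal J$ score and Hessian, with Assumption~\ref{assumption:quantile} supplying exactly the bounds that Assumption~\ref{assumption:3} supplied before. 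The paper's proof is precisely this subsample-adapted adaptation (with the remaining bookkeeping deferred to \citet{zhu2023approximate}), so no further comment is needed.
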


\subsection{aCSS for Nonparametric Generalized Additive Models}\label{sec:gam}

\citet[Section 4.1]{barber2022testing} demonstrates the application of aCSS methods to canonical generalized linear models. Here, we extend their framework to nonparametric generalized additive models \citep{hastie2017generalized}, building on the approach we introduced in Section \ref{gaussian model section} for Gaussian additive models.

Consider responses \(X = (X_1, \cdots, X_n) \in \mathbb{R}^n\) and covariates \(Z = (Z_1^\T, \cdots, Z_n^\T) \in \mathbb{R}^{n \times d}\). 
With respect to some base measure \(\nu_{\mathcal{X}}\) on \(\mathcal{X} = \mathbb{R}^n\), the density of \(X\) is assumed to be in the following exponential family
\begin{equation}\label{GAM model}
    f(x; \mu) = \exp\left\{x^\T \mu - \sum_{i=1}^n a(\mu_i)\right\},
\end{equation}
where the function \(a\) is the partition function (assumed strictly convex), the parameter \(\mu = (\mu_1, \cdots, \mu_n) \in \mathbb{R}^n\) is linked to $Z$. 
In particular, we assume $\mu_i$ depends on $Z_i$ through the relationship that 
\[
\mu_i = \sum_{j=1}^d h_j(z_{ij}), \quad i = 1, \cdots, n,
\]
where the component functions \(h_1, \cdots, h_d\) are unknown and need to be estimated. For simplicity, we treat \(Z\) as fixed and denote the true unknown value of \(\mu\) as \(\mu_0 = (\mu_1^0, \cdots, \mu_n^0)^\T \in \mathbb{R}^n\).

Following the sieve-type approach introduced in Section \ref{gaussian model section}, we approximate \(\mu_0\) by \(B\theta\), where \(\theta \in \mathbb{R}^D\) and \(B \in \mathbb{R}^{n \times D}\) is the corresponding design matrix. 
The parameter of this approximation is estimated using the following estimator with group-type regularization:
\begin{equation}\label{finite GAM}
    \hat{\theta}(X) = \argmin_{{\theta} \in \mathbb{R}^{D}} \mathcal{L}({\theta}; X, W) + \sum_{j=1}^d \rho_j(\theta_{G_j}),
\end{equation}
where the loss function is defined as
\begin{equation}\label{loss GAM}
    \mathcal{L}({\theta}; X, W) = -\log f\left(X; B\theta\right) + \sigma W^\T (B\theta) 
    = -X^\T (B\theta) - \mathbf{1}_{n}^\T a(B\theta) + \sigma W^\T (B\theta)
\end{equation}
and \( G = \{G_1, \ldots, G_J\} \) represent the group structure. \( B^j \in \mathbb{R}^{n \times p_j} \) and \( \theta_{G_j} \in \mathbb{R}^{p_j} \) denote the submatrix of \( B \) and the subvector of \( \theta \), respectively, associated with the \( j \)-th group \( G_j \).

This estimation problem can be interpreted as an MLE with a group-type penalty, enabling the application of SSOSP as defined in Section \ref{penalty section}.
Similarly, the active groups are defined as \(\mathcal{A}(\theta) = \{ j \in [J] : \|\theta_{G_j}\| \neq 0 \}\), and the corresponding active coordinates are \(\mathcal{S}(\theta) = \cup_{j \in \mathcal{A}(\theta)} G_j\). 
Moving from the normal distribution to other exponential families, we revise the definition of $\hat{g}$ in \eqref{gradient gaussian} as follows:
\begin{equation}\label{graident GAM}
    \hat{g} = \hat{g}_{\rm GAM}(X, W) = X - \sigma W.
\end{equation}
\begin{lemma}[Conditional density for the GAMs]\label{condition density GAM}
    Suppose Assumption \ref{assump:4} holds. Fix any $\mu_0\in \bbR^n$ and let $(X, W, \hat{\theta}, \hat{g})$ be drawn from the following joint model
    $$
    \left\{\begin{array}{l}
    (X,W) \sim f(\cdot;\mu_0)\times N(0,I_n/n), \\
    \hat{\theta}=\hat{\theta}(X, W),
    \hat{g}=\hat{g}_{\rm GAM}(X,W). 
    \end{array}\right.
    $$
    Let ${\cal A} \subseteq[d]$. Assume that the event that $\hat{\theta}(X, W)$ is an SSOSP of (\ref{grouppenalty}) with active group index set $\left\{j\in [d]:\left\|\theta_{G_j}\right\|\neq 0\right\}={\cal A}$ 
    has positive probability. 
    When this event happens, the conditional distribution of $X \mid \hat{\theta}, \hat{g}$ has density
    \begin{equation}\label{density7}
        \begin{aligned}
            p_{\mu_0}(\cdot \mid \hat{\theta},\hat{g})\propto f\left(x;\mu_0\right)\cdot \exp\left\{-\frac{\left\|\hat{g}-x\right\|^2}{2\sigma^2/n}\right\}\cdot{F}_{\cal A}(x;\hat{\theta})\cdot {1}_{x\in \mathcal{X}_{\hat{\theta},\hat{g}}}
        \end{aligned}, 
    \end{equation}

    where $$
        \begin{aligned}
           F_{\cal A}(x;\theta)&=\det\biggl(
           \left(B^\T\nabla^2 \mathcal{L}(B\theta ; x)B\right)_{S(\theta)}\biggr.\\
           &\quad \biggl.+\,{\rm diag}\left\{\rho_j^\prime\left(\left\|\theta_{G_j}\right\|\right)\left[\frac{{I}_{p_j}}{\left\|\theta_{G_j}\right\|}-\frac{\theta_{G_j}\left[\theta_{G_j}\right]^\T}{\left\|\theta_{G_j}\right\|^3}\right]+\rho''_j\left(\left\|\theta_{G_j}\right\|\right)\frac{\theta_{G_j}}{\left\|\theta_{G_j}\right\|}\left[\frac{\theta_{G_j}}{\left\|\theta_{G_j}\right\|}\right]^\T,j\in {\cal A}\right\}\biggr)
        \end{aligned}$$
    with $${\cal X}_{{\theta},{g}}=\left\{x\in{\cal X}: \textrm{ for some } w\in\bbR^n,\theta=\hat{\theta}(x,w)\textrm { is an SSOSP of (\ref{finite GAM}), and } g=\hat{g}_{\rm GAM}(x,w)\right\}.$$
\end{lemma}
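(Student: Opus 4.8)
The plan is to run the argument used for Lemma~\ref{lemma penalized}, since \eqref{finite GAM} is just a group-penalized MLE once we treat $\mu=B\theta$ as the working parameter: Definition~\ref{definition for penalty} applies as stated in the text, and the quantity $F_{\mathcal A}$ of the statement is exactly the $F_{\rm pen}$ of \eqref{F_pen} for this problem, because $\nabla^2_\theta[-\log f(x;B\theta)]=B^\T\nabla^2\mathcal L(B\theta;x)B=B^\T\mathrm{diag}(a''(B\theta))B$. The one genuine novelty is that the conditioning statistic is $\hat g_{\rm GAM}(X,W)=X-\sigma W\in\mathbb R^n$, which carries the \emph{full} $n$-dimensional randomization rather than a $D$-dimensional gradient; concretely, the part of the noise that actually perturbs $\theta$ is $B^\T W$, the projection of an isotropic $n$-vector onto the $D$-dimensional column space of $B$, and the conditioning has to be set up so that this dimension gap is absorbed.

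First I would invoke Bayes' theorem exactly as in the ``Implications of the model'' discussion of the excerpt: conditionally on $X=x$ the only randomness is $W\sim N(0,I_n/n)$, so $p_{\mu_0}(x\mid\hat\theta,\hat g)$ is proportional, in $x$, to $f(x;\mu_0)$ times the density — at the realized $(\hat\theta,\hat g)$ and restricted to the conditioning event — of the random element $(\hat\theta(x,W),\hat g_{\rm GAM}(x,W))$. To compute that density on the event that $\hat\theta(x,W)$ is an SSOSP with active group set $\mathcal A$ (so $\hat\theta_{\mathcal S^c}=0$ with $\mathcal S=\cup_{j\in\mathcal A}G_j$), I would change variables in two steps. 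Step one: $w\mapsto\hat g=x-\sigma w$ is an affine bijection of $\mathbb R^n$ with constant Jacobian $\sigma^{-n}$, yielding, up to an $x$-free constant, the factor $\exp\!\big(-\|\hat g-x\|^2/(2\sigma^2/n)\big)$. Step two: substituting $\sigma w=x-\hat g$ and $\nabla_\theta\mathcal L(\theta;x,w)=-B^\T x+B^\T a'(B\theta)+\sigma B^\T w$ into the first-order KKT condition of Definition~\ref{definition for penalty}(ii), the $B^\T x$ terms cancel and one obtains $B^\T\hat g=B^\T a'(B\hat\theta)+\sum_{j}I_{D,j}s_j(\hat\theta_{G_j})$; the $\mathcal S$-block of this pins $[B^\T\hat g]_{\mathcal S}$ down as a smooth function of $\hat\theta_{\mathcal S}$, while the remaining $n-|\mathcal S|$ directions of $\hat g$ (the component orthogonal to $\mathrm{col}(B)$ together with the inactive block $[B^\T\hat g]_{\mathcal S^c}$) are free. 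Re-coordinatizing the image of $w\mapsto(\hat\theta,\hat g)$ by $(\hat\theta_{\mathcal S},\hat g_{\mathrm{free}})$ and using the inverse function theorem, the resulting density w.r.t.\ Lebesgue picks up, beyond the Gaussian factor and an $x$-free constant built from $B$, precisely the absolute determinant of the map $\hat\theta_{\mathcal S}\mapsto[B^\T\hat g]_{\mathcal S}$, namely $\det\big(B^\T\mathrm{diag}(a''(B\hat\theta))B+\sum_{j\in\mathcal A}I_{D,j}s_j'(\hat\theta_{G_j})I_{D,j}^\T\big)_{\mathcal S}=F_{\mathcal A}(x;\hat\theta)$, which is positive by the second-order condition Definition~\ref{definition for penalty}(iii). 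Conditioning on the whole pair $(\hat\theta,\hat g)$ is the same as conditioning on $(\hat\theta_{\mathcal S},\hat g_{\mathrm{free}})$ (the rest being determined), so collecting the factors and absorbing all $x$-free constants into the proportionality gives \eqref{density7} with base measure $\nu_{\mathcal X}$; positivity and finiteness of the normalizing constant follow as in Appendix~\ref{verify}.

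The step I expect to be the main obstacle is the bookkeeping forced by the $n$-versus-$D$ dimension gap: one has to exhibit the splitting of $\hat g\in\mathbb R^n$ into the $\hat\theta_{\mathcal S}$-determined block and a genuinely free $(n-|\mathcal S|)$-dimensional block, verify that this is a bona fide smooth, locally invertible change of coordinates, and confirm that its Jacobian factorizes so that only the $|\mathcal S|\times|\mathcal S|$ block $F_{\mathcal A}(x;\hat\theta)$ carries $x$-dependence while the Gram-type determinants of $B$ are constant and drop out of the proportionality. The remaining ingredients — local uniqueness of the perturbed minimizer on the SSOSP region and the local-diffeomorphism property of $w\mapsto\hat\theta$ — are routine and can be imported from the proof of Lemma~\ref{lemma penalized}.
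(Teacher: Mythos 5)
Your proposal is correct and takes essentially the same route as the paper, which establishes this lemma by adapting the change-of-variables argument of Lemma~\ref{lemma penalized} to the parametrization $\mu=B\theta$ and the $n$-dimensional noise — precisely your two-step change of variables $w\mapsto\hat g=x-\sigma w\mapsto(\hat\theta_{\mathcal S},\hat g_{\rm free})$ with the Jacobian block identified as $F_{\mathcal A}$. One cosmetic remark: in the exponential family \eqref{GAM model} one has $\nabla^2\mathcal{L}(B\theta;x)=\mathrm{diag}\{a''((B\theta)_i)\}$, which does not involve $x$, so $F_{\mathcal A}(x;\hat\theta)$ is in fact constant in $x$ (and, since $\hat\theta$ depends on $(x,w)$ only through $\hat g=x-\sigma w$, the indicator is identically one on the conditioning event); hence the dimension-gap bookkeeping you flag as the main obstacle only produces $x$-free factors that are absorbed into the proportionality, and the stated density follows.
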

Based on the above lemma, we proceed to sample copies \(\widetilde{X}\) from
\begin{equation}\label{density8}
    p_{B\hat{\theta}}(\cdot \mid \hat{\theta},\hat{g})\propto f\left(x;B\hat{\theta}\right)\cdot \exp\left\{-\frac{\left\|\hat{g}-x\right\|^2}{2\sigma^2/n}\right\}\cdot {F}_{\cal A}(x;\hat{\theta})\cdot {1}_{x\in \mathcal{X}_{\hat{\theta},\hat{g}}}, 
\end{equation}
and implement the aCSS method to compute a p-value as before. 
A modified theoretical guarantee for Type-I error is presented as follows. 
\begin{theorem}\label{theorem:GAM} Given an estimator $\hat{\theta}(X,W): {\cal X}\times \bbR^n\to \bbR^n$. 
Let $P^*_0$ be the distribution of $(X,W)\sim f\left(\cdot,\mu_0\right)\times N(0,{I}_n/n)$ under the event that $\hat{\theta}(X, W )$ is an SSOSP of (\ref{finite GAM}).
Let $Q^*_0$ be the distribution of $(\hat{\theta}(X, W ), \hat{g}(X, W ))$ under $(X, W )\sim P^*_0$.  Suppose that the estimator $\hat{\theta}(X,W)$ satisfies 
    $$\left\{
        \begin{array}{l}
           \hat{\theta}(X,W) \text{ is an SSOSP of the penalized problem } (\ref{finite GAM})\\
            \left\|B\hat{\theta}(X,W)-\mu_0\right\|\leq r_n
        \end{array}
    \right.$$
    with probability at least $1-\delta_n$, where the probability is taken with respect to the distribution $(X,W)\sim f\left(\cdot,\mu_0\right)\times N(0,\frac{1}{n}{I}_n)$. Then we have
    \[
        \bbE_{Q_{0}^*}\left[d_{\rm TV}(p_{0}(\cdot\mid\hat{\theta},\hat{g}),p_{\hat{\theta}}(\cdot\mid\hat{\theta},\hat{g}))\right] \leq 3\sigma r_n(\mu_0) + \delta_n(\mu_0).
    \]
\end{theorem}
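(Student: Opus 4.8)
The plan is to mimic the structure of the proof of Theorem~\ref{Theorem 1} (and its high-dimensional refinement Theorem~\ref{Theorem 2}), adapting every step from the generic likelihood $f(X;\theta)$ to the exponential-family form $f(x;\mu)=\exp\{x^\T\mu - \sum_i a(\mu_i)\}$ with $\mu = B\theta$. The target quantity is $\mathbb{E}_{Q_0^*}[d_{\rm TV}(p_0(\cdot\mid\hat\theta,\hat g),p_{\hat\theta}(\cdot\mid\hat\theta,\hat g))]$; by Lemma~\ref{lem:size-inflation-misspecified} (applied in the well-specified direction, i.e.\ with $P_0 = f(\cdot;\mu_0)$ playing the role of the true distribution) this bound directly controls the Type-I error inflation, so it suffices to bound the expected TV distance. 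First I would write both conditional densities via Lemma~\ref{condition density GAM}: $p_{\mu_0}(\cdot\mid\hat\theta,\hat g)$ and $p_{B\hat\theta}(\cdot\mid\hat\theta,\hat g)$ share the common factor $\exp\{-\|\hat g - x\|^2/(2\sigma^2/n)\}\cdot F_{\cal A}(x;\hat\theta)\cdot \mathbf 1_{x\in\mathcal X_{\hat\theta,\hat g}}$, so the TV distance between them equals the TV distance between the two tilted measures with densities proportional to $f(x;\mu_0)$ and $f(x;B\hat\theta)$ against the common reference measure $\exp\{-\|\hat g-x\|^2/(2\sigma^2/n)\}F_{\cal A}(x;\hat\theta)\mathbf 1_{x\in\mathcal X_{\hat\theta,\hat g}}\,\nu_{\mathcal X}(\mathrm dx)$.

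The key computation is then to bound this TV distance in terms of $\|B\hat\theta - \mu_0\|$. Because $f(x;\mu) = \exp\{x^\T\mu - \mathbf 1_n^\T a(\mu)\}$, the log-density ratio is the \emph{linear} function $\log\frac{f(x;\mu_0)}{f(x;B\hat\theta)} = x^\T(\mu_0 - B\hat\theta) - \mathbf 1_n^\T\{a(\mu_0) - a(B\hat\theta)\}$. The standard route (as in Barber--Janson and in the proof of Theorem~\ref{Theorem 1}) is: (i) use Pinsker's inequality or a direct Hellinger/$\chi^2$ bound to reduce $d_{\rm TV}$ to a KL-type quantity, then (ii) control that quantity using the sub-Gaussian / sub-exponential behaviour of $x^\T(\mu_0-B\hat\theta)$ under the reference measure. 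The crucial structural fact is that under the common reference measure the coordinate $x_i$ is, up to the tilt by $F_{\cal A}$ and the support indicator, a Gaussian-like variable with variance of order $\sigma^2/n$ coming from the $\exp\{-\|\hat g - x\|^2/(2\sigma^2/n)\}$ factor, so a linear functional $x^\T v$ concentrates at scale $(\sigma/\sqrt n)\|v\|$; plugging $v = \mu_0 - B\hat\theta$ and using the event $\|B\hat\theta-\mu_0\|\le r_n$ yields a contribution of order $\sigma r_n/\sqrt n \cdot \sqrt n = \sigma r_n$ after the appropriate normalization — this is where the $3\sigma r_n$ term is produced. The factor $3$ and the precise constant-tracking follow exactly as in the proof of Theorem~\ref{Theorem 1}; I would reuse that bookkeeping verbatim, since the exponential-family reparametrization $\mu=B\theta$ does not change the Gaussian-perturbation geometry.

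Finally I would handle the SSOSP failure event: on the complement of the event in the theorem statement (probability at most $\delta_n$), either $\hat\theta$ is not an SSOSP — in which case the algorithm sets all copies equal to $X$ and the p-value is $1$, contributing nothing to the rejection probability — or $\|B\hat\theta-\mu_0\|>r_n$; in both situations we simply absorb a $\delta_n$ additive term, exactly as $\delta(\theta_0)$ is absorbed in Theorem~\ref{Theorem 1}. Assembling the in-event TV bound $3\sigma r_n(\mu_0)$ with the $\delta_n(\mu_0)$ from the bad event, and noting that the $\varepsilon(\theta_0)$-type Hessian-regularity term of Theorem~\ref{Theorem 1} is \emph{not} needed here because the log-density ratio is exactly linear in $x$ (so there is no Hessian-curvature remainder to control — the analogue of setting $\varepsilon(\theta_0)=0$ for canonical GLMs noted after Assumption~\ref{assumption:3}), gives the claimed bound. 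The main obstacle I anticipate is step (ii): rigorously justifying the sub-Gaussian concentration of the linear functional $x^\T(\mu_0-B\hat\theta)$ under the $F_{\cal A}$-tilted, support-restricted reference measure, since $F_{\cal A}(x;\hat\theta)$ is an unbounded polynomial-type weight (a determinant depending on $x$ through $\nabla^2\mathcal L(B\theta;x) = \mathrm{diag}(a''(\cdot))$ evaluated at $B\theta$, hence actually independent of $x$ here — a simplification worth exploiting) and the indicator $\mathbf 1_{x\in\mathcal X_{\hat\theta,\hat g}}$ carves out an irregular region; I would address this by noting that for the canonical exponential family the Hessian $\nabla^2\mathcal L(B\theta;x)$ does not depend on $x$, so $F_{\cal A}(x;\hat\theta)$ is in fact constant in $x$ and drops out of the TV computation entirely, reducing the reference measure to a genuine truncated Gaussian and making the concentration step routine.
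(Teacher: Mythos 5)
Your overall architecture matches the paper's: the two conditional densities from Lemma \ref{condition density GAM} share the factor $\exp\{-\|\hat g-x\|^2/(2\sigma^2/n)\}\,F_{\cal A}(x;\hat\theta)\,{1}_{x\in\mathcal X_{\hat\theta,\hat g}}$, so the TV distance is governed by the variation of $f(x;B\hat\theta)/f(x;\mu_0)$, whose logarithm is linear in $x$; the $\varepsilon$-type Hessian term disappears for exactly the reason you give; and the bad event contributes the additive $\delta_n$. Your observation that $F_{\cal A}(x;\hat\theta)$ does not depend on $x$ is also correct (the Hessian of $-\log f(x;\mu)$ in $\mu$ is ${\rm diag}(a''(\mu))$), though it is not actually needed, since $F_{\cal A}$ sits in the common factor and cancels from the TV computation regardless.

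The gap is in your step (ii), the quantitative bound. You propose to Gaussianize the common reference measure, claiming each coordinate is Gaussian-like with variance $\sigma^2/n$ so that $x^\T(\mu_0-B\hat\theta)$ concentrates at scale $(\sigma/\sqrt n)\|\mu_0-B\hat\theta\|$, and then multiply by an unexplained $\sqrt n$ ``after the appropriate normalization'' to reach $\sigma r_n$. As written this is not a derivation: if the conditional law really were a truncated Gaussian with per-coordinate variance $\sigma^2/n$, you would obtain the stronger bound of order $\sigma r_n/\sqrt n$, which is exactly the Gaussian-model rate of Theorem \ref{theorem: linear additive}, and the paper explicitly notes that this faster rate does not carry over to non-Gaussian GAMs. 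Moreover, the Gaussianization itself fails in the setting of this theorem: the base measure $\nu_{\mathcal X}$ in \eqref{GAM model} need not be Lebesgue (Poisson or binomial GAMs have discrete $x$), and $\mathcal X_{\hat\theta,\hat g}$ is an irregular set, so the tilted, truncated reference measure is not a truncated Gaussian and no sub-Gaussian concentration at scale $\sigma/\sqrt n$ is available. The paper's actual proof avoids conditional-law concentration entirely: writing $w=(x-\hat g)/\sigma$, the likelihood ratio times a factor that is fixed given $(\hat\theta,\hat g)$ equals $\exp\{\sigma(B\hat\theta-\mu_0)^\T w\}$, which is bounded pointwise by $\exp\{\sigma\|B\hat\theta-\mu_0\|\,\|w\|\}\le\exp\{\sigma r_n\|w\|\}$ on the good event, and the remaining bookkeeping is imported from the regularized-aCSS argument, where the relevant expectation is taken over the a priori law $W\sim N(0,I_n/n)$ (so $\|W\|$ concentrates near $1$), yielding $3\sigma r_n+\delta_n$. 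Your fallback remark that you would reuse the Theorem \ref{Theorem 1} bookkeeping verbatim is in fact the correct (and the paper's) route; if you take it, the Pinsker/concentration detour should be dropped, since the crude Cauchy--Schwarz bound on $(B\hat\theta-\mu_0)^\T w$ is what produces the stated $3\sigma r_n$ rate.
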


With Lemma \ref{lemma_weighted}, we can establish the approximate validity of the aCSS method for generalized additive models.

In the special case where \(\mu_0 = Z\theta_0\) holds for some \(\theta_0 \in \mathbb{R}^d\), 
the model simplifies to a generalized linear model (GLM), and our method aligns with the original aCSS method for GLM in \cite{barber2022testing}. The only difference is that the noise in the original aCSS is \(N(0,I_d/d)\) while the noise in our method is \(N(0, Z^\T Z/n)\). 

\section{Technical Tools for Power Analysis}\label{app: power analysis}
In this section, we introduce several technical tools that will be instrumental for the power analysis in Section~\ref{section 6}, particularly focusing on a coupling technique and the construction of test statistics that satisfy Assumption~\ref{Test statistics}.

\subsection{Coupling tool}
In this subsection, we focus on coupling an approximately exchangeable variable \( \widetilde{X} \) with an exactly exchangeable one \( \hat{X} \).

From now on, we denote the asymptotic behavior as \(n \to \infty\) and refer to the null and alternative hypotheses as \(H_0\) and \(H_{a,n}\), respectively. Let \(\widetilde{X}\) denote a random variable that is approximately exchangeable with \(X\) under \(H_0\), and \(\hat{X}\) denote perfectly exchangeable copies. Following the approach in \cite{fan2023ark}, we couple \(\widetilde{T} = T(\widetilde{X})\) with \(\widehat{T} = T(\hat{X})\) for a given test statistic \(T(\cdot)\) to compare the statistical performance of the corresponding tests. Analogous to the analysis in \citet[Section 2.3]{fan2023ark}, we introduce the following conditions. 

\begin{condition}[Coupling accuracy]\label{coupling accuracy}
    There exists a random variable \(\widehat{T} = T(\hat{X})\), exchangeable with \(T_0 = T(X)\) under \(H_0\), such that for some nonegative sequence \(\{b_n\}\),
    \[
    \text{pr}(|\widehat{T} - \widetilde{T}| \geq b_n) \to 0.
    \]
\end{condition}

\begin{condition}[Distribution of $\widehat{T}$]\label{dsitribution of hat W} Let $\hat{G}(\cdot)$ be the c.d.f of $\widehat{T}$, i.e, $\hat{G}_n(t)=\text{pr}(\widehat{T}\leq t)$ for any $t\in \mathbb{R}$. 
$\hat{G}_n(t)$ is a continuous function and it holds that $$\sup_{t\in(-\infty,\infty)}\hat{G}_n(t+b_n)-\hat{G}_n(t)\to 0.$$
\end{condition}
\begin{condition}[Distribution of $T$]\label{distribution of W} Let $G(t)$ be the c.d.f of $T$ under some alternative hypothesis $H_{a,n}$, i.e, $G_n(t)=\text{pr}(T<t\mid H_{a,n})$  for any $t\in \mathbb{R}$.  
$G_n(t)$ is a continuous function and it holds that $$\sup_{t\in(-\infty,\infty)} G_n(t+b_n)-G_n(t)\to 0.$$  
\end{condition}

Condition~\ref{coupling accuracy} ensures that the copy \(\widetilde{T}=T(\widetilde{X})\) generated by the aCSS method are coupled with the random variable \(\widehat{T}=T(\hat{X})\), meaning that they are sufficiently close in probability. Conditions~\ref{dsitribution of hat W} and \ref{distribution of W} concern the distributions of \(\widehat{T}\) and \(T\).

Let \(Q_{\alpha}(Z)\) denote the \(\alpha\)-th quantile of a random variable \(Z\). 
For a test with significance level \(\alpha\), consider the following one-sided test based on \(\widetilde{T}\): 
\[
\phi_T(X; \widetilde{T}) = \left\{
    \begin{array}{cc}
        1 & T(X) > Q_{1-\alpha}(\widetilde{T}), \\
        \gamma & T(X) = Q_{1-\alpha}(\widetilde{T}), \\
        0 & T(X) < Q_{1-\alpha}(\widetilde{T}),
    \end{array}
\right.
\]
where we have assumed that \(\widetilde{T}\) is a continuous r.v. so that its exact \(\alpha\)-th quantile exists and is unique.
A two-sided test can be constructed similarly. 
The following theorem establishes the asymptotic validity of the test $\phi_T(X;\widetilde{T})$ and approximates its power by that of $\phi_T(X;\widehat{T})$. 

\begin{theorem}\label{coupling}
    Under Conditions~\ref{coupling accuracy} and \ref{dsitribution of hat W}, the test $\phi_T(X;\widetilde{T})$ is asymptotically valid. Under Conditions \ref{coupling accuracy} and \ref{distribution of W}, the asymptotic power of $\phi_T(X;\widetilde{T})$ matches that of $\phi_T(X;\widehat{T})$ for the alternative hypothesis $H_{a,n}$. This holds for both one-sided and two-sided cases.
\end{theorem}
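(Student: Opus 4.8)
The plan is to reduce the theorem to a deterministic comparison of two quantile thresholds: the threshold $c_n := Q_{1-\alpha}(\widetilde{T})$ that the test actually uses, and the idealized threshold $\widehat{c}_n := Q_{1-\alpha}(\widehat{T})$. Write $\widetilde{G}_n$ for the c.d.f.\ of $\widetilde{T}$, set $\eta_n := \mathrm{pr}(|\widehat{T}-\widetilde{T}|\ge b_n)$, which tends to $0$ by Condition~\ref{coupling accuracy}, and $\omega_n := \sup_t\{\hat{G}_n(t+b_n)-\hat{G}_n(t)\}$, which tends to $0$ by Condition~\ref{dsitribution of hat W}. The first thing I would record is that, since $\widetilde{T}$, $\widehat{T}$, and $T$ all have continuous distributions, the randomized branch (probability $\gamma$) of $\phi_T$ is triggered on a null event under both $H_0$ and $H_{a,n}$, so the rejection probability of $\phi_T(X;\widetilde{T})$ equals $\mathrm{pr}(T(X)>c_n)$ and that of $\phi_T(X;\widehat{T})$ equals $\mathrm{pr}(T(X)>\widehat{c}_n)$; the $\gamma$ term may be ignored throughout. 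The second basic ingredient is a coupling sandwich: for every $t$ one has $\hat{G}_n(t-b_n)-\eta_n\le \widetilde{G}_n(t)\le \hat{G}_n(t+b_n)+\eta_n$, obtained by splitting on the event $\{|\widehat{T}-\widetilde{T}|\le b_n\}$.

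For validity, the crucial point is that under $H_0$ exchangeability of $(T(X),\widehat{T})$ forces $T(X)\overset{d}{=}\widehat{T}$ marginally, so $\mathrm{pr}_{H_0}(T(X)>c_n)=1-\hat{G}_n(c_n)$, and it suffices to show $\hat{G}_n(c_n)\ge 1-\alpha-o(1)$. Evaluating the sandwich at $t=c_n$, where $\widetilde{G}_n(c_n)=1-\alpha$ by continuity, gives $\hat{G}_n(c_n+b_n)\ge 1-\alpha-\eta_n$ and $\hat{G}_n(c_n-b_n)\le 1-\alpha+\eta_n$. Absorbing the $\pm b_n$ shifts using $\omega_n$ (Condition~\ref{dsitribution of hat W}) yields $|\hat{G}_n(c_n)-(1-\alpha)|\le \eta_n+\omega_n$, hence $\mathrm{pr}_{H_0}(\text{reject})\le \alpha+\eta_n+\omega_n$ and $\limsup_n \mathrm{pr}_{H_0}(\text{reject})\le\alpha$, which is the asserted asymptotic validity.

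For the power statement, the goal is to show that under $H_{a,n}$ the two rejection probabilities, $1-G_n(c_n)$ and $1-G_n(\widehat{c}_n)$, differ by $o(1)$, where $G_n$ is the alternative c.d.f.\ of $T$. I would first show that $c_n$ and $\widehat{c}_n$ are within $b_n+o(1)$ of one another: in the clean case $\eta_n=0$ with $\hat{G}_n$ strictly increasing this is immediate, since $\hat{G}_n(c_n-b_n)\le 1-\alpha=\hat{G}_n(\widehat{c}_n)\le\hat{G}_n(c_n+b_n)$ forces $\widehat{c}_n-b_n\le c_n\le\widehat{c}_n+b_n$; in general the same monotone sandwich, combined with the near-continuity of $\hat{G}_n$, gives the same conclusion up to $\eta_n+\omega_n$ corrections. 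Then Condition~\ref{distribution of W}, which bounds $\sup_t\{G_n(t+b_n)-G_n(t)\}$, yields $|G_n(c_n)-G_n(\widehat{c}_n)|\to 0$, i.e.\ the power of $\phi_T(X;\widetilde{T})$ matches that of $\phi_T(X;\widehat{T})$ against $H_{a,n}$.

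The two-sided case follows verbatim, running the same argument on each tail with $Q_{1-\alpha}$ replaced by $Q_{1-\alpha/2}$ and $Q_{\alpha/2}$ and using the uniform near-continuity of $\hat{G}_n$ and $G_n$ at both endpoints. I expect the main obstacle to be the quantile-transfer step in the power part: turning the $O(b_n)$ pointwise coupling of $\widetilde{T}$ and $\widehat{T}$ into closeness of their $(1-\alpha)$-quantiles, and then into vanishing $G_n$-mass between $c_n$ and $\widehat{c}_n$, must be handled carefully when $\hat{G}_n$ or $G_n$ is nearly flat near the threshold, which is precisely where Conditions~\ref{dsitribution of hat W} and \ref{distribution of W} enter; the bookkeeping here mirrors \citet[Section~2.3]{fan2023ark}. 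A secondary point worth stating explicitly is that, for the power comparison, the coupling of Condition~\ref{coupling accuracy} is invoked under $H_{a,n}$ rather than under $H_0$.
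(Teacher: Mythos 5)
Your argument is correct and follows essentially the same route as the paper: the validity part is the paper's coupling/union-bound chain (exchangeability swap of $T(X)$ and $\widehat{T}$, then Condition~\ref{coupling accuracy} to absorb the coupling failure and Condition~\ref{dsitribution of hat W} to absorb the $b_n$ shift), merely repackaged as a two-sided c.d.f.\ sandwich. Your power step—closeness of the two $(1-\alpha)$-quantiles followed by Condition~\ref{distribution of W} to transfer the $b_n$ threshold shift to the alternative c.d.f.—is precisely the ``swap the roles of $\widehat{T}$ and $\widetilde{T}$'' argument the paper invokes in one sentence, spelled out in more detail (including the $\eta_n$ bookkeeping) than the paper itself provides.
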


Theorem~\ref{coupling} is used in our proof of Theorem~\ref{connection}, which shows the asymptotic equivalence between aCSS CRT and conditional CRT. 
Besides, Theorem~\ref{coupling} may be of independent interest. 

The proof of Theorem~\ref{pf:coupling} is given in Section~\ref{pf:coupling}. 

\subsection{Discussion of Test Statistics in Assumption \ref{Test statistics}}\label{sec:test-stat}

In this section, we discuss the choice of test statistics that satisfy Assumption \ref{Test statistics}. 

Before introducing our construction of $\widetilde{Y}$ in Assumption \ref{Test statistics}, we first discuss a seemingly ideal but infeasible choice and two potential choices that appear to be natural. 

\textbf{Infeasible ideal choice.}
Under model~\eqref{Setting}, we have \(Y \sim N(Z(\beta\theta_0 + \xi), (\nu^2 + \beta^2){I}_n)\). 
If the parameter $\xi$ is known, the residual \(\varepsilon^\prime = Y - Z\xi - Z\beta\theta_0\) is an ideal candidate for \(\widetilde{Y}\). 
This candidate satisfies the first inequality in \eqref{power condition on statistic} of Assumption \ref{Test statistics} by using the independence between \(\varepsilon\)  and \(Z\), and it allows for the maximal value for the constant \(C_2\) in \eqref{power condition on statistic}. 
However, this candidate is usually infeasible because the true value of \(\xi\) is unknown. 

\textbf{Sparse estimation.}
\citet{zhu2018linear} assumes that \(\xi\) is also sparse and utilizes the residual \(Y - Z\hat{\xi}\). 
The first inequality in \eqref{power condition on statistic} of Assumption \ref{Test statistics} is met if \(\hat{\xi}\) is estimated via the Dantzig selector \citep{10.1214/009053606000001523}, owing to its optimization constraints. 
The validity of the second inequality in \eqref{power condition on statistic} heavily depends on the sparsity of \(\xi\). For full details, see \citet[Section 3]{zhu2018linear}.

\textbf{Direct $Y$.}
Another natural choice for \(\widetilde{Y}\) is \(Y\) itself, which clearly satisfies the second inequality in \eqref{power condition on statistic} of Assumption \ref{Test statistics}. However, it generally does not meet the first inequality in \eqref{power condition on statistic} for arbitrary \(\xi\). 
Consider the decomposition that
\[
Z^\T Y = Z^\T(Y - Z(\beta\theta_0 + \xi)) + Z^\T Z(\beta\theta_0 + \xi).
\]
The first term is known to be of order \(O(\sqrt{n \log d})\) due to the independence between \(Z\) and the residual $\varepsilon^\prime$. However, the second term cannot be bounded by \(O(\sqrt{n \log d})\). 
Specifically, if we set \(Z(\beta\theta_0 + \xi)\) as the first column of \(Z\), denoted as \(Z_1\), then \(\|Z^\T Z(\beta\theta_0 + \xi)\|_\infty \geq Z_1^\T Z_1\) is of order \(n\). Consequently, \(Y\) does not satisfy the first inequality in \eqref{power condition on statistic}  in this case.

We have thus discussed two potential choices for \(\widetilde{Y}\) that meet Assumption \ref{Test statistics} under some additional requirements. 
To relax the requirements, we consider the following optimization problem for $\widetilde{Y}$: 
\begin{equation}\label{program}
    \begin{aligned}
        \max\quad & Y^\T \widetilde{Y} \quad \text{\rm subject to}\quad & \|Z^\T \widetilde{Y}\|_\infty \leq \lambda_n,\ \|\widetilde{Y}\| \leq 1.
    \end{aligned}
\end{equation}
\eqref{program} is a convex program and therefore computationally efficient. 
Furthermore, it has a unique maximizer, if \( Y \) does not lie in the row space of \( Z \), which holds almost surely. 

The following lemma demonstrates that the maximizer \(\widetilde{Y}\) of \eqref{program} satisfies Assumption \ref{Test statistics}. 

\begin{lemma}\label{program lemma}
    Suppose \(Y^\T(Y - Z(\xi + \beta\theta_0)) > 0\) and the program \eqref{program} has a unique maximizer. If \(\lambda_n\) is chosen large enough such that 
    \begin{equation}\label{eq:program-requirement-lambda}
    \frac{\|Y\|}{Y^\T(Y - Z(\xi + \beta\theta_0))}\|Z^\T(Y - Z(\xi + \beta\theta_0))\|_\infty < \lambda_n,
    \end{equation}
    then we have
    \[
    \widetilde{Y}^\T(Y - Z(\xi + \beta\theta_0)) \geq \frac{\left[Y^\T(Y - Z(\xi + \beta\theta_0))\right]^2}{\|Y\|^2 \|Y - Z(\xi + \beta\theta_0)\|},\quad \text{and}\quad \|\widetilde{Y}\| \geq \frac{Y^\T(Y - Z(\xi + \beta\theta_0))}{\|Y\| \|Y - Z(\xi + \beta\theta_0)\|}.\]
\end{lemma}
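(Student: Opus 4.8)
The plan is to produce an explicit feasible point for the program \eqref{program} and then combine optimality with the geometry of the Euclidean ball. Write \(r := Y - Z(\xi + \beta\theta_0)\) and \(\rho := Y^\T r/(\|Y\|\,\|r\|)\); since \(Y^\T r > 0\) we have \(\rho \in (0,1]\). The first step is to verify that \(v_0 := r/\|r\|\) is feasible: it has unit Euclidean norm, and because Cauchy--Schwarz gives \(Y^\T r \le \|Y\|\,\|r\|\), i.e. \(\|Y\|/(Y^\T r) \ge 1/\|r\|\), the hypothesis on \(\lambda_n\) yields \(\|Z^\T v_0\|_\infty = \|Z^\T r\|_\infty/\|r\| \le \tfrac{\|Y\|}{Y^\T r}\|Z^\T r\|_\infty < \lambda_n\), in fact with a margin of at least \((1-\rho)\lambda_n\). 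Since \(\widetilde Y\) maximizes the linear functional \(v \mapsto Y^\T v\) over the feasible set, we obtain \(Y^\T\widetilde Y \ge Y^\T v_0 = Y^\T r/\|r\| = \rho\|Y\| > 0\).

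The second inequality of the lemma is then immediate: feasibility of \(\widetilde Y\) gives \(\|\widetilde Y\| \le 1\), and Cauchy--Schwarz together with the objective bound gives \(\|\widetilde Y\| \ge Y^\T\widetilde Y/\|Y\| \ge Y^\T r/(\|Y\|\,\|r\|)\). For the first inequality I would decompose \(\widetilde Y = \alpha\,\widehat Y + \widetilde Y_\perp\), where \(\widehat Y := Y/\|Y\|\) and \(\widetilde Y_\perp \perp Y\); the objective bound forces \(\alpha \ge \rho\), and writing \(r = \rho\|r\|\,\widehat Y + r_\perp\) with \(r_\perp \perp Y\) and \(\|r_\perp\| = \|r\|\sqrt{1-\rho^2}\) one gets \(\widetilde Y^\T r = \rho\|r\|\,\alpha + \widetilde Y_\perp^\T r_\perp \ge \rho^2\|r\| + \widetilde Y_\perp^\T r_\perp\). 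Thus the target inequality \(\widetilde Y^\T r \ge \rho^2\|r\| = (Y^\T r)^2/(\|Y\|^2\|r\|)\) reduces to showing \(\widetilde Y_\perp^\T r_\perp \ge 0\). If the \(\ell_\infty\) constraint is inactive at \(\widetilde Y\), a small perturbation of \(\widetilde Y\) toward \(\widehat Y\) would remain feasible and strictly increase the objective, forcing \(\widetilde Y\) to be a positive multiple of \(Y\), hence \(\widetilde Y = \widehat Y\) and \(\widetilde Y_\perp = 0\); so one may assume \(\|Z^\T\widetilde Y\|_\infty = \lambda_n\). In that regime I would invoke uniqueness of the maximizer: the vector obtained from \(\widetilde Y\) by reflecting \(\widetilde Y_\perp\) about the hyperplane of \(\mathrm{span}(Y)^\perp\) orthogonal to \(r_\perp\) has the same Euclidean norm and the same value of \(Y^\T(\cdot)\), so if it were feasible it would be a second maximizer, contradicting uniqueness; it is this reflected competitor — whose feasibility must be examined using the active-coordinate structure of \(Z^\T\widetilde Y\) and the feasibility slack of \(v_0\) secured by the hypothesis on \(\lambda_n\) — that pins down the sign \(\widetilde Y_\perp^\T r_\perp \ge 0\), completing the argument.

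The crux, and the main obstacle, is precisely this sign control \(\widetilde Y_\perp^\T r_\perp \ge 0\), i.e. that the maximizer does not allocate Euclidean budget to the direction orthogonal to \(r\) in a way that depresses \(\widetilde Y^\T r\). Using only \(\|\widetilde Y\| \le 1\) and \(Y^\T\widetilde Y \ge \rho\|Y\|\) one can push the elementary argument no further than \(\widetilde Y^\T r \ge (2\rho^2-1)\|r\|\), which is strictly weaker than what is claimed; obtaining the sharper bound genuinely requires the uniqueness of the maximizer of \eqref{program} together with the explicit lower bound imposed on \(\lambda_n\), and the presence of the \(\ell_\infty\) constraint is what makes handling the reflected (or rotated) competitor the delicate step.
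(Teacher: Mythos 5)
Your setup matches the paper's: you verify that the normalized residual direction $v_0=r/\|r\|$ is feasible (using Cauchy--Schwarz to convert the hypothesis on $\lambda_n$ into $\|Z^\T v_0\|_\infty<\lambda_n$), deduce $Y^\T\widetilde Y\ge \rho\|Y\|$, obtain the norm bound, and correctly reduce the first inequality to the sign condition $\widetilde Y_\perp^\T r_\perp\ge 0$. You also correctly diagnose that the elementary bound stalls at $(2\rho^2-1)\|r\|$ and that uniqueness plus the $\lambda_n$ condition must enter. But the proof of the sign condition itself — which you yourself identify as the crux — is not actually given. The reflection competitor you propose (negating the $r_\perp$-component of $\widetilde Y_\perp$ within $\mathrm{span}(Y)^\perp$) preserves the norm and the objective, but there is no reason it satisfies $\|Z^\T\cdot\|_\infty\le\lambda_n$: the map $v\mapsto Z^\T v$ has no symmetry compatible with that reflection, and the slack you secured at $v_0$ says nothing about the constraint values at the reflected point. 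Since you explicitly leave this feasibility check open ("whose feasibility must be examined"), the contradiction with uniqueness is never reached, and the first inequality of the lemma remains unproven. Note also that even if the reflection were feasible, uniqueness would give $\widetilde Y_\perp^\T r_\perp=0$ rather than merely $\ge 0$, which suggests the competitor is the wrong one.

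The paper closes this step with a different competitor whose feasibility \emph{is} controllable. Writing $e_1=k_1Y+k_2e_2$ and $\widetilde Y=\lambda_1Y+\lambda_2e_2+\lambda_3e_3$, suppose $\lambda_2k_2<0$ and consider $\widetilde Y'=t\widetilde Y+(1-t)\tfrac{\lambda_1}{k_1}e_1$ for $t$ close to $1$. The $Y$-coefficient is unchanged, so the objective is unchanged; the $e_2$-coefficients of the two endpoints have opposite signs, so for $t$ near $1$ the $e_2$-coefficient shrinks in absolute value and $\|\widetilde Y'\|<\|\widetilde Y\|\le 1$; and the $\ell_\infty$ constraint holds at $\widetilde Y'$ by convexity of $\|Z^\T\cdot\|_\infty$, because the hypothesis on $\lambda_n$ (together with $\lambda_1=\widetilde Y^\T Y/\|Y\|^2$ and $\widetilde Y^\T Y\le\|Y\|$) shows the scaled residual direction $\tfrac{\lambda_1}{k_1}e_1$ itself satisfies $\|Z^\T\tfrac{\lambda_1}{k_1}e_1\|_\infty\le\lambda_n$. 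This produces a second maximizer distinct from $\widetilde Y$, contradicting uniqueness and forcing $\lambda_2k_2\ge 0$, i.e.\ $\widetilde Y_\perp^\T r_\perp\ge 0$. Replacing your reflection step with this convex-combination argument (or supplying an equivalent feasibility proof for your competitor, which I do not believe exists in general) is what is needed to complete your proof.
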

The proof of Lemma \ref{program lemma} is given in Section~\ref{pf:program lemma}.

We explain below how Lemma~\ref{program lemma} connects program~\eqref{program} and Assumption~\ref{Test statistics}. 
Note that stochastically, both \(\|Y\|\) and \(\|Y - Z(\xi + \beta\theta_0)\|\) are of order \(\sqrt{n}\), \(\|Z^\T(Y - Z(\xi + \beta\theta_0))\|_\infty\) is of order \(\sqrt{n \log d}\), and \(Y^\T(Y - Z(\xi + \beta\theta_0))\) is of order \(n\). Thus, Lemma~\ref{program lemma} demonstrates that the output of program~\eqref{program} satisfies the conditions for the test statistics in Assumption~\ref{Test statistics}.

To see why the requirement in Lemma~\ref{program lemma} can be met, we express \(Y^\T(Y - Z(\xi + \beta\theta_0))\) as
\begin{equation}\label{eq:program-mainterm-decomposition}
    Y^\T(Y - Z(\xi + \beta\theta_0))=\left\|\varepsilon^\prime\right\|_2^2+(Y - Z(\xi + \beta\theta_0))^T\varepsilon^\prime,
\end{equation}
where the first term $\left\|\varepsilon^\prime\right\|_2^2$ is of order \( n \) and the second term $(Y - Z(\xi + \beta\theta_0))^T\varepsilon^\prime$ is of order \( \sqrt{n} \).  
Therefore, \(Y^\T(Y - Z(\xi + \beta\theta_0)) > 0\) holds with probability approaching 1. 
Regarding the choice of \( \lambda_n \) required by \eqref{eq:program-requirement-lambda} in Lemma~\ref{program lemma}, we consider the numerator and the denominator separately. 
Since \( \varepsilon' = Y - Z(\xi + \beta \theta_0) \sim N(0, \nu_Y^2 I) \) is a Gaussian vector independent of \( Z \), the distribution of \( Z^\top Y - Z^\top Z(\xi + \beta \theta_0) \mid Z \) is known and its high-probability quantile of $\|Z^\top (Y -  Z(\xi + \beta \theta_0))\|_\infty$ (given $Z$) can be characterized. 
As shown in \eqref{eq:program-mainterm-decomposition}, the term \( Y^\top(Y - Z(\xi + \beta \theta_0)) \) in the denominator can be decomposed into two components, among which \( \|\varepsilon'\|_2^2 \) dominates asymptotically. This decomposition enables us to select \( \lambda_n \) appropriately to ensure that \eqref{eq:program-requirement-lambda} is satisfied for the theoretical guarantee.

\section{Proof of main results}\label{app: proof}

\subsection{Proof of Theorems \ref{Theorem 1}, \ref{theorem: MTLE}, \ref{Theorem: quantile} and \ref{theorem:GAM}}
\begin{proof}
    Fundamentally, the proofs of Theorems~\ref{Theorem 1}, \ref{theorem: MTLE}, \ref{Theorem: quantile}, and \ref{theorem:GAM} follow a similar structure to that of \citet[Theorem 1]{zhu2023approximate}. There are mainly two steps:
\begin{itemize}
    \item \textbf{Reduction to total variation distance:} In this step, we show that the distance to exchangeability or the inflation of Type-I error can be bounded by
    \[
    \mathbb{E}_{Q_{\theta_0}^*} \left[ d_{\mathrm{TV}} \left( p_{\theta_0}(\cdot \mid \hat{\theta}, \hat{g}), p_{\hat{\theta}}(\cdot \mid \hat{\theta}, \hat{g}) \right) \right],
    \]
    where \( Q_{\theta_0}^* \) is the distribution of \( (\hat{\theta}, \hat{g}) \). This part is established in Lemma~\ref{lemma_weighted}, and differs from the approach of \cite{zhu2023approximate} as we allow more general sampling schemes.

    \item \textbf{Bounding the total variation distance:} In this step, we bound the expected total variation distance. A central idea is to demonstrate that the ratio \( f(x; \theta)/f(x; \theta_0) \) is nearly constant over \( p_{\theta_0}(\cdot \mid \hat{\theta}, \hat{g}) \). This is the part we will discuss in detail later. The remaining steps follow those of \cite{zhu2023approximate}.
\end{itemize}
   
For Theorem \ref{Theorem 1}, we take a Taylor series expansion for the function \(\theta \to \log f(x; \theta)\):
\begin{align*}
    \log f(x; \theta_0) - \log f(x; \theta) 
    &= (\theta_0 - \theta)^\T \nabla_\theta \log f(x; \theta) + \int_{t=0}^1 t (\theta - \theta_0)^\T \nabla_\theta^2 \log f(x; \theta_t) (\theta - \theta_0) dt,
\end{align*}
where we write \(\theta_t = (1 - t)\theta_0 + t\theta\). Therefore, we have
$$\begin{aligned}
        \frac{f(x; \theta)}{f(x; \theta_0)}
        &= \exp \left\{ \log f(x; \theta) - \log f(x; \theta_0) \right\} \\
        &= \exp \left\{ 
        -(\theta_0 - \theta)^\top \nabla_{\theta} \log f(x; \theta) 
        - \int_{0}^{1} t (\theta - \theta_0)^\top \nabla_{\theta}^2 \log f(x; \theta_t)(\theta - \theta_0) \rmd t
        \right\} \\
        &= \exp \left\{
        (\theta_0 - \theta)^\top (\nabla_{\theta} \mathcal{L}(x; \theta) - g) 
        + \int_{0}^{1} t (\theta - \theta_0)^\top (H(\theta_t; x) - H(\theta_t)) (\theta - \theta_0) \rmd t \right. \\
        &\quad \left.
        + (\theta_0 - \theta)^\top (g - \nabla_{\theta} R(\theta)) 
        + \int_{0}^{1} t (\theta - \theta_0)^\top H(\theta_t)(\theta - \theta_0) \rmd t
        \right\},
    \end{aligned}
    $$
where the last step holds using the fact that
$
-\nabla_\theta \log f(x; \theta) = \nabla_\theta \mathcal{L}(x; \theta) - \nabla_\theta R(\theta)$ by the definition of \(\mathcal{L}(x;\theta)\).

    For Theorems \ref{theorem: MTLE} or \ref{Theorem: quantile}, the above equality still holds. However, the key difference lies in the use of subsampling techniques by these methods, which results in a certain part of each copy being identical to the original sample. Consequently, while the aCSS with MLE relies on Assumption \ref{assumption:3} to bound \(H(\theta_t; x) - H(\theta)\), the analysis here must be adjusted to rely on Assumption \ref{assumption:MTLE} or \ref{assumption:quantile}, which states the condition based on the selected subsamples.

     For Theorem \ref{theorem:GAM}, although GAM is a nonparametric model, we model $X$ and $\mu$ through the parametric model that $\mu=B\theta$, and therefore,
    $$\frac{f(x;B\theta)}{f(x;\mu_0)}=\exp\left\{\sum_{i=1}^n \left(a(\mu_i^0)- a(B_i\theta)\right)\right\}\cdot \exp\left\{x^\T(B\theta-\mu_0)\right\}.$$
     Let $g=x-\sigma w$, we have 
    $$ \frac{f(x;B\theta)}{f(x;\mu_0)}\exp\left\{\sum_{i=1}^n\left(a(B_i\theta)-a(\mu_i^0)\right)-g^\T(B\theta-\mu_0)\right\}= \exp\left\{\sigma(B\theta-\mu_0)^\T w\right\}.$$
    Note that $\exp\left\{\sum_{i=1}^n\left(a(B_i\theta)-a(\mu_i^0)\right)-g^\T(B\theta-\mu_0)\right\}$ is fixed given $\hat{\theta},\hat{g}.$
    Therefore, 
  the ratio ${f(x;B\theta)}/{f(x;\mu_0)}$
    can be bounded by a constant multiplied by \(\exp\left\{\sigma \|B\theta - \mu_0\| \|w\|\right\}\).  

    The remaining steps follow similarly to those in Theorem \ref{Theorem 1}, and we refer to \citet[Section A.1.2]{zhu2023approximate} for details.
\end{proof}

\subsection{Proof of Theorem \ref{theorem: linear additive} and Theorem \ref{theorem: gacss}}\label{app:pf-gacss}
\begin{proof}
    Note that for any $\mu,\mu^\prime\in \bbR^n$ and any positive definite matrix $\Sigma\in \bbR^{n\times n}$,
    $$\begin{aligned}
        d_{\rm TV}\left({N}(\mu,\Sigma),{N}(\mu^\prime,\Sigma)\right)&\leq \sqrt{\frac{1}{2}d_{\rm KL}\left({N}(\mu,\Sigma),{N}(\mu^\prime,\Sigma)\right)}=\frac{1}{2}\left\|\Sigma^{-1/2}(\mu-\mu^\prime)\right\|
    \end{aligned}.$$
    
    For Theorem \ref{theorem: linear additive}, we bound the total variation between \eqref{eq:additive-Gaussian-true} and \eqref{sampling gaussian} as follows: 
    $$\begin{aligned}
        d_{\rm TV}\left(p_{\mu_0}(\cdot\mid \hat{\theta},\hat{g}),p_{B\hat{\theta}}(\cdot\mid\hat{\theta},\hat{g})\right)&\leq \frac{1}{2\nu}\left\|\left(1+\frac{n}{\sigma^2\nu^2}\right)^{1/2}\cdot \left(1+\frac{n}{\sigma^2\nu^2}\right)^{-1}(\mu_0-B\hat{\theta})\right\|\\
        &\leq \frac{1}{2\nu}\left(1+\frac{n}{\sigma^2\nu^2}\right)^{-1/2}\left\|\mu_0-B\hat{\theta}\right\|\leq \frac{\sigma}{2\sqrt{n}}\left\|\mu_0-B\hat{\theta}\right\|.
    \end{aligned}$$
    With the assumptions in Theorem \ref{theorem: linear additive}, since the total variation distance is always bounded by $1$, and we therefore have 
    $$\bbE_{Q_{\theta_0}^*}\left( d_{\rm TV}\left(p_{\mu_0}(\cdot\mid \hat{\theta},\hat{g}),p_{B\hat{\theta}}(\cdot\mid\hat{\theta},\hat{g})\right)\right)\leq \frac{\sigma}{2\sqrt{n}}r(\mu_0)+\delta(\mu_0).$$
    
    For Theorem \ref{theorem: gacss}, we bound the total variation between \eqref{eq: conditional gaussian aCSS} and \eqref{sampling gaussian aCSS} as follows: 
    $$\begin{aligned}
        d_{\rm TV}\left(p_{\mu_0}(\cdot\mid X_{\rm noise}),p_{\hat{\mu}}(\cdot\mid X_{\rm noise})\right)&\leq \frac{1}{2}\left\|\left(\frac{\sigma^2\nu^2}{\sigma^2+\nu^2}\right)^{-1/2}\cdot \left(\frac{\sigma^2}{\sigma^2+\nu^2}\right)(\mu_0-\hat{\mu})\right\|\\
        &\leq \frac{1}{2\nu}\left(\frac{\sigma^2}{\sigma^2+\nu^2}\right)^{1/2}\left\|\mu_0-B\hat{\theta}\right\|\leq \frac{\sigma}{2\nu^2}\left\|\mu_0-B\hat{\theta}\right\|.
    \end{aligned}$$
    With the assumptions in Theorem \ref{theorem: gacss}, since the total variation distance is always bounded by $1$, and we therefore have 
    $$\bbE_{Q_{\theta_0}^*}\left( d_{\rm TV}\left(p_{\mu_0}(\cdot\mid \hat{\theta},\hat{g}),p_{B\hat{\theta}}(\cdot\mid\hat{\theta},\hat{g})\right)\right)\leq \frac{\sigma}{2\nu^2}r(\mu_0)+\delta(\mu_0).$$

    Given the bounds on the total variation distances, we complete the proof using Lemma \ref{lemma_weighted}. 
\end{proof}
\subsection{Proof of Theorem \ref{connection}}\label{pf:connection}
\begin{proof}
    For conditional CRT, the copies $\widetilde{X}^*_{\rm css}$ is generated conditioning on ${Z}_*$ and $\left({Z}_*\right)^\T {X}_*$. Consider the following equivalent variables
    $$\begin{aligned}
        \widehat{T}&=\frac{Y^\T \widetilde{X}_{\rm css}}{\left\|Y\right\|},\quad
        \widetilde{T}&=\frac{Y^\T \widetilde{X}_{\rm acss}}{\left\|Y\right\|},\quad
        {T}&=\frac{Y^\T X}{\left\|Y\right\|}.
    \end{aligned}$$
    
    Recall that $X_i\mid Z_i \sim N(Z_i^\T \theta_0, 1)$. 
    Then, from the construction of copies in the CSS and aCSS methods (see Appendix~G in the supplement of \citet{wang2022high} and \eqref{aCSS unre CRT}, respectively), we have
    \begin{equation}\label{eq:acss-css-coupling}
    \left\{ \begin{aligned}
        \widetilde{X}^*_{\rm css}\mid Z_*,Z_*^\T X_*&\stackrel{\rmd}{=}Z_*\left(Z_*^\T Z_*\right)^{-1}Z_*^\T X_*+\left(I_{n_*}-Z_*(Z_*^\T Z_*)^{-1}Z_*^\T \right)\varepsilon_0;\\
        \widetilde{X}^*_{\rm acss}\mid Z_*,(Z_*)^\T X_*+\sigma W&\stackrel{\rmd}{=}  Z_*\left(Z_*^\T Z_*\right)^{-1}(Z_*^\T X_*+\sigma W)+\left(I_{n_*}+\frac{d}{\sigma^2}Z_*Z_*^\T \right)^{-1/2}\varepsilon_0,
    \end{aligned}
    \right.
    \end{equation}
    where \(\varepsilon_0 \sim N(0, I_n)\) and is independent of all other random variables. 
Therefore, we can couple
 $\widetilde{X}^*_{\rm css}$ and $\widetilde{X}^*_{\rm acss}$ through the common noise vector $\varepsilon_0$ using the right hand sides of \eqref{eq:acss-css-coupling}. 
 This coupling is used in the following. 
    
    By the proof of Theorem~9 in the supplement of \citet{wang2022high}, 
      the conditional distribution of $\widehat{T}$ given $\left({Z}_*,\left({Z}_*\right)^\T {X}_*,Y\right)$ is a normal distribution with variance bounded away from $0$ and $\infty$ under $H_0$ and $H_{1,n}$. 
      The same holds for $T$. 
    Since \(\left\|\widehat{T} - \widetilde{T}\right\| \leq \left\|\widetilde{X}_{\rm css} - \widetilde{X}_{\rm acss}\right\|\), we can apply Theorem~\ref{coupling} to obtain the asymptotic equivalence if we can show that \(\left\|\widetilde{X}_{\rm css} - \widetilde{X}_{\rm acss}\right\| \to 0\) in probability. 

    By Markov's inequality, for any $\delta>0$, we have 
$$
\begin{aligned}
  \text{pr}\left( \left\|\widetilde{X}_{\rm css} - \widetilde{X}_{\rm acss}\right\| > \delta \right) & \leq \delta^{-2} \mathbb{E} \left\|\widetilde{X}_{\rm css} - \widetilde{X}_{\rm acss}\right\|^2  \\
  & = \delta^{-2} \frac{n}{n_*}\mathbb{E} \left\|\widetilde{X}^*_{\rm css} - \widetilde{X}^*_{\rm acss}\right\|^2, 
\end{aligned}
$$
  where the last equation holds because all \(X_i\) are exchangeable.
Therefore, it suffices to show that 
    \[
    \mathbb{E} \left\|\widetilde{X}^*_{\rm css} - \widetilde{X}^*_{\rm acss}\right\|^2 = o\left(\frac{n_*}{n}\right).
    \]
Using the coupling in \eqref{eq:acss-css-coupling} and the basic inequality that $(a+b)^2\leq 2 a^2+2 b^2$, we have 

\begin{align}
& \bbE\left\|\widetilde{X}^*_{\rm css}-\widetilde{X}^*_{\rm acss}\right\|^2 \nonumber\\
=& \bbE\left\| - \sigma  Z_*\left(Z_*^\T Z_*\right)^{-1} W + \left[\left(I_{n_*}-Z_*(Z_*^\T Z_*)^{-1}Z_*^\T \right)- \left(I_{n_*}+\frac{d}{\sigma^2}Z_*Z_*^\T \right)^{-1/2} \right]\varepsilon_0   \right\|^2  \nonumber \\
\leq &  2 \sigma^2 \bbE\left\|Z_*\left(Z_*^\T Z_*\right)^{-1} W\right\|^2 + 2 \bbE\left\|\left[I_{n_*}-Z_*(Z_*^\T Z_*)^{-1}Z_*^\T -\left(I_{n_*}+\frac{d}{\sigma^2}Z_*Z_*^\T \right)^{-1/2}\right]\varepsilon_0\right\|^2. \label{two parts}
\end{align}

To proceed, we first introduce some notations. 
Let $Z_*=U\Sigma V^\T $ be the SVD, where 
    $$
\begin{aligned}
U& \in \bbR^{n_*\times n_*} \text{ s.t. }UU^\T =U^\T U={I}_{n_*}, \\
V &  \in \bbR^{d\times d},  \text{ s.t. } V^\T V=VV^\T ={I}_d, \\
\Sigma & =\begin{bmatrix}
        \Lambda \\
        \mathbf{0}_{(n_*-d)\times d}
    \end{bmatrix}
    \in \bbR^{n_*\times d},\quad \text{ with } \Lambda ={\rm diag}\left\{\lambda_1,\lambda_2,\cdots,\lambda_d\right\}\in \bbR^{d\times d}. 
\end{aligned}
$$
Since $Z_i$'s are normally distributed with non-singular covariance matrix, $\lambda_j$'s are positive a.s.

Using the SVD of $Z_*$, we can straightforwardly obtain the following expressions: 
    $$\begin{aligned}
        Z_*^\T Z_*&=V{\rm diag}\left\{\lambda_1^2,\cdots,\lambda_d^2\right\}V^\T , \\
        {I}_{n_*}-Z_*\left(Z_*^\T Z_*\right)^{-1}Z_*^\T &=U{\rm diag}\left\{{0}_{d},I_{n_*-d}\right\}U^\T ,\\
        {I}_{n_*}+\frac{d}{\sigma^2}Z_*Z_*^\T &=U\left(\begin{array}{cc}
            {\rm diag}\left\{1+\frac{d}{\sigma^2}\lambda_1^2,\cdots,1+\frac{d}{\sigma^2}\lambda_d^2\right\}&{0}\\
            {0}&{I}_{n_*-d}
        \end{array}\right)U^\T . 
    \end{aligned}$$

    For the first term in (\ref{two parts}), we have 
    $$\begin{aligned}
        \bbE\left\|Z_*\left(Z_*^\T Z_*\right)^{-1}W\right\|^2& =  {\rm tr}\left[\bbE\left(W^\T (Z_*^\T Z_*)^{-1}Z_*^\T Z_*(Z_*^\T Z_*)^{-1}W\right)\right]\\
        & =  {\rm tr}\left[ (Z_*^\T Z_*)^{-1}  \bbE \left(WW^\T\right)\right]\\
        &=\frac{1}{d}{\rm tr}\left[(Z_*^\T Z_*)^{-1}\right]\\
        &=\frac{1}{d}\sum_{i=1}^{d}\frac{1}{\lambda_i^2}. 
    \end{aligned}$$

    For the second term in (\ref{two parts}), we have

    $$\begin{aligned}
        &E\left\|\left[I_{n_*}-Z_*(Z_*^\T Z_*)^{-1}Z_*^\T -\left(I_{n_*}+\frac{d}{\sigma^2}Z_*Z_*^\T \right)^{-1/2}\right]\varepsilon_0\right\|^2\\
        =&E\left\|\left[U{\rm diag}\left\{{0}_{d},I_{n_*-d}\right\}U^\T -U\left(\begin{array}{cc}
             {\rm diag}\left\{\left(1+\frac{d}{\sigma^2}\lambda_1^2\right)^{-1/2},\cdots,\left(1+\frac{d}{\sigma^2}\lambda_d^2\right)^{-1/2}\right\}&{0}\\
            {0}&{I}_{n_*-d}
        \end{array}\right)U^\T \right]\varepsilon_0\right\|^2\\
      =& \sum_{i=1}^d\left(1+\frac{d}{\sigma^2}\lambda_i^2\right)^{-1}= \sum_{i=1}^d\frac{\sigma^2}{d\lambda_i^2+\sigma^2} \leq \sigma^2\frac{1}{d}\sum_{i=1}^{d}\frac{1}{\lambda_i^2}. 
    \end{aligned}$$

Combining the two bounds, we obtain
\begin{align*}
\bbE\left\|\widetilde{X}^*_{\rm css}-\widetilde{X}^*_{\rm acss}\right\|^2
\leq 4\sigma^2 \frac{1}{d}\sum_{i=1}^{d}\frac{1}{\lambda_i^2}. 
\end{align*}

Therefore, we only need to prove 
$$\frac{\sigma^2 n}{ d ~ n_*  }\sum_{i=1}^{d}\frac{1}{\lambda_i^2}=o(1).
$$

    Note that $\lambda_1^2/d,\cdots,\lambda_d^2/d$ is the eigenvalues of $Z^\T Z/d$. By the Marchenko-Pastur theorem \citep{marchenko1967distribution}, with probability 1, the empirical
    distribution of $\left\{\lambda_i^2/d : 1 \leq i \leq  d\right\}$ converges weakly to the distribution with probability
    density function
    $$p_\lambda(x)=\frac{1}{2\pi}\frac{\sqrt{(\lambda_+-x)(x-\lambda_-)}}{\lambda x}{1}\left(x\in [\lambda_-,\lambda_+]\right),$$
    where $$\lambda=\lim_{n\to\infty}\frac{d}{n_*}\in (0,1),\quad \lambda_-=(1-\sqrt{\lambda})^2,\quad \lambda_+=(1+\sqrt{\lambda})^2.$$
    We then have 
    $$\frac{1}{d}\sum_{i=1}^n\frac{d}{\lambda_i^2}{1}\left(\frac{\lambda_i^2}{d}\geq \frac{1}{2}\lambda_-\right)\stackrel{{\rm a.s.}}{\to} \int_{\lambda_-/2}^\infty \frac{1}{x}p_\lambda(x)\rmd x.$$
    In addition, as $n_*,d\to\infty$, we have $\min_{1\leq i\leq d}\lambda_i^2/d\geq {\lambda_-}/{2}$ almost surely \cite{10.1214/aop/1176989118}, so we actually have 
    $$\sum_{i=1}^n\frac{1}{\lambda_i^2}\stackrel{{\rm a.s.}}{\to}\int_{\lambda_-/2}^\infty \frac{1}{x}p_\lambda(x)\rmd x.$$

    By the assumption of the theorem, $\sigma^2=o\left( d \cdot n_*/{n} \right)$, and thus we have $\frac{\sigma^2 n}{ d ~ n_*  }\sum_{i=1}^{d}\frac{1}{\lambda_i^2}\to 0$.
\end{proof}
\subsection{Proof of Theorem \ref{power acss}}\label{pf:power acss}

\textbf{Step 1: Express the conditional distributions of test statistics.}

Recall that we sample copies $\widetilde{X}_{\rm acss}$ from 
$$\widetilde{X}_{\rm acss}\mid X_{\rm noise}, Y, Z\sim N\left(\frac{\sigma^2}{\sigma^2+1}Z\hat{\theta}+\frac{1}{\sigma^2+1}X_{\rm noise},\frac{\sigma^2}{\sigma^2+1}{I}_n\right),$$
and the test statistic derived from a copy satisfies 
$$\widetilde{T}=\frac{\widetilde{Y}^\T \widetilde{X}_{\rm acss}}{\|\widetilde{Y}\|} \mid X_{\rm noise}, Y, Z \sim N\left(\frac{\sigma^2}{\sigma^2+1}\frac{\widetilde{Y}^\T Z\hat{\theta}}{\|\widetilde{Y}\|}+\frac{1}{\sigma^2+1}\frac{\widetilde{Y}^\T X_{\rm noise}}{\|\widetilde{Y}\|},\frac{\sigma^2}{\sigma^2+1}\right).$$

Note that given $(X_{\rm noise}, Y, Z)$, the conditional density of $X$ is given by 
$$p(x\mid X_{\rm noise}, Y,Z)\propto \exp\left\{-\frac{1}{2\nu^2}\left\|Y-x\beta-Z\xi\right\|^2-\frac{1}{2}\left\|x-Z\theta_0\right\|^2-\frac{1}{2\sigma^2}\left\|x-X_{\rm noise}\right\|^2\right\},$$
which simplifies to 
$$X\mid X_{\rm noise},Y,Z\sim N\left(\left(\frac{1}{\sigma^2}+\frac{\beta^2}{\nu^2}+1\right)^{-1}\left[\frac{X_{\rm noise}}{\sigma^2}+\frac{\beta}{\nu^2}(Y-Z\xi)+Z\theta_0\right],\left(\frac{1}{\sigma^2}+\frac{\beta^2}{\nu^2}+1\right)^{-1}{I}_n\right).$$
The test statistic derived from $X$, i.e., $T=\frac{\widetilde{Y}^\T X}{\|\widetilde{Y}\|}$, satisfies 
$$T\mid X_{\rm noise},Y,Z
\sim N\left(\left(\frac{1}{\sigma^2}+\frac{\beta^2}{\nu^2}+1\right)^{-1}\left[\frac{\widetilde{Y}^\T Z\theta_0}{\|\widetilde{Y}\|}+\frac{\beta}{\nu^2}\frac{\widetilde{Y}^\T (Y-Z\xi)}{\|\widetilde{Y}\|}+\frac{1}{\sigma^2}\frac{\widetilde{Y}^\T X_{\rm noise}}{\|\widetilde{Y}\|}\right],\left(\frac{1}{\sigma^2}+\frac{\beta^2}{\nu^2}+1\right)^{-1}\right).$$

For simplicity, we denote the mean and standard deviation of the above conditional distribution of \(T\) given $(X_{\rm noise}, Y, Z)$ as follows:
 $$\begin{aligned}
    \mu_\beta(X_{\rm noise},\widetilde{Y},Z)&=\sigma^2_\beta\left[\frac{\widetilde{Y}^\T Z\theta_0}{\|\widetilde{Y}\|}+\frac{\beta}{\nu^2}\frac{\widetilde{Y}^\T (Y-Z\xi)}{\|\widetilde{Y}\|}+\frac{1}{\sigma^2}\frac{\widetilde{Y}^\T X_{\rm noise}}{\|\widetilde{Y}\|}\right],\\
    \sigma_\beta&=\left(\frac{1}{\sigma^2}+\frac{\beta^2}{\nu^2}+1\right)^{-1/2}=\sqrt{\frac{\sigma^2\nu^2}{(\sigma^2+1)\nu^2+\beta^2\sigma^2}}.\\
\end{aligned}$$
Similarly, for \(\widetilde{T}\) we introduce the following: 
 $$\begin{aligned}
    \mu^\prime(X_{\rm noise},\widetilde{Y},Z)&=\frac{\sigma^2}{\sigma^2+1}\frac{\widetilde{Y}^\T Z\hat{\theta}}{\|\widetilde{Y}\|}+\frac{1}{\sigma^2+1}\frac{\widetilde{Y}^\T X_{\rm noise}}{\|\widetilde{Y}\|}, \\
    \sigma_0&=\sqrt{\frac{\sigma^2}{\sigma^2+1}}.
\end{aligned}$$

\textbf{Step 2: Analyze the power.}

Here, we only consider the one-sided test: $$H_0:\beta=0\qquad {\rm versus}\qquad H_1:\beta>0.$$ The proof of two sided test follows similarly. Since we reject the null hypothesis when $T(X)$ is larger than $\mu^\prime(X_{\rm noise},\widetilde{Y},Z) + \sigma_0 z_{1-\alpha}$, the conditional power function $\operatorname{CP}(\beta)$ given $( X_{\rm noise},Y,Z)$ is given by  
\begin{equation}\label{eq:thm6.2-power}
\begin{aligned}
 \operatorname{CP}(\beta)  = &\text{pr}_\beta\left( T > \mu^\prime(X_{\rm noise},\widetilde{Y},Z) + \sigma_0 z_{1-\alpha}  \mid X_{\rm noise},Y,Z\right)\\
  = &\text{pr}_\beta\left(\mu_\beta(X_{\rm noise},\widetilde{Y},Z) + \sigma_{\beta} \xi_{X} > \mu^\prime(X_{\rm noise},\widetilde{Y},Z) + \sigma_0 z_{1-\al}\mid X_{\rm noise},Y,Z\right)\\
    =&\Phi\left\{\sigma_\beta^{-1}\left[\mu_\beta(X_{\rm noise},\widetilde{Y},Z)-\mu^\prime(X_{\rm noise},\widetilde{Y},Z)-\sigma_0 z_{1-\al}\right]    \right\},
\end{aligned}\end{equation}
where $\xi_{X}$ denotes a standard normal variable independent of other variables and in the last equation $X_{\rm noise},Y,Z$ are viewed as fixed. 

Since we only focus on the null and local alternative points, we have $\beta\to 0$,  which leads to 
\begin{equation}\label{eq:thm6.2-variance-ratio}
\sigma_0\sigma_\beta^{-1}=\left(\frac{1}{\sigma^2}+1\right)^{-1/2}\left(\frac{1}{\sigma^2}+\frac{\beta^2}{\nu^2}+1\right)^{1/2}\to 1.
\end{equation}
Denote the scaled difference in means appeared in the expression of $\text{power}(\beta)$ in \eqref{eq:thm6.2-power} as 
$$\begin{aligned}
    \Delta\mu(X_{\rm noise},\widetilde{Y},Z)& :=\sigma_\beta^{-1}\left[\mu_\beta(X_{\rm noise},\widetilde{Y},Z)-\mu^\prime(X_{\rm noise},\widetilde{Y},Z)\right]\\
    &=\sigma_\beta^{-1}\sigma_0^2\frac{\widetilde{Y}^\T Z(\theta_0-\hat{\theta})}{\|\widetilde{Y}\|}+\sigma_\beta^{-1}\frac{\sigma^2\beta}{(\sigma^2+1)\nu^2+\beta^2\sigma^2}\frac{\widetilde{Y}^\T (Y-Z\xi-Z\beta\theta_0)}{\|\widetilde{Y}\|}\\
    &\qquad -\sigma_\beta^{-1}\frac{\beta^2\sigma^2}{(\sigma^2+1)[(\sigma^2+1)\nu^2+\beta^2\sigma^2]}\frac{\widetilde{Y}^\T (X_{\rm noise}-Z\theta_0)}{\|\widetilde{Y}\|}, 
\end{aligned}$$
where we have omitted the straightforward manipulation using the expressions of $\sigma_\beta$ and $\sigma_0$. 

To bound the first item in $\Delta\mu(X_{\rm noise},\widetilde{Y},Z)$, we recall the conditions that  $\frac{\|Z^\T \widetilde{Y}\|_\infty}{\|\widetilde{Y}\|} \leq C_1 \sqrt{\log d}$ in Assumption~\ref{Test statistics}, $\|\hat{\theta} - \theta_0\|_1 = O_p\left(s\sqrt{{\log d}/{n}}\right)$ in Assumption~\ref{estimator}, and $\sigma = C_3 n^{-\gamma_n}$ in the premise of the theorem. 
By \eqref{eq:thm6.2-variance-ratio}, $\sigma_\beta^{-1}\leq 2\sigma_0$ for small enough $\beta$. 
Consequently, we have 
$$
\begin{aligned}
\left|\sigma_\beta^{-1}\sigma_0^2\frac{\widetilde{Y}^\T Z(\theta_0-\hat{\theta})}{\|\widetilde{Y}\|}\right|
&\leq 2\frac{\sigma}{\sqrt{\sigma^2+1}}\cdot \frac{\|\widetilde{Y}^\T Z\|_\infty}{\|\widetilde{Y}\|}\|\theta_0-\hat{\theta}\|_1 \\
&=O_p\left(n^{-\gamma_{n}}\sqrt{\log d}s\sqrt{\frac{\log d}{n}}\right)\\
&=O_p\left(n^{-1/2-\gamma_{n}}s \log d \right) \\
&=o_p(1),
\end{aligned}
$$
where the last equation is due to Assumption~\ref{sparsity}. 

Since the remaining terms in $\Delta\mu(X_{\rm noise},\widetilde{Y},Z)$ are exactly zero when $\beta=0$, we conclude that $$\operatorname{CP}(0)\to\Psi(-z_{1-\alpha})=\alpha,$$ which implies that the test is asymptotically valid. 

To investigate the power when $\beta\neq 0$, we show the second term in $\Delta\mu(X_{\rm noise},\widetilde{Y},Z)$ is non-vanishing while the third term is vanishing. 

\textbf{Analyze the second term:}

Recall from Assumption~\ref{Test statistics} that
$$
\frac{(Y - Z\xi - Z\beta \theta_0)^\T \widetilde{Y}}{\|\widetilde{Y}\|} \geq C_2 \sqrt{n}. 
$$
Since the premise of the theorem states that $\sigma = C_3 n^{-\gamma_{n}}$ and $\beta = h/n^{1/2-\gamma_{n}}$, we have 
$$
\begin{aligned}
\lim_{n\to\infty}\frac{\sigma_\beta^{-1}\sigma^2\beta}{(\sigma^2+1)\nu^2+\beta^2\sigma^2}\frac{\widetilde{Y}^\T (Y-Z\xi-Z\beta\theta_0)}{\|\widetilde{Y}\|}\geq  \lim_{n\to\infty}\frac{ C_3 h / n^{1/2} }{\nu^2} C_2 n^{1/2}
\geq \lim_{n\to\infty}\frac{ C_3 C_2h}{\nu^2} > 0,
\end{aligned}
$$
where we have use $\lim_{n}\sigma_\beta^{-1}\sigma=1$ in the first inequality. 

\textbf{Analyze the third term:}

Again, by the premise of the theorem, we have  $\sigma_\beta^{-1}\beta^2\sigma^2 \asymp h^2 / n^{1-\gamma_{n}} = o(n^{-1/2})$, where the last equation is due to Assumption~\ref{sparsity}. 
By the construction of $X_{\rm noise}$, we have $\|X_{\rm noise}-Z\theta_0\|=O_p(\sqrt{n})$. 
Therefore, we have 
$$
\begin{aligned}
\sigma_\beta^{-1}\beta^2\sigma^2\frac{\widetilde{Y}^\T (X_{\rm noise}-Z\theta_0)}{\|\widetilde{Y}\|}
= o_p(1), 
\end{aligned}
$$
which implies the third term in $\Delta\mu(X_{\rm noise},\widetilde{Y},Z)$ vanishes in probability. 

To sum up, for local alternatives with $\beta= h/n^{1/2-\gamma_{n}}$, both the first and the third terms in $\Delta\mu(X_{\rm noise},\widetilde{Y},Z)$ vanish while the second term is asymptotically lower bounded by a positive constant $ C_3 C_2 h/\nu^2$. We have 
$$
\operatorname{CP}(\beta)\geq \Phi\left(\frac{C_3C_2h}{\nu^2} - z_{1-\alpha} \right), 
$$
holds with probability tending to 1. 

\subsection{Proof of Theorem \ref{theorem:robust}}
\begin{proof}
    Define \begin{align*}
        \Gamma_{\rm SSOSP}&=\left\{(\theta,g)\in \Theta\times \bbR^d:\exists (x,w)\in {\cal X}\times \bbR^d \text{ such that}\right.\\
        &\qquad \qquad \left.{\theta}=\hat{\theta}(x,w) \text{ is an SSOSP, and } g=\hat{g}(x,w)\right\}. 
    \end{align*}

    Recall that $Q_{0}^*$ is the distribution of $(\hat{\theta}(X,W), \hat{g}(X,W))$ under $(X,W)\sim P_0^*$ while $Q_{\theta_0^*}$ is the analogous distribution under $(X,W)\sim P_{\theta_0}^*$. 
    Denote the density functions of $Q_0^*$ and $Q_{\theta_0}^*$ w.r.t. a common measure on $\Gamma_{\rm SSOSP}$ by $q_0(\theta,g)$ and $q_{\theta_0}(\theta,g)$, respectively. 
    
    We bound the expected total variation distribution between the true conditional distribution and the sampling distribution as follows: 
    \begin{align*}
        &\bbE_{Q_{0}^*}\left[d_{\rm TV}(p_{0}(\cdot\mid\hat{\theta},\hat{g}),p_{{\hat{\theta}}}(\cdot\mid\hat{\theta},\hat{g}))\right]\\
        =&\int_{\Gamma_{\rm SSOSP}} q_0(\hat{\theta},\hat{g})\left[d_{\rm TV}(p_{0}(\cdot\mid\hat{\theta},\hat{g}),p_{{\hat{\theta}}}(\cdot\mid\hat{\theta},\hat{g}))\right]\rmd\hat{\theta}\rmd\hat{g}\\
        =&\frac{1}{2}\int_{\Gamma_{\rm SSOSP}}\int_{\cal X}\left|q_0(\hat{\theta},\hat{g})p_{0}(x\mid\hat{\theta},\hat{g})-q_0(\hat{\theta},\hat{g})p_{\hat{\theta}}(x\mid\hat{\theta},\hat{g})\right|\rmd x\rmd\hat{\theta}\rmd\hat{g}\\
        \leq &\frac{1}{2}\int_{\Gamma_{\rm SSOSP}}\int_{\cal X}\left|q_0(\hat{\theta},\hat{g})p_{0}(x\mid\hat{\theta},\hat{g})-q_{\theta_0}(\hat{\theta},\hat{g})p_{\theta_0}(x\mid\hat{\theta},\hat{g})\right|\rmd x\rmd\hat{\theta}\rmd\hat{g}   &\quad&  \text{   (denoted as I)}\\
        &\qquad +\frac{1}{2}\int_{\Gamma_{\rm SSOSP}}\int_{\cal X}\left|q_{\theta_0}(\hat{\theta},\hat{g})p_{\theta_0}(x\mid\hat{\theta},\hat{g})-q_{\theta_0}(\hat{\theta},\hat{g})p_{\hat{\theta}}(x\mid\hat{\theta},\hat{g})\right|\rmd x\rmd\hat{\theta}\rmd\hat{g}   &\quad&  \text{   (denoted as II)}\\
        &\qquad\qquad +\frac{1}{2}\int_{\Gamma_{\rm SSOSP}}\int_{\cal X}\left|q_{\theta_0}(\hat{\theta},\hat{g})p_{\hat{\theta}}(x\mid\hat{\theta},\hat{g})-q_0(\hat{\theta},\hat{g})p_{\hat{\theta}}(x\mid\hat{\theta},\hat{g})\right|\rmd x\rmd\hat{\theta}\rmd\hat{g}    &\quad&  \text{   (denoted as III)}.
    \end{align*}
    For the term I, note that $q_{0}(\hat{\theta},\hat{g})p_{0}(x\mid\hat{\theta},\hat{g})$ is the joint density of $(X,\hat{\theta}(X,W),\hat{g}(X,W))$ when $X,W\sim P_0^*$, and $q_{\theta_0}(\hat{\theta},\hat{g})p_{\theta_0}(x\mid\hat{\theta},\hat{g})$ is the joint density of $(X,\hat{\theta}(X,W),\hat{g}(X,W))$ when $X,W\sim P_{\theta_0}^*$. Therefore, 
    $$\frac{1}{2}\int_{\Gamma_{\rm SSOSP}}\int_{\cal X}\left|q_0(\hat{\theta},\hat{g})p_{0}(x\mid\hat{\theta},\hat{g})-q_{\theta_0}(\hat{\theta},\hat{g})p_{\theta_0}(x\mid\hat{\theta},\hat{g})\right|\rmd x\rmd\hat{\theta}\rmd\hat{g}= {d}_{\rm TV}(P_0^*,P_{\theta_0}^*).$$
    
    For the term II, we have
    \begin{align*}
        &\frac{1}{2}\int_{\Gamma_{\rm SSOSP}}\int_{\cal X}\left|q_{\theta_0}(\hat{\theta},\hat{g})p_{\theta_0}(x\mid\hat{\theta},\hat{g})-q_{\theta_0}(\hat{\theta},\hat{g})p_{\hat{\theta}}(x\mid\hat{\theta},\hat{g})\right|\rmd x\rmd\hat{\theta}\rmd\hat{g}\\
        =\,&\bbE_{Q_{\theta_0}^*}\left[d_{\rm TV}(p_{\theta_0}(\cdot\mid\hat{\theta},\hat{g}),p_{\hat{\theta}}(\cdot\mid\hat{\theta},\hat{g}))\right].
    \end{align*}
    For the term III, we have
    $$\frac{1}{2}\int_{\Gamma_{\rm SSOSP}}\int_{\cal X}\left|q_{\theta_0}(\hat{\theta},\hat{g})p_{\hat{\theta}}(x\mid\hat{\theta},\hat{g})-q_0(\hat{\theta},\hat{g})p_{\hat{\theta}}(x\mid\hat{\theta},\hat{g})\right|\rmd x\rmd\hat{\theta}\rmd\hat{g}=d_{\rm TV}(Q_0^*,Q_{\theta_0}^*),$$
    which is bounded as $d_{\rm TV}(Q_0^*,Q_{\theta_0}^*)\leq d_{\rm TV}(P_0^*,P_{\theta_0}^*)$ since $\hat{\theta}(X,W)$ and $\hat{g}(X,W)$ is determined by $(X,W)$. 
    Combining the three terms, we complete the proof. 
    \end{proof}
\subsection{Proof of Theorem \ref{coupling}}\label{pf:coupling}
\begin{proof}
    Here, we prove the one-sided case and the proof of two-sided test is similar. 
    
    For the approximate validity, we only need to show 
    $$\lim_{n\to\infty}pr(T_0\geq Q_{1-\al}(\widetilde{T})\mid H_0)\leq \alpha + o(1).$$
    
    By Condition \ref{coupling accuracy}, $\widehat{T}$ is exchangeable with $T_0$ under $H_0$. Therefore, we only need to show 
    $$\lim_{n\to\infty}pr(\widehat{T}\geq Q_{1-\al}(\widetilde{T})\mid H_0)\leq \alpha + o(1).$$

    Using the union bound for probabilities, we have 
    $$\begin{aligned}
        &pr(\widehat{T}\geq Q_{1-\al}(\widetilde{T})\mid H_0)\\
        \leq& pr(\widehat{T}\geq Q_{1-\al}(\widetilde{T})+b_n\mid H_0)+ \hat{G}(\widetilde{T}+b_n)-\hat{G}(\widetilde{T})\\
        \leq &pr(\widetilde{T}\geq Q_{1-\al}(\widetilde{T})\mid H_0) + pr(|\widetilde{T}-T|\geq b_n)+ \hat{G}(\widetilde{T}+b_n)-\hat{G}(\widetilde{T})\\ 
\leq &pr(\widetilde{T}\geq Q_{1-\al}(\widetilde{T})\mid H_0) + o(1) + \sup_{t\in \mathbb{R}}\hat{G}(t+b_n)-\hat{G}(t)\\ 
        =&\alpha + o(1), 
    \end{aligned}$$
where the second last inequality is due to Condition \ref{coupling accuracy}, and the last equation is due to Condition \ref{dsitribution of hat W}. 

The argument for power analysis follows similarly by swapping the roles of $\widehat{T}$ and $\widetilde{T}$ in the above derivation.
\end{proof}

\section{Proof of conditional densities}\label{app: conditional density proof}

\subsection{Proof of Lemma \ref{lemma1}}\label{proof:Lemma1}
We begin by introducing some notations as follows: for a regular point $\theta_*\in \Theta$ and $\varepsilon>0$, we have
    $$\begin{aligned}
        { \Theta}_{\theta_*,\varepsilon}&=\left\{\theta\in \Theta: G_i(\theta)\leq 0,\forall i \in [r];G_i(\theta)=0,\forall i \in {\cal I}(\theta_*);\left\|\theta-\theta_*\right\|<\varepsilon\right\},\\
        \Omega_{\theta_*,\varepsilon}&=\left\{(x,w)\in\mathcal{X}\times \bbR^d: \hat{\theta}(x,w)\textrm{ is a}\right.\left.\textrm{SSOSP of (\ref{MLE}), and } \hat{\theta}(x,w)\in \Theta_{\theta_*,\varepsilon}\right\}\\
        \Psi_{\theta_*,\varepsilon}&=\left\{(x,\theta,g)\in\mathcal{X}\times  \Theta_{\theta_*,\varepsilon}\times \bbR^d:\exists\,\right.w\in \bbR^d \textrm{ such that }\\
        &\qquad \qquad\qquad\qquad\qquad\qquad\quad\left.\theta=\hat{\theta}(x,w) \text{ is an SSOSP of (\ref{MLE}), and } g=\hat{g}(x,w)
        \right\}
    \end{aligned}$$
    Moreover, we define 
    $$\begin{aligned}
        { \Theta}_{\theta_*,\infty}&=\left\{\theta\in \Theta: G_i(\theta)\leq 0,\forall i \in [r];G_i(\theta)=0,\forall i \in {\cal I}(\theta)\right\},\\
        \Omega_{\theta_*,\infty}&=\left\{(x,w)\in\mathcal{X}\times \bbR^d: \hat{\theta}(x,w)\textrm{ is a}\right.\left.\textrm{SSOSP of (\ref{MLE}), and } \hat{\theta}(x,w)\in \Theta_{\theta_*,\infty}\right\}\\
        \Psi_{\theta_*,\infty}&=\left\{(x,\theta,g)\in\mathcal{X}\times  \Theta_{\theta_*,\infty}\times \bbR^d:\exists\,\right.w\in \bbR^d \textrm{ such that }\\
        &\qquad \qquad\qquad\qquad\qquad\qquad\quad\left.\theta=\hat{\theta}(x,w) \text{ is an SSOSP of (\ref{MLE}), and } g=\hat{g}(x,w)
        \right\}
    \end{aligned}$$
    \begin{lemma}\label{lemma bijection}
        Define a map $\psi$ from $\Omega_{{\theta_*},\infty}$ to $\Psi_{{\theta_*},\infty}$ as
    $$\psi:(x,w)\to \left(x,\hat{\theta}(x,w),\hat{g}(x,w)\right).$$
For $\forall 0< \varepsilon\leq \infty$, when restricted to $\Omega_{\theta,\varepsilon}$, $\psi$ is a bijection between $\Omega_{{\theta_*},\varepsilon}$ and $\psi_{{\theta_*},\varepsilon}$ with inverse 
    $$\psi^{-1}:(x,\theta,g)\to \Biggl(x,\frac{ g-\nabla_\theta\mathcal{L}(\theta;x)}{\sigma}\Biggr).$$ 
    \end{lemma}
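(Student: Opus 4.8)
\textbf{Proof proposal for Lemma \ref{lemma bijection}.}

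The plan is to establish that $\psi$ is well defined, injective, surjective, and has the claimed inverse, treating the two directions $\Omega_{\theta_*,\varepsilon}\to\Psi_{\theta_*,\varepsilon}$ and back separately. First I would check that $\psi$ actually maps $\Omega_{\theta_*,\varepsilon}$ into $\Psi_{\theta_*,\varepsilon}$: this is essentially immediate from the definitions, since if $(x,w)\in\Omega_{\theta_*,\varepsilon}$ then $\hat\theta(x,w)$ is an SSOSP lying in $\Theta_{\theta_*,\varepsilon}$ and $\hat g(x,w)=\nabla_\theta\mathcal{L}(\hat\theta(x,w);x,w)$, so the triple $(x,\hat\theta(x,w),\hat g(x,w))$ satisfies the membership condition in $\Psi_{\theta_*,\varepsilon}$ with witness $w$ itself. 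The same argument handles $\varepsilon=\infty$.

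Next I would verify the formula for the inverse. The key observation is the defining relation
\[
\hat g(x,w) \;=\; \nabla_\theta\mathcal{L}(\hat\theta(x,w);x,w) \;=\; \nabla_\theta\mathcal{L}(\hat\theta(x,w);x) + \sigma w,
\]
using that $\mathcal{L}(\theta;x,w)=\mathcal{L}(\theta;x)+\sigma w^\T\theta$ so its gradient in $\theta$ adds $\sigma w$. Solving for $w$ gives $w = \sigma^{-1}\bigl(\hat g(x,w)-\nabla_\theta\mathcal{L}(\hat\theta(x,w);x)\bigr)$, which is exactly the stated $\psi^{-1}$ evaluated at $(x,\hat\theta(x,w),\hat g(x,w))$. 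This shows $\psi^{-1}\circ\psi=\mathrm{id}$ on $\Omega_{\theta_*,\varepsilon}$, hence $\psi$ is injective. For surjectivity, given $(x,\theta,g)\in\Psi_{\theta_*,\varepsilon}$, by definition there exists some $w$ with $\theta=\hat\theta(x,w)$ an SSOSP and $g=\hat g(x,w)$; then $(x,w)\in\Omega_{\theta_*,\varepsilon}$ and $\psi(x,w)=(x,\theta,g)$, and moreover by the inverse formula just derived this $w$ must equal $\sigma^{-1}(g-\nabla_\theta\mathcal{L}(\theta;x))$, so the preimage is unique and $\psi^{-1}$ is genuinely the two-sided inverse.

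The only subtle point — and the main thing to be careful about rather than a genuine obstacle — is that the witness $w$ appearing in the definition of $\Psi_{\theta_*,\varepsilon}$ is a priori only required to exist, not to be unique; one must confirm that the first-order (KKT) stationarity condition pins down $w$ entirely in terms of $(x,\theta)$, which is precisely what the displayed gradient identity gives, since the penalty/constraint part of $\nabla_\theta\mathcal{L}(\theta;x)$ and the Lagrange terms are functions of $(x,\theta)$ alone once $\theta$ is an SSOSP with its active set and multipliers determined. I would spell this out so that the map $\psi^{-1}$ is unambiguously well defined on $\Psi_{\theta_*,\varepsilon}$. Once this is settled, the bijection claim for every $0<\varepsilon\le\infty$ follows uniformly, since none of the arguments depended on the particular value of $\varepsilon$ beyond the nested structure $\Omega_{\theta_*,\varepsilon}\subseteq\Omega_{\theta_*,\infty}$ and correspondingly for $\Psi$.
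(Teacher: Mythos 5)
Your argument is correct and is essentially the proof the paper intends (the paper omits it, stating it is parallel to Lemma~4 of \citet{zhu2023approximate}): membership of $\psi(x,w)$ in $\Psi_{\theta_*,\varepsilon}$ and surjectivity are immediate from the definitions, while injectivity and the inverse formula follow from the identity $\hat g(x,w)=\nabla_\theta\mathcal{L}(\hat\theta(x,w);x)+\sigma w$, which forces $w=\sigma^{-1}\bigl(g-\nabla_\theta\mathcal{L}(\theta;x)\bigr)$ for any witness $w$. The caveat in your final paragraph about Lagrange multipliers and active sets is unnecessary: $\nabla_\theta\mathcal{L}(\theta;x)=-\nabla_\theta\log f(x;\theta)+\nabla_\theta R(\theta)$ is a deterministic, smooth function of $(x,\theta)$ containing no constraint or multiplier terms, so the witness is already uniquely determined by $(x,\theta,g)$ without any appeal to the KKT structure.
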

    While we are dealing with general constraints, Lemma \ref{lemma bijection} is essentially parallel to \citet[Lemma 4]{zhu2023approximate}, and the proof is similar and omitted. 

Before the proof of the Lemma \ref{lemma1}, we introduce the definition of minimal active set. 
\begin{definition}\label{minimal}
    For any regular point $\theta \in \Theta_0$, recall that ${\cal I}(\theta)$ is its set of active constraint indices in Definition~\ref{active} and $r(\theta)$ is the constant rank. 
    There exists a subset ${\cal M}(\theta)$ of ${\cal I}(\theta)$, such that $|{\cal M}(\theta)| =r(\theta)$ and $\left(\nabla_\theta G_i(\theta),i\in {\cal M}(\theta)\right)$ spans a space of rank $r(\theta)$. 
    We call ${\cal M}(\theta)$ as the minimal active set at $\theta$.
\end{definition}
The existence of ${\cal M}(\theta)$ is evident by linear algebra, but it may not be unique; to avoid ambiguity, we define ${\cal M}(\theta)$ to be the index set of the pivot-column basis of the matrix $\left[\nabla_\theta G_i(\theta)\right]_{i\in \mathcal{I}(\theta)}$ obtained from the the reduced row-echelon form. 

Proposition~\ref{prop:minimal-set} ensures that around a regular point, the minimal active set is locally determined, whose proof is deferred to Section~\ref{pf:prop:minimal-set}.
\begin{proposition}\label{prop:minimal-set}
Suppose Assumption \ref{assump:constraint} holds. 
For any regular point $\theta_1\in\Theta_0$, there exists $\varepsilon({\theta_1})>0$ such that 
$${\cal B}(\theta_1,\varepsilon({\theta_1}))\cap \left\{\theta: G_i(\theta)=0,\forall i\in {\cal M}(\theta_1)\right\}={\cal B}(\theta_1,\varepsilon({\theta_1}))\cap \left\{\theta: G_i(\theta)=0,\forall i\in {\cal I}(\theta_1)\right\}.$$
Furthermore, for $\forall \theta\in {\cal B}(\theta_1,\varepsilon({\theta_1}))\cap \left\{\theta: G_i(\theta)=0,\forall i\in {\cal M}(\theta_1)\right\}$, the set of active constraint indices at $\theta$ is ${\cal I}(\theta_1)$ and the minimal active set at $\theta$ is ${\cal M}(\theta_1)$. 
\end{proposition}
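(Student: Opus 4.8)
The plan is to reduce everything to the constant rank theorem \citep{rudin1976principles} applied to the map $F=(G_i)_{i\in\mathcal{I}(\theta_1)}$, viewed as a $C^2$ map into $\mathbb{R}^{|\mathcal{I}(\theta_1)|}$. Write $\mathcal{I}=\mathcal{I}(\theta_1)$, $\mathcal{M}=\mathcal{M}(\theta_1)$ and $\rho=r(\theta_1)=|\mathcal{M}|$. First I would record a preliminary fact: continuity of each $G_i$ together with Definition~\ref{active} forces $G_i(\theta_1)=0$ for every $i\in\mathcal{I}$ (take $\theta'\to\theta_1$ in $\Theta_0$ with $G_i(\theta')=0$), so $F(\theta_1)=0$; and by Definition~\ref{regular}, $F$ has constant rank $\rho$ on some ball $\mathcal{B}(\theta_1,\varepsilon_0)$. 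Applying the constant rank theorem at $\theta_1$ then produces $C^2$ diffeomorphisms $\Phi$ (of a neighborhood of $\theta_1$, with $\Phi(\theta_1)=0$) and $\Psi$ (of a neighborhood of $0\in\mathbb{R}^{|\mathcal{I}|}$, with $\Psi(0)=0$) such that $\Psi\circ F\circ\Phi^{-1}(u_1,\dots,u_d)=(u_1,\dots,u_\rho,0,\dots,0)$ for $u$ near the origin; in particular each $G_i\circ\Phi^{-1}$ depends only on $(u_1,\dots,u_\rho)$, say $G_i\circ\Phi^{-1}=\phi_i(u_1,\dots,u_\rho)$, with $\phi_i(0)=0$.

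For the first assertion I would show that, near $\theta_1$, both $\{F=0\}$ and $\{F_{\mathcal{M}}=0\}$, with $F_{\mathcal{M}}=(G_i)_{i\in\mathcal{M}}$, are embedded $C^2$ submanifolds of dimension $d-\rho$: the former by the level-set form of the rank theorem (in $u$-coordinates it is exactly $\{u_1=\cdots=u_\rho=0\}$, since $\Psi^{-1}(u_{1:\rho},0,\dots,0)=0$ forces $u_{1:\rho}=0$), and the latter because $F_{\mathcal{M}}$ is a submersion at $\theta_1$ (its differential has the linearly independent rows $\nabla_\theta G_i(\theta_1)$, $i\in\mathcal{M}$, by the definition of $\mathcal{M}$). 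Since $\mathcal{M}\subseteq\mathcal{I}$ we have $\{F=0\}\subseteq\{F_{\mathcal{M}}=0\}$, and a submanifold contained in an equidimensional submanifold agrees with it on a neighborhood of any common point (pass to a chart of the larger manifold and use invariance of domain). Hence there is $\varepsilon_1\in(0,\varepsilon_0]$ with $\mathcal{B}(\theta_1,\varepsilon_1)\cap\{G_i=0,\ i\in\mathcal{I}\}=\mathcal{B}(\theta_1,\varepsilon_1)\cap\{G_i=0,\ i\in\mathcal{M}\}=:N$, which is the first claim.

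Next I would handle $\mathcal{M}(\theta)=\mathcal{M}$ on $N$. Writing gradients in the $u$-coordinates, $[\nabla_\theta G_i(\theta)]_{i\in\mathcal{I}}$ equals $[\nabla_u(G_i\circ\Phi^{-1})(\Phi(\theta))]_{i\in\mathcal{I}}$ left-multiplied by the invertible matrix $(D\Phi(\theta))^{\T}$, and left multiplication by an invertible matrix does not change the set of reduced-row-echelon pivot columns. On $N$ we have $u_1=\cdots=u_\rho=0$, so $\nabla_u(G_i\circ\Phi^{-1})=(\partial_{u_1}\phi_i,\dots,\partial_{u_\rho}\phi_i,0,\dots,0)$ evaluated at $u_{1:\rho}=0$ does not depend on $u_{\rho+1},\dots,u_d$; thus $[\nabla_u(G_i\circ\Phi^{-1})(\Phi(\theta))]_{i\in\mathcal{I}}$ is constant along $N$ and equals its value at $\theta_1$, so its pivot columns are $\mathcal{M}$ throughout $N$ --- and hence $\mathcal{M}(\theta)=\mathcal{M}$ for $\theta\in N$, once we know $\mathcal{I}(\theta)=\mathcal{I}$ there.

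Finally, it remains to prove $\mathcal{I}(\theta)=\mathcal{I}$ for $\theta\in N\cap\Theta_0$ near $\theta_1$, after possibly shrinking the radius to some $\varepsilon(\theta_1)\le\varepsilon_1$. One inclusion is routine: if $G_k(\theta_1)<0$ then continuity kills the constraint on a small ball; and if $k\notin\mathcal{I}$ but $G_k(\theta_1)=0$, inactivity of $k$ at $\theta_1$ gives a radius with no zero of $G_k$ inside $\Theta_0\cap\mathring{\mathcal{B}}(\theta_1,\cdot)$, and a triangle-inequality argument (choosing balls around $\theta\neq\theta_1$ that avoid $\theta_1$ yet stay inside that punctured ball) rules out $k$ being active at $\theta$; so $\mathcal{I}(\theta)\subseteq\mathcal{I}$. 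The reverse inclusion $\mathcal{I}\subseteq\mathcal{I}(\theta)$ is the delicate step: for $i\in\mathcal{I}$ we know $G_i\equiv0$ on $N$, but activity of $i$ at $\theta\in N\cap\Theta_0$ requires producing points of $\Theta_0$ (not merely of $N$) arbitrarily close to $\theta$ with $G_i=0$, which means carefully combining the ``no isolated points'' hypothesis on $\Theta_0$ with the local manifold picture and a uniform choice of $\varepsilon(\theta_1)$. I expect this reconciliation of the topological definition of an active constraint with the differential-geometric description of $N$, together with the bookkeeping needed to fix one radius $\varepsilon(\theta_1)$ valid for all the shrinking steps at once, to be the main obstacle.
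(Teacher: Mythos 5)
Your handling of the first display is correct and rests on the same engine as the paper's proof: the constant rank theorem applied to $(G_i)_{i\in\mathcal{I}(\theta_1)}$. The paper finishes by building two implicit functions (one from $C\tilde G_1=0$, one from the $\mathcal{M}(\theta_1)$-constraints) and invoking uniqueness to identify their graphs, whereas you finish by observing that a $(d-r(\theta_1))$-dimensional embedded submanifold contained in an equidimensional one coincides with it locally (invariance of domain); either ending is fine. Your pivot-column argument for $\mathcal{M}(\theta)=\mathcal{M}(\theta_1)$ is also sound, conditional on knowing $\mathcal{I}(\theta)=\mathcal{I}(\theta_1)$, and your argument for the inclusion $\mathcal{I}(\theta)\subseteq\mathcal{I}(\theta_1)$ is correct.

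The genuine gap is the step you flag and then leave open: the reverse inclusion $\mathcal{I}(\theta_1)\subseteq\mathcal{I}(\theta)$, so as written the proposal does not establish the second assertion. The missing observation is short and does not need the no-isolated-points hypothesis on $\Theta_0$. Choose one radius $\varepsilon(\theta_1)$ smaller than: the radius from your first claim; a radius on which the chart of your rank-theorem normal form is valid and on which every $G_j$ with $G_j(\theta_1)<0$ stays negative; and the inactivity radii of every $j\notin\mathcal{I}(\theta_1)$ with $G_j(\theta_1)=0$. Write $N$ for the common zero set inside this ball. If $r(\theta_1)=d$, then $N=\{\theta_1\}$ locally and the claim is just the definition of $\mathcal{I}(\theta_1)$ and $\mathcal{M}(\theta_1)$; the same is true for $\theta=\theta_1$. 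So take a feasible $\theta\in N$ with $0<\|\theta-\theta_1\|<\varepsilon(\theta_1)$ and $r(\theta_1)<d$. For each tight-but-inactive $j$, the definition of inactivity at $\theta_1$, applied to the feasible point $\theta$ of the punctured ball, forces $G_j(\theta)\neq 0$, hence $G_j(\theta)<0$ and $G_j<0$ on a neighborhood of $\theta$; constraints with $G_j(\theta_1)<0$ are negative on the whole ball; and every $G_i$ with $i\in\mathcal{I}(\theta_1)$ vanishes identically on $N$. Consequently an entire relative neighborhood of $\theta$ in $N$ lies in $\Theta_0$, and since in your coordinates $N=\Phi^{-1}\{u_1=\cdots=u_{r(\theta_1)}=0\}$ is a manifold of dimension $d-r(\theta_1)\geq 1$, it contains points of $\Theta_0\cap\mathring{\mathcal{B}}(\theta,\varepsilon')$ with $G_i=0$ for every $\varepsilon'>0$; these are exactly the witnesses making each $i\in\mathcal{I}(\theta_1)$ active at $\theta$, after which your pivot-column computation gives $\mathcal{M}(\theta)=\mathcal{M}(\theta_1)$. (For reference, the paper itself dispatches this second assertion in a single sentence and only works out the first display in detail, but your write-up should include the argument above rather than record it as an obstacle; note also that the statement is only meaningful for feasible $\theta$, since activity is defined on $\Theta_0$.)
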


We now proceed to the proof of Lemma \ref{lemma1}.

    \begin{proof}[Proof of Lemma \ref{lemma1}]
    
Our proof is organized in 4 steps.

\textbf{Step 1: Express the conditional probability. }

    Consider the joint distribution $ (X,W)\sim P_{\theta_0}\times {N}\left(0,{I}_d/d\right)$. By assumption in the lemma, the event $(X,W)\in \Omega_{{\hat{\theta}},\infty}$ has positive probability. 
    Conditioning on the event the event that $\hat{\theta}(X,W)$ is an SSOSP of (\ref{MLE}) with ${\cal I}(\hat{\theta}(X,W))={\cal I}(\theta)$, the joint density of $(X,W)$ is proportional to the function 
    $$h_{\theta_0}(x,w)=f(x;\theta_0)\exp\left\{-\frac{d}{2}\|w\|^2\right\}{1}_{(x,w)\in \Omega_{{\hat{\theta}},\infty}}.$$
    
    By Lemma \ref{lemma bijection}, for any measurable set $I\subseteq \Psi_{{\hat{\theta}},\infty}$, define 
    $$\psi^{-1}(I)=\left\{(x,w)\in \Omega_{{\hat{\theta}},\infty}:\psi(x,w)\in I\right\}$$
    Then, we calculate
    \begin{align*}
            &pr\left\{(X,\hat{\theta}(X,W),\hat{g}(X,W))\in I\mid (X,\hat{\theta}(X,W),\hat{g}(X,W)\in \Psi_{{\hat{\theta}},\infty})\right\}\\
            =&pr\left\{(X,W)\in \psi^{-1}(I)\mid (X,W)\in \Omega_{{\hat{\theta}},\infty}\right\}\\
            = &\frac{ \int_{\psi^{-1}(I)}h_{\theta_0}(x,w)\rmd \nu_{{\cal X}}(x)\rmd w}{ \int_{{\cal X}\times \bbR^d}h_{\theta_0}(x^\prime,w^\prime)\rmd \nu_{\cal X}(x^\prime)\rmd w^\prime}\\
            = &\frac{ \int_{\psi^{-1}(I)}f(x,\theta_0)\exp\left\{-\frac{d}{2}\left\|w\right\|^2\right\}{1}_{(x,w)\in \Omega_{\hat{\theta},\infty}}\rmd \nu_{{\cal X}}(x)\rmd w}{ \int_{{\cal X}\times \bbR^d}h_{\theta_0}(x^\prime,w^\prime)\rmd \nu_{\cal X}(x^\prime)\rmd w^\prime}\\
            = &\frac{ \int_{\psi^{-1}(I)}f(x,\theta_0)\exp\left\{-\frac{d}{2\sigma^2}\left\|\hat{g}(x,w)-\nabla_\theta\mathcal{L}(\hat{\theta}(x,w);x)\right\|^2\right\}{1}_{(x,w)\in \Omega_{\hat{\theta},\infty}}\rmd \nu_{{\cal X}}(x)\rmd w}{ \int_{{\cal X}\times \bbR^d}h_{\theta_0}(x^\prime,w^\prime)\rmd \nu_{\cal X}(x^\prime)\rmd w^\prime}\\
            = &\frac{ \int_{\cal X}f(x,\theta)\int_{\bbR^d}\exp\left\{-\frac{d}{2\sigma^2}\left\|\hat{g}(x,w)-\nabla_\theta\mathcal{L}(\hat{\theta}(x,w);x)\right\|^2\right\}{1}_{(x,w)\in \psi^{-1}(I_{{}})}\rmd w\rmd \nu_{{\cal X}}(x)}{ \int_{{\cal X}\times \bbR^d}h_{\theta_0}(x^\prime,w^\prime)\rmd \nu_{\cal X}(x^\prime)\rmd w^\prime},
    \end{align*}
    where the second last step is due to the definitions of $\hat{g}$ and $\Omega_{\hat{\theta},\infty}$, and the last step holds because $\psi^{-1}(I)\subseteq \Omega_{\hat{\theta},\infty}$.

\textbf{Step 2: Reparameterize $\theta$ and $g$. }
For any given $(x,w)\in \psi^{-1}(I)$ and $(\theta, g)=(\hat{\theta}(x,w), \hat{g}(x,w))$, we find a reparameterization of $\theta$ and $g$. 

Recall the definition of the minimal active set ${\cal M}(\hat{\theta})$ at $\hat{\theta}$ in Definition \ref{minimal}. 
Without loss of generality (WLOG), we can assume that ${\cal M}(\hat{\theta})=[r(\hat{\theta})]$.
Since $\hat{\theta}$ is a regular point of $\Theta$, according to Proposition \ref{prop:minimal-set}, there exists $\varepsilon_1>0$, such that $\forall \theta\in \Theta_{\hat{\theta},\varepsilon_1}$, we have $${\rm rank}\left(\{\nabla_\theta G_i(\theta),1\leq i\leq r(\hat{\theta})\}\right)={r(\hat{\theta})} \text{ and $[r(\hat{\theta})]$ is the minimal active set at } \theta.$$

Let $U_{\hat{\theta}}\in\bbR^{d\times (d-{r(\hat{\theta})})}$ be an orthonormal matrix whose columns form a basis for the orthogonal complement to the space spanned by $\{\nabla_\theta G_i(\hat{\theta}),1\leq i\leq {r(\hat{\theta})} \}$. 
    Define a function $\widetilde{G}$ from $\bbR^d$ to $\bbR^d$ as $ \widetilde{G}(\theta)=\left(
       \begin{array}{cccc}
        \theta^\T  U_{\hat{\theta}}&G_1(\theta)&\ldots&G_{r(\hat{\theta})}(\theta)
       \end{array}
    \right)^\T $. The Jacobi matrix of $\widetilde{G}$ at $\hat{\theta}$ is 
    $$\mathbb{J}\,\widetilde{G}({\hat{\theta}})=\left(
        \begin{array}{cccc}
            U_{\hat{\theta}}&\nabla_\theta G_1(\hat{\theta})&\ldots&\nabla_\theta G_{r(\hat{\theta})}(\hat{\theta})
        \end{array}
    \right), 
    $$
    which is of full rank $d$.
Consequently, we can choose $\varepsilon_2$ (WLOG assumed $\leq \varepsilon_1$) sufficiently small such that there exists a function $H:\bbR^d\to\bbR^d$ such that 
    \begin{equation}\label{eq:H-G-inverse}
    H(\widetilde{G}(\theta))=\theta,\forall \theta\in {\cal B}(\hat{\theta},\varepsilon_2).
    \end{equation}

For any $\theta\in \Theta_{\hat{\theta},\varepsilon_2}$ and $g\in\bbR^d$, if $(x,\theta,g)\in \Psi_{\hat{\theta},\varepsilon_2}$, then by the KKT conditions and the results in Section~\ref{invariance-to-multipliers} (specifically, see \eqref{eq:ssosp-invariance-in-lambda2}), we must have a unique multiplier $\lambda=(\lambda_1,\ldots,\lambda_{r(\hat{\theta})})^\T \in\bbR^{r(\hat{\theta})}$ such that 
    $$g+\sum_{i=1}^{r(\hat{\theta})}\lambda_i\nabla_i G_i(\theta)=0,$$
which implies 
\begin{equation}\label{eq:g-in-lambda-theta}
{g}=-\sum_{i=1}^{r(\hat{\theta})}\lambda_i\nabla_i G_i(\theta).
\end{equation}

We can show that $\theta^\prime := U^\T _{\hat{\theta}}\theta\in \bbR^{d-{r(\hat{\theta})}}$ and the multiplier $\lambda$ from the KKT condition together form a reparameterization of $(\theta,g)$. 
Since $G_i(\theta)=0$ for $i\in [r(\hat{\theta})]$, \eqref{eq:H-G-inverse} implies that 
\begin{equation}\label{eq:constraint-density-theta-in-prime}
\theta=H\left(\left(\begin{array}{c}
        \theta^\prime\\
        0
    \end{array}\right) \right) =: T(\theta^\prime). 
\end{equation}
Furthermore, \eqref{eq:g-in-lambda-theta} implies that $g$ can be expressed as a function of $\lambda$ and $\theta^\prime$, denoted as $\tilde{g}(\lambda, \theta^\prime)$. 

\textbf{Step 3: Compute the Jacobian for changing variables.}

We first introduce define two subsets of $\bbR^{d-r(\hat{\theta})}$: 
\begin{align*}
   { \Theta}_{\hat{\theta},\varepsilon}^\prime&=\left\{ \theta^\prime\in \bbR^{d-r(\hat{\theta})}\mid \theta^\prime=U_{\hat{\theta}}^T\theta\text{ for some }\theta\in { \Theta}_{\hat{\theta},\varepsilon}\right\} ; \\
   { \Theta}_{\hat{\theta},\infty}^\prime&=\left\{ \theta^\prime\in \bbR^{d-r(\hat{\theta})}\mid \theta^\prime=U_{\hat{\theta}}^T\theta\text{ for some }\theta\in { \Theta}_{\hat{\theta},\infty}\right\}.
\end{align*}

    For $(x,w)\in \psi^{-1}(I)$ and $\psi(x,w)=(x,\theta,g)$, we can write 
    $$w=\frac{g-\nabla_\theta {\cal L}(\theta;x)}{\sigma}=-\frac{ \sum_{i=1}^{r(\hat{\theta})}\lambda_i\nabla_\theta G_i(T(\theta^\prime))+\nabla_\theta{\cal L}(T(\theta^\prime);x)}{\sigma}.$$

    Define $\phi_x(\theta^\prime,\lambda):=-\sigma^{-1}\left[\sum_{i=1}^{r(\hat{\theta})}\lambda_i\nabla_\theta G_i(T(\theta^\prime))+\nabla_\theta{\cal L}(T(\theta^\prime);x)\right]$ for $\theta^\prime \in { \Theta}_{\hat{\theta},\varepsilon_2}^\prime$ and $\lambda\in \mathbb{R}_{+}^{r(\hat{\theta})}$. 
    
    Let $\mathbb{J}\,T(\theta^\prime)\in \mathbb{R}^{(d-r(\hat{\theta}))\times d}$ be the Jacobian of $T(\theta^\prime)$ defined in \eqref{eq:constraint-density-theta-in-prime}. 
    A direct calculation yields the following:  
    $$\begin{aligned}
        \nabla_{\theta^\prime}\phi_x(\theta^\prime,\lambda)&=-\frac{1}{\sigma}\left(\mathbb{J}\,T(\theta^\prime)\right) \left[\sum_{i=1}^{r(\hat{\theta})}\lambda_i\nabla_\theta^2 G_i(T(\theta^\prime))+\nabla_\theta^2\mathcal{L}(T(\theta^\prime);x)\right]\\
        \nabla_{\lambda_i}\phi_x(\theta^\prime,\lambda)&=-\frac{1}{\sigma}\left(\nabla_\theta G_i(T(\theta^\prime))\right)^\T , \quad  i \in [r(\hat{\theta})].
    \end{aligned}$$
    
    Therefore,
    \begin{equation}\label{eq:det-grad-phi}
        \begin{aligned}
        \det \left(\nabla \phi_x(\theta^\prime,\lambda)\right)
        &= \det \left(
            \begin{array}{c}
                \nabla_{\theta^\prime}\phi_x(\theta^\prime,\lambda)\\
                \nabla_{\lambda_1}\phi_x(\theta^\prime,\lambda)\\
                \ldots\\
                \nabla_{\lambda_{r(\hat{\theta})}}\phi_x(\theta^\prime,\lambda)
            \end{array}
        \right)\\
        &=\left(-\frac{1}{\sigma}\right)^d\det\left(
            \begin{array}{c}
                 \left(\mathbb{J}\,T(\theta^\prime)\right) \left[\sum_{i=1}^{r(\hat{\theta})}\lambda_i\nabla_\theta^2 G_i(T(\theta^\prime))+\nabla_\theta^2\mathcal{L}(T(\theta^\prime);x)\right]\\
                \left(\nabla_\theta G_1(T(\theta^\prime))\right)^\T \\
                \ldots\\
                \left(\nabla_\theta G_{r(\hat{\theta})}(T(\theta^\prime))\right)^\T 
            \end{array}
        \right).
    \end{aligned}
    \end{equation}
    
For simplicity, we introduce the following shorthand notations: 
\begin{align*}
    G_{1:{r(\hat{\theta})}}(\theta)&=\left(
        \begin{array}{ccc}
            G_1(\theta)&
            \ldots&
            G_{r(\hat{\theta})}(\theta)
        \end{array}
    \right)^\T ,\quad \mathbb{J} G_{1:{r(\hat{\theta})}}(\theta)=\left(
        \begin{array}{ccc}
            \nabla_\theta G_1(\theta)&
            \ldots&
            \nabla_\theta G_{r(\hat{\theta})}(\theta)
        \end{array}
    \right)\\
    C(\theta^\prime,\lambda;x) & =\sum_{i=1}^{r(\hat{\theta})}\lambda_i\nabla_\theta^2 G_i(T(\theta^\prime))+\nabla_\theta^2\mathcal{L}(T(\theta^\prime);x), \\
    {\rm det}_{\theta^\prime,\lambda,x}&=(-\sigma)^d\det \left(\nabla \phi_x(\theta^\prime,\lambda)\right)=
    \det\left( \begin{array}{c}
                 \left(\mathbb{J}\,T(\theta^\prime)\right) C(\theta^\prime,\lambda;x) \\
                \left( \mathbb{J} G_{1:{r(\hat{\theta})}}( T(\theta^\prime) ) \right)^\T 
            \end{array} \right). 
\end{align*}
The identities that 
$$
\left\{
        \begin{array}{l}
            \theta^\prime =U_{\hat{\theta}}^\T \theta =U_{\hat{\theta}}^\T T(\theta^\prime),\\
            {0}=G_{1:{r(\hat{\theta})}}(\theta)=G_{1:{r(\hat{\theta})}}(T(\theta^\prime)),
        \end{array}
    \right.
$$
implies that 
\begin{equation}\label{equation2}
\left\{
        \begin{array}{l}
            \mathbb{J}\,T(\theta^\prime) U_{\hat{\theta}} =\mathbf{I}_{n-{r(\hat{\theta})}},  \\
            \mathbb{J}\,T(\theta^\prime)\mathbb{J}\,G_{1:{r(\hat{\theta})}}(T(\theta^\prime))={0}_{n-{r(\hat{\theta})},{r(\hat{\theta})}}.
        \end{array}
    \right. 
\end{equation}

In particular, consider the case where $\theta= \hat{\theta}$. In this case, we have $\theta^\prime=U_{\hat{\theta}}^\T \hat{\theta}$ and $T(\theta^\prime)=\hat{\theta}$. Furthermore, by the inverse function theorem, we can use elementary calculus to show that  $\mathbb{J}\,T(\theta^\prime)=U_{\hat{\theta}}^\T $.  
We refer to this fact as Fact 1. 
It then follows that the matrix 
$$\begin{aligned}
\left(\begin{array}{cc}
        \left(\mathbb{J}\,T(\theta^\prime)\right)^\T &\mathbb{J}\,G_{1:{r(\hat{\theta})}}(T(\theta^\prime))
    \end{array}\right)
    =\left(\begin{array}{cccc}
        U_{\hat{\theta}}&\nabla_\theta G_1(T(\theta^\prime))&\ldots&\nabla_\theta G_{r(\hat{\theta})}(T(\theta^\prime))
    \end{array}\right)
\end{aligned}$$
is non-singular. 
We can choose $\varepsilon_3$ (WLOG assumed $\leq \varepsilon_2$) small enough such that $\forall \theta^\prime\in \Theta^\prime_{\hat{\theta},\varepsilon_3}$, the matrix$$\left(\begin{array}{cc}
    \left(\mathbb{J}\,T(\theta^\prime)\right)^\T \;&\; \mathbb{J}\,G_{1:{r(\hat{\theta})}}(T(\theta^\prime))
\end{array}\right)$$
is non-singular due to the continuity of the involved Jacobian functions. 
Denoted as the determinant of this matrix as $D(\theta^\prime)$.

We can now compute 
\begin{equation}\label{eq:express-det-theta-lambda-x}
\begin{aligned}
    {\rm det}_{\theta^\prime,\lambda,x}=&\frac{1}{D(\theta^\prime)}\det\left(\left(
        \begin{array}{c}
             \mathbb{J}\,T(\theta^\prime)C(\theta^\prime,\lambda;x)\\
            \left[\mathbb{J}\,G_{1:{r(\hat{\theta})}}(T(\theta^\prime))\right]^\T 
        \end{array}
    \right)\cdot \left(\begin{array}{cc}
        \left(\mathbb{J}\,T(\theta^\prime)\right)^\T \;&\; 
 \mathbb{J}\,G_{1:{r(\hat{\theta})}}(T(\theta^\prime))
    \end{array}\right)\right)\\
        =&\frac{1}{D(\theta^\prime)}\det\left(
            \begin{array}{cc}
                \mathbb{J}\,T(\theta^\prime)C(\theta^\prime,\lambda;x)\left(\mathbb{J}\,T(\theta^\prime)\right)^\T &\mathbb{J}\,T(\theta^\prime)C(\theta^\prime,\lambda;x)\mathbb{J}\,G_{1:{r(\hat{\theta})}}(T(\theta^\prime))\\
                \left[\mathbb{J}\,G_{1:{r(\hat{\theta})}}(T(\theta^\prime))\right]^\T \left(\mathbb{J}\,T(\theta^\prime)\right)^\T &\left[\mathbb{J}\,G_{1:{r(\hat{\theta})}}(T(\theta^\prime))\right]^\T \mathbb{J}\,G_{1:{r(\hat{\theta})}}(T(\theta^\prime))
            \end{array}
        \right)\\
        =&\frac{1}{D(\theta^\prime)}\det\left(
            \begin{array}{cc}
                \mathbb{J}\,T(\theta^\prime)C(\theta^\prime,\lambda;x)\left(\mathbb{J}\,T(\theta^\prime)\right)^\T &\mathbb{J}\,T(\theta^\prime)C(\theta^\prime,\lambda;x)\mathbb{J}\,G_{1:{r(\hat{\theta})}}(T(\theta^\prime))\\
                {0}_{{r(\hat{\theta})},n-{r(\hat{\theta})}}&\left[\mathbb{J}\,G_{1:{r(\hat{\theta})}}(T(\theta^\prime))\right]^\T \mathbb{J}\,G_{1:{r(\hat{\theta})}}(T(\theta^\prime))
            \end{array}
        \right)\\
        =&\frac{1}{ D(\theta^\prime)}\det\left(\mathbb{J}\,T(\theta^\prime)C(\theta^\prime,\lambda;x)\left(\mathbb{J}\,T(\theta^\prime)\right)^\T \right)\det\left(\left[\mathbb{J}\,G_{1:{r(\hat{\theta})}}(T(\theta^\prime))\right]^\T \mathbb{J}\,G_{1:{r(\hat{\theta})}}(T(\theta^\prime))\right),
    \end{aligned}
\end{equation}
where the third equation follows from \eqref{equation2}.

In particular, consider $\theta^\prime=U_{\hat{\theta}}^\T \hat{\theta}$. 
Fact 1 ensures that  $T(\theta^\prime)=\hat{\theta},\mathbb{J}$ and  $T(\theta^\prime)=U_{\hat{\theta}}^\T $. 
Furthermore, since the minimal active set at $\hat{\theta}$ is $[r(\hat{\theta})]$, it holds that 
$$\mathbb{J}\,G_{1:{r(\hat{\theta})}}(T(\theta^\prime))=\left(
    \begin{array}{ccc}
        \nabla_\theta G_1(\hat{\theta})&\ldots&\nabla_\theta G_{r(\hat{\theta})}(\hat{\theta})
    \end{array}
\right)\in \bbR^{d\times {r(\hat{\theta})}}\textrm{ is of full rank }{r(\hat{\theta})}.$$
Therefore, the matrix $$\left(\left[\mathbb{J}\,G_{1:{r(\hat{\theta})}}(T(\theta^\prime))\right]^\T \mathbb{J}\,G_{1:{r(\hat{\theta})}}(T(\theta^\prime))\right)$$ is non-singular. 

The above reasoning allows us to choose $\varepsilon_4$  (WLOG assumed $\leq \varepsilon_3$) sufficiently small such that $\forall \theta^\prime\in \Theta^\prime_{\hat{\theta},\varepsilon_4}$, the matrix $\left(\left[\mathbb{J}\,G_{1:{r(\hat{\theta})}}(T(\theta^\prime))\right]^\T \mathbb{J}\,G_{1:{r(\hat{\theta})}}(T(\theta^\prime))\right)$ is non-singular.

\textbf{Step 4: Change of variables in the inner integral.}

From this point on, we follow a similar argument as that in \cite[Section B.4]{barber2022testing} to verify the validity of applying change-of-variables formula for integration. 
We choose $\varepsilon(\hat{\theta})$ to be $\varepsilon_4$ defined in Step 3. 
For any measurable set $I\subseteq \Psi_{\hat{\theta},\varepsilon(\hat{\theta})}$, we use \eqref{eq:det-grad-phi} to derive 

$$\begin{aligned}
    &\int_{\bbR^d}\exp\left\{-\frac{d}{2\sigma^2}\left\|\hat{g}(x,w)-\nabla_\theta\mathcal{L}(\hat{\theta}(x,w);x)\right\|^2\right\}{1}_{(x,w)\in \psi^{-1}(I_{{}})}\rmd w\\
    =&\sigma^{-d}\int_{\Theta^\prime_{\hat{\theta},\varepsilon(\hat{\theta})}\times \bbR^{r(\hat{\theta})}}\exp\left\{-\frac{d}{2\sigma^2}\left\|\tilde{g}(\lambda,\theta^\prime)-\nabla_\theta\mathcal{L}(T(\theta^\prime))\right\|^2\right\}\cdot \left|{\rm det}_{\theta^\prime,\lambda,x}\right|\cdot {1}_{(x,\phi_x(\theta^\prime,\lambda))\in \psi^{(-1)}(I)}\rmd \lambda\rmd \theta^\prime.
\end{aligned}$$

From the second order condition of SSOSP (Definition~\ref{SSOSP}), we have 
$$\det\left(\left(U_{T(\theta^\prime)}\right)^\T C(\theta^\prime,\lambda;x)U_{T(\theta^\prime)}\right)>0$$
where $U_{T(\theta^\prime)}$ denotes a matrix whose columns forms an orthonormal basis for the subspace orthogonal to span $\{\nabla_\theta G_i(T(\theta^\prime)),1\leq i\leq {r(\hat{\theta})}\}$. 
According to (\ref{equation2}), we have $\mathbb{J}\,T(\theta^\prime)$ is orthogonal to span $\{\nabla_\theta G_i(T(\theta^\prime)),1\leq i\leq {r(\hat{\theta})}\}$ and has rank $d-{r(\hat{\theta})}$. Therefore, there exists a non-singular matrix $S\in\bbR^{(d-{r(\hat{\theta})})\times (d-{r(\hat{\theta})})}$ such that 
\begin{equation}\label{eq:constrain-JT-in-SU}
\mathbb{J}\,T(\theta^\prime)=S U_{T(\theta^\prime)}^\T,
\end{equation}
which implies that 
$$
\det\left(\mathbb{J}\,T(\theta^\prime)C(\theta^\prime,\lambda;x)\left(\mathbb{J}\,T(\theta^\prime)\right)^\T \right)>0.
$$

Recall that in Step 3, we have showed $\mathbb{J}\,G_{1:{r(\hat{\theta})}}(T(\theta^\prime))$ has rank $r(\hat{\theta})$ for $\forall \theta^\prime\in \Theta_{\hat{\theta},\varepsilon(\hat{\theta})}$. 
From \eqref{eq:express-det-theta-lambda-x},  we have 
\begin{equation}\label{eq:constrain-det-in-JT}
|{\rm det}_{\theta^\prime,\lambda,x}|=\frac{1}{ D(\theta^\prime)}\det\left(\mathbb{J}\,T(\theta^\prime)C(\theta^\prime,\lambda;x)\left(\mathbb{J}\,T(\theta^\prime)\right)^\T \right)\det\left(\left[\mathbb{J}\,G_{1:{r(\hat{\theta})}}(T(\theta^\prime))\right]^\T \mathbb{J}\,G_{1:{r(\hat{\theta})}}(T(\theta^\prime))\right)>0 .
\end{equation}
We can also verify ${1}_{x,\phi_x(\theta^\prime,\lambda)\in\psi^{-1}(I)}={1}_{(x,T(\theta^\prime),\tilde{g}(\lambda,\theta^\prime))\in I}$ for any measurable set $I\subseteq \Psi_{\hat{\theta},\varepsilon(\hat{\theta})}\subseteq\Psi_{\hat{\theta},\infty}$. With this calculation in place, we have
$$\begin{aligned}
    &pr\left\{(X,\hat{\theta}(X,W),\hat{g}(\hat{\theta}(X,W);X,W)\in I\mid (X,\hat{\theta}(X,W),\hat{g}(\hat{\theta}(X,W);X,W))\in \Psi_{\hat{\theta},\infty})\right\}\\
    =& \frac{\int_{\cal X}f(x,\theta_0)\int_{\Theta^\prime_{\hat{\theta},\varepsilon(\hat{\theta})}\times \bbR^{r(\hat{\theta})}}\exp^{-\frac{d}{2\sigma^2}\left\|\tilde{g}(\lambda,\theta^\prime)-\nabla_\theta\mathcal{L}(T(\theta^\prime);x)\right\|^2}\cdot |{\rm det}_{\theta^\prime,\lambda,x}|\cdot {1}_{(x,T(\theta^\prime),\tilde{g}(\lambda,\theta^\prime))\in I}\rmd \lambda\rmd \theta^\prime\rmd \nu_{\cal X}(x)}{ \sigma^d\int_{{\cal X}\times \bbR^d}h_{\theta_0}(x^\prime,w^\prime)\rmd \nu_{\cal X}(x^\prime)\rmd w^\prime}.
\end{aligned}$$

In particular, this verifies that 

$$\frac{f(x,\theta_0)\exp^{-\frac{d}{2\sigma^2}\left\|\tilde{g}(\lambda,\theta^\prime)-\nabla_\theta\mathcal{L}(T(\theta^\prime);x)\right\|^2}\cdot |{\rm det}_{\theta^\prime,\lambda,x}|\cdot {1}_{(x,T(\theta^\prime),\tilde{g}(\lambda,\theta^\prime))\in \Psi_{\hat{\theta},\infty}}
}{ \sigma^d\int_{{\cal X}\times \bbR^d}h_{\theta_0}(x^\prime,w^\prime)\rmd \nu_{\cal X}(x^\prime)\rmd w^\prime}$$

is the joint density of $(X,\theta^\prime,\lambda)$ 
when restricted to the region $(X,T(\theta^\prime),\tilde{g}(\lambda,\theta^\prime))\in \Psi_{\hat{\theta},\varepsilon(\hat{\theta})}$, 
conditional on the event $(X,T(\theta^\prime),\tilde{g}(\lambda,\theta^\prime))\in \Psi_{\hat{\theta},\infty}$.
Therefore, the conditional density of $X\mid {\theta}^\prime,\lambda$ (again restricted to the same region and conditioning on the same event) can be written as

$$\propto f(x,\theta_0)\exp\left\{-\frac{d}{2\sigma^2}\left\|\tilde{g}(\lambda,\theta^\prime)-\nabla_\theta\mathcal{L}(T(\theta^\prime))\right\|^2\right\}\cdot |{\rm det}_{\theta^\prime,\lambda,x}|\cdot {1}_{(x,T(\theta^\prime),\tilde{g}(\lambda,\theta^\prime))\in \Psi_{\hat{\theta},\varepsilon(\hat{\theta})}},$$
where we only retain the terms that involve $x$ and view the other terms as constants. 

Recall Fact 1 in Step 3 ensures that when $\theta^\prime=U_{\hat{\theta}}^\T \hat{\theta}$, 
we have $\mathbb{J}\,T(\theta^\prime)=U_{\hat{\theta}}^\T $ and $T(\theta^\prime)=\hat{\theta}$. 
Consequently, \eqref{eq:constrain-det-in-JT} implies that 
$$\begin{aligned}
    |{\rm det}_{\theta^\prime,\lambda,x}|&=\frac{1}{|D(\theta^\prime)|}\det\left(\mathbb{J}\,T(\theta^\prime)C(\theta^\prime,\lambda;x)\left(\mathbb{J}\,T(\theta^\prime)\right)^\T \right)\det\left(\left[\mathbb{J}\,G_{1:{r(\hat{\theta})}}(T(\theta^\prime))\right]^\T \mathbb{J}\,G_{1:{r(\hat{\theta})}}(T(\theta^\prime))\right)\\
    & =\frac{1}{|D(\theta^\prime)|}\det\left( S U_{T(\theta^\prime)}^\T C(\theta^\prime,\lambda;x)\left( S U_{T(\theta^\prime)}^\T \right)^\T \right)\det\left(\left[\mathbb{J}\,G_{1:{r(\hat{\theta})}}(T(\theta^\prime))\right]^\T \mathbb{J}\,G_{1:{r(\hat{\theta})}}(T(\theta^\prime))\right)\\
     & =\frac{1}{|D(\theta^\prime)|}\det(S )^2 \det\left( U_{T(\theta^\prime)}^\T \left[\nabla_\theta^2\mathcal{L}(\hat{\theta};x)+\sum_{i=1}^r\lambda_i \nabla_\theta^2 G_i(\hat{\theta})\right] U_{T(\theta^\prime)} \right)\det\left(\left[\mathbb{J}\,G_{1:{r(\hat{\theta})}}(T(\theta^\prime))\right]^\T \mathbb{J}\,G_{1:{r(\hat{\theta})}}(T(\theta^\prime))\right)\\
    &\propto \det\left(U_{\hat{\theta}}^\T \left[\nabla_\theta^2\mathcal{L}(\hat{\theta};x)+\sum_{i=1}^r\lambda_i \nabla_\theta^2 G_i(\hat{\theta})\right]U_{\hat{\theta}}\right),
\end{aligned}$$
where the second equation is due to \eqref{eq:constrain-JT-in-SU}, the third equation is due to the definition of $C(\theta^\prime,\lambda;x)$ and the invariance of SSOSP w.r.t.  Lagrange multipliers (see Section~\ref{invariance-to-multipliers}), and in the last equation we have dropped the constants that do not involve $x$.

By definition, $(x,\hat{\theta},\hat{g})\in \Psi_{\hat{\theta},\varepsilon(\hat{\theta})}$ if and only if $x\in {\cal X}_{\hat{\theta},\hat{g}}$. Moreover, $(\theta^\prime,\lambda)$ uniquely determines $(\theta,g)$ as discussed before. Therefore, the conditional density of $X\mid \hat{\theta},\hat{g}$ is proportional to 
$$\begin{aligned}
    p_{\theta_0}(\cdot \mid \hat{\theta},\hat{g})\propto f(x;\theta_0)\cdot &\exp\Biggl\{-\frac{d}{2\sigma^2} \left\|\hat{g}-\nabla_\theta \mathcal{L}(\hat{\theta};x)\right\|^2\Biggr\}\\
    &\cdot\det\left(U_{\hat{\theta}}^\T \left[\nabla_\theta^2\mathcal{L}(\hat{\theta};x)+\sum_{i=1}^r\lambda_i \nabla_\theta^2 G_i(\hat{\theta})\right]U_{\hat{\theta}}\right)\cdot {1}_{x\in \mathcal{X}_{\hat{\theta},\hat{g}}}
\end{aligned}$$
\end{proof}
\subsection{Proof of Lemma \ref{lemma penalized}}\label{pf:lemma penalized}

Since ${\cal A}(\theta)$ takes values in the power set of $[d]$, which is finite, we can enumerate each subset $A\subset [d]$ and consider the event $\mathcal{E}_{A}$ where $\hat{\theta}(X,W)$ is an SSOSP of \eqref{grouppenalty} and ${\cal A}(\hat{\theta}(X,W)) = A$. 
Furthermore, we can focus on those subsets such that $\mathcal{E}_{A}$ has positive probability.

Given $A$,  we introduce the following notations:
$$\begin{aligned}
    {\Theta}_{A}&=\left\{\theta\in \Theta:{\cal A}(\theta)=A\right\}, \\
    \Omega_{A}&=\left\{(x,w)\in\mathcal{X}\times \bbR^d: \hat{\theta}(x,w)\textrm{ is an SSOSP of (\ref{grouppenalty}), and } {\cal A}(\hat{\theta}(x,w))=A\right\}, \\
    \Psi_{A}&=\left\{(x,\theta,g)\in\mathcal{X}\times  \Theta_{A}\times \bbR^d:\exists\,w\in \bbR^d \textrm{ such that }\right.\\
    &\qquad \qquad\qquad\qquad\qquad\qquad\quad\left.\theta=\hat{\theta}(x,w) \text{ is an SSOSP of (\ref{grouppenalty}), and } g=\hat{g}(\hat{\theta},x,w)
    \right\}. 
\end{aligned}$$

Suppose $\mathcal{E}_{A}$ happens. 
Our goal is to derive the conditional density of $X\mid \hat{\theta},\hat{g}$. 
We remark that $\mathcal{E}_{A}$ is in the sigma-field generated by $\hat{\theta}(X,W)$, so this conditional density is the same as the conditional density $X\mid \hat{\theta},\hat{g}$ under the probability measure defined by conditioning on $\mathcal{E}_{A}$. 

Consider the joint distribution of $ (X,W)\sim P_{\theta_0}\times N\left(0,I_d/d\right)$. 
Conditioning on the event $\mathcal{E}_{A}$, the joint density of $(X,W)$ is proportional to the function 
$$h_{\theta_0}(x,w)=f(x;\theta_0)\exp\left\{-\frac{d}{2}\|w\|^2\right\}{1}_{(x,w)\in \Omega_{A}}. $$

Define the map $\psi_{A}$ from $\Omega_{A}$ to $\Psi_{A}$ as
$$\psi_{A}:(x,w)\to \left(x,\hat{\theta}(x,w),\hat{g}\left(x,w\right)\right). $$
By a result analogous to Lemma \ref{lemma bijection}, we can see that $\psi_{A}$ is a bijection between $\Omega_{A}$ and $\Psi_{A}$ with inverse 
$$\psi_{A}^{-1}:(x,\theta,g)\to \left(x,\frac{g-\nabla_\theta\mathcal{L}(\theta;x)}{\sigma}\right).$$

For any measurable set $I\subseteq \Psi_{A}$, define 
$$\psi_{A}^{-1}(I)=\left\{(x,w)\in \Omega_{A}:\psi_{A}(x,w)\in I\right\}. $$

Following a similar calculation in Step 1 in Section \ref{proof:Lemma1}, we have
\begin{equation}\label{eq:lemma-conditional-density-penalize}
    \begin{aligned}
    &\text{pr}\left\{(X,\hat{\theta}(X,W),\hat{g}(X,W)\in I\mid (X,\hat{\theta}(X,W),\hat{g}(X,W))\in \Psi_{A})\right\}\\
    = &\frac{ \int_{\cal X}f(x,\theta)\int_{\bbR^d}\exp\left\{-\frac{d}{2\sigma^2}\left\|\hat{g}\left(x,w\right)-\nabla_\theta{\cal L}(\hat{\theta}(x,w);x)\right\|^2\right\}{1}_{(x,w)\in \psi_{A}^{-1}(I)}\rmd w\rmd \nu_{{\cal X}}(x)}{ \int_{{\cal X}\times \bbR^d}h_{\theta_0}(x^\prime,w^\prime)\rmd \nu_{\cal X}(x^\prime)\rmd w^\prime}.
\end{aligned}
\end{equation}

Next, we need to reparameterize $\hat{\theta}$ and $\hat{g}$. 
According to the grouping $G$, we partition the coordinates of the function $g(\theta)$ into $J$ groups as $\left(
    g_1(\theta)\quad\cdots\quad g_J(\theta)
   \right)^\T$. Note that for $\theta$ such that ${\cal A}(\theta)=A$, we have 
   $$\theta_{G_j}={0},\forall j\notin A ,\quad\quad  g_j(\theta)=\rho^\prime_{\lambda}\left(\left\|\theta_{G_j}\right\|\right)\frac{\theta_{G_j}}{\left\|\theta_{G_j}\right\|},\forall j\in A.$$ 
WLOG, by reordering the coordinates, we can assume for some $k\leq J$ and $m\leq d$, it holds that $A=[k]$ and the corresponding index set is $S(A)=[m]$. 
Recall the notation ${I}_{d,j}$ for the $d$-by-$d_{G_j}$ matrix formed by extracting the columns indexed by group $j$ from the $d$-by-$d$ identity matrix. 
Using this notation, we further define 
$${I}_{d,[k]}=\left(\begin{array}{ccc}
    {I}_{d,1}&\cdots&{I}_{d,k}
\end{array}\right)\in\bbR^{d\times m},{I}_{d,[J]\backslash [k]}=\left(\begin{array}{ccc}
    {I}_{d,k+1}&\cdots&{I}_{d,J}
\end{array}\right)\in\bbR^{d\times (d-m)}.$$
These notations enable the following expressions: 
for any $(x,\theta,g)\in \Psi_{A}$, we have
$$
\begin{aligned}
\theta & =\sum_{j=1}^k{I}_{d,j}\theta_{G_j} =: T(\theta_{[m]}),\\
g & =-\sum_{j=1}^k \frac{\rho^\prime_\lambda(\left\|\theta_{G_j}\right\|)}{\left\|\theta_{G_j}\right\|}{I}_{d,j}\theta_{G_j}+\sum_{j=k+1}^J {I}_{d,j} g_j =: \tilde{g}(\theta_{[m]},g_{[d]\backslash [m]}).
\end{aligned}
$$
These expressions show that $\theta^\prime:=\theta_{[m]}$ and $g^\prime:=g_{[d]\backslash [m]}$ parametrized $({\theta},{g})$ provided that $(x,\theta,g)\in \Psi_{A}$. 
Furthermore, since $w=\frac{g-\nabla_\theta\mathcal{L}(\theta;x)}{\sigma}$, we can write it as 
$$w=\frac{1}{\sigma}\left[\tilde{g}(\theta^\prime,g^\prime)-\nabla_\theta {\cal L}\left(T(\theta^\prime);x\right)\right] =: \phi_x(\theta^\prime,g^\prime).$$

Our goal is to derive the Jacobian for changing the variable $w$ to $(\theta^\prime,g^\prime)$. 
It is easy to see that 
$$
 \nabla_{g^\prime}\phi_x(\theta^\prime,g^\prime)=\frac{1}{\sigma}{I}_{d,[J]\backslash [k]}^\T. 
$$

In order to derive $\nabla_{\theta^\prime}\phi_x(\theta^\prime,g^\prime)$, we first derive 
$$
\nabla_{\theta_{G_j}}\left[\nabla_\theta {\cal L}\left(\sum_{j=1}^k{I}_{d,j}\theta_{G_j};x\right)\right]
=
{I}_{d,j}^\T 
    \nabla_\theta^2 \mathcal{L}\left(\sum_{j=1}^k{I}_{d,j}\theta_{G_j} ; x\right), 
    $$
and for any $1\leq j\leq k$, 
$$
\begin{aligned}
\nabla_{\theta_{G_j}}\left[\frac{\rho^\prime_{\lambda}\left(\left\|\theta_{G_j}\right\|\right)}{\left\|\theta_{G_j}\right\|}{I}_{d,j}\theta_{G_j}\right]
&=
\rho_\lambda^\prime \left(\left\|\theta_{G_j}\right\|\right) \left[\frac{{I}_{d_{G_j}} }{\left\|\theta_{G_j}\right\|}-\frac{\theta_{G_j}\left[\theta_{G_j}\right]^\T}{\left\|\theta_{G_j}\right\|^3}\right]{I}_{d,j}^\T+\rho''\left(\left\|\theta_{G_j}\right\|\right)\frac{\theta_{G_j}}{\left\|\theta_{G_j}\right\|}\left[\frac{\theta_{G_j}}{\left\|\theta_{G_j}\right\|}\right]^\T{I}_{d,j}^\T .
\end{aligned}$$
We can express $\nabla_{\theta^\prime}\phi_x(\theta^\prime,g^\prime)$ using the above two equations. 
For simplicity, we write 
$$\begin{aligned}
    G(\theta^\prime;x)=&
    \left[\nabla_\theta^2 \mathcal{L}\left(\sum_{j=1}^k{I}_{d,j}\theta_{G_j} ; x\right)\right]_{[m]}\\
    &+{\rm diag}\left\{\rho_\lambda^\prime\left(\left\|\theta_{G_j}\right\|\right)\left[\frac{{I}_{d_{G_j}}}{\left\|\theta_{G_j}\right\|}-\frac{\theta_{G_j}\left[\theta_{G_j}\right]^\T}{\left\|\theta_{G_j}\right\|^3}\right]+\rho''\left(\left\|\theta_{G_j}\right\|\right)\frac{\theta_{G_j}}{\left\|\theta_{G_j}\right\|}\left[\frac{\theta_{G_j}}{\left\|\theta_{G_j}\right\|}\right]^\T,j\in [k]\right\}.
\end{aligned}$$

Consequently, we have 
$$
    \nabla_{\theta^\prime}\phi_x(\theta^\prime,g^\prime)=-\frac{1}{\sigma}G(\theta^\prime;x){I}_{d,[k]}^\T. 
$$

Note that 
$$\left(\begin{array}{cc}
    {I}_{d,[k]}&{I}_{d,[J]\backslash [k]}
\end{array}\right)={I}_d\Rightarrow \left(\begin{array}{c}
    {I}_{d,[k]}^\T\\{I}_{d,[J]\backslash [k]}^\T
\end{array}\right)\left(\begin{array}{cc}
    {I}_{d,[k]}&{I}_{d,[J]\backslash [k]}
\end{array}\right)={I}_d.$$
Thus, we can obtain ${I}_{d,[k]}^\T{I}_{d,[k]}={I}_{m}$, ${I}_{d,[J]\backslash [k]}^\T{I}_{d,[J]\backslash [k]}={I}_{d-m}$, and ${I}_{d,[k]}^\T{I}_{d,[J]\backslash [k]}={0}$. 
Using these equations, we compute the factor for the change of variable from $w$ to $(\theta^\prime,g^\prime)$ as follows: 
\begin{align*}
        \det \left(\nabla \phi_x(\theta^\prime,g^\prime)\right)&= \det \left(
            \begin{array}{c}
                \nabla_{\theta^\prime}\phi_x(\theta^\prime,{g^\prime})\\
                \nabla_{g^\prime}\phi_x(\theta^\prime,{g^\prime})
            \end{array}
        \right)
    \\
        &=\det \left(\left(
            \begin{array}{c}
                \nabla_{\theta^\prime}\phi_x(\theta^\prime,{g^\prime})\\
                \nabla_{{g^\prime}}\phi_x(\theta^\prime,{g^\prime})
            \end{array}
        \right)\left(\begin{array}{cc}
            {I}_{d,[k]}&{I}_{d,[J]\backslash [k]}
        \end{array}\right)\right)\\
        &=\frac{1}{(-\sigma)^d}\det\left(
            \begin{array}{cc}
                G(\theta^\prime;x){I}_{d,[k]}^\T{I}_{d,[k]}&G(\theta^\prime;x){I}_{d,[k]}^\T{I}_{d,[J]\backslash [k]}\\
                {I}_{d,[J]\backslash [k]}^\T{I}_{d,[k]}& {I}_{d,[J]\backslash [k]}^\T{I}_{d,[J]\backslash [k]}
            \end{array}
        \right)\\
        &=\frac{1}{(-\sigma)^d}\det(G(\theta^\prime;x)). 
\end{align*}

From this point on, we follow a similar argument as that in \cite[Section B.4]{barber2022testing} to verify the validity of applying change-of-variables formula for integration. 
Using $w=\phi_x(\theta^\prime,g^\prime)$, 
we rewrite \eqref{eq:lemma-conditional-density-penalize} as follows: 
$$\begin{aligned}
&\int_{\bbR^d}\exp\left\{-\frac{d}{2\sigma^2}\left\|\hat{g}\left(x,w\right)+\nabla_\theta{\cal L}(\hat{\theta}(x,w);x)\right\|^2\right\}{1}_{(x,w)\in \psi_{A}^{-1}(I)}\rmd w\\
=&\int_{\bbR^{m}\times \bbR^{d-m}}\exp\left\{-\frac{d}{2\sigma^2}\left\|g_{s}(\theta^\prime,g^\prime)+\nabla_\theta{\cal L}(T(\theta^\prime,g^\prime);x)\right\|^2\right\}\\
&\qquad \qquad \qquad\qquad  \cdot \frac{1}{\sigma^d}\det(G(\theta^\prime;x))\cdot {1}_{(x,\phi_x(\theta^\prime,g^\prime))\in \psi^{(-1)}(I)}\rmd g^\prime\rmd \theta^\prime
\end{aligned},$$
where the positiveness of $\det(G(\theta^\prime;x))$ is guaranteed by  the second order condition of Assumption~\ref{definition for penalty}.

We can also verify ${1}_{(x,\phi_x(\theta^\prime,g^\prime))\in \psi^{(-1)}(I)}={1}_{(x,T(\theta^\prime),\tilde{g}(\theta^\prime,g^\prime))\in I}$. With this calculation in place, we then have 

$$\begin{aligned}
&\text{pr}\left\{(X,\hat{\theta}(X,W),\hat{g}(X,W)\in I\mid (X,\hat{\theta}(X,W),\hat{g}(X,W))\in \Psi_{A})\right\}\\
=& \frac{ \int_{\cal X}f(x,\theta_0)\int_{\bbR^m\times\bbR^{d-m}}\exp^{-\frac{d}{2\sigma^2}\left\| \tilde{g}(\theta^\prime,g^\prime) +\nabla_\theta{\cal L}(T(\theta^\prime);x)\right\|^2}\cdot \det(G(\theta^\prime;x))\cdot {1}_{(x,T(\theta^\prime),\tilde{g}(\theta^\prime,g^\prime))\in I}\rmd g^\prime\rmd \theta^\prime\rmd \nu_{\cal X}}{ \sigma^d\int_{{\cal X}\times \bbR^d}h_{\theta_0}(x^\prime,w^\prime)\rmd \nu_{\cal X}(x^\prime)\rmd w^\prime}.
\end{aligned}$$
In particular, we can verify that 
$$\frac{  f(x,\theta_0)\exp\left\{-\frac{d}{2\sigma^2}\left\| \tilde{g}(\theta^\prime,g^\prime)+\nabla_\theta{\cal L}(T(\theta^\prime);x)\right\|^2\right\}\cdot \det(G(\theta^\prime;x))\cdot {1}_{(x,T(\theta^\prime),\tilde{g}(\theta^\prime,g^\prime))\in I}}{ \sigma^d\int_{{\cal X}\times \bbR^d}h_{\theta_0}(x^\prime,w^\prime)\rmd \nu_{\cal X}(x^\prime)\rmd w^\prime}$$
is the joint density of $(X,\theta^\prime,g^\prime)=(X,{I}_{d,[k]}^\T\hat{\theta}(X,W),{I}_{d,[J]\backslash [k]}^\T\hat{g}(X,W))$, 
conditional on the event $\mathcal{E}_{A}$. 
Therefore, the conditional density of $X\mid \theta^\prime,g^\prime$ (again conditioning on this same event) can be written as 

$$\propto f(x;\theta_0)\exp\left\{-\frac{d}{2\sigma^2}\left\| \tilde{g}(\theta^\prime,g^\prime)+\nabla_\theta{\cal L}(T(\theta^\prime);x)\right\|^2\right\}\cdot \det(G(\theta^\prime;x))\cdot {1}_{(x,T(\theta^\prime),\tilde{g}(\theta^\prime,g^\prime))\in I}$$

Moreover, $(\theta^\prime,g^\prime)$ uniquely determines ${\theta}$ and ${g}$ on the event $\mathcal{E}_{A}$, so we can equivalently condition on $(\hat{\theta}(X,W),\hat{g}(X,W))$ and express the conditional density as 
$$f\left(x ; \theta_0\right)\cdot \exp \left\{-\frac{\left\|\hat{g}+\nabla_\theta \mathcal{L}(\hat{\theta} ; x)\right\|^2}{2 \sigma^2 / d}\right\} 
\cdot F_{\rm pen}(\hat{\theta};x)\cdot {1}_{x \in {\mathcal{X}}_{\hat{\theta}, \hat{g}}},$$
where $F_{\rm pen}$ is defined in \eqref{F_pen}.
\subsection{Proof of Lemma \ref{lem:constrain}}\label{pf:lem:constrain}

Recall the function $\hat{g}(x, w)=\nabla_\theta {\cal L}(\hat{\theta}(x,w);x,w)$. 
For $\theta^*\in \bbR^d$ and ${\cal T}\subseteq [d]$, we introduce the following notations:
\begin{align*}
    \Theta_{\theta^*}&=\left\{\theta\in \bbR^d:{\rm sign}(\theta)={\rm sign}(\theta^*),\left\|\theta\right\|_p^p=R_p\right\}, \\
    \Omega_{\theta^*}&=\left\{(x,w)\in\mathcal{X}\times \bbR^d: \hat{\theta}(x,w)\textrm{ is an }\textrm{SSOSP of (\ref{MLE}), and } \hat{\theta}(x,w)\in \Theta_{\theta^*}\right\}, \\
    \Psi_{\theta^*}&=\left\{(x,\theta,g)\in\mathcal{X}\times  \Theta_{\theta^*}\times \bbR^d:\exists\,\right.w\in \bbR^d \textrm{ such that }\\
        &\qquad \qquad\qquad\qquad\qquad\qquad\quad\left.\theta=\hat{\theta}(x,w) \text{ is an SSOSP of (\ref{l_0 problem}), and } g=\hat{g}(x,w)
        \right\},  \\
    \Theta_{\cal T}&=\left\{\theta\in \bbR^d:{\rm supp}(\theta)={\cal T},\left\|\theta\right\|_p^p=R_p\right\},\\
    \Omega_{\cal T}&=\left\{(x,w)\in\mathcal{X}\times \bbR^d: \hat{\theta}(x,w)\textrm{ is a}\textrm{SSOSP of (\ref{MLE}), and } \hat{\theta}(x,w)\in \Theta_{\cal T}\right\}, \\
    \Psi_{\cal T}&=\left\{(x,\theta,g)\in\mathcal{X}\times  \Theta_{\cal T}\times \bbR^d:\exists\,\right.w\in \bbR^d \textrm{ such that }\\
        &\qquad \qquad\qquad\qquad\qquad\qquad\quad\left.\theta=\hat{\theta}(x,w) \text{ is an SSOSP of (\ref{l_0 problem}), and } g=\hat{g}(x,w)
        \right\}.
\end{align*}

By a result analogous to Lemma \ref{lemma bijection}, there exists a bijection between \(\Omega_{\cal T}\) and \(\Psi_{\cal T}\), defined by the mapping \(\psi : (x, w) \mapsto (x, \hat{\theta}(x, w), \hat{g}(x, w))\) and its inverse given by 
\begin{equation}\label{eq:constrain-psi-inv}
\psi^{-1} : (x, \frac{g - \nabla_\theta \mathcal{L}(\theta; x)}{\sigma}).
\end{equation}

First, we consider the case where $p>0$ and $\|\hat{\theta}\|_p^p < R_p$. 
In this case, 
    the boundary constraints would not influence the optimization problem defined in ( \ref{l_0 problem}). Consequently, the problem can be reformulated as: 
    \begin{equation*}
        \hat{\theta}(X, W) = \argmin_{\|\theta\|_p^p < R_p} \mathcal{L}(\theta; X, W) = \argmin_{\|\theta\|_p^p < R_p} -\log f(X; \theta) + \sigma W^\T \theta.
    \end{equation*}
    Since the set \(\left\{\theta\in \bbR^d: \|\theta\|_p^p < R_p \right\}\) is open, we can directly apply  \cite[Lemma 1]{barber2022testing} to derive the conditional density  \(p_{\theta_0}(\cdot \mid \hat{\theta})\), which is the same as the density specified in ( \ref{true density l_0}) since  \(\hat{g} \equiv \mathbf{0}\) and \(\lambda = 0\).

The rest of the proof focuses on the case where   \(\|\hat{\theta}\|_p^p = R_p\). 
    Consider the joint distribution  $ (X,W)\sim P_{\theta_0}\times {N}\left(0,{I}_d/d\right)$. 
    Fix a subset $\mathcal{T}\subseteq [d]$, and let $\mathcal{E}_{\mathcal{T}}$ be the event that $(X,W)\in \Omega_{{\cal T}}$ and $\|\hat{\theta}\|^2=R_{p}$. 
    
    By assumption in the lemma, the event  $(X,W)\in \Omega_{{\cal T}}$ has positive probability.  
    The joint conditional density of  $(X,W)$ given that the event  $\mathcal{E}_{\mathcal{T}}$ happens is proportional to the function  
    $$h_{\theta_0}(x,w)=f(x;\theta_0)\exp\left\{-\frac{d}{2}\|w\|^2\right\}{1}_{(x,w)\in \Omega_{\cal T}}.$$

    Recall that $\psi$ is a bijection between \(\Omega_{\cal T}\) and \(\Psi_{\cal T}\). For any measurable set  $I\subseteq \Psi_{\cal T}$, define 
    $$\psi^{-1}(I)=\left\{(x,w)\in \Omega_{\cal T}:\psi(x,w)\in I\right\}.$$

    Then, we calculate 
    \begin{equation}\label{eq: integration-l_p}
\begin{aligned}
            &\text{pr}\left\{(X,\hat{\theta}(X,W),\hat{g}(X,W))\in I\mid (X,\hat{\theta}(X,W),\hat{g}(X,W)\in \Psi_{\cal T})\right\}\\
            =&\text{pr}\left\{(X,W)\in \psi^{-1}(I)\mid (X,W)\in \Omega_{\cal T}\right\}\\
            = &\frac{ \int_{\psi^{-1}(I)}h_{\theta_0}(x,w)\rmd \nu_{{\cal X}}(x)\rmd w}{ \int_{{\cal X}\times \bbR^d}h_{\theta_0}(x^\prime,w^\prime)\rmd \nu_{\cal X}(x^\prime)\rmd w^\prime}\\
            = &\frac{ \int_{\psi^{-1}(I)}f(x;\theta_0)\exp\left\{-\frac{d}{2}\left\|w\right\|^2\right\}{1}_{(x,w)\in \Omega_{{\cal T}}}\rmd \nu_{{\cal X}}(x)\rmd w}{ \int_{{\cal X}\times \bbR^d}h_{\theta_0}(x^\prime,w^\prime)\rmd \nu_{\cal X}(x^\prime)\rmd w^\prime}\\
            = &\frac{ \int_{\psi^{-1}(I)}f(x;\theta_0)\exp\left\{-\frac{d}{2\sigma^2}\left\|\hat{g}(x,w)-\nabla_\theta\mathcal{L}(\hat{\theta}(x,w);x)\right\|^2\right\}{1}_{(x,w)\in \Omega_{{\cal T}}}\rmd \nu_{{\cal X}}(x)\rmd w}{ \int_{{\cal X}\times \bbR^d}h_{\theta_0}(x^\prime,w^\prime)\rmd \nu_{\cal X}(x^\prime)\rmd w^\prime}\\
            = &\frac{ \int_{\cal X}f(x;\theta_0)\int_{\bbR^d}\exp\left\{-\frac{d}{2\sigma^2}\left\|\hat{g}(x,w)-\nabla_\theta\mathcal{L}(\hat{\theta}(x,w);x)\right\|^2\right\}{1}_{(x,w)\in \psi^{-1}(I)}\rmd w\rmd \nu_{{\cal X}}(x)}{ \int_{{\cal X}\times \bbR^d}h_{\theta_0}(x^\prime,w^\prime)\rmd \nu_{\cal X}(x^\prime)\rmd w^\prime},
\end{aligned}
\end{equation}
    where the last step holds since $\psi^{-1}(I)\subseteq \Omega_{{\cal T}}$.

As in Step 2 in Section~\ref{proof:Lemma1}, for any given $(x,w)\in \psi^{-1}(I)$ and $(\theta, g)=(\hat{\theta}(x,w), \hat{g}(x,w))$, we should find a reparameterization of $\theta$ and $g$. 
    
    WLOG, we can assume \(\mathcal{T} =\mathcal{T}(\hat{\theta})= [k]\) for some integer \(1 \leq k \leq d\) and \(\hat{\theta}_i > 0\) for all \(i = 1, \ldots, k\). 
    For convenience, we define \({I}_{d,k}\) to represent the \(d \times k\) submatrix formed by the first \(k\) columns of the identity matrix \({I}_d\), and \({I}_{d,d-k}\) to denote the \(d \times (d-k)\) submatrix formed by the remaining \(d-k\) columns of \({I}_d\). 

\textbf{Case 1:  $p>0$ and $k>1$. } \newline
For $(x,\theta,g)\in \Psi_{\cal T}$, according to the definitions of SSOSP and $\Psi_{\mathcal{T}}$, we have 
\begin{equation}\label{eq:lp-constrain-theta-g-simplified}
 \begin{aligned}
      \theta&={I}_{d,k}\theta_{[k]}, \\
       g&=-\lambda s(\theta)+{I}_{d,d-k}\cdot g_{[d]\backslash [k]}\\
       &=\begin{bmatrix}
            -\lambda s(\theta_{[k]})  \\
            g_{[d]\backslash [k]}
       \end{bmatrix}. 
 \end{aligned}  
\end{equation}
    where $s(\theta)\in \bbR^{d}$ with $$s(\theta)_i={\rm sign}(\theta_i)\cdot p\cdot |\theta_i|^{p-1}.$$
We emphasize that here $s(\theta)$ is a well-defined function where $s(\theta)_i=0$ if $\theta_i=0$; this should not be confused with the same as the vector $(s_j)$ defined in the KKT condition \eqref{l_p KKT}.

If $k>1$, due to the constraint $\left\|\theta\right\|_p^p = R_p$, we can further reduce $\theta_{[k]}$. 
Specifically, when restricting \((x, \theta, g)\) to \(\Psi_{\hat{\theta}} \subseteq \Psi_{\cal T}\), the constraint \(\left\|\theta\right\|_p^p = R_p\) and the simplifying assumption that $\theta_i>0,\forall 1\leq i\leq k$ together enable the expression of \(\theta_k\) in terms of \(\theta_{[k-1]}\) as:
\[
\theta_{k} = \left(R_p - \sum_{i=1}^{k-1} \theta_i^p\right)^{1/p}.
\]
For convenience, we define 
\begin{equation}\label{eq:constrain-F-def}
F(\theta_{[k-1]})=\left(R_p - \sum_{i=1}^{k-1} \theta_i^p\right)^{1/p}, \text{ for } \|\theta_{[k-1]}\|_{p}<R_p.
\end{equation}
Thus, when \((x, \theta, g) \in \Psi_{\hat{\theta}}\), the parameters \(\theta^\prime=\theta_{[k-1]}\in \mathbb{R}^{k-1}\), \(\lambda\in \mathbb{R}\), and \(g^\prime = \hat{g}_{[d] \backslash [k]}\in \bbR^{d-k}\) form a reparameterization of \(\theta\) and \(g\):
\begin{equation}\label{eq:constrain-theta-g-repar}
\begin{aligned}
    \theta &= {I}_{d,k}\left(\theta^\prime,F(\theta^\prime)\right)^\T =: T(\theta^\prime);\\
     g &=-\lambda s(T(\theta^\prime))+{I}_{d,d-k}\cdot g^\prime 
=: \tilde{g}\left(\theta^\prime,\lambda,g^\prime\right).
\end{aligned}
\end{equation}

Note that $g=\hat{g}(x,w)=\nabla_\theta {\cal L}(\theta;x,w)=\nabla_\theta {\cal L}(\theta;x)+\sigma w$, or equivalently, $w=(g-\nabla_\theta {\cal L}(\theta;x))/\sigma$. 
Combining with \eqref{eq:constrain-theta-g-repar}, we can write
$w=\phi_x(\theta^\prime,\lambda,g^\prime)$, where 
$$\phi_x(\theta^\prime,\lambda,g^\prime)=\frac{1}{\sigma}\left[\tilde{g}\left(\theta^\prime;\lambda;g^\prime\right)-\nabla_\theta {\cal L}(T(\theta^\prime);x)\right].$$

For $F$ defined in \eqref{eq:constrain-F-def}, we can calculate 
$$
\nabla_{\theta^\prime} F =
         -\left(\frac{\theta^\prime}{F(\theta^\prime)}\right)^{p-1},$$
         where $(\theta^\prime)^{p-1}:=( \theta_j^{p-1})_{j\in [k-1]}$ (recall that we have assumed $\theta_j>0$ for $j\leq k$). 
We can then use \eqref{eq:lp-constrain-theta-g-simplified} to derive
\begin{align*}
    \nabla_{\theta^\prime}\phi_x&=-\frac{1}{\sigma}\nabla_{\theta^\prime}T \cdot \left[\lambda p(p-1){\rm diag}\left\{\left(\theta^\prime\right)^{p-2},\left(F(\theta^\prime)\right)^{p-2},{0}\right\}+\nabla^2_\theta {\cal L}(T(\theta^\prime);x)\right]\\
    &=-\frac{1}{\sigma}\left(
        {I}_{k-1},\,- \left(\frac{\theta^\prime}{F(\theta^\prime)}\right)^{p-1}
    \right){I}_{d,k}^\T\left[\lambda p(p-1){\rm diag}\left\{\left(\theta^\prime\right)^{p-2},\left(F(\theta^\prime)\right)^{p-2},{0}\right\}+\nabla^2_\theta {\cal L}(T(\theta^\prime);x)\right],\\
    \nabla_\lambda \phi_x&=-\frac{1}{\sigma}s(T(\theta^\prime)),\\
    \nabla_{g^\prime}\phi_x&=-\frac{1}{\sigma}{I}_{d,d-k}^\T.
\end{align*}

Since ${I}_{d,d-k}^\T {I}_{d,k}=\mathbf{0}_{d-k,k}$, we have
\begin{align*}
    \det\left(\nabla \phi_x(\theta^\prime,\lambda,g^\prime)\right)&=\det\left(\left(\begin{array}{ccc}
        \nabla_{\theta^\prime}\phi_x\\
        \nabla_{\lambda}\phi_x\\
     -\frac{1}{\sigma}{I}_{d,d-k}^\T
    \end{array}\right)\right)\\
    &=\det\left(\left(\begin{array}{ccc}
        \nabla_{\theta^\prime}\phi_x\\
        \nabla_{\lambda}\phi_x\\
     -\frac{1}{\sigma}{I}_{d,d-k}^\T
    \end{array}\right)\left({I}_{d,k},{I}_{d,d-k}\right)\right)\\
    &=\det\left(
        \begin{array}{cc}
            \left(\left(\begin{array}{cc}
                \nabla_{\theta^\prime}\phi_x\\
                \nabla_{\lambda}\phi_x
            \end{array}\right){I}_{d,k}\right)& \left(\left(\begin{array}{cc}
                \nabla_{\theta^\prime}\phi_x\\
                \nabla_{\lambda}\phi_x
            \end{array}\right){I}_{d,d-k}\right)\\
            {0}& -\frac{1}{\sigma}{I}_{d-k}
        \end{array}
    \right)\\
    &=(-1)^{d-k}\sigma^{k-d}\det\left(\left(\begin{array}{cc}
        \nabla_{\theta^\prime}\phi_x\\
        \nabla_{\lambda}\phi_x
    \end{array}\right){I}_{d,k}\right). 
\end{align*}
To proceed with computing the determinant in the above display, we recall $\Lambda(\theta)={\rm diag}\left\{p(p-1)|\theta|^{p-2}\right\}$ defined in Definition~\ref{SSOSP l_0} and derive as follows:

\begin{align*}
    \nabla_{\theta^\prime}\phi_x{I}_{d,k}&=-\frac{1}{\sigma}\left(
        {I}_{k-1},\,- \left(\frac{\theta^\prime}{F(\theta^\prime)}\right)^{p-1}
    \right){I}_{d,k}^\T\left[\lambda p(p-1){\rm diag}\left\{\left(\theta^\prime\right)^{p-2},\left(F(\theta^\prime)\right)^{p-2},{0}\right\}+\nabla^2_\theta {\cal L}(T(\theta^\prime);x)\right]{I}_{d,k}\\
    &=-\frac{1}{\sigma}\left(
        {I}_{k-1},\,- \left(\frac{\theta^\prime}{F(\theta^\prime)}\right)^{p-1}\right)\left[\nabla_\theta^2{\cal L}({T(\theta^\prime)};X)+\lambda \Lambda(T(\theta^\prime))\right]_{[k]}; \\
    \nabla_\lambda \phi_x {I}_{d,k}&=-\frac{1}{\sigma}s(T(\theta^\prime)) {I}_{d,k}\\
    &=-\frac{p}{\sigma}\left( \left[(\theta')^{p-1}\right]^\T , \quad  [F(\theta^\prime)]^{p-1}\right). 
\end{align*}
So
\begin{align*}
    &\det\left(\nabla \phi_x(\theta^\prime,\lambda,g^\prime)\right)\\
    =&(-1)^{d} p \sigma^{-d}\det\left(\begin{array}{c}
        \left(
        {I}_{k-1},\,- \left(\frac{\theta^\prime}{F(\theta^\prime)}\right)^{p-1}
    \right)\left[\nabla_\theta^2\mathcal{L}(\theta ;X)+\lambda \Lambda(\theta)\right]_{[k]}\\
    \left(    \left[(\theta')^{p-1}\right]^\T, \quad  [F(\theta^\prime)]^{p-1}  \right)
    \end{array}\right)\\
    =&(-1)^d p \sigma^{-d}\det\left(\left(\begin{array}{c}
        \left(
        {I}_{k-1},\,- \left(\frac{\theta^\prime}{F(\theta^\prime)}\right)^{p-1}
    \right)\left[\nabla_\theta^2\mathcal{L}(\theta ;X)+\lambda \Lambda(\theta)\right]_{[k]}\\
    \left(    \left[(\theta')^{p-1}\right]^\T, \quad  [F(\theta^\prime)]^{p-1}  \right)
    \end{array}\right)\cdot\left(
        \left(
            {I}_{k-1},\,- \left(\frac{\theta^\prime}{F(\theta^\prime)}\right)^{p-1}\right)^\T , ~~ \left[\begin{array}{cc}
                {0}\\
                1
            \end{array}\right] \right)\right)\\
=&(-1)^d p \sigma^{-d}\det\left(
    \begin{array}{cc}
        \left(
        {I}_{k-1},\,- \left(\frac{\theta^\prime}{F(\theta^\prime)}\right)^{p-1}
    \right)\left[\nabla_\theta^2\mathcal{L}(\theta ;X)+\lambda \Lambda(\theta)\right]_{[k]}\left(
        {I}_{k-1},\,- \left(\frac{\theta^\prime}{F(\theta^\prime)}\right)^{p-1}
    \right)^\T&{\nu}\\
    {0}&\left[F(\theta^\prime)\right]^{p-1}
    \end{array}
\right)\\
=&(-1)^d p \sigma^{-d}\left[F(\theta^\prime)\right]^{p-1}\det\left(
        \left(
        {I}_{k-1},\,- \left(\frac{\theta^\prime}{F(\theta^\prime)}\right)^{p-1}
    \right)\left[\nabla_\theta^2\mathcal{L}(\theta ;X)+\lambda \Lambda(\theta)\right]_{[k]}\left(
        {I}_{d-1},\,- \left(\frac{\theta^\prime}{F(\theta^\prime)}\right)^{p-1}
    \right)^\T\right),
\end{align*}
where we have used the following shorthand notation in the third equation: 
$$\nu=\left(
    {I}_{k-1},\,- \left(\frac{\theta^\prime}{F(\theta^\prime)}\right)^{p-1}
\right)\left[\nabla_\theta^2\mathcal{L}(\theta ;X)+\lambda \Lambda(\theta)\right]_{[k]}\left(\begin{array}{cc}
    {0}\\
    1
\end{array}\right)\in \bbR^{k-1}.$$

\medskip

Denote $$V(\theta^\prime)=\left(
    {I}_{k-1},\,- \left(\frac{\theta^\prime}{F(\theta^\prime)}\right)^{p-1}
\right)^\T  \in \bbR^{k\times (k-1)}.$$
We can express the determinant as 
\begin{align*}
\det\left(\nabla \phi_x(\theta^\prime,\lambda,g^\prime)\right)
= & (-1)^d p \sigma^{-d}\left[F(\theta^\prime)\right]^{p-1}\det\left(
        V(\theta^\prime))^\T\left[\nabla_\theta^2\mathcal{L}(\theta ;X)+\lambda \Lambda(\theta)\right]_{[k]}V(\theta^\prime)\right).
\end{align*}
\medskip 

From this point on, we follow a similar argument as in \cite[Section B.4]{barber2022testing} to verify the validity of applying the change-of-variables formula for the integration in \eqref{eq: integration-l_p}. 
Concretely, we have 
\begin{align*}
    &\int_{\bbR^d}\exp\left\{-\frac{d}{2\sigma^2}\left\|\hat{g}(x,w)-\nabla_\theta\mathcal{L}(\hat{\theta}(x,w);x)\right\|^2\right\}{1}_{(x,w)\in \psi^{-1}(I)}\rmd w\\
    =&\int_{\bbR^{k-1}\times \bbR\times \bbR^{d-k}}\exp^{-\frac{d}{2\sigma^2}\left\|\tilde{g}\left(\theta^\prime;\lambda;g^\prime\right)-\nabla_\theta {\cal L}(T(\theta^\prime);x)\right\|^2}\cdot\det \psi_x(\theta^\prime,\lambda,g^\prime){1}_{(x,\phi_x(\theta^\prime,\lambda,g^\prime))\in \psi^{-1}(I)}\rmd g^\prime \rmd \lambda\rmd \theta^\prime\\
    =& p \sigma^{-d}\int_{\bbR^{k-1}\times \bbR\times \bbR^{d-k}}\exp^{-\frac{d}{2\sigma^2}\left\|\tilde{g}\left(\theta^\prime;\lambda;g^\prime\right)-\nabla_\theta {\cal L}(T(\theta^\prime);x)\right\|^2}  \times \\
    &\qquad \qquad \qquad \det\left(
        V(\theta^\prime)^\T\left[\nabla_\theta^2\mathcal{L}(\theta ;X)+\lambda \Lambda(\theta)\right]_{[k]}V(\theta^\prime)\right){1}_{(x,\phi_x(\theta^\prime,\lambda,g^\prime))\in \psi^{-1}(I)}\rmd g^\prime \rmd \lambda\rmd \theta^\prime.
\end{align*}
From the definition of $Z_{T(\theta^\prime)}$ in (\ref{proj matrix}) and the SSOSP conditions, we can find a non-singular matrix $S\in \bbR^{(k-1)\times (k-1)}$, which is irrelevant with $x$ given $\theta^\prime$, such that 

$$V(\theta^\prime)= Z_{T(\theta^\prime)} S\Rightarrow \det\left(
        V(\theta^\prime)^\T\left[\nabla_\theta^2\mathcal{L}(\theta ;X)+\lambda \Lambda(\theta)\right]_{[k]}V(\theta^\prime)\right)>0.$$
The existence of this non-singular matrix $S$ comes from 
the definition of $Z_{T(\theta^\prime)}$ in (\ref{proj matrix}) and 
the fact that $$V\left({\rm sign}(\theta)\odot \theta^{p-1}\right)_{{\rm supp}(\theta)}=0,\quad \text{and}\quad {\rm rank}\bigl(V(\theta^\prime)\bigr)={\rm rank}\bigl(Z_{T(\theta^\prime)}\bigr)=k-1.$$

We can also verify from our definition that ${1}_{(x,\phi_x(\theta^\prime,\lambda,g^\prime))\in \psi^{-1}(I)}={1}_{(x,T(\theta^\prime),\tilde{g}(\theta^\prime,\lambda,g^\prime)\in I)}$, which leads to 
\begin{align*}
    &\text{pr}\left\{(X,\hat{\theta}(X,W),\hat{g}(X,W)\in I\mid (X,\hat{\theta}(X,W)))\in \Psi_{\cal T}\right\}\\
    =&\frac{ \int_{\cal X}f(x;\theta_0)\int_{\bbR^{k-1}\times \bbR\times \bbR^{d-k}}\exp^{-\frac{d}{2\sigma^2}\left\|\tilde{g}\left(\theta^\prime;\lambda;g^\prime\right)-\nabla_\theta {\cal L}(T(\theta^\prime);x)\right\|^2}\det \psi_x(\theta^\prime,\lambda,g^\prime){1}_{(x,T(\theta^\prime),\tilde{g}(\theta^\prime,\lambda,g^\prime)\in I)}\rmd g^\prime \rmd \lambda\rmd \theta^\prime}{\int_{{\cal X}\times \bbR^d}h_{\theta_0}(x^\prime,w^\prime)\rmd \nu_{\cal X}(x^\prime)\rmd w^\prime}.
\end{align*}
for any measurable set $I\subseteq \Psi(\hat{\theta})$. In particular, this verifies that when restricted to $\Psi_{\hat{\theta}}$
$$\frac{f(x;\theta_0)\exp^{-\frac{d}{2\sigma^2}\left\|\tilde{g}\left(\theta^\prime;\lambda;g^\prime\right)-\nabla_\theta {\cal L}(T(\theta^\prime);x)\right\|^2}\det \psi_x(\theta^\prime,\lambda,g^\prime){1}_{(x,T(\theta^\prime),\tilde{g}(\theta^\prime,\lambda,g^\prime)\in I)}}{\int_{{\cal X}\times \bbR^d}h_{\theta_0}(x^\prime,w^\prime)\rmd \nu_{\cal X}(x^\prime)\rmd w^\prime}$$
is the joint density of $(X,\hat{\theta}(X,W),\hat{g}(X,W))$, conditional on the event $(X,\hat{\theta}(X,W),\hat{g}(X,W))\in \Psi_{\cal T}$. Therefore, the conditional density of $X\mid \theta^\prime,\lambda,g^\prime$ (again restricted to the same region and conditioning on the same event) can be written as 
\begin{align*}
    &\propto f(x;\theta_0)\exp^{-\frac{d}{2\sigma^2}\left\|\tilde{g}\left(\theta^\prime;\lambda;g^\prime\right)-\nabla_\theta {\cal L}(T(\theta^\prime);x)\right\|^2}\det \psi_x(\theta^\prime,\lambda,g^\prime){1}_{(x,T(\theta^\prime),\tilde{g}(\theta^\prime,\lambda,g^\prime)\in I)}\\
    &\propto f(x;\theta_0)\exp^{-\frac{d}{2\sigma^2}\left\|\tilde{g}\left(\theta^\prime;\lambda;g^\prime\right)-\nabla_\theta {\cal L}(T(\theta^\prime);x)\right\|^2}\cdot \det \left[Z_\theta^\T\left(\nabla_\theta^2\mathcal{L}(\theta ;X)+\lambda \Lambda(\theta)\right)_{{\cal T}(\theta)}Z_\theta\right]{1}_{(x,T(\theta^\prime),\tilde{g}(\theta^\prime,\lambda,g^\prime)\in I)}.
\end{align*}
By definition, $(x,\hat{\theta},\hat{g})\in \Psi_{\hat{\theta}}$ if and only if $x\in {\cal X}_{\hat{\theta},\hat{g}}$. Moreover, $({\theta}^\prime,\lambda,{g}^\prime)$ uniquely decides $(\theta,g)$ as described before, so we can equivalently condition on $(\theta,g)$ and obtain 
$$ p_{\theta_0}(\cdot\mid \hat{\theta},\hat{g})\propto f\left(x ; \theta_0\right) \cdot\exp\left\{-\frac{\left\|\hat{g}-\nabla_\theta \mathcal{L}(\hat{\theta};x)\right\|^2}{2\sigma^2/d}\right\}
\cdot \det
   \left(Z_\theta^\T\left[\nabla_\theta^2\mathcal{L}(\theta ;X)+\lambda \Lambda(\theta)\right]_{{\cal T}(\theta)}Z_\theta\right)\cdot {1}_{x \in {\mathcal{X}}_{\hat{\theta}, \hat{g}}}.$$
   
\textbf{Case 2: $p=0$ or $p>0$ while $k=1$.} \newline

This case is considerably simpler than \textbf{Case 1}, and hence we omit detailed derivations and focus on focus on the differences caused by the degeneracies at specific values of $p$ or $k$.

For the case \( p = 0 \), since we always have \( \theta_{[d] \setminus [k]} = 0 \) and \( g_{[k]} = 0 \), the vectors \( \theta_{[k]} \) and \( g_{[d] \setminus [k]} \) serve as a reparameterization of \( \theta \) and \( g \), given by:
\[
\theta = I_{d,k} \theta_{[k]}, \quad g = I_{d,d-k} g_{[d] \setminus [k]}.
\]
The remaining steps proceed analogously to the previous derivations but with significantly simplified calculations. This simplification stems from the absence of constraints on the \( \ell_p\)-norm of \( \theta \); once the support is specified, the \( \ell_0 \)-sparsity condition is automatically satisfied. As a result, the degrees of freedom for \( \theta \) effectively increase by one.

For the case \( p > 0 \) and \( k = 1 \), the parameter \( \theta \) becomes locally fixed, since its support being restricted to
\[
\left\{ \theta \in \mathbb{R}^d : \theta_1 = \pm R_p, \ \theta_j = 0 \text{ for all } j \geq 2 \right\}.
\]
Thus, reparameterization is unnecessary, and the computations simplify considerably compared to \textbf{Case 1}.

\subsection{Proof of Lemmas \ref{lemma conditional density trimmed}, \ref{lemma conditional density quantile} and \ref{condition density GAM}}
To conserve space, detailed proofs of these conditional densities are not presented. Instead, we focus on discussing the differences between these proofs and the previous cases.

The proof of Lemma \ref{lemma conditional density trimmed} closely resembles the case with MLE. The primary difference lies in MTLE's selection process. The conditional density depends on this selection, allowing us to limit our consideration to the selected samples for computations.

The proof of Lemma \ref{lemma conditional density quantile} is straightforward. Here, only \(\hat{g}\) connects the random noise \(W\) with the original sample \(X\), and it can be adequately addressed through a simple change of variables.

The proof of Lemma \ref{condition density GAM} is analogous to that with penalized MLE. There are two main differences: the distribution of \(f\) is not parametrically represented by \(\theta\) but rather by \(\mu_0\), and the distribution of the noise \(W\) also differs. However, these variations minimally impact the derivation of the conditional density; only the corresponding parts need to be adjusted.

\section{Additional proofs}\label{app: add proof}
\subsection{Verifying that (\ref{density2}) defines a density function}
\label{verify}
To ensure that our procedure is well-defined, it is crucial to verify that the plug-in version of the conditional density, defined as \( p_{\hat{\theta}}(\cdot \mid \hat{\theta}, \hat{g}) \propto p^{\rm un}_{\hat{\theta},\hat{g}} \) for some unnormalized function $p^{\rm un}_{\hat{\theta},\hat{g}}$, constitutes a valid density with respect to \( \nu_{\cal X} \). 
We focus on the constrained case, but the arguments and results extend analogously to other cases.

For the constrained case, we have 
\[
p^{\rm un}_{{\theta}, {g}}(x)\stackrel{\triangle}{=}f(x; {\theta}) \cdot \exp\left\{-\frac{\|{g} - \nabla_\theta \mathcal{L}({\theta}; x)\|^2}{2\sigma^2/d}\right\} \cdot \det\left(U_{\theta}^\T\bigl[\nabla_\theta^2\mathcal{L}(\theta ;X)+\sum_{i=1}^r\lambda_i \nabla_\theta^2 G_i(\theta)\bigr]U_{\theta}\right) \cdot {1}_{x \in \mathcal{X}_{{\theta}, {g}}}.
\]
The following lemma confirms that this $p^{\rm un}_{{\theta},g}$ integrates to a finite and positive value. The analogous results for the original aCSS and the regularized aCSS appear in \cite[Section B.3]{barber2022testing} and \cite[Section B.1]{zhu2023approximate}).

\begin{lemma}
    If Assumptions 1 and 3 hold, then for any regular point \(\theta \in \Theta_0\) and any \(g \in \mathbb{R}^d\), the unnormalized density \(p_{\theta,g}^{\text{un}}(x)\) is nonnegative and integrable with respect to \(\nu_{\cal X}\). Furthermore, if the event \(\hat{\theta} = \hat{\theta}(X, W)\) is an SSOSP has positive probability, then conditional on this event, \(\int_X p_{\hat{\theta},\hat{g}}^{\text{un}}(x) dv_x(x) > 0\) holds almost surely.
    \end{lemma}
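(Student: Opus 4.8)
The plan is to handle the three assertions of the lemma separately: nonnegativity of $p^{\rm un}_{\theta,g}$, finiteness of its $\nu_{\mathcal X}$-integral for every regular $\theta\in\Theta_0$ and every $g\in\mathbb R^d$, and strict positivity of $\int_{\mathcal X}p^{\rm un}_{\hat\theta,\hat g}\,d\nu_{\mathcal X}$ conditional on the event that $\hat\theta(X,W)$ is an SSOSP. Nonnegativity is immediate: the blanket assumptions on $\{P_\theta\}$ give $f(x;\theta)>0$, the Gaussian factor $\exp\{-\|g-\nabla_\theta\mathcal L(\theta;x)\|^2/(2\sigma^2/d)\}$ is strictly positive, and the determinant factor enters only against $\mathbf 1_{x\in\mathcal X_{\theta,g}}$; on that set $\theta=\hat\theta(x,w)$ is an SSOSP of \eqref{MLE} for some $w$, so the strict second-order condition of Definition \ref{SSOSP} — together with its invariance to the choice of Lagrange multipliers established in Appendix \ref{invariance-to-multipliers} — forces $U_\theta^\T[\nabla^2_\theta\mathcal L(\theta;x)+\sum_i\lambda_i\nabla^2_\theta G_i(\theta)]U_\theta\succ0$, hence a strictly positive determinant. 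So $p^{\rm un}_{\theta,g}\ge0$ everywhere, vanishing precisely off $\mathcal X_{\theta,g}$.

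For finiteness I would fix a regular $\theta$ and $g$, discard the Gaussian factor (bounded by $1$), and reduce to bounding $\int_{\mathcal X_{\theta,g}}f(x;\theta)\,F_{\rm con}(\theta;x,\sigma^{-1}(g-\nabla_\theta\mathcal L(\theta;x)))\,d\nu_{\mathcal X}(x)$, with $F_{\rm con}$ as in \eqref{matrix}. A key observation I would exploit is that, once $\theta$ and $g$ are fixed, the KKT stationarity condition $g+\sum_i\lambda_i\nabla_\theta G_i(\theta)=0$ pins down the multipliers $\lambda_i$ (as they enter $F_{\rm con}$, via the minimal active set and the invariance result of Appendix \ref{invariance-to-multipliers}) \emph{independently of $x$}; hence $F_{\rm con}$ depends on $x$ only through the log-likelihood Hessian $H(\theta;x)=-\nabla^2_\theta\log f(x;\theta)$. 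On $\mathcal X_{\theta,g}$ the matrix inside the determinant is positive definite of order $d-r(\theta)$, so by the AM--GM inequality together with eigenvalue interlacing, $F_{\rm con}(\theta;x,\cdot)\le\bigl(\lambda_{\max}(H(\theta;x))+c(\theta,g)\bigr)_+^{\,d-r(\theta)}$ for a constant $c(\theta,g)$ (collecting the penalty and constraint curvatures weighted by the fixed $\lambda_i$'s), and the moment/integrability conditions of Assumption \ref{assumption:3} — applied with $\theta$ in the role of $\theta_0$ — make $\int f(x;\theta)\lambda_{\max}(H(\theta;x))^k\,d\nu_{\mathcal X}(x)$ finite for the relevant exponents, so the whole integral is a finite multiple of $\int f(x;\theta)\,d\nu_{\mathcal X}(x)=1$. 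This parallels the corresponding checks for the unregularized and linearly regularized aCSS in \cite[Section B.3]{barber2022testing} and \cite[Section B.1]{zhu2023approximate}. The part I expect to be the main obstacle is precisely the control of the constraint contribution $c(\theta,g)$ and of the eigenvalue bound on the curvature matrix on the nonlinear boundary: here the rank-constancy hypothesis of Assumption \ref{assump:constraint} must be used to guarantee that the active gradients are linearly independent on the minimal active set, so that the multipliers are well-defined and bounded, and that the determinant bound remains integrable — there is no closed-form analogue of the linear case to lean on.

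Finally, for conditional positivity I would condition on the SSOSP event and write $\hat\theta=\hat\theta(X,W)$, $\hat g=\hat g(X,W)$. Then $X\in\mathcal X_{\hat\theta,\hat g}$, witnessed by $w=W$, and at $x=X$ every factor of $p^{\rm un}_{\hat\theta,\hat g}$ is strictly positive (the determinant because the second-order condition is \emph{strict}). As $x$ varies near $X$, choosing $w(x)=\sigma^{-1}(\hat g-\nabla_\theta\mathcal L(\hat\theta;x))$ continuously makes $\hat g(x,w(x))=\hat g$ automatically, and the strict SSOSP conditions (regularity, strict positive definiteness, strict complementarity) are stable under small perturbations of $x$ — this is exactly the local stability underlying the bijection of Lemma \ref{lemma bijection} and the change of variables in Lemma \ref{lemma1}. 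Hence there is a relatively open neighborhood $\mathcal N\ni X$ in $\mathcal X$ with $\mathcal N\subseteq\mathcal X_{\hat\theta,\hat g}$ on which $p^{\rm un}_{\hat\theta,\hat g}$ is continuous and strictly positive, and $\mathcal N$ has positive $\nu_{\mathcal X}$-measure (it contains $X$, which has positive conditional density), so $\int_{\mathcal X}p^{\rm un}_{\hat\theta,\hat g}\,d\nu_{\mathcal X}\ge\int_{\mathcal N}p^{\rm un}_{\hat\theta,\hat g}\,d\nu_{\mathcal X}>0$. Since this reasoning applies to every outcome in the SSOSP event, the conclusion holds almost surely on that event.
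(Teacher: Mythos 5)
Your nonnegativity and integrability arguments are essentially the paper's own proof: strict positivity of $f(\cdot;\theta)$ and of the determinant factor on $\mathcal X_{\theta,g}$; then, for integrability, the multiplier-invariance of $U_\theta^\T\bigl[\sum_i\lambda_i\nabla^2_\theta G_i(\theta)\bigr]U_\theta$ so that, with $(\theta,g)$ fixed, the constraint-curvature contribution is a constant $c(\theta,g)$, the bound of $F_{\rm con}$ by a power of $\lambda_{\max}\bigl(\nabla^2_\theta\mathcal L(\theta;x)\bigr)_+ + c(\theta,g)$, and the exponential-moment condition of Assumption \ref{assumption:3} (applied at $\theta$) to control the resulting polynomial moment of the largest eigenvalue. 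Those two parts match the paper in substance.

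The gap is in your positivity step, where you take a genuinely different route from the paper and it does not go through. You argue that for $x$ in a neighborhood $\mathcal N$ of $X$, setting $w(x)=\sigma^{-1}(\hat g-\nabla_\theta\mathcal L(\hat\theta;x))$ and invoking stability of the strict SSOSP conditions gives $\mathcal N\subseteq\mathcal X_{\hat\theta,\hat g}$. But membership in $\mathcal X_{\hat\theta,\hat g}$ requires more than that $\hat\theta$ \emph{satisfies} the first- and second-order conditions at the perturbed input $(x,w(x))$: by its definition it requires that the solver map actually \emph{returns} $\hat\theta$, i.e. $\hat\theta(x,w(x))=\hat\theta$. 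The solver is only assumed to be some measurable procedure that locates an SSOSP; nothing makes it continuous or locally constant, and for a nonconvex constrained problem it may select a different stationary point for inputs arbitrarily close to $(X,W)$. So $\mathcal N\subseteq\mathcal X_{\hat\theta,\hat g}$ is unjustified, and all you are guaranteed is the single point $X$, which may have $\nu_{\mathcal X}$-measure zero. Lemma \ref{lemma bijection} does not supply the missing stability either: it only inverts $(x,w)\mapsto(x,\hat\theta,\hat g)$ on the set where the solver already returns an SSOSP, and says nothing about which $x$ belong to that set. (A secondary issue: even granting $\mathcal N\subseteq\mathcal X_{\hat\theta,\hat g}$, "$\mathcal N$ contains $X$, which has positive conditional density" does not by itself give $\nu_{\mathcal X}(\mathcal N)>0$ for a general base measure; one needs a support argument.) The paper sidesteps all of this: by Lemma \ref{lemma1}, conditional on the SSOSP event the law of $X\mid\hat\theta,\hat g$ has density proportional to $\frac{f(x;\theta_0)}{f(x;\hat\theta)}\,p^{\rm un}_{\hat\theta,\hat g}(x)$, so this expression must have a strictly positive integral almost surely, and since the likelihood-ratio factor is strictly positive and finite, $\int_{\mathcal X}p^{\rm un}_{\hat\theta,\hat g}\,d\nu_{\mathcal X}>0$ follows. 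Replacing your neighborhood argument with this conditional-density argument (or adding an explicit assumption on local consistency of the solver) is needed to close the proof.
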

    
    \begin{proof}
    We first check nonnegativity. For any regular point \(\theta \in \Theta_0\) and any \(x\), we have \(f(x; \theta) > 0\). Furthermore, if \(x \in \mathcal{X}_{\theta, g}\), then \(\det (U_{\theta}^\T \nabla^2_\theta {\cal L}(\theta; x) U_{\theta}) > 0\) by the definition of \(\mathcal{X}_{\theta, g}\) and the SSOSP conditions. This verifies the nonnegativity for \(p_{\theta,g}^{un}(x)\) for any \((\theta, g, x)\). 
    
    Next, we check integrability. 
    As proved in Section \ref{invariance-to-multipliers}, the matrix $$U_\theta^\T\left(\sum_{i=1}^r\lambda_i \nabla_\theta^2 G_i(\theta)\right) U_\theta$$ is the same for all the $\lambda_i,i=1,\cdots,r$ satisfying 
    $$g+\sum_{i=1}^r\lambda_i\nabla_\theta G_i(\theta)=0.$$
    Here we treat $\theta$ and $g$ as fixed, and therefore, $$\lambda_{\rm max}\left(U_\theta^\T\left(\sum_{i=1}^r\lambda_i \nabla_\theta^2 G_i(\theta)\right) U_\theta\right)_+$$ is also fixed, denoted as $\lambda_{\max}(\theta,g)$. Therefore, we have
    \begin{align*}
        &\int_{\cal X} p_{\theta,g}^{\text{un}}(x) \rmd\nu_{\cal X}(x) \\
        \leq &\int_{\cal X} f(x; \theta) \cdot \det\left(U_{\theta}^\T\bigl[\nabla_\theta^2\mathcal{L}(\theta ;X)+\sum_{i=1}^r\lambda_i \nabla_\theta^2 G_i(\theta)\bigr]U_{\theta}\right) \cdot {1}_{U_{\theta}^\T\bigl[\nabla_\theta^2\mathcal{L}(\theta ;X)+\sum_{i=1}^r\lambda_i \nabla_\theta^2 G_i(\theta)\bigr]U_{\theta} \succ 0} \rmd \nu_{\cal X}(x)\\
        \leq &\int_{\cal X} f(x; \theta) \cdot \left[\lambda_{\text{max}}(\nabla^2_\theta {\cal L}(\theta; x))_+ +\lambda_{\max}(\theta,g)\right]^d \cdot \rmd \nu_{\cal X}(x)\\
        \leq & 2^{d-1}\left[\int_{\cal X} f(x; \theta) \cdot \left[\lambda_{\text{max}}(\nabla^2_\theta {\cal L}(\theta; x))_+ \right]^d \rmd \nu_{\cal X}(x)+\int_{\cal X}f(x; \theta) \cdot \lambda^d_{\max}(\theta,g) \rmd \nu_{\cal X}(x)\right],
    \end{align*}
    where the last step holds since $[(a+b)/2]^d\leq (a^d+b^d)/2$ for $d\geq 1$ and $a,b\geq 0$. Since $$\int_{\cal X}f(x; \theta) \cdot \lambda^d_{\max}(\theta,g) \rmd \nu_{\cal X}(x)=\lambda^d_{\max}(\theta,g),$$ 
    we only need to bound
    $$\int_{\cal X} f(x; \theta) \cdot \left[\lambda_{\text{max}}(\nabla^2_\theta {\cal L}(\theta; x))_+ \right]^d \rmd \nu_{\cal X}(x).$$

    Following similar steps in \cite{zhu2023approximate}, we have 
    \begin{align*}
        &\int_{\cal X} f(x; \theta) \cdot \left[\lambda_{\text{max}}(\nabla^2_\theta {\cal L}(\theta; x))_+ \right]^d \rmd \nu_{\cal X}(x)\\
        \leq &\frac{d!}{r(\theta)^{2d}} \int_{\cal X} f(x; \theta) \exp \bigl\{ r(\theta)^2 (\lambda_{\text{max}} (H(\theta, x) - H(\theta)))_+ + r(\theta)^2 (\lambda_{\text{max}} (H(\theta) - \nabla^2_\theta{\cal R}(\theta)))_+
        \bigr\} \rmd \nu_{\cal X}(x)\\
        =& \frac{d!}{r(\theta)^{2d}} \exp \bigl\{ r(\theta)^2 (\lambda_{\text{max}} (H(\theta) - \nabla^2_\theta{\cal R}(\theta)))_+\bigr\}\cdot\bbE_{\theta}\left[\exp \{ r(\theta)^2 (\lambda_{\text{max}} (H(\theta, x) - H(\theta)))_+ \} \right]\\
        \leq &\frac{d!}{r(\theta)^{2d}} \exp \left\{r(\theta)^2 (\lambda_{\text{max}} (H(\theta) - \nabla^2_\theta{\cal R}(\theta)))_++\varepsilon(\theta)\right\} ,
    \end{align*}
    where the first step holds since \(t^d \leq d!e^d\) for any \(t \geq 0\), and the last step is justified by applying Assumption \ref{assumption:3}. This establishes that \(\int_{\mathcal{X}} p_{\theta,g}^{\text{un}}(x) \, \mathrm{d}\nu_{\mathcal{X}}(x)\) is finite.

    Lastly, we show that \(\int_{\mathcal{X}} p_{\theta_0,g}^{\text{un}}(x) \, \mathrm{d}\nu_{\mathcal{X}}(x) > 0\) almost surely. Given any \(x\), the ratio \(\frac{f(x,\theta_0)}{f(x,\hat{\theta})} > 0\) holds. 
    Combined with the proven nonnegativity of \(p_{\hat{\theta},\hat{g}}^{\text{un}}(x)\), it is equivalent to demonstrate that \(\int_{\mathcal{X}} \left[\frac{f(x,\theta_0)}{f(x,\hat{\theta})} p_{\theta}^{\text{un}}(x)  \right] \, \mathrm{d}\nu_{\mathcal{X}}(x) > 0\), which is implied by the fact that \(p_{\theta_0}(\cdot \mid \hat{\theta}, \hat{g}) \propto \frac{f(x,\theta_0)}{f(x,\hat{\theta})} p_{\hat{\theta},\hat{g}}^{\text{un}}(x)\) is the conditional density of \(X \mid \hat{\theta}, \hat{g}\).

    \end{proof}
    
\subsection{Proof of Lemma \ref{lemma_weighted}}\label{pf:lemma_weighted}
\begin{proof}
    The first part of this proof closely aligns with \cite[Appendix A.1.1]{zhu2023approximate}, while the subsequent part draws parallels with \cite[Appendix A.2]{harrison2012conservative}.

We should assume $\al<1$ since the case with \(\alpha = 1\) is obvious due to the non-negativity of the total variation distance. 
In the following we use $\widetilde{Y}^{(j)}$ to denote hypothetical copies in some conceived distributions and we should distinguish them from $\widetilde{X}^{(j)}$ the actual copies sampled in aCSS. 

Consider the following joint distribution (a)
    \[
    \text{Distrib. (a)}\quad\begin{cases}
    (X,W) \sim P^*_{\theta_0}, \\
    \hat{\theta}=\hat{\theta}(X,W),\hat{g}=\hat{g}(X,W)=\nabla {\cal L}(\hat{\theta};X,W), \\
    \widetilde{X}^{(1)}, \dots, \widetilde{X}^{(M)} \mid X, \hat{\theta}, \hat{g} \sim \widetilde{Q}_M(\cdot; X, \hat{\theta}, \hat{g}),
    \end{cases}
    \]
    which is equivalent to the weighted aCSS procedure  conditional on the event $\hat{\theta}(X,W)$ is an SSOSP. 
    Note that if $\hat{\theta}(X,W)$ is not an SSOSP, then by construction, we have $\widetilde{X}^{(1)} = \dots = \widetilde{X}^{(M)} = X$, and Definition \ref{weighted p-values} implies that 
    ${\rm pval}(X,\widetilde{X}^{(1)},\cdots,\widetilde{X}^{(M)})=1$. 
 Therefore, \({\rm pval}_w(X, \widetilde{X}^{(1)}, \ldots, \widetilde{X}^{(M)}) \leq \alpha<1\) only happens when \(\hat{\theta}(X,W)\) is an SSOSP. Consequently, we have
    \[
    \text{pr}\left({\rm pval}_{w}(X,\widetilde{X}^{(1)},\cdots,\widetilde{X}^{(M)})\leq \al\right) \leq \text{pr}\left({\rm pval}_{w}(X,\widetilde{X}^{(1)},\cdots,\widetilde{X}^{(M)})\leq \al\text{ under Distrib. (a) }\right).
    \]
    Define the distribution (b) as 
    \[
        \text{Distrib. (b)}\quad\begin{cases}
    (\hat{\theta}, \hat{g}) \sim Q^*_{\theta_0}, \\
    X \mid \hat{\theta}, \hat{g} \sim p_{\theta_0}(\cdot \mid \hat{\theta}, \hat{g}), \\
    \widetilde{X}^{(1)}, \dots, \widetilde{X}^{(M)} \mid X, \hat{\theta}, \hat{g} \sim \widetilde{Q}_M(\cdot; X, \hat{\theta}, \hat{g}).
    \end{cases}
    \]
     By definition of \(p_{\theta_0}(\cdot \mid \hat{\theta}, \hat{g})\), it is clear that Distrib. (b) is equivalent to Distrib. (a). and then
    \[
    \text{pr}\left({\rm pval}_{w}(X,\widetilde{X}^{(1)},\cdots,\widetilde{X}^{(M)})\leq \al\right) \leq \text{pr}\left({\rm pval}_{w}(X,\widetilde{X}^{(1)},\cdots,\widetilde{X}^{(M)})\leq \al\text{ under Distrib. (b) }\right).
    \]
     Define the distribution (c) as 
     \[
         \text{Distrib. (c)}\quad\begin{cases}
     (\hat{\theta}, \hat{g}) \sim Q^*_{\theta_0}, \\
     X \mid \hat{\theta}, \hat{g} \sim p_{ \hat{\theta} }(\cdot \mid \hat{\theta}, \hat{g}), \\
     \widetilde{X}^{(1)}, \dots, \widetilde{X}^{(M)} \mid X, \hat{\theta}, \hat{g} \sim \widetilde{Q}_M(\cdot; X, \hat{\theta}, \hat{g}).
     \end{cases}
     \]
    By the property of the total variation distance, we have 
    $$\begin{aligned}
            &\text{pr}\left({\rm pval}_{w}(X,\widetilde{Y}^{(1)},\cdots,\widetilde{Y}^{(M)})\leq \al \right)\\
      \leq  \,   &\text{pr}\left({\rm pval}_{w}(X,\widetilde{Y}^{(1)},\cdots,\widetilde{Y}^{(M)})\leq \al\text{ under Distrib. (b) }\right)\\
        \leq \,&\text{pr}\left({\rm pval}_{w}(X,\widetilde{Y}^{(1)},\cdots,\widetilde{Y}^{(M)})\leq \al\text{ under Distrib. (c) }\right)+\bbE_{Q_{\theta_0}^*}\left(d_{\rm TV}(p_{\theta_0}(\cdot\mid\hat{\theta},\hat{g}), ~~   p_{\hat{\theta}}(\cdot\mid\hat{\theta},\hat{g})\right).
    \end{aligned}$$
 Therefore, it suffices to prove 
 $$\text{pr}\left({\rm pval}_{w}(X,\widetilde{Y}^{(1)},\cdots,\widetilde{Y}^{(M)})
        \leq \al\text{ under Distrib. (c) }\right)\leq \al. $$
Define the distribution (d) as 
\[
    \text{Distrib. (d)}\quad\begin{cases}
(\hat{\theta}, \hat{g}) \sim Q^*_{\theta_0}, \\
\widetilde{Y}^{(0)} \mid \hat{\theta}, \hat{g} \sim Q(\cdot \mid \hat{\theta}, \hat{g}), \\
\widetilde{Y}^{(1)}, \dots, \widetilde{Y}^{(M)} \mid \widetilde{Q}_{M}(\widetilde{Y}^{(0)} ; \hat{\theta}, \hat{g}).
\end{cases}
\]
Note that beyond the change in notation, the distribution (d) differs from the distribution (c) in the (conditional) sampling distribution of $\widetilde{Y}^{(0)}$, which is $Q(\cdot \mid \hat{\theta}, \hat{g})$ rather than the distribution with density $p_{\hat{\theta}}(\cdot \mid \hat{\theta}, \hat{g})$. 

By definition of $\widetilde{Q}_{M}(\cdot; X, \hat{\theta}, \hat{g})$ in \eqref{weighted sampling}, we see that under the distribution (d), $(\widetilde{Y}^{(0)}, \widetilde{Y}^{(1)}, \dots, \widetilde{Y}^{(M)})$ is exchangeable (conditional on $(\hat{\theta}, \hat{g})$, which is viewed as fixed in the sampling step).

Recall that \(\mu(\cdot; \hat{\theta}, \hat{g})\) denotes the Radon-Nikodym derivative of $p_{\hat{\theta}}(\cdot \mid \hat{\theta}, \hat{g})$ w.r.t. $Q(\cdot ; \hat{\theta}, \hat{g})$. 
The properties of Radon-Nikodym derivatives imply that for any nonnegative, measurable function $f$ defined on ${\cal X}^{M+1}$, we have 
\begin{equation}\label{eq:RN-c-to-d}
    \begin{aligned}
&\bbE\left(f(X,\widetilde{Y}^{(1)},\cdots,\widetilde{Y}^{(M)})\text{ under Distrib. (c) }\right)\\
        = \,&\bbE\left(\mu(\widetilde{Y}^{(0)} ;\hat{\theta}, \hat{g}) f(\widetilde{Y}^{(0)},\widetilde{Y}^{(1)},\cdots,\widetilde{Y}^{(M)})\text{ under Distrib. (d) }\right).
\end{aligned}
\end{equation}

To proceed, we introduce some notations for analyzing the weighted exchangeability. 
Let ${\cal Q}$ denotes the set of all $(M + 1)!$ permutations $\pi = (\pi_0, \cdots , \pi_M)$ of
$(0, \cdots, {M})$. 
For any $\pi \in \mathcal{Q}$, there is a unique inverse permutation $\pi^{-1} \in \mathcal{Q}$ with $\pi_{\pi_i}^{-1}=\pi_{\pi_i^{-1}}=i$, for each $i=0, \ldots, M$. 
For \( \pi \in {\cal Q} \) and \( z = (z_0, \ldots, z_M) \in \mathbb{R}^{1+M}\), we define \( z^\pi \triangleq (z_{\pi_0}, \ldots, z_{\pi_n}) \). 

Let $Z=\left(X,\widetilde{Y}^{(1)},\cdots,\widetilde{Y}^{(M)}\right)$ be the random vector under the distribution (c) and 
let $Y=\left(\widetilde{Y}^{(0)},\widetilde{Y}^{(1)},\cdots,\widetilde{Y}^{(M)}\right)$ be the random vector under the distribution (d). 
For any fixed $\pi \in {\cal{Q}}$, we use \eqref{eq:RN-c-to-d} to obtain
\begin{equation}\label{eq:Zpi-RN-Y}
\begin{aligned}
\bbE\left(f(Z^{\pi}) \right)
 = \,& \bbE\left(\mu(\widetilde{Y}^{(0)} ;\hat{\theta}, \hat{g}) f(Y^{\pi})  \right)\\
 = \,&  \bbE \left(\mu(\widetilde{Y}^{(\pi_0^{-1})};\hat{\theta},\hat{g})f(Y)   \right),
\end{aligned}
\end{equation}
where the second equation is implied by the exchangeability of  $(\widetilde{Y}^{(0)}, \widetilde{Y}^{(1)}, \dots, \widetilde{Y}^{(M)})$ under the distribution (d). 

In the following, let $\pi$ be uniformly distributed on ${\cal Q}$ and be independent of the other random variables. 
We have $$
\text{pr}(\pi_0^{-1}=k)=\frac{1}{M+1},\qquad \text{ for every integer }k,0\leq k\leq M. 
$$
Let $U=Z^{\pi}$. We use \eqref{eq:Zpi-RN-Y} to obtain
\begin{equation}\label{eq:expect-f(U)}
\begin{aligned}
\bbE(f(U)) & = \sum_{k=0}^{M}\text{pr}(\pi_0^{-1}=k)
\bbE\left(\mu(\widetilde{Y}^{(k)};\hat{\theta},\hat{g})f(Y)\mid \pi_0^{-1}=k\right)\\
& = \frac{1}{M+1}\sum_{k=0}^{M}
\bbE\left(\mu(\widetilde{Y}^{(k)};\hat{\theta},\hat{g})f(Y)\right).
\end{aligned}
\end{equation}

In particular, for a mapping $H:\mathbb{R}^{1+m}\mapsto[0,1]$ to be determined later, 
we consider the following form of the function $f$:
$$f(u):=\frac{(M+1)H(u)\mu(u_0;\hat{\theta},\hat{g})}{\sum_{k=0}^{M}\mu(u_k;\hat{\theta},\hat{g})}, u\in \mathbb{R}^{1+M}.$$
Using \eqref{eq:expect-f(U)}, we have
\begin{equation}\label{eq:expect-H(Z)}
    \begin{aligned}
\bbE(f(U))
&=\frac{1}{M+1}\sum_{k=0}^{M}\bbE\left(\mu(\widetilde{Y}^{(k)};\hat{\theta},\hat{g})f(Y) \right) \\
& = \sum_{k=0}^{M}\bbE\left(\mu(\widetilde{Y}^{(k)};\hat{\theta},\hat{g})H(Y)\mu(\tilde{Y}^{(0)}; \hat{\theta}, \hat{g}) \left(\sum_{k=0}^{M}\mu(\tilde{Y}^{(k)};\hat{\theta},\hat{g})  \right)^{-1} \right) \\
& = \bbE\left( \sum_{k=0}^{M}\mu(\widetilde{Y}^{(k)};\hat{\theta},\hat{g}) \left(\sum_{k=0}^{M}\mu(\tilde{Y}^{(k)};\hat{\theta},\hat{g})  \right)^{-1}  H(Y)\mu(\tilde{Y}^{(0)}; \hat{\theta}, \hat{g}) \right) \\
&= 
\bbE\left(\mu(\widetilde{Y}^{(0)} ;\hat{\theta}, \hat{g}) H(\widetilde{Y}^{(0)},\widetilde{Y}^{(1)},\cdots,\widetilde{Y}^{(M)})\right) \\
& = \bbE\left( H(Z)\right),
\end{aligned}
\end{equation}

where the second equation is by definition of $f$, the thrid equation is by the linearity of expectation, and the last equation is due to \eqref{eq:RN-c-to-d}. 

For $u=(u_0,\cdots,u_{M})$, we use $u^{(k)}$ to denote
denotes the vector by switching the original $u_0$ with $u_k$, i.e, 
$$u^{(k)}=\left(u_{k},u_1,\cdots,u_{k-1},u_0,u_{k+1},\cdots,u_M\right).
$$
Since $\pi$ is uniformly distributed, it holds that $U\stackrel{d.}{=}U^{(k)}$. 
We use \eqref{eq:expect-H(Z)} to derive as follows: 
\begin{equation}\label{eq:H(Z)-in-Uk}
\begin{aligned}
    \bbE(H(Z))=\bbE(f(U)) = & \bbE\left(\frac{(M+1)\mu(U_0;\hat{\theta},\hat{g})}{\sum_{k=0}^{M}\mu(U_k;\hat{\theta},\hat{g})}H(U)\right)\\
    =&\frac{1}{M+1}\sum_{k=0}^{M}\bbE\left(\frac{(M+1)\mu(U_k;\hat{\theta},\hat{g})}{\sum_{i=0}^{M}\mu(U_i;\hat{\theta},\hat{g})}H(U^{(k)})\right)\\
    =&\sum_{k=0}^M\bbE\left( \frac{\mu(U_k;\hat{\theta},\hat{g})}{\sum_{i=0}^{M}\mu(U_{i};\hat{\theta},\hat{g})}H(U^{(k)})\right)
\end{aligned}
\end{equation}
Now we consider a particular $H(\cdot)$ that links the weighted p-value with the hypothesis test. 
Recall that \(T\) is the chosen test statistic in Algorithm~\ref{alg: aCSS implementation}. 
Define the weighted p-value function as 
$$
{\rm pval}(z_0, z_1,\cdots,z_M)=\sum_{i=0}^{M}\omega_i{1}(t_i\geq t_0), 
$$
where  $\omega_i := \mu(z_i;\hat{\theta},\hat{g}) / \sum_{\ell=0}^M\mu(z_\ell;\hat{\theta},\hat{g}) $ and $t_i:=T(z_i)$ for  $i = 0, 1,\ldots, M$.

Given $\alpha\in [0,1]$, we set $H(z) = \mathbf{1}\left[{\rm pval}(z_0, z_1,\cdots,z_M)\leq \al\right]$.  
The following is implied by \eqref{eq:H(Z)-in-Uk}: 
$$
\text{pr}\left( 
{\rm pval}(Z)\leq \al  \right) = \sum_{k=0}^M\bbE\left(  
\omega_k H(U^{(k)})\right)= \bbE\left(  \sum_{k=0}^M
\omega_k H(U^{(k)})\right). 
$$
Following \cite[Lemma 1]{harrison2012conservative}, the integrand inside the expectation of the last display is always not greater than $\alpha$, and the proof is thereby finished.

\end{proof}
\subsection{Proof of Proposition \ref{prop:minimal-set}}\label{pf:prop:minimal-set}
\begin{proof} 
    The second statement in Proposition \ref{prop:minimal-set} follows from the first one due to the continuity of \( G_i \) and the fact that \( \varepsilon(\theta_1) \) can be made arbitrarily small. We only need to prove the first statement.

    Given $\theta_1$, we use ${r_1}$ as a shorthand for $r(\theta_1)$. 
    For simplicity, suppose there exists a integer $k\geq {r_1}$ such that ${\cal I}(\theta_1) = [k]$ and $[{r_1}]$ is the minimal active set at $\theta_1$.
    Let 
    $$\tilde{G}_1(\theta)=\left(\begin{array}{c}
        G_1(\theta)\\
        \cdots\\
        G_{k}(\theta)
    \end{array}\right)\quad\text{ and }\quad\tilde{G}_2=\left(
        \begin{array}{c}
        G_1(\theta)\\
        \cdots\\
        G_{{r_1}}(\theta)
        \end{array}
    \right).$$
To prove the first statement of the proposition, it suffices to prove that there exists $\varepsilon(\theta_1)>0$, such that when $\theta\in {\cal B}(\theta_1,\varepsilon(\theta_1))$, we have 
$$\tilde{G}_1(\theta)={0}\Leftrightarrow \tilde{G}_2(\theta)={0}.$$

Since $k\geq {r_1}$, we only need to prove 
when $\theta\in {\cal B}(\theta_1,\varepsilon(\theta_1))$, the following relationship holds: 
$$
\tilde{G}_2(\theta)={0}\Rightarrow \tilde{G}_1(\theta)={0}.
$$

Since $\theta_1$ is a regular point, by Definition \ref{regular}, there exists $\varepsilon_1(\theta_1)>0$ such that $\nabla_\theta G_i(\theta),i \in {\cal I}(\theta_1)$ remains constant rank ${r_1}$ for all $\theta\in {\cal B}(\theta_1,\varepsilon_1(\theta_1))$. 
Let $A =\nabla_\theta \tilde{G}_1(\theta_1)\in \bbR^{d\times k}$. 
By the constant rank theorem, we know $\tilde{G_1}(\theta)$ is locally determined by its projection onto the range of $A$. 
As a results, there exists $\varepsilon_2(\theta_1)$, such that when $\theta\in {\cal B}(\theta_1,\varepsilon_2(\theta_1))$, we have
$$\tilde{G}_1(\theta)={0}\Leftrightarrow A\tilde{G}_1(\theta)={0}.$$
Let $B=\nabla_\theta \tilde{G}_2(\theta_1)\in \bbR^{d\times {r_1}}$. 
By Definition \ref{minimal}, we know that ${\rm rank}(A)={\rm rank}(B)={r_1}$. 
WLOG, we assume that 
the first ${r_1}$-th rows of $B$ has rank ${r_1}$. 
In other words, $B_{[{r_1}],[{r_1}]}$ is of full rank. 
Let $C=A_{[{r_1}],[k]}$ be the first ${r_1}$-th rows of $A$. 
By the definitions of $\tilde{G}_1$ and $\tilde{G}_2$, we have $C_{[{r_1}],[{r_1}]}=B_{[{r_1}],[{r_1}]}$. Therefore, the rank of $C$ is ${r_1}$, and thus ${\rm rank}(A)={\rm rank}(C)={r_1}$.

Therefore, there exists a matrix $D\in \bbR^{d\times {r_1}}$ 
such that $A=DC$. 
Since all the matrices $A,B,C$, and $D$ are determined by $\theta_1$, we have 
\begin{equation}\label{eq:from-G1-to_G3}
\tilde{G}_1(\theta)={0}\Leftrightarrow A\tilde{G}_1(\theta)={0}\Leftrightarrow C\tilde{G}_1(\theta)={0}.
\end{equation}
Consider the function $\tilde{G}_3(\theta)=C\tilde{G}_1(\theta)$ from $\bbR^d$ to $\bbR^{{r_1}}$. 
Note that
$$\nabla_{\theta}\tilde{G}_3(\theta_1)=\nabla_{\theta}\tilde{G}_1(\theta_1)C^\T =AC^\T =DCC^\T \Rightarrow \nabla_{\theta_{[{r_1}]}}\tilde{G}_3(\theta_1)=D_{[{r_1}]}CC^\T .$$
Note that 
${\rm rank}(A_{[{r_1}]})={\rm rank}(D_{[{r_1}]}C)={r_1}$, so $${r_1}={\rm rank}(D_{[{r_1}]}CC^\T D^\T _{[{r_1}]})\leq {\rm rank}(D_{[{r_1}]}CC^\T )\leq {r_1}.$$
Therefore, ${\rm rank}(\nabla_{\theta_{[{r_1}]}}\tilde{G}_3(\theta_1))={r_1}$. 
Together with the fact that $\tilde{G}_3(\theta_1)=0$, we apply the implicit function theorem to conclude that $\tilde{G}_3(\theta)=0$ locally defines a unique function $\tilde{f}_1$ from $\theta_{[d]\backslash [{r_1}]}\in \bbR^{d-{r_1}}$ to $\theta_{[{r_1}]}\in \bbR^{{r_1}}$, such that $\theta_1\in\{ (\tilde{f}_1(\theta_{[d]\backslash [{r_1}]}), \theta_{[d]\backslash [{r_1}]})  \}$ and it holds locally that $\tilde{G}_3(\tilde{f}_1(\theta_{[d]\backslash [{r_1}]}), \theta_{[d]\backslash [{r_1}]})=0$. In view of \eqref{eq:from-G1-to_G3}, it also holds locally that $\tilde{G}_1(\tilde{f}_1(\theta_{[d]\backslash [{r_1}]}), \theta_{[d]\backslash [{r_1}]})=0$. 

We can make a similar arguments for $\tilde{G}_2$. 
Note that $B_{[{r_1}],[{r_1}]}=\nabla_{\theta_{[{r_1}]}}\tilde{G}_2(\theta_1)$. 
Since $\tilde{G}_2(\theta_1)=0$, by the implicit function theorem, $\tilde{G}_2(\theta)=0$ locally defines another unique function $\tilde{f}_2$ from $\theta_{[d]\backslash [{r_1}]}\in \bbR^{d-{r_1}}$ to $\theta_{[{r_1}]}\in \bbR^{{r_1}}$ so that $\tilde{G}_2(\tilde{f}_2(\theta_{[d]\backslash [{r_1}]}), \theta_{[d]\backslash [{r_1}]})=0$ holds locally near $\theta_1$. 

Since $\tilde{G}_1=0\Rightarrow \tilde{G}_2=0$, it is clear that $\tilde{f}_1\equiv \tilde{f}_2$ due to the uniqueness of $\tilde{f}_1$ and that of $\tilde{f}_2$. 
By choosing $\varepsilon(\theta_1)$ sufficiently small, we conclude that for $\theta\in {\cal B}(\theta_1,\varepsilon(\theta_1))$, if $\tilde{G}_2(\theta)={0}$ then $\tilde{G}_1(\theta)={0}$. 
Therefore, we complete the proof.
\end{proof}

\subsection{Invariance of SSOSP with Respect to Lagrange multipliers}\label{invariance-to-multipliers}

In this section, we will verify the invariance of the strict second-order stationary point (SSOSP) definition with respect to the choice of the Lagrange multipliers $\lambda_i$. Suppose we have a Lagrange multiplier vector $\lambda^{(1)} = (\lambda^{(1)}_1, \ldots, \lambda^{(1)}_r)^\T \in \mathbb{R}^r$ satisfying

\begin{equation}\label{eq:ssosp-invariance-in-lambda1}
\nabla_\theta {\cal L}(\theta;X,W)+\sum_{i=1}^{r}\lambda_i^{(1)}\nabla_\theta G_i(\theta)=0\quad {and}\quad \lambda_i^{(1)}\geq 0,\forall 1\leq i\leq r,\lambda_i^{(1)}=0,\forall i\in [r]\backslash {\cal I}(\theta).
\end{equation}

For simplicity of writing, we assume that ${\cal M}(\theta)=[r(\theta)]=\left\{1,\cdots,r(\theta)\right\}$. According to Definition \ref{minimal} and Proposition \ref{prop:minimal-set}, there exists $\varepsilon_0>0$, such that $\forall \theta^\prime\in{\cal B}(\theta,\varepsilon_0)$, both matrices 
$$\left(\nabla_\theta G_i(\theta^\prime),i\in {\cal I}(\theta)\right)\quad \text{and}\quad \left(\nabla_\theta G_i(\theta^\prime),i\in [r(\theta)]\right)$$
are of rank $r(\theta)\leq r$. 
Therefore, we have the following: 
\begin{enumerate}
    \item We can find a unique vector $\lambda^{(2)} = (\lambda^{(2)}_1, \ldots, \lambda^{(2)}_r)\in \bbR^r$ such that 
\begin{equation}\label{eq:ssosp-invariance-in-lambda2}
\nabla_\theta {\cal L}(\theta;X,W)+\sum_{i=1}^r\lambda_i^{(2)}\nabla_\theta G_i(\theta)=0  \quad \text{and}\quad \lambda_i^{(2)}\geq 0,\forall 1\leq i\leq r,\lambda_i^{(2)}=0,\forall r(\theta)<i\leq r.
\end{equation}
\item For any $i\in {\cal I}(\theta)$ and any $\theta^\prime\in {\cal B}(\theta,\varepsilon_0)$, we can find $\Lambda_{i,j}(\theta^\prime)\in\bbR$ for $1\leq j\leq r(\theta)$ such that 
\begin{equation}\label{eq:G_I-to-G_r(theta)}
\nabla_\theta G_i(\theta^\prime)=\sum_{j=1}^{r(\theta)}\Lambda_{i,j}(\theta^\prime)\nabla_\theta G_j(\theta^\prime).
\end{equation}
\end{enumerate}

We can prove the following two conclusions: 
\begin{enumerate}
    \item[]\textbf{Conclusion 1.} For any $j\leq r(\theta)$, $\lambda^{(2)}_j=\sum_{i\in {\cal I}(\theta)}\lambda_i^{(1)}\Lambda_{i,j}(\theta)$. 
    \item[]\textbf{Conclusion 2.} The functions $\Lambda_{i,j}(\theta)\in\bbR,1\leq i\leq r,1\leq j\leq r(\theta)$ are differentiable at $\theta$. 
\end{enumerate}
\textbf{Proof of Conclusion 1: }

\eqref{eq:ssosp-invariance-in-lambda1} and \eqref{eq:ssosp-invariance-in-lambda2} together shows that 
$$
\sum_{i=1}^{r}\lambda_i^{(1)}\nabla_\theta G_i(\theta)=\sum_{j=1}^{r(\theta)}\lambda_j^{(2)}\nabla_\theta G_j(\theta)=0. 
$$
Taking $\theta^\prime=\theta$ in \eqref{eq:G_I-to-G_r(theta)} and multiplied with $\lambda_i^{(1)}$ for all $i\in {\cal I}(\theta)$, we have 
$$
\begin{aligned}
\sum_{j=1}^{r(\theta)}\lambda_j^{(2)}\nabla_\theta G_j(\theta) & =\sum_{i=1}^{r}\lambda_i^{(1)}\nabla_\theta G_i(\theta)\\
& = \sum_{i=1}^{r}\lambda_i^{(1)}\sum_{j=1}^{r(\theta)}\Lambda_{i,j}(\theta)\nabla_\theta G_j(\theta)\\
& = \sum_{j=1}^{r(\theta)} \left(\sum_{i=1}^{r}\lambda_i^{(1)}\Lambda_{i,j}(\theta) \right) \nabla_\theta G_j(\theta)
\end{aligned}
$$
Since $\left(\nabla_\theta G_i(\theta^\prime),i\in [r(\theta)]\right)$ has rank $r(\theta)$, we conclude that for any $j\leq r(\theta)$, it holds that 
\begin{equation}\label{eq:ssosp-invariance-lambda2-in-lambda1}
\lambda^{(2)}_j=\sum_{i\in {\cal I}(\theta)}\lambda_i^{(1)}\Lambda_{i,j}(\theta).
\end{equation}
\qedsymbol

\textbf{Proof of Conclusion 2: }
We will make use of the smoothness of $G_i(\theta)$ and the implicit function theorem. 

For $\theta^\prime\in \mathcal{B}(\theta, \varepsilon_0)$, define a matrix function $A(\theta^\prime)=\left[ \nabla_\theta G_1(\theta^\prime), \ldots, \nabla_\theta G_{r(\theta)}(\theta^\prime) \right]^\T \in \mathbb{R}^{r(\theta)\times d}$. Note that the rank of the matrix $A(\theta^\prime)$ is $r(\theta)$. 

For any $i\in {\cal I}(\theta)$, define $F_i(\theta^\prime, c)=A(\theta^\prime)\left(\sum_{j=1}^{r(\theta)}c_i\nabla_\theta G_j(\theta^\prime)-\nabla_\theta G_i(\theta^\prime)\right)$ for $\theta^\prime\in \mathcal{B}(\theta, \varepsilon_0)$ and $c\in \mathbb{R}^{r(\theta)}$. 

Note that the Jacobian matrix of $F_i$ w.r.t. $c$ is given by
$$
\nabla_{c}F_i=\left[ \nabla_\theta G_1(\theta^\prime), \ldots, \nabla_\theta G_{r(\theta)}(\theta^\prime) \right]^\T A(\theta^\prime)^\T = A(\theta^\prime)A(\theta^\prime)^\T,  
$$
which is non-singular since the rank is $r(\theta)$. 

Since $G_i(\theta)$ are twice continuously differentiable, by the implicit function theorem, we conclude that the vector function $c(\theta^\prime)=(\Lambda_{i,1}(\theta^\prime), \ldots, \Lambda_{i,r(\theta)}(\theta^\prime))^\T$ is the unique solution to $F_i(\theta^\prime, c)=0$ and is continuously differentiable. 
\qedsymbol

By the product rule of differentiation,  we have 
$$\nabla_\theta^2 G_i(\theta)=\sum_{j=1}^{r(\theta)}\nabla^2_\theta G_j(\theta)\Lambda_{i,j}(\theta)+\sum_{j=1}^{r(\theta)}\nabla_\theta G_j(\theta)\left[\nabla_\theta\Lambda_{i,j}(\theta)\right]^\T,\forall 1\leq i\leq r.$$
From the definition of $U_{\theta}^\T$ in (\ref{proj}), we have $U_{\theta}^\T\nabla_\theta G_j(\theta)={0},\forall 1\leq j\leq r(\theta)$. Therefore, we can obtain
$$\begin{aligned}
    &U_{\theta}^\T\left[\sum_{i\in{\cal I}(\theta)}\lambda_i^{(1)} \nabla_\theta^2 G_i(\theta)\right]U_{\theta}\\=
    &U_{\theta}^\T\left[\sum_{j=1}^{r(\theta)}\nabla^2_\theta G_j(\theta)\left(\sum_{i\in {\cal I}(\theta)}\lambda_i^{(1)}\Lambda_{i,j}(\theta)\right)+\sum_{j=1}^{r(\theta)}\nabla_\theta G_j(\theta)\left(\sum_{i\in {\cal I}(\theta)}\lambda_i^{(1)}\left[\nabla_\theta\Lambda_{i,j}(\theta)\right]^\T\right)\right]U_{\theta}\\
    =&U_{\theta}^\T\left[\sum_{j=1}^{r(\theta)}\lambda_j^{(2)}\nabla_\theta^2G_j(\theta)\right]U_{\theta},
\end{aligned}$$
where the last equation is due to \eqref{eq:ssosp-invariance-lambda2-in-lambda1}. 
In other words, we have shown
$$U_{\theta}^\T\left[\sum_{i\in{\cal I}(\theta)}\lambda_i^{(1)} \nabla_\theta^2 G_i(\theta)\right]U_{\theta}=U_{\theta}^\T\left[\sum_{j=1}^{r(\theta)}\lambda_j^{(2)} \nabla_\theta^2 G_j(\theta)\right]U_{\theta},$$
where $(\lambda_j^{(2)} : j = 1, \dots, r(\theta))$ is uniquely determined, and thus the proof is complete.

\subsection{Proof of Lemma \ref{program lemma}}\label{pf:program lemma}
\begin{proof}
    Denote the feasible region of the program \eqref{program} as 
    $$C=\left\{x\in \bbR^n:\left\|Z^\T x\right\|_\infty\leq \lambda_n,\left\|x\right\|\leq 1\right\}$$
    Use $\widehat{Y}$ to denote the unique maximizer of the program \eqref{program}. 
    Using the orthogonal decomposition w.r.t. the linear span of $Y$, we can write 
    $$e_1:=\frac{Y-Z(\xi+\beta\theta_0)}{\left\|Y-Z(\xi+\beta\theta_0)\right\|}=k_1Y+k_2e_2,$$
    where $e_2$ be a unit vector such that $Y^\T e_2=0$, and $$k_1:=\frac{Y^\T (Y-Z(\xi+\beta\theta_0))}{\left\|Y\right\|^2\left\|Y-Z(\xi+\beta\theta_0)\right\|}$$
    By the assumption about $\lambda_n$, we have $e_1\in C$. Since $\widehat{Y}$ is the maximizer of the program \eqref{program}, 
    we have $Y^\T \widehat{Y}\geq {Y}^\T e_1$, which leads to 
    \begin{equation}\label{eq:program-Yhat-norm-large}
    \|\widehat{Y}\|\geq \frac{\widehat{Y}^TY}{\|Y\|}\geq \frac{e_1^TY}{\|Y\|}=\frac{Y^T(Y-Z(\xi+\beta\theta_0)}{\left\|Y\right\|\left\|Y-Z(\xi+\beta\theta_0\right\|}
    \end{equation}
    and
    \begin{equation}\label{eq:program-Yhat-noise}
    \begin{aligned}
        \widehat{Y}^\T (Y-Z\xi-Z\beta\theta_0)
        &=\left\|Y-Z(\xi+\beta\theta_0)\right\|\widehat{Y}^\T e_1\\
        &=\left\|Y-Z(\xi+\beta\theta_0)\right\|\left[\widehat{Y}^\T \left(k_1 Y + k_2 e_2\right) \right]\\
        &\geq \left\|Y-Z(\xi+\beta\theta_0)\right\|\left[k_1Y^\T e_1+k_2\widehat{Y}^\T e_2\right]\\
        &=\frac{\left[Y^\T (Y-Z(\xi+\beta\theta_0))\right]^2}{\left\|Y\right\|^2\left\|Y-Z(\xi+\beta\theta_0)\right\|}+\left\|Y-Z(\xi+\beta\theta_0)\right\|k_2\widehat{Y}^\T e_2,
    \end{aligned}
    \end{equation}
    where the last equation is due to the definitions of $e_1$ and $k_1$.

    We only need to prove $k_2\widehat{Y}^\T e_2\geq 0$. Consider the decomposition of $\widehat{Y}$ as follows:
    $$\widehat{Y}=\lambda_1Y+\lambda_2e_2+\lambda_3 e_3,$$
    where $Y^\T e_3=0,e_2^\T e_3=0$.

    We only need to show $\lambda_2k_2\geq 0$. 
    Otherwise, if $\lambda_2k_2< 0$, for some $t\in (0,1)$ to be determined, define 
    $$\widehat{Y}^\prime=t\widehat{Y}+(1-t)\frac{\lambda_1}{k_1}e_1=\lambda_1  Y+\left[t\lambda_2+(1-t)\frac{\lambda_1}{k_1}k_2\right]e_2+t\lambda_3e_3.$$
    Since $\widehat{Y}^\prime-\widehat{Y}\in \text{span}(e_2,e_3)$, we have $Y^\T \widehat{Y}=Y^\T \widehat{Y}^\prime$. 
    We will show below that $\widehat{Y}^\prime\in C$ and $\widehat{Y}^\prime\neq \widehat{Y}$, and thus leads to a contradiction to the uniqueness of the maximizer of \eqref{program}. 

Let's first show $\widehat{Y}^\prime\neq \widehat{Y}$. 
Using \eqref{eq:program-Yhat-norm-large} and \eqref{eq:program-Yhat-noise}, we have 
    $$\lambda_1=\frac{\widehat{Y}^TY}{\left\|Y\right\|_2^2}\geq \frac{e_1^TY}{\left\|Y\right\|_2^2}=k_1>0,$$
which implies $\lambda_1/k_1>0$. 
When $t\in (0,1)$, since $\lambda_2k_2<0$, we can choose $t$ sufficiently close to 1 such that $$|t\lambda_2+(1-t)\frac{\lambda_1}{k_1}k_2| < |\lambda_2|,$$ which leads to $\|\widehat{Y}^\prime\|< \|\widehat{Y}\|\leq 1$ by comparing their coefficients of $Y$, $e_2$, and $e_3$, respectively. 
    
Next, we show $\|Z^\T \widehat{Y}^\prime\|\leq \lambda_n$. 
The constraints of the program~\eqref{program} implies that $\left\|Z^\T\widehat{Y}\right\|_\infty\leq \lambda_n$. 
Furthermore, using the definitions of $e_1$ and $k_1$ and the equation that $\lambda_1=\widehat{Y}^T Y / \left\|Y\right\|_2^2$, we have  
$$
\begin{aligned}
& \left\|Z^\T \frac{\lambda_1}{k_1}e_1\right\|_\infty\\
= & \left\|
\frac{\widehat{Y}^TY}{\left\|Y\right\|_2^2} 
\frac{\|Y\|^2 \left\|Y-Z(\xi+\beta\theta_0)\right\|}{Y^\T\left(Y-Z(\xi+\beta\theta_0)\right)}
\frac{Z^\T  \left(Y-Z(\xi+\beta\theta_0) \right)}{\left\|Y-Z(\xi+\beta\theta_0)\right\|}
\right\|_\infty\\
= &\frac{\widehat{Y}^TY}{Y^\T\left(Y-Z(\xi+\beta\theta_0)\right)}  \left\|
Z^\T  \left(Y-Z(\xi+\beta\theta_0) \right)
\right\|_\infty\\
\leq & \frac{\|Y\|}{Y^\T (Y-Z(\xi+\beta\theta_0))}\|Z^\T (Y-Z(\xi+\beta\theta_0))\|_\infty \\
\leq & \lambda_n,
\end{aligned}
$$
where the first inequality is because $\|\widehat{Y}^T\|\leq $ and $\widehat{Y}^TY\leq \|\widehat{Y}^T\|\|Y\|\leq \|Y\|$, and the second inequality is due to the assumption of the lemma. 

Since $\|\cdot\|$ is convex and $\widehat{Y}^\prime$ is a convex combination of $\hat{Y}\in C$ and $\frac{\lambda_1}{k_1}e_1 \in C$, we have $\|Z^\T \widehat{Y}^\prime\|\leq \lambda_n$. This completes the proof. 
\end{proof}

\subsection{Proof of Lemma \ref{quantile lemma}}\label{pf:quantile lemma}
\begin{proof}
    Recall that the two vectors $\widetilde{X}$ and $X\in \bbR^n$ satisfy that $s_{\varepsilon}(X;\theta)=s_{\varepsilon}(\widetilde{X};\hat{\theta})$ and $X_{-J_{\varepsilon}(X;\theta)}=\widetilde{X}_{-J_{\varepsilon}(\widetilde{X};\hat{\theta})}$. 
    We can choose $\delta>0$ sufficiently small (depending on $X, \widetilde{X}, \hat{\theta}, \varepsilon, Z$) such that $\forall \theta\in {\cal B}(\hat{\theta},\delta)$, the following holds: 
    \begin{align*}
        \text{ If } X_i-Z_i\hat{\theta} \text{ and }\widetilde{X}_i-Z_i\hat{\theta}>0, \text{ we have }X_i-Z_i{\theta} \text{ and } \widetilde{X}_i-Z_i{\theta}>0&\Rightarrow \ell_\tau(\widetilde{X}_i-{Z}_i^\T\theta)=\ell_\tau({X}_i-{Z}_i^\T\theta);\\
        \text{ If } X_i-Z_i\hat{\theta}  \text{ and } \widetilde{X}_i-Z_i\hat{\theta}<0, \text{ we have }X_i-Z_i{\theta}  \text{ and } \widetilde{X}_i-Z_i{\theta}<0&\Rightarrow \ell_\tau(\widetilde{X}_i-{Z}_i^\T\theta)=\ell_\tau({X}_i-{Z}_i^\T\theta);\\
        \text{ If } X_i-Z_i\hat{\theta}=\widetilde{X}_i-Z_i\hat{\theta}=0, \text{ then }X_i=\widetilde{X}_i&\Rightarrow \ell_\tau(\widetilde{X}_i-{Z}_i^\T\theta)=\ell_\tau({X}_i-{Z}_i^\T\theta).
    \end{align*}
    Therefore, we have $F_{\rm quantile}(\theta;X)=F_{\rm quantile}(\theta;\widetilde{X})$ for all $\theta\in {\cal B}(\hat{\theta},\delta)$. 
    This shows that the local optimality of $F_{\rm quantile}(\theta;X)$ at $\hat{\theta}$ implies the local optimality of $F_{\rm quantile}(\theta;\widetilde{X})$ at $\hat{\theta}$. 
    This finishes the proof.
\end{proof}
\section{Simulation details}\label{app: simulate}
In this section, we are going about something more about the simulations.
\subsection{Example 1: Behrens--Fisher problem with contamination}\label{app: BF}
Since
\[
X_1^{(0)}, \ldots, X_{n^{(0)}}^{(0)} \overset{\text{i.i.d.}}{\sim} {N}(\mu^{(0)}, \gamma^{(0)}), \quad
X_1^{(1)}, \ldots, X_{n^{(1)}}^{(1)} \overset{\text{i.i.d.}}{\sim} {N}(\mu^{(1)}, \gamma^{(1)}),
\]
with the two samples drawn independently. Under the null hypothesis 
$
H_0 : \mu^{(0)} = \mu^{(1)}$, 
the distributions can be parameterized by 
$\theta = (\mu, \gamma^{(0)}, \gamma^{(1)}) \in \Theta = \mathbb{R} \times \mathbb{R}_+ \times \mathbb{R}_+$, and the family \(\{P_\theta : \theta \in \Theta\}\) has density
\[
f(x; \theta) = f(x; (\mu, \gamma^{(0)}, \gamma^{(1)})) = \prod_{i=1}^{n^{(0)}} \frac{1}{\sqrt{2\pi \gamma^{(0)}}} e^{-(X_i^{(0)} - \mu)^2 / 2\gamma^{(0)}} \cdot
\prod_{i=1}^{n^{(1)}} \frac{1}{\sqrt{2\pi \gamma^{(1)}}} e^{-(X_i^{(1)} - \mu)^2 / 2\gamma^{(1)}}
\]
with respect to the Lebesgue measure on \(\mathcal{X} = \mathbb{R}^{n^{(0)} + n^{(1)}}\) and the gradient of negative log likelihood is
$$\left(\frac{1}{\gamma^{(0)}}\sum_{i=1}^{n^{(0)}}(\mu-X_i^{(0)})+\frac{1}{\gamma^{(1)}}\sum_{i=1}^{n^{(1)}}(\mu-X_i^{(1)})\quad \frac{1}{2\gamma^{(0)}}+\sum_{i=1}^{n^{(0)}}\frac{(X_i^{(0)}-\mu)^2}{2(\gamma^{(0)})^2}\quad\frac{1}{2\gamma^{(1)}}+\sum_{i=1}^{n^{(1)}}\frac{(X_i^{(1)}-\mu)^2}{2(\gamma^{(1)})^2}\right)^\T.$$

To generate the data, we take $n^{(0)}=n^{(1)}=50, \mu^{(0)}=0, \gamma^{(0)}=1, \gamma^{(1)}=2$, and $\mu^{(1)} \in$ $\{0,0.1,0.2, \ldots, 1\}$ (with $\mu^{(1)}=0$ corresponding to the case where the null hypothesis holds).
The test statistic $T$ (used both for aCSS and for the oracle) is given by the absolute difference in sample means between the two halves of the data. The first \( m = 5 \) data points are generated as \( 3 + |t_1| \), where \( t_1 \) denotes a \( t \)-distribution with 1 degree of freedom.

Since \cite{barber2022testing} has studied this example before, we don't need to check the assumptions in detail and generally the steps are the same except only for the selected samples. In this case, only the first dataset is contaminated and therefore the adjustment due to MTLE is only restricted to the first dataset The sampling distribution can be written as 
$$p_{\hat{\theta}}(\cdot\mid\hat{\theta})\propto f(x;\hat{\theta})\exp\left\{-\frac{d}{2\sigma^2}\left\|\nabla_\theta \log f(x;\theta)\right\|^2\right\} I(x),$$
where $I(x)$ denotes the corresponding check function.

Here we use the weighted sampling method to generate copies. One simple idea is to directly sample copies \( \widetilde{X}_j, j=1,\dots,n^{(k)} \) i.i.d. from \( f(x;\hat{\theta}) \) like the parametric bootstrap. However, in this case, the weight  
\[
w(\widetilde{X}) = \exp\left\{-\frac{d}{2\sigma^2}\left\|\nabla_\theta \log f(\widetilde{X};\theta)\right\|^2\right\}I(\widetilde{X})
\]
will be small, leading to lower power.  To improve power, instead of independent sampling, we sample copies \( \widetilde{X}^{(i)} \) from \( f(x;\hat{\theta}) \) while ensuring that  
\[
\sum_{i=1}^{n^{(k)}}\left(\widetilde{X}_i^{(k)}\right)^2 = \sum_{i=1}^{n^{(k)}}\left(\widetilde{X}_i^{(0)}\right)^2, \quad k=0,1.
\]
Then, we still use \( w(\widetilde{X}) \) as the weighting function. This construction satisfies the requirement in \eqref{weighted sampling} and enhances the power of the test. 

We also studied the performance of the aCSS method with MTLE for different values of \( m \) and \( h \). Specifically, we employed two versions of the aCSS method with MTLE: one using \( h = 50 - m \), denoted as aCSS(MTLE\_o), and the other using a fixed \( h = 45 \), denoted as aCSS(MTLE). We report the results for \( m = 1, 2, 3, 4 \) in the Figure \ref{fig: diff h}.

\begin{figure}
\centering
\includegraphics[width=0.45\textwidth]{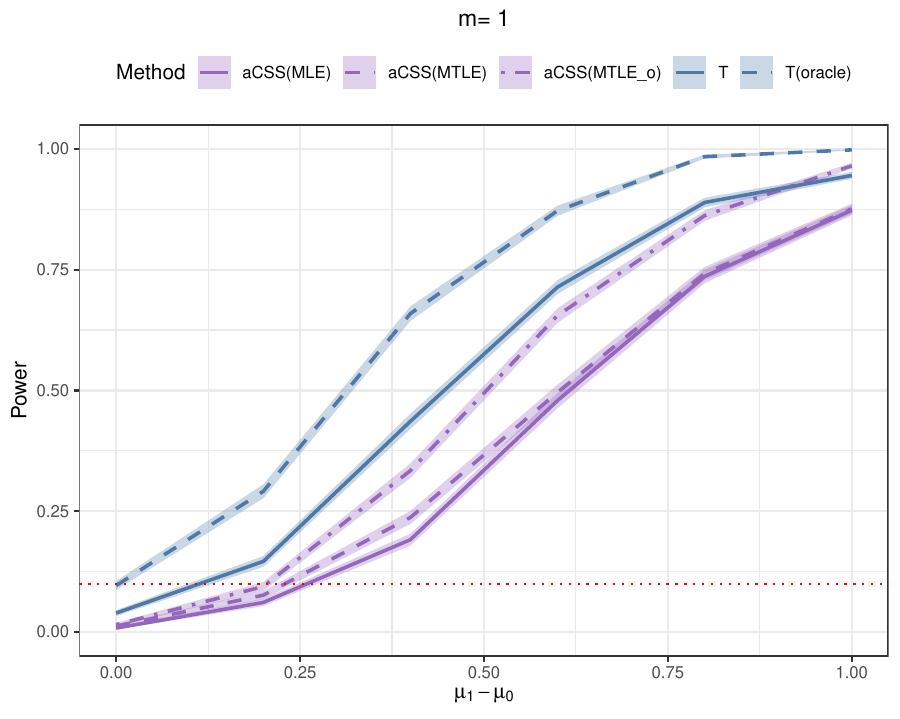}
\includegraphics[width=0.45\textwidth]{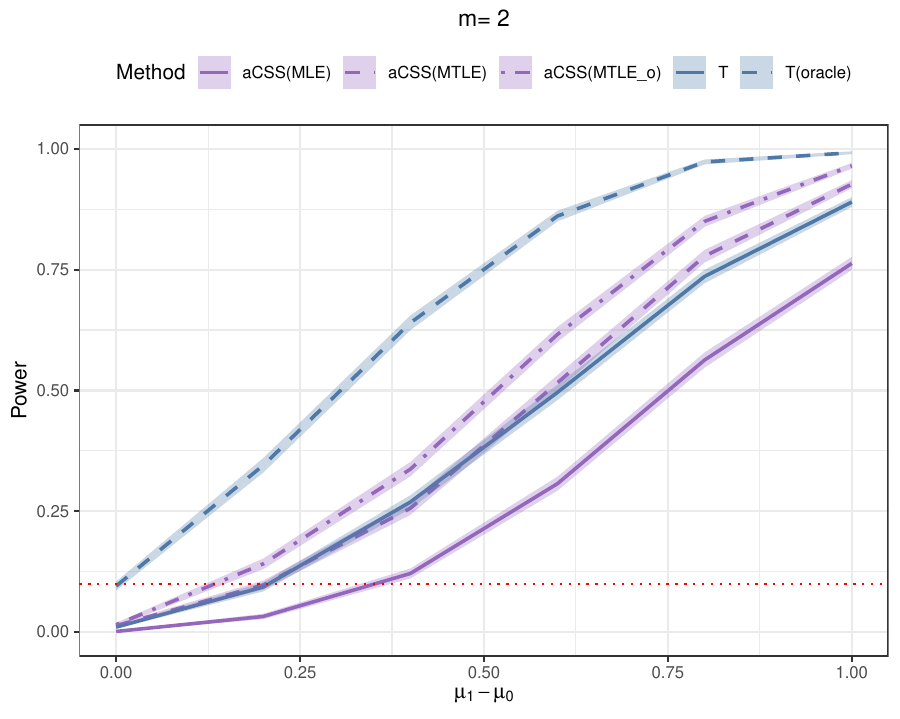}
\includegraphics[width=0.45\textwidth]{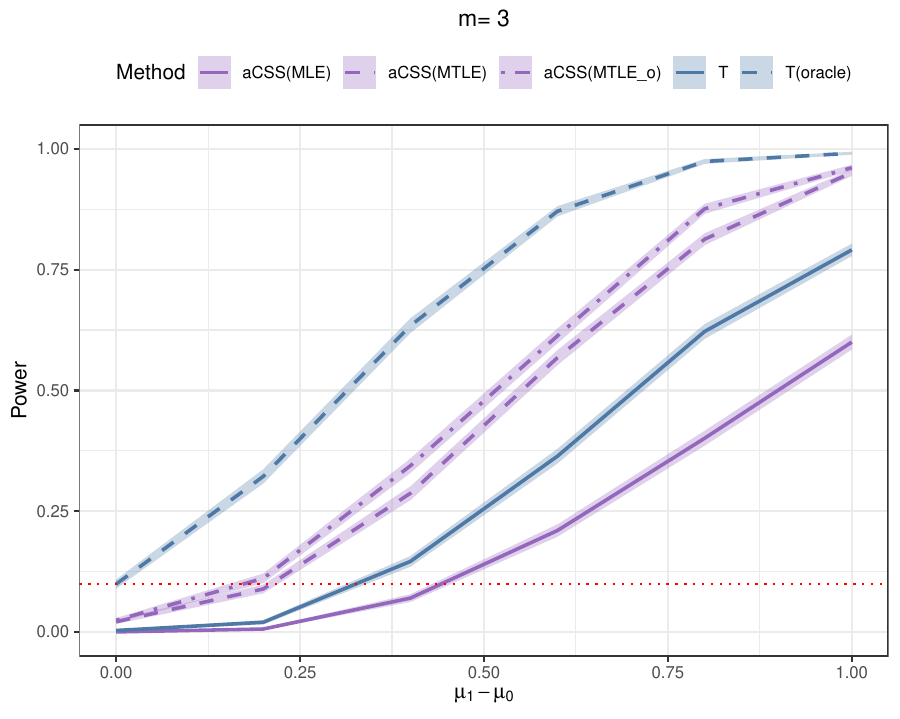}
\includegraphics[width=0.45\textwidth]{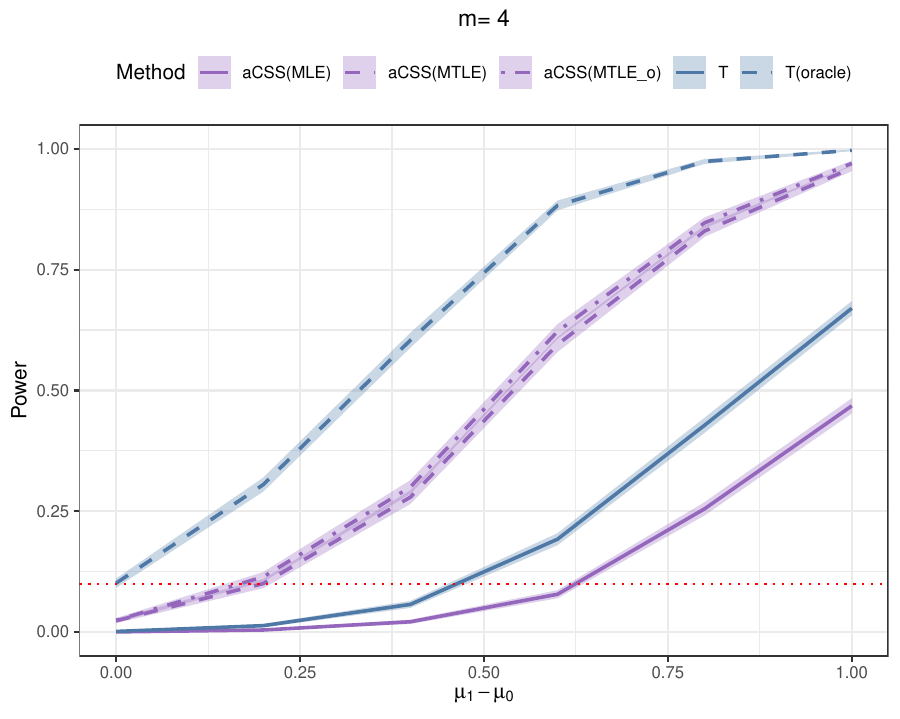}
\caption{Power of aCSS methods for differnt values of $m$}
\label{fig: diff h}
\end{figure}

\subsection{Example 2: aCSS CRT: conditional independence test}
We set \( n = 50, d = 200 , \nu = 1 \), and \( \theta_0 = (1.5,1.5,1.5,1.5,1.5, \dots, 0) \). The  
covariate matrix \( Z \in \mathbb{R}^{n \times d} \) is generated with i.i.d. \( N(0,1) \) entries, and we draw \( X \mid Z \sim N(Z\theta_0, \nu^2 I_n) \). The random vector \( Y \in \mathbb{R}^n \) is then generated with each entry \( Y_i \) drawn as  
\[
    Y_i \mid X_i, Z_i \sim N \left( \beta_0 X_i + \sum_{j=1}^{5} 0.2 Z_{i,j}, 1 \right).
\]
We consider \( \beta_0 \in \{0, 0.2, 0.4, \dots, 1\} \) with \( \beta_0 \) corresponding to the setting where \( Y \perp X \mid Z \).  
Formally, our null hypothesis is given by assuming that \( X \mid Y,Z \sim N(Z\theta, \nu^2 I_n) \) for some  
\( \theta \in \Theta = \mathbb{R}^d \). If \( \beta_0 \neq 0 \), then this null does not hold.

The debiased lasso or aCSS method with lasso performs poorly in this setting, likely due to the high estimation error in lasso for both estimating \( \xi \) and \( \theta_0 \). The aCSS method is implemented with $\sigma=0.7$. We also perform the aCSS method with different values of $\sigma$ and plot the Type-I error in the Figure \ref{fig: Type-I error}. Generally, the Type-I error of the aCSS methods increases as \( \sigma \) increases, which aligns with the earlier discussion.

\begin{figure}[ht]
\centering
\includegraphics[width=0.75\textwidth]{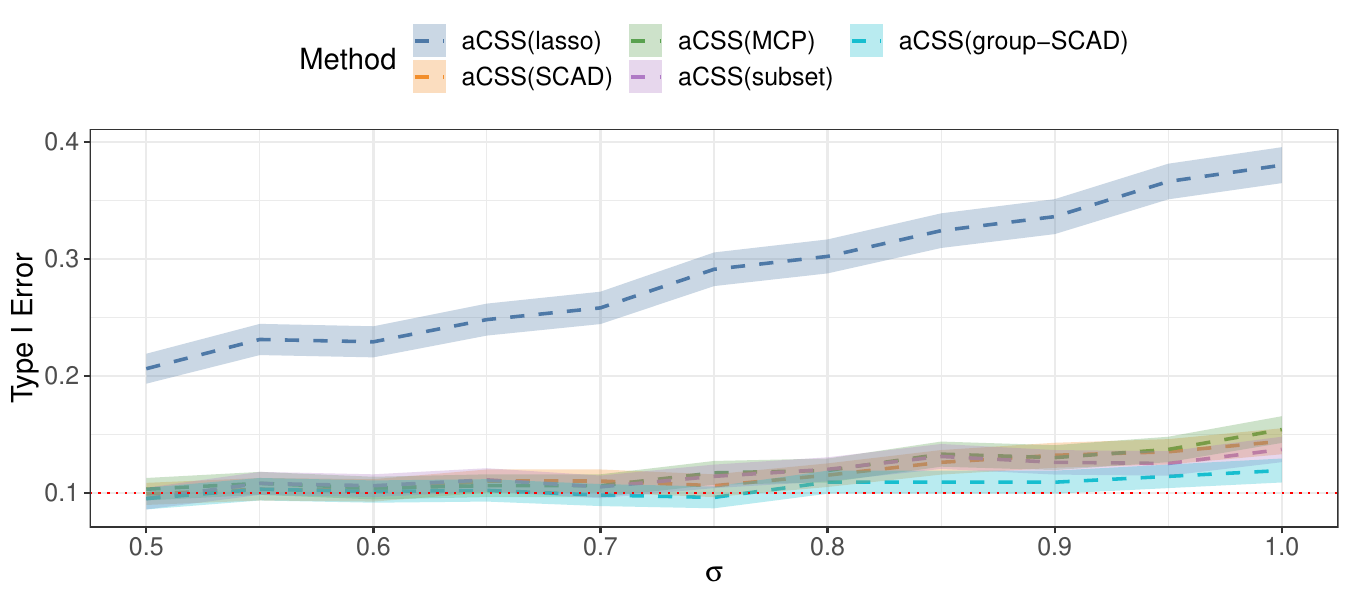}
\caption{Type I error rate of aCSS}
\label{fig: Type-I error}
\end{figure}
\end{document}